\newif\ifabstract
\newif\iffull
\newcolumntype{Y}{>{\raggedright\arraybackslash}X}
\newcommand{\res}[2]{{#1}\,\cite{#2}}
\newcommand{\myparskip}{3pt}
\newtheorem{theorem}{Theorem}[section]
\newtheorem{lemma}[theorem]{Lemma}
\newtheorem{observation}[theorem]{Observation}
\newtheorem{corollary}[theorem]{Corollary}
\newtheorem{claim}[theorem]{Claim}
\newtheorem{assumption}{Assumption}[section]
\newtheorem{definition}[theorem]{Definition}
\newenvironment{proof}{\par \smallskip{\bf Proof:}}{\hfill\stopproof}
\def\stopproof{\square}
\def\square{\vbox{\hrule height.2pt\hbox{\vrule width.2pt height5pt \kern5pt
			\vrule width.2pt} \hrule height.2pt}}
		\newenvironment{properties}[2][0]
		{
			\begin{enumerate} \setcounter{enumi}{#1}}{\end{enumerate}}
\newcommand{\set}[1]{\left\{ #1 \right\}}
\newcommand{\aalg}{{\mathcal A}}
\newcommand{\bset}{{\mathcal{B}}}
\newcommand{\poly}{\operatorname{poly}}
\newcommand{\vol}{\operatorname{vol}}
\newcommand{\OPT}{\mbox{\sf OPT}}
\newcommand{\polylog}{\mathrm{polylog}}
\newcommand{\event}{{\cal{E}}}
\newcommand{\val}{\operatorname{val}}
\newcommand{\tcopy}{\mathrm{copy}}
\newcommand{\vcopy}{\mathrm{sink}}
\newcommand{\complete}{\mathrm{complete}}
\newcommand{\expect}[2][]{\mathbb{E}_{#1}\left[{#2}\right]}
\newcommand{\ceil}[1]{\ensuremath{\left\lceil#1\right\rceil}}
\newcommand{\floor}[1]{\ensuremath{\left\lfloor#1\right\rfloor}}
\newcommand{\algglobalvertexcut}{\ensuremath{\mathsf{AlgGlobalVertexCut}}\xspace}
\newcommand{\exactsparsify}{\ensuremath{\operatorname{BatchSparsify}}\xspace}
\newcommand{\approxsparsify}{\ensuremath{\operatorname{ApproxBatchSparsify}}\xspace}\newcommand{\vertexapproxsparsify}{\ensuremath{\operatorname{VertexApproxBatchSparsify}}\xspace}
\newcommand{\mynewline}{\\}
\begin{document}
	
\begin{titlepage}
		
	\title{Almost-Optimal Approximation Algorithms for Global Minimum Cut in Directed Graphs}
	
	\author{Ron Mosenzon\thanks{Toyota Technological Institute at Chicago. Email: {\tt ron.mosenzon@ttic.edu}. Supported in part by NSF grant CCF-2402283.}}
	
	\date{}
	\maketitle
	\pagenumbering{gobble}
	
	\begin{abstract}
We develop new $(1+\epsilon)$-approximation algorithms for finding the global minimum edge-cut in a directed edge-weighted graph, and for finding the global minimum vertex-cut in a directed vertex-weighted graph.
Our algorithms are randomized, and have a running time of $O\left(m^{1+o(1)}/\epsilon\right)$ on any $m$-edge $n$-vertex input graph, assuming all edge/vertex weights are polynomially-bounded.
In particular, for any constant $\epsilon>0$, our algorithms have an almost-optimal running time of $O\left(m^{1+o(1)}\right)$.
The fastest previously-known running time for this setting, due to (Cen et al., FOCS 2021), is $\Tilde{O}\left(\min\left\{n^2/\epsilon^2,m^{1+o(1)}\sqrt{n}\right\}\right)$ for Minimum Edge-Cut, and $\Tilde{O}\left(n^2/\epsilon^2\right)$ for Minimum Vertex-Cut.

Our results further extend to the rooted variants of the Minimum Edge-Cut and Minimum Vertex-Cut problems, where the algorithm is additionally given a \emph{root vertex} $r$, and the goal is to find a minimum-weight cut separating any vertex from the root $r$.

In terms of techniques, we build upon and extend a framework that was recently introduced by (Chuzhoy et al., SODA 2026) for solving the Minimum Vertex-Cut problem in unweighted directed graphs.
Additionally, in order to obtain our result for the Global Minimum Vertex-Cut problem, we develop a novel black-box reduction from this problem to its rooted variant. Prior to our work, such reductions were only known for more restricted settings, such as when all vertex-weights are unit.
\end{abstract}
	\newpage
\end{titlepage}

\tableofcontents

\newpage
\pagenumbering{arabic}

\section{Introduction}

In recent years, there has been widespread success in the design of almost-linear time algorithm for various graph-based optimization problems.
Following this trend, we consider the design of almost-linear-time approximation algorithms for two classical problems: the (directed) Global Minimum Edge-Cut problem, and the related problem of (directed) Global Minimum Vertex-Cut.
%In recent years, graphs in real-world applications have become increasingly large, raising the need for extremely fast algorithms -- namely, those that run in close to linear time.
%In this paper, we consider the design of almost-linear-time approximation algorithms for two classical graph optimization problems: the (directed) Global Minimum Edge-Cut problem, and the related problem of (directed) Global Minimum Vertex-Cut.
In the (directed) Global Minimum Edge-Cut problem, we are given a directed $n$-vertex $m$-edge graph $G=(V,E)$ with positive integral weights $w(e)$ on the edges $e \in E$, and must find a partition of the vertices into two non-empty sets $(X,Y)$ so as to minimize the total weight of edges going from $X$ to $Y$.
Meanwhile, in the problem of (directed) Global Minimum Vertex-Cut, we are given a directed $n$-vertex $m$-edge graph $G=(V,E)$ with positive integral weights $w(v)$ on the vertices $v \in V$, and must find a \emph{vertex-cut} of $G$ -- a partition of $V$ into three sets $(L,S,R)$ with $L,R \neq \emptyset$ and with no edges $e \in E$ from $L$ to $R$ -- which minimizes the total weight of vertices in $S$.

The Global Minimum Edge-- and Vertex-Cut problems have been extensively studied over many decades, resulting in almost-linear time algorithms for several restricted settings and edge-density regimes.
%The Global Minimum Edge-- and Vertex-Cut problems have seen extensive study over many decades, resulting in almost-linear time algorithms for several restricted settings and edge-density regimes.
For example, in the \emph{undirected} setting, the Global Minimum Edge-Cut problem has been known to admit near-linear time exact algorithms for over 25 years \cite{Kar00}, with a more recent \emph{deterministic} almost-linear time algorithm being achieved by \cite{LP20}, using the breakthrough Max-Flow result of \cite{vdBCK23}.
For the Global Minimum \emph{Vertex}-Cut problem in undirected graphs, Li et al. \cite{LNP25} recently introduced the first almost-linear time exact algorithm in the special case where all vertex-weights are unit.
In the most general setting of directed graphs with arbitrary (positive integral) weights, the $(1+\epsilon)$-approximation algorithms of Cen et al. \cite{CLN22} achieve time\footnote{In this section, when stating the running times of algorithms, we omit factors that are polylogarithmic in the maximum edge-weight/vertex-weight.} $\Tilde{O}(n^2/\epsilon^2)$ for both Minimum Edge-Cut and Minimum Vertex-Cut, which is near-linear when the input graph is dense and $\epsilon$ is large. Furthermore, if all edge-weights/vertex-weights are unit, then an exact Minimum Edge-Cut can be found in time $\Tilde{O}(n^2)$ \cite{CQ21}, and an exact Minimum Vertex-Cut can be found in time $O\left(n^{2+o(1)}\right)$ \cite{LNP25}, with the latter result having recently been simplified by \cite{FJMY25}.
There have also been several algorithms for the unit-weight versions of these problems whose running times are parameterized by the value $k$ of the optimal solution, and that run in almost-linear time when this value is sufficiently small; namely, the well-known directed Minimum Edge-Cut algorithm of Gabow \cite{Gab91} runs in time $\Tilde{O}(m \cdot k)$, while a time of $O\left(m^{1+o(1)} \cdot k\right)$ was recently achieved by both the directed Minimum Vertex-Cut algorithm of Chuzhoy et al. \cite{CMT25}, and the \emph{deterministic} undirected Minimum Vertex-Cut algorithm of Jiang et al. \cite{JNSY25}.
However, despite all of this progress, when the optimal solution value is unrestricted and the weights are arbitrary positive integers, there are no known constant-approximate algorithms for the directed version of either problem which run in $o(\min\{n^2, m \sqrt{n}\})$ time, for any edge-density regime.
In fact, the Global Minimum \emph{Vertex}-Cut problem is not even known to admit $o(n^2)$-time constant-approximate algorithms for sparse graphs.

In this paper, we present almost-linear time constant-approximate randomized algorithms for both the Global Minimum Edge-Cut and the Global Minimum Vertex-Cut problems, in the most general setting of directed graphs with arbitrary (positive integral) weights.
More generally, given any $\epsilon > 0$, our algorithms can compute a $(1+\epsilon)$-approximate solution in time $O\left(m^{1+o(1)} / \epsilon\right)$, directly improving over the $\Tilde{O}\left(n^2 / \epsilon^2\right)$ running time of \cite{CLN22} in cases where the input graph is not too dense, and in cases where $\epsilon$ is small.
Furthermore, by choosing $\epsilon = 1/k$, our algorithms can produce exact solutions in time that matches (up to subpolynomial factors) the aforementioned unit-weight algorithms of Gabow \cite{Gab91} and Chuzhoy et al. \cite{CMT25}, while not requiring the unit-weight assumption.
We also note that for the Global Minimum Vertex-Cut problem with arbitrary positive integral vertex-weights, our algorithm beats the best known constant-approximate algorithms even in the undirected setting.
See \Cref{table:known-running-times} for a more comprehensive summary of the state-of-the-art running times of $(1+\epsilon)$-approximation (and exact) algorithms for the Global Minimum Edge-Cut and Vertex-Cut problems. (As far as we are aware, all state-of-the-art approximation algorithms for these problems provide a $(1+\epsilon)$ approximation guarantee.)

In terms of techniques, we build upon and extend a framework that was recently introduced by Chuzhoy et al. \cite{CMT25} for their $O\left(m^{1+o(1)} \cdot k\right)$-time unit-weight Minimum Vertex-Cut algorithm. We discuss this framework and our new contributions to it in \Cref{sec:overview-of-rooted-minimum-cut}.
Using this framework, we are able to design algorithms for the \emph{Rooted variants} of the Minimum Edge-Cut and Minimum Vertex-Cut problems: in the Rooted variant of the Minimum Edge-Cut (resp. Vertex-Cut) problem, the input additionally consists of a "root" vertex $r$, and we must search for an edge-cut $(X,Y)$ (resp. vertex-cut $(L,S,R)$) of minimum weight \emph{only among those cuts that satisfy $r \in Y$ (resp. $r \in R$)}.\footnote{We note that the Rooted Minimum Vertex-Cut problem is usually defined with the constraint that the root must lie in $L$, rather than in $R$. However, we use the formulation with $r \in R$ in this paper so as to be consistent with the aforementioned $O\left(m^{1+o(1)} \cdot k\right)$-time algorithm of Chuzhoy et al., which implicitly uses a reduction to this version of the problem. Similarly, the Rooted Minimum Edge-Cut problem is usually defined with the constraint that $r$ must lie in $X$, rather than in $Y$, but we use the formulation with $r \in Y$ in order to be consistent between the edge-cut and vertex-cut settings.}
In the case of the Minimum \emph{Edge}-Cut problem, there is a simple well-known reduction from its Global variant to its Rooted variant, and we use this reduction to achieve our result for the Global variant of this problem.
However, in the case of the Minimum \emph{Vertex}-Cut problem, translating between these two variants is not as straightforward: while several reductions from Global Minimum Vertex-Cut to Rooted Minimum Vertex-Cut exist in the literature, they all either are limited to the unit-weight setting (such as the one implicitly used by the aforementioned algorithm of Chuzhoy et al.), or are algorithm-specific (such as the one used by Chekuri and Quanrud \cite{CQ21}).\footnote{There are also many existing reductions from the Global Minimum Vertex-Cut problem to the $s$-$t$ Minimum Vertex-Cut problem that can easily be modified in order to instead reduce to the Rooted Minimum Vertex-Cut problem. However, all of these either suffer from one of the aforementioned drawbacks, or can increase the total running time by an $m^{\Omega(1)}$ multiplicative factor.}
Therefore, in order to achieve our result for the \emph{Global} Minimum Vertex-Cut problem, we introduce a novel black-box reduction from the Global Minimum Vertex-Cut problem to the Rooted Minimum Vertex-Cut problem, that works in the arbitrary-weight setting on directed graphs.
We hope that this reduction will be useful in the design of future algorithms for the Global Minimum Vertex-Cut problem, in both the approximate and non-approximate settings.

\paragraph{Independent work.}
Concurrently and independently of our work, \cite{Quan25} obtained randomized $(1+\epsilon)$-approximate algorithms for the global and rooted versions of Minimum Edge-Cut and Minimum Vertex-Cut in directed weighted graphs, and \cite{Jiang25} obtained a randomized $O(\log^5 n)$-approximate algorithm for the Rooted Minimum Edge-Cut problem in weighted directed graphs. The Minimum Edge-Cut algorithms of \cite{Quan25} run in the time of $O(\log^4(n)/\epsilon)$ single-commodity Max-Flow calls, and their Minimum Vertex-Cut algorithms run in the time of $O(\log^5(n)/\epsilon)$ single-commodity Max-Flow calls; by \cite{vdBCK23}, these running times are at most $O\left(m^{1+o(1)}/\epsilon\right)$, matching our algorithms for these problems.
The algorithm of \cite{Jiang25} runs in time $O(m^{1+o(1)})$, matching our running time for the case of $\epsilon=\Theta(1)$, albeit with a somewhat worse approximation factor.

\begin{table}[t]
	\centering
	\small
	\begin{threeparttable}
		\caption{Best previously-known randomized running times for computing $(1+\epsilon)$-approximate (and exact) Global Minimum Cut in different settings.}
		\label{table:known-running-times}
		\begin{tabularx}{\linewidth}{l X Y Y}
			\toprule
			\multicolumn{2}{c}{} &
			\multicolumn{2}{c}{\textbf{Problem}} \\
			\cmidrule(lr){3-4}
			\textbf{ } & \textbf{Weights} &
			\textbf{Minimum Edge-Cut} &
			\textbf{Minimum Vertex-Cut} \\
			\midrule
			\multirow{2}{*}{Directed}
			& Arbitrary Weights (the total weight is denoted by $W$)
			& \makecell[tl]{%
%				\res{$O(\min\{n m^{2/3+o(1)}, m^{1+o(1)} \sqrt{n}\})$}{CLN22}\\
				\res{$O(n m^{2/3+o(1)})$}{CLN22}\\
				\res{$O(m^{1+o(1)} \sqrt{n})$}{CLN22}\\
				\res{$\Tilde{O}(n^2/\epsilon^2)$}{CLN22}\\
				\textcolor{red}{$O(m^{1+o(1)} / \epsilon)$\, this paper}
			}
			& \makecell[tl]{%
				\res{$O(\min\{m n^{0.976}, n^{2.677}\})$}{CMT25}\\
				\res{$\Tilde{O}(n^2/\epsilon^2)$}{CLN22}\\
				\res{$\Tilde{O}(nW / \epsilon)$}{CQ21}\\
				\textcolor{red}{$O(m^{1+o(1)} / \epsilon)$\, this paper}
			} \\
			\addlinespace
			& Unit Weights
			& \makecell[tl]{%
				\res{$\Tilde{O}(n^2)$}{CQ21}\\
				\res{$\Tilde{O}(m \cdot k)$}{Gab91}\\
				\textcolor{red}{$O(m^{1+o(1)} / \epsilon)$\, this paper}
			}
			& \makecell[tl]{%
				\res{$O(n^{2+o(1)})$}{LNP25,FJMY25}\\
				\res{$O(m^{1+o(1)} \cdot k)$}{CMT25}\\
				\res{$\Tilde{O}(\min\{m k^2, m k / \epsilon\})$}{NSY19b}\\
				\res{$\Tilde{O}(n^2/\epsilon)$}{CQ21}\\
				\textcolor{red}{$O(m^{1+o(1)} / \epsilon)$\, this paper}
			} \\
			\Xhline{2\arrayrulewidth}
			\addlinespace
			\multirow{2}{*}{Undirected}
			& Arbitrary Weights (the total weight is denoted by $W$)
			& \makecell[tl]{%
				\res{$\tilde{O}(m)$}{Kar00}
			}
			& \makecell[tl]{%
				\res{$O(\min\{m^{3/2+o(1)}, n^{2.677}\})$}{CT25,CMT25}\\
				\res{$\Tilde{O}(n^2/\epsilon^2)$}{CLN22}\\
				\res{$\Tilde{O}(nW / \epsilon)$}{CQ21}\\
				\textcolor{red}{$O(m^{1+o(1)} / \epsilon)$\, this paper}
			} \\
			\addlinespace
			& Unit Weights
			& \makecell[tl]{%
				\res{$\tilde{O}(m)$}{Kar00}
			}
			& \makecell[tl]{%
				\res{$O(m^{1+o(1)})$}{LNP25}\\
				\res{$\Tilde{O}(\min\{m + n k^3, m \cdot k / \epsilon\})$}{NSY19b}\\
				\res{$\Tilde{O}(n^2/\epsilon)$}{CQ21}
			} \\
			\bottomrule
		\end{tabularx}
		\begin{tablenotes}[flushleft]
			\footnotesize
			\item This table shows the state-of-the-art randomized running times for achieving $(1+\epsilon)$ approximation for the Global Minimum Edge-Cut and Global Minimum Vertex-Cut problems on an $n$-vertex $m$-edge graph, where $k$ denotes the value of the optimal solution, and $W$ denotes the sum of weights. Our new algorithms are listed in those settings where they provide an improvement upon the previous state-of-the-art for some choice of parameters $n,m,k,\epsilon$.
			\item Running times without dependence on $\epsilon$ are for exact algorithms. The $\Tilde{O}$ notation is used to hide $\polylog(m)$ factors. In all algorithms for the arbitrary-weight setting, the running time hides an additional $\polylog W$ multiplicative factor.
			\item We note that in the unit-weight setting, the value $k$ of the optimal solution can be no larger than the average degree $2m/n$. This means, for example, that in the directed unit-weight setting for Minimum Edge-Cut, it is possible to achieve time $\Tilde{O}(\min\{n^2, m \cdot k\}) \leq \Tilde{O}(m \sqrt{n})$ by combining the results of \cite{CQ21} and \cite{Gab91}.
			\item A survey of deterministic algorithms for these problems can be found, for example, in Jiang et al. \cite{JNSY25}.
		\end{tablenotes}
	\end{threeparttable}
\end{table}

\subsection{Our Results}\label{sec:our-results}

The following two theorems summarize our main results for the Global Minimum Edge-Cut and Global Minimum Vertex-Cut problems.

\begin{theorem}\label{thm : main-thm-edge-version}
	There is a randomized algorithm that, given a directed $m$-edge graph $G$ with positive integral weights $w(e) \leq W$ on the edges $e \in E(G)$, as well as a precision parameter $\epsilon \in (0,1)$, outputs an edge-cut $(X,Y)$ in $G$ such that, with probability at least $\frac{1}{2}$, $w(E_G(X,Y)) \leq (1+\epsilon)\OPT_{E}$ holds, where $\OPT_{E}$ denotes the weight of the global minimum edge-cut in $G$.
	The running time of the algorithm is $O\left(\frac{m^{1+o(1)} \cdot \log W}{\epsilon}\right)$.
\end{theorem}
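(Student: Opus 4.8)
The plan is to first reduce the Global Minimum Edge-Cut problem to its Rooted variant, and then to solve the Rooted problem by adapting the Chuzhoy et al.\ framework to the weighted edge-cut setting. For the reduction, fix an arbitrary vertex $r\in V(G)$: every edge-cut $(X,Y)$ of $G$ has $r$ on exactly one side, so if $r\in Y$ then $(X,Y)$ is feasible for the Rooted Minimum Edge-Cut instance $(G,r)$, while if $r\in X$ then the same partition, read in the reverse graph $G^{\mathrm{rev}}$ as the cut $(Y,X)$, has $r$ in its second part and the identical weight $w(E_{G^{\mathrm{rev}}}(Y,X))=w(E_G(X,Y))$. Hence $\OPT_{E}$ equals the minimum of the two Rooted Minimum Edge-Cut optima, for $(G,r)$ and $(G^{\mathrm{rev}},r)$; and since the Rooted algorithm always returns an actual edge-cut --- which, after mapping a reverse-graph output back, is a valid edge-cut of $G$ of weight at least $\OPT_{E}$ --- it is safe to run it on both graphs and output the cheaper of the two results, a per-call success probability of, say, $2/3$ being enough (and boostable if needed). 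So it suffices to $(1+\epsilon)$-approximate Rooted Minimum Edge-Cut in time $O\!\left(m^{1+o(1)}\log W/\epsilon\right)$.

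For the Rooted problem I would use the characterization that its optimum for $(G,r)$ equals $\min_{v\in V\setminus\{r\}}\lambda_G(v,r)$, where $\lambda_G(v,r)$ is the minimum weight of an edge set whose deletion destroys all $v$-to-$r$ paths (given such a set, recover the cut by letting $X$ be the vertices still reachable from $v$). Computing this ``minimum over all sources'' with one max-flow per candidate $v$ costs $n\cdot m^{1+o(1)}$, so the real content is to do it in almost-linear total time --- exactly what the Chuzhoy et al.\ framework does for the analogous rooted problem on unweighted digraphs, whose recursive batch-sparsification scheme (the $\exactsparsify$/$\approxsparsify$-type subroutines) reduces the task to $m^{1+o(1)}$ max-flow work, hence to $m^{1+o(1)}$ time via almost-linear-time max flow. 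Two extensions are needed. First, to move from vertex cuts to edge cuts, either apply the standard edge-to-vertex reduction (subdivide every edge with a new vertex carrying that edge's weight and give every original vertex infinite weight, which preserves cut values and the root) or re-derive the sparsification primitives natively for edges (an $\exactedgepseudosparsify$-type routine). Second, to handle weights up to $W$: guess $\OPT_{E}$ within a factor of $2$ (there are $O(\log(nW))$ guesses $\lambda$), and for each guess round the edge weights to multiples of $\Theta(\epsilon\lambda/m)$ --- incurring only an additive $\epsilon\lambda$ error and yielding a polynomially bounded instance --- then run the adapted framework, which, given $\lambda\ge\OPT_{E}$, I would want to return a cut of weight $\le(1+\epsilon)\OPT_{E}$ in $m^{1+o(1)}/\epsilon$ time (the $1/\epsilon$ being the internal precision of its flow and sparsification subroutines); summing over the guesses yields the claimed $O\!\left(m^{1+o(1)}\log W/\epsilon\right)$ running time.

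The hard part, I expect, is the second extension: making the recursive sparsification scheme weight-aware without the running time degrading to an $\Tilde{O}(mk)$-type bound with $k=\poly(m)/\epsilon$. In the unweighted case each recursion level can afford to peel off a cut of small value, whereas in the weighted case one must show, after the rescaling above, that the relevant near-minimum cuts retain bounded ``effective'' size, and --- crucially --- that the rounding error introduced at each of the $m^{o(1)}$ recursion levels is only an additive $\epsilon\lambda/m^{o(1)}$, so that it does not accumulate beyond $\epsilon\lambda$ overall; this is what keeps the final approximation factor at $1+\epsilon$ rather than $1+\epsilon\cdot m^{o(1)}$. Checking that the framework's correctness argument survives these changes at every recursion level, and that the factor-$2$ guesses over $\lambda$ still compose into a single almost-linear bound, is where I would budget most of the effort.
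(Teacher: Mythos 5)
Your reduction from Global to Rooted Minimum Edge-Cut (fix a root $r$, run the rooted algorithm on $G$ and on the reversed graph, return the cheaper cut) is exactly the paper's argument, and your high-level plan of re-using the batch-sparsification framework of Chuzhoy et al.\ for the Rooted problem is also the right starting point. However, there is a genuine gap in the part you yourself flag as ``the hard part,'' namely how to make the framework weight-aware.

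The issue is that the CMT25-style sparsification bounds the size of the sparsifier built for a batch $B$ by $\Tilde{O}\left(\frac{|B|}{|T|}\cdot m\cdot \OPT\right)$, where $\OPT$ is the (integer-valued) optimum of the current instance. Your proposal is to guess $\lambda \approx \OPT_E$, round weights to multiples of $\Theta(\epsilon\lambda/m)$, and then run the framework on the rounded instance. After that rounding, the rescaled optimum is of order $m/\epsilon$, so the CMT25 sparsifier-size bound becomes $\Tilde{O}\left(\frac{|B|}{|T|}\cdot m^2/\epsilon\right)$ and the per-phase work is $\Tilde{O}(m^2/\epsilon)$, not $m^{1+o(1)}/\epsilon$. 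You correctly identify this $\Tilde{O}(mk)$-with-$k=\poly(m)/\epsilon$ danger, but the remedy you sketch (arguing that near-minimum cuts retain bounded ``effective'' size and that per-level rounding errors stay within $\epsilon\lambda/m^{o(1)}$) only speaks to the accumulation of \emph{error} across levels, not to the \emph{size} of the sparsifiers the framework produces; with a $\poly(m)$-valued rescaled optimum, the sparsifiers are simply too large regardless of how carefully one controls rounding error. So the plan as written does not actually yield the claimed running time.

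What the paper does instead, and what you are missing, is a different $\approxsparsify$ primitive that achieves a sparsifier size of $\Tilde{O}\left(\frac{m}{2^i\epsilon}\right)$ \emph{independently of $\OPT$}. Concretely, when computing the set $A_B$ for a batch $B$ with parent $B'$, the paper augments the parent sparsifier $G_{B'}$ to a graph $H_{B'}$ by (i) adding a super-source $s$ with edges of capacity $4\widehat{\OPT}$ to the terminals of $B$, and (ii) adding a low-capacity ``escape'' edge $(v,y^*)$ of capacity $\frac{\epsilon\,\widehat{\OPT}\,\deg^+_G(v)}{z\cdot 2\nu}$ for every vertex $v$. Step (ii) is the key new idea: it directly penalizes the source side of a min $s$-$y^*$ cut for having large out-volume, which forces $\vol^+_G(A_B) = \Tilde{O}(m/(2^i\epsilon))$, while submodularity and the additive-error budget of $\epsilon\widehat{\OPT}/z$ per level ensure the value of the relevant cut degrades by only a $\left(1+\frac{i\epsilon}{z}\right)$ factor after $i$ levels. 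Without an idea of this kind that decouples the sparsifier size from $\OPT$, the rounding-plus-CMT25 route does not close.

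Two minor remarks: the number of recursion levels is $z=\ceil{\log|T|}=O(\log m)$ rather than a generic $m^{o(1)}$, so the ``$\epsilon\lambda/m^{o(1)}$ per level'' accounting should be ``$\epsilon\OPT/z$ per level'' as in the paper; and your edge-to-vertex reduction by subdivision is workable but unnecessary, since the edge-cut version of the framework is in fact simpler than the vertex-cut one (the paper presents it first for exactly that reason).
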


\begin{theorem}\label{thm : main-thm-vertex-version}
	There is a randomized algorithm whose input is a directed $m$-edge graph $G$ with positive integral weights $w(v) \leq W$ on the vertices $v \in V(G)$, such that $G$ contains some vertex-cut, as well as a precision parameter $\epsilon \in (0,1)$. The algorithm outputs a vertex-cut $(L,S,R)$ in $G$ such that, with probability at least $\frac{1}{2}$, $w(S) \leq (1+\epsilon)\OPT_{V}$ holds, where $\OPT_{V}$ denotes the weight of the global minimum vertex-cut in $G$.
	The running time of the algorithm is $O\left(\frac{m^{1+o(1)} \cdot \log W}{\epsilon}\right)$.
\end{theorem}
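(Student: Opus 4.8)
The plan is to derive Theorem~\ref{thm : main-thm-vertex-version} from two components. The first is a $(1+\epsilon)$-approximation algorithm for the \emph{Rooted} Minimum Vertex-Cut problem on directed weighted graphs, running in time $O\!\left(m^{1+o(1)}\log W/\epsilon\right)$; this is the main technical engine, obtained by extending the framework of Chuzhoy et al.\ to the weighted setting with the approximate vertex-sparsification routines developed earlier in the paper. The second component is a black-box reduction from the Global problem to the Rooted one, making only $m^{o(1)}$ oracle calls (on graphs of total size $O(m^{1+o(1)})$) plus near-linear overhead. Composing them, each oracle call costs $O\!\left(m^{1+o(1)}\log W/\epsilon\right)$, so the total running time is $O\!\left(m^{1+o(1)}\log W/\epsilon\right)$; the success probability is brought to $\tfrac12$ by a constant number of independent repetitions. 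What follows is how I would build the reduction, which is the genuinely new part.

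Fix a global minimum vertex-cut $(L^*,S^*,R^*)$, so $w(S^*)=\OPT_V$. The guiding observation is that naming \emph{any} single vertex $r\in R^*$ suffices: the Rooted oracle run on $G$ with root $r$ returns a vertex-cut of weight at most $(1+\epsilon)$ times the rooted optimum for $r$, which is at most $w(S^*)=\OPT_V$, and every vertex-cut it returns has weight at least $\OPT_V$ by global optimality, so the output is a true $(1+\epsilon)$-approximation; symmetrically, a vertex $r\in L^*$ works via the reversed graph $\overleftarrow{G}$ (in which $L^*$ and $R^*$ trade places). So it is enough to produce a short list of candidate roots that, with constant probability, meets $L^*\cup R^*$. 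When $|L^*\cup R^*|\ge n/m^{o(1)}$, a uniformly random vertex lands in $L^*\cup R^*$ with probability $\ge 1/m^{o(1)}$, and sampling $m^{o(1)}$ roots independently (trying each in both $G$ and $\overleftarrow{G}$) handles this regime with $m^{o(1)}$ oracle calls.

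The harder regime is when $L^*$ and $R^*$ are both tiny, so $S^*$ contains almost all vertices and uniform root-sampling almost surely misses $L^*\cup R^*$. Here I would attach a gadget: introduce a fresh root vertex $\rho$ and an auxiliary vertex $q$, each of weight exceeding the total vertex weight $W_{\mathrm{tot}}$, add the edge $q\to\rho$, and add an edge $v\to q$ for every $v$ in a random subset $D\subseteq V(G)$ formed by including each vertex independently with probability $p$. In the resulting graph $G'$, both $\rho$ and $q$ are forced onto the root side by their weight, and the edges into $q$ force $D$ to be disjoint from the left side of any finite-weight rooted cut; conversely, any vertex-cut $(L^*,S^*,R^*)$ of $G$ with $D\cap L^*=\varnothing$ lifts to a rooted cut of $G'$ of equal weight. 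The only spurious solutions are the \emph{degenerate} cuts whose right side lies entirely in the gadget; these have weight at least $w(D)$. Thus, if $D$ simultaneously satisfies $D\cap L^*=\varnothing$ (so the optimal cut survives as a candidate) and $w(D)\ge\OPT_V$ (so no degenerate cut can beat it), a single Rooted call on $G'$ with root $\rho$ yields a $(1+\epsilon)$-approximate global vertex-cut. Running this for a geometric sequence of probabilities $p$ (to hit the scale $p\approx 1/|L^*|$ without knowing $|L^*|$), and for both $G$ and $\overleftarrow{G}$, still costs only $m^{o(1)}$ oracle calls.

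The crux is to show that for some $p$ in the sequence, the two requirements $D\cap L^*=\varnothing$ and $w(D)\ge\OPT_V$ hold together with constant probability. In the unit-weight setting this is essentially automatic, since the number of vertices in the optimal separator is bounded by the minimum degree, making a fixed-size random root sample enough; in the weighted setting one must balance $\Pr[D\cap L^*=\varnothing]=(1-p)^{|L^*|}$ against a lower-tail estimate for $w(D)$, invoking the structural fact that the lighter of $L^*$ and $R^*$ has weight at most $(W_{\mathrm{tot}}-\OPT_V)/2$ (so a heavy $D$ avoiding it exists) and handling the regime where $\OPT_V$ is close to $W_{\mathrm{tot}}$ — equivalently where a vertex carries a dominant share of the weight — by a separate argument. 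Making these probabilistic bounds line up across all weight profiles, and verifying that if the oracle happens to return a degenerate cut this can be detected and discarded (or suppressed by a slight reweighting), is where I expect the bulk of the work to lie; the remaining pieces — the correctness of the lifting/restriction maps and the running-time bookkeeping — should be routine.
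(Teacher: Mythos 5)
Your first component (the Rooted algorithm) and the overall composition logic match the paper, but the Global-to-Rooted reduction you sketch is a genuinely different construction and, as written, does not go through in the weighted setting. The gadget argument needs a uniformly random $D\subseteq V(G)$ satisfying both $D\cap L^*=\emptyset$ and $w(D)\geq\OPT_V$ with probability at least $1/m^{o(1)}$ for some $p$ in your schedule, and this is impossible for some weight profiles. Take $|L^*|=|R^*|=\log n$ with all weights $1$, and let $S^*$ consist of $n-2\log n$ vertices each of weight $W\gg\log n$, with edges $L^*\to S^*\to R^*$ so that $(L^*,S^*,R^*)$ is the global minimum cut. Then $\OPT_V=(n-2\log n)W$, and $w(D)\geq\OPT_V$ essentially forces $D\supseteq S^*$, an event of probability $p^{\,n-2\log n}$; multiplied by $\Pr[D\cap L^*=\emptyset]=(1-p)^{\log n}$, the product is $\exp(-\Omega(n))$ for every $p\in(0,1)$. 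Your suggested "separate argument" assumes that $\OPT_V$ near $w(V)$ forces a single vertex to carry a dominant weight share, but in this example $\OPT_V/w(V)\to 1$ while every vertex has share $O(1/n)$. Replacing uniform $D$-sampling by weight-proportional sampling does not rescue it either: if $L^*$ contains a vertex $u$ whose inclusion probability saturates at $1$, then $\Pr[D\cap L^*=\emptyset]=0$.

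The paper avoids the gadget entirely. It samples each vertex $y$ directly into the root set $T$ with probability $\min\{1,\,2w(y)/w^*\}$, where $w^*=\max_v w(F^+(v))$; the structural bound $w(R^*)\geq w^*/2$ then gives $\Pr[T\cap R^*\neq\emptyset]\geq\tfrac12$, while the same normalization gives $\mathbb{E}\bigl[|T\cap F^+(v)|\bigr]\leq 2$ for every $v$, hence $\mathbb{E}\bigl[\sum_{y\in T}\vol^+(F^-(y))\bigr]\leq 2m$. For each $y\in T$ it queries the Rooted oracle on a sparsified \emph{subgraph} $G_y$ of $G$ (dropping all edges outgoing from $\{y\}\cup N^-_G(y)$ except those into $y$), which preserves the rooted feasible set exactly and has size $\vol^+(F^-(y))+\deg^-_G(y)$; this yields $O(m/n)$ queries of total size $O(m)$. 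The key idea you are missing is to sample the \emph{root vertex} by weight rather than a coverage set $D$: a single weight-proportional root lands in $R^*$ with constant probability because $w(R^*)$ is large relative to the normalizer $w^*$, whereas there is no analogous way to sample a heavy set $D$ that simultaneously avoids an unknown $L^*$.
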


\noindent We comment that the additional condition in \Cref{thm : main-thm-vertex-version} that the input graph $G$ "has some vertex-cut" is equivalent to requiring that $G$ is not a complete graph; furthermore, this condition can be verified in $O(m)$ time.
We also comment that the running times of the algorithms from \Cref{thm : main-thm-edge-version} and \Cref{thm : main-thm-vertex-version} are dominated by calls to the almost-linear time exact $s$-$t$ Maximum Flow procedure of \cite{vdBCK23}, while all other parts of these algorithms can be implemented in $\Tilde{O}\left(\frac{m \cdot \log W}{\epsilon}\right)$ time.
This means that if a near-linear time maximum $s$-$t$ flow algorithm will be discovered in the future, then plugging it into our algorithms would allow Theorems \ref{thm : main-thm-edge-version} and \ref{thm : main-thm-vertex-version} to be implemented in time $\Tilde{O}\left(\frac{m \cdot \log W}{\epsilon}\right)$.

As mentioned earlier, we achieve \Cref{thm : main-thm-vertex-version} by using a new reduction from the Global Minimum Vertex-Cut problem to the Rooted Minimum Vertex-Cut problem, that works in the vertex-weighted directed setting.
This reduction is our other main result.
%Our other main result is a black-box reduction from the Global Minimum Vertex-Cut problem to the Rooted Minimum Vertex-Cut problem, that works in the vertex-weighted directed setting.
More formally, this reduction shows that, in order to solve the Global Minimum Vertex-Cut problem on an $n$-vertex $m$-edge graph, it suffices to solve the Rooted Minimum Vertex-Cut problem on $O\left(\frac{m}{n}\right)$ different $n$-vertex graphs, that together have $O(m)$ edges overall.
Furthermore, if an approximate algorithm is used to solve the Rooted Minimum Vertex-Cut instances, then the reduction obtains an approximate solution to the Global Minimum Vertex-Cut problem, with the same approximation factor.
We formulate this reduction as an oracle algorithm, with access to an oracle for the Rooted Minimum Vertex-Cut problem; so we begin by defining this oracle.
As can be seen by the following definition and theorem, all the instances of the Rooted Minimum Vertex-Cut problem that our reduction must solve have the same vertex-weights as the input instance of the Global Minimum Vertex-Cut problem.

\begin{definition}[Rooted Minimum Vertex-Cut Oracle]\label{def:oracle}
	Consider any set $V$ of vertices, any weight function $w$ assigning positive integral weights $w(v) \leq W$ to the vertices $v \in V$, and any precision parameter $\epsilon \geq 0$.
	A \emph{$(1+\epsilon)$-Approximate Rooted Minimum Vertex-Cut oracle on $V$ with weight function $w$} is an oracle that implements the following type of query:
	given any directed graph $G'$ with $V(G')=V$, and any vertex $y^* \in V$, such that there exists some vertex-cut $(L',S',R')$ in $G'$ with $y^* \in R$, the oracle returns a vertex-cut $(L,S,R)$ in $G'$ such that $y^* \in R$ and such that, with probability at least $\frac{1}{2}$, the inequality $w(S) \leq (1+\epsilon)\OPT_{G',y^*}$ holds, where $\OPT_{G',y^*}$ denotes the minimum possible weight of a vertex-cut $(L',S',R')$ in $G'$ with $y^* \in R'$.
\end{definition}

\begin{theorem}[Reduction from Global to Rooted Minimum Vertex-Cut]\label{thm:main-reduction}
	There is a randomized algorithm whose input consists of a directed $m$-edge $n$-vertex graph $G$ with positive integral weights $w(v) \leq W$ on the vertices $v \in V(G)$, such that $G$ has some vertex-cut, as well as a precision parameter $\epsilon \geq 0$.
	The algorithm is further given access to a $(1+\epsilon)$-Approximate Rooted Minimum Vertex-Cut oracle on $V(G)$ with the aforementioned vertex-weights.
	The output of the algorithm is a vertex-cut $(L,S,R)$ in $G$ such that, with probability at least $\frac{1}{2}$, $w(S) \leq (1+\epsilon)\OPT_{G}$ holds, where $\OPT_{G}$ denotes the weight of the global minimum vertex-cut in $G$.
	The algorithm makes at most $O\left(m/n\right)$ queries to the oracle, and the overall number of edges in the graphs from these queries is $O(m)$.
	The algorithm spends at most $\Tilde{O}(m \log W)$ time outside of oracle queries.
\end{theorem}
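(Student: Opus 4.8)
\medskip
\noindent\textbf{Proof plan.}
The plan is to realize the reduction as a collection of sub-procedures, each of which makes only a few calls to the Rooted Minimum Vertex-Cut oracle, and to argue that whichever ``shape'' the optimal global vertex-cut of $G$ has, at least one sub-procedure recovers a near-optimal cut. Fix an unknown optimal global vertex-cut $(L^*,S^*,R^*)$ of $G$ with $w(S^*)=\OPT_G$, and put $\ell=|L^*|$, $r=|R^*|$. Every sub-procedure always returns \emph{some} feasible global vertex-cut of $G$, and the reduction outputs the lightest of the returned cuts; so the output is always feasible, and it suffices to guarantee that, whichever case the true optimum falls into, the corresponding sub-procedure returns a $(1+\epsilon)$-approximate cut with probability at least $\tfrac12$.

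\noindent\emph{Reversal symmetry and the rooting principle.} I would first record two basic facts. First, $(L^*,S^*,R^*)$ is an optimal global vertex-cut of $G$ if and only if $(R^*,S^*,L^*)$ is one of the reverse graph $G^R$ (obtained by flipping every edge of $G$), and a vertex-cut of $G^R$ translates into a vertex-cut of $G$ of the same weight; so it is enough to design a sub-routine that, under the extra promise $\ell\le r$, succeeds with probability $\ge\tfrac12$, and to run it on both $G$ and $G^R$, keeping the lighter output. Second, the \emph{rooting principle}: if $v\in R^*$ then $\OPT_{G,v}\le w(S^*)=\OPT_G$, so a single oracle query on $G$ with root $v$ returns, with probability $\ge\tfrac12$, a (feasible global) vertex-cut of weight $\le(1+\epsilon)\OPT_G$; symmetrically, a vertex $v\in L^*$ lets us query $G^R$ with root $v$. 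More generally the same conclusion holds with $G$ replaced by an auxiliary graph $H$ on the vertex set $V(G)$, provided $H$ is built so that every $v$-rooted vertex-cut of $H$ extends to a global vertex-cut of $G$ of no larger weight and so that $\OPT_{H,v}=\OPT_G$. Hence the whole problem reduces to producing a list of pairs $(v_i,H_i)$, with $\sum_i|E(H_i)|=O(m)$ and of length $O(m/n)$, such that for at least one index $i$ the vertex $v_i$ lies on the correct side and $H_i$ preserves the corresponding rooted optimum.

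\noindent\emph{The easy cases.} If $\ell=1$, then $S^*=N_G^+(a)$ for the single vertex $a\in L^*$, so $\OPT_G=\min\{\,w(N_G^+(v)):\{v\}\cup N_G^+(v)\ne V(G)\,\}$ (the minimum is over a nonempty set because $G$ has a vertex-cut), which is computed exactly in $O(m)$ time with no oracle queries, its minimizer yielding an optimal cut. Assume henceforth $\ell\ge2$. If, in addition, $r\ge n/4$ --- which covers every case with $\ell+r\ge n/2$, since then $r\ge(\ell+r)/2\ge n/4$ --- then a uniformly random vertex of $G$ lies in $L^*\cup R^*$ with probability $\ge\tfrac14$, so by drawing $O(1)$ independent random vertices and, for each one $v$, querying the oracle once on $G$ with root $v$ and once on $G^R$ with root $v$, the rooting principle yields a $(1+\epsilon)$-approximate cut with probability $\ge\tfrac12$, at a cost of $O(1)$ queries on graphs with $O(m)$ edges in total.

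\noindent\emph{The hard case, and the main obstacle.} The remaining case is $\ell\ge2$ together with $r<n/4$ (hence $\ell\le r<n/4$ and $|S^*|>n/2$): both sides of the optimal cut are small, which in particular makes $\OPT_G=w(S^*)\ge|S^*|$ large. Here a uniformly random vertex hits $L^*\cup R^*$ only with probability $O((\ell+r)/n)$, possibly as small as $\Theta(1/n)$, so the random-pivot strategy would need far too many queries; and because $\OPT_G$ is large, no vertex-cut sparsifier can shrink $G$ below $\Omega(n\cdot\OPT_G)$ edges while preserving the relevant rooted cut, so ``one sparsified full-graph query per candidate pivot'' is also ruled out. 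This, I expect, is the crux of the theorem. The intended route is to invoke the local, volume-bounded cut-detection machinery underlying the framework of Chuzhoy et al.~\cite{CMT25}: run a collection of bounded local explorations --- forward explorations in $G^R$ seeded within the source side $L^*$ and backward explorations seeded within $R^*$ --- of total volume $O(m)$, each of which, whenever its seed lies on the relevant side, exposes that side together with its boundary inside a small auxiliary graph $H_i$ on which the rooted oracle recovers a cut of value $\OPT_G$ that extends back to a global cut of $G$; the explorations are then grouped and padded to the vertex set $V(G)$ so that the overall number of rooted queries is $O(m/n)$, on $O(m)$ edges in total. The vertex weights are handled by running this scheme for each of the $O(\log W)$ dyadic guesses of the value $\OPT_G$, rescaling and rounding the weights at each scale so as to lose only a $(1+\epsilon)$ factor; this is the origin of the $\log W$ factor. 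The final accounting --- $O(m/n)$ queries on $O(m)$ edges in total, and $\Tilde O(m\log W)$ time spent outside the oracle (dominated by the $O(\log W)$ rescalings, the $O(m/n)$ local explorations, and routine bookkeeping) --- then follows.
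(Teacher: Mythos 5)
Your proposal correctly identifies the reversal-symmetry device and the rooting principle, and correctly isolates where a naive random-vertex sampler fails (both sides of the cut small). But the core of the theorem — the mechanism that resolves that hard case — is missing, and the picture you sketch of it is wrong in a way that would block you from finding the right argument.

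First, the paper's reduction does not guess $\OPT_G$ at all, and the $\log W$ in the running time has nothing to do with dyadic rescaling; it comes purely from reading and manipulating $\log W$-bit weights. More importantly, your claimed obstacle — ``no vertex-cut sparsifier can shrink $G$ below $\Omega(n\cdot\OPT_G)$ edges while preserving the relevant rooted cut'' — is false. The paper's sparsifier $G_y$ for a fixed root $y$ keeps exactly the edges outgoing from $F^-(y)=V(G)\setminus(\{y\}\cup N^-_G(y))$ plus the edges into $y$, and this preserves the \emph{entire set} of feasible $y$-rooted vertex-cuts exactly (the argument is three lines: any $L$ in a $y$-rooted cut has no edge into $y$, so $L\subseteq F^-(y)$, and all of $L$'s outgoing edges survive). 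The number of edges is $\vol^+(F^-(y))+\deg^-(y)$, which can be far below $n\cdot\OPT_G$.

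Second, the missing idea is the sampling scheme. The paper does not case-split on cardinalities; instead it assumes (via reversal symmetry) that some optimal cut has $w(R^*)\ge w(L^*)$ \emph{by weight}, not by cardinality (these are different for weighted graphs, and your case analysis in terms of $\ell,r=|L^*|,|R^*|$ and $n$ would not carry over). It then computes $w^*=\max_v w(F^+(v))$ and samples each vertex $y$ independently into the root set $T$ with probability $\min\{1,\,2w(y)/w^*\}$. Two things fall out simultaneously: (i) $\mathbb{E}\bigl[\sum_{y\in T}\vol^+(F^-(y))\bigr]\le 2m$, because $v\in F^-(y)$ iff $y\in F^+(v)$ and $\mathbb{E}[|T\cap F^+(v)|]\le 2w(F^+(v))/w^*\le 2$ for every $v$; and (ii) $w(R^*)\ge w^*/2$ (since $(\{v\},N^+(v),F^+(v))$ is a vertex-cut for the maximizing $v$, so $w(S^*)\le w(N^+(v))$, giving $w(L^*\cup R^*)\ge w^*$), hence $\sum_{y\in R^*}\Pr[y\in T]\ge 1$ and $\Pr[T\cap R^*\ne\emptyset]\ge 1/2$. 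After a Markov truncation, every oracle query is on a graph of $\ge n-1$ edges, so $|T|=O(m/n)$. That elegant, case-free sampler is the heart of the proof and is absent from your proposal; the references to local exploration and dyadic rescaling are a detour in the wrong direction.
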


We emphasize that, by setting $\epsilon=0$, the above algorithm can be used to reduce the \emph{exact} Global Minimum Vertex-Cut problem to the \emph{exact} Rooted Minimum Vertex-Cut problem. The above theorem also allows values of $\epsilon$ that are larger than $1$, as well as values that depend on the size of the graph $G$.

Lastly, in the next two theorems, we summarize the guarantees of our algorithms for the rooted variants of the Minimum Edge-Cut and Minimum Vertex-Cut problems.
These are essentially identical to Theorems \ref{thm : main-thm-edge-version} and \ref{thm : main-thm-vertex-version}, except that they deal with the rooted variants of the Minimum Edge-Cut and Vertex-Cut problems, rather than with the global variants of these problems.
%Observe that the following \Cref{thm : main-thm-rooted-vertex-version} can be used directly to implement the Rooted Minimum Vertex-Cut oracle for \Cref{thm:main-reduction}; by doing so, we obtain an algorithm for \Cref{}
%
%\Cref{thm : main-thm-vertex-version} is achieved directly by using \Cref{thm:main-reduction}, with the oracle implemented by the following \Cref{thm : main-thm-rooted-vertex-version}

\begin{theorem}\label{thm : main-thm-rooted-edge-version}
	There is a randomized algorithm whose input consists of a directed $m$-edge graph $G$ with positive integral weights $w(e) \leq W$ on the edges $e \in E(G)$, as well as a precision parameter $\epsilon \in (0,1)$, and a root vertex $y^* \in V(G)$. The output of the algorithm is an edge-cut $(X,Y)$ in $G$ such that $y^* \in Y$ and, with probability at least $\frac{1}{2}$, $w(E_G(X,Y)) \leq (1+\epsilon)\OPT_{E,y^*}$ holds, where $\OPT_{E,y^*}$ denotes the minimum possible weight of an edge-cut $(X',Y')$ in $G$ with $y^* \in Y'$.
	The running time of the algorithm is $O\left(\frac{m^{1+o(1)} \cdot \log W}{\epsilon}\right)$.
\end{theorem}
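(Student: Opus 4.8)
The plan is to reduce the rooted problem to a logarithmic number of fixed-target instances, solve each one by partially sparsifying the graph and then localizing the optimal cut with a \cite{CMT25}-style sampling scheme, and return the lightest cut found. First I would reformulate the objective: writing $Y$ for the side of a rooted edge-cut that contains $y^*$, the feasible cuts are exactly the sets $Y$ with $y^*\in Y\subsetneq V$, and the cut weight equals the total weight $\delta^{\mathrm{in}}(Y)$ of the edges of $G$ entering $Y$. Hence $\OPT_{E,y^*}=\min_{Y:\,y^*\in Y\subsetneq V}\delta^{\mathrm{in}}(Y)=\min_{v\neq y^*}\lambda_G(v\to y^*)$, where $\lambda_G(v\to y^*)$ is the minimum weight of an edge set whose deletion destroys every $v$-to-$y^*$ path; so the goal is to find, up to a $(1+\epsilon)$ factor, the cheapest minimum cut towards the fixed sink $y^*$, minimized over the choice of source. (One could alternatively reduce to the rooted minimum vertex-cut problem by replacing each edge $(u,v)$ of weight $w(e)$ with a fresh vertex $x_e$ of weight $w(e)$ carrying arcs $u\to x_e$ and $x_e\to v$ and making the original vertices prohibitively heavy; I describe instead the direct route.)

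Second, I would remove the dependence on the unknown value $\lambda^*:=\OPT_{E,y^*}$ by a geometric search: for each guess $\lambda\in\{(1+\epsilon)^i : 0\le i\le O(\log(Wm)/\epsilon)\}$ --- there are $O(\log W/\epsilon)$ of these, which is the source of the $\log W/\epsilon$ factor in the running time --- I would call a subroutine that, under the promise $\lambda^*\le\lambda$, returns a feasible rooted edge-cut of weight at most $(1+O(\epsilon))\lambda$, and that in every case returns \emph{some} feasible rooted edge-cut. Outputting the lightest cut over all guesses then yields a $(1+O(\epsilon))$-approximation (rescale $\epsilon$ by a constant), since for the guess with $\lambda\in[\lambda^*,(1+\epsilon)\lambda^*)$ the returned cut has weight $\le(1+O(\epsilon))\lambda^*$ while every returned cut is feasible; the per-guess constant success probability is amplified by independent repetition.

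Third, and most substantively, comes the fixed-target subroutine, which I would build from the machinery of the technical sections (extending the framework of \cite{CMT25} to edge-weighted directed graphs) using two ingredients. (i) A \emph{partial} sparsification step: given the target $\lambda$, produce a graph on the same vertex set with $n^{1+o(1)}$ (or $\tilde{O}(n/\mathrm{poly}\,\epsilon)$) edges whose rooted edge-cuts of weight at most $2\lambda$ agree with those of $G$ up to a $(1\pm\epsilon)$ factor; it is crucial here that only near-minimum cuts are preserved, since a \emph{full} directed cut sparsifier need not exist. (ii) A localization step on the sparsified graph: sample the source side $X^*$ of an optimal cut at geometric densities $2^{-0},2^{-1},\ldots$, so that at the correct scale exactly one vertex $v\in X^*$ is sampled with constant probability while $y^*$ is never sampled, and then run a directed isolating-cut-type procedure using only $n^{o(1)}$ (or $\mathrm{polylog}$) maximum-flow computations to extract, at that scale, a cut towards $y^*$ of weight $\le(1+\epsilon)\lambda^*$; each maximum-flow call costs $m^{1+o(1)}$ time by \cite{vdBCK23}. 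Finally I would translate the cut found in the sparsified graph back to $G$ and verify feasibility ($y^*\in Y$) and its weight.

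I expect the main obstacle to be ingredient (ii): unlike the undirected case, directed cuts lack the uncrossing/locality structure that makes isolating-cut divide-and-conquer cheap, so realizing the localization with only $n^{o(1)}$ maximum-flow calls --- and making it robust to the $(1+\epsilon)$ distortion introduced by the partial sparsifier --- is precisely where the extended framework has to do real work; establishing a partial cut sparsifier in the directed weighted setting (ingredient (i)) is the secondary difficulty. By contrast, the reformulation, the geometric value search, translating cuts between $G$ and its sparsifier, and probability amplification are all routine.
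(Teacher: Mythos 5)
Your high-level plan shares the sampling-and-sparsification theme of the paper, but the two concrete ingredients on which your fixed-target subroutine rests are not what the paper constructs, and both are genuine gaps.

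Ingredient (i), a standalone ``partial cut sparsifier'' on the full vertex set with $n^{1+o(1)}$ edges that $(1\pm\epsilon)$-preserves all near-minimum rooted cuts of a directed weighted graph, is not established (directed cut sparsifiers are a notoriously hard object even for $s$-$t$ cuts), and the paper does not build one. Instead, the paper constructs a \emph{hierarchy} of batch-specific sparsifiers: it samples a terminal set $T$ (so that some terminal lands in the source side $X^*$ of a fixed optimal cut), builds a binary partition of $T$ into batches $\bset_1,\ldots,\bset_z$, and for each level-$i$ batch $B$ produces a graph $G_B$ that is a \emph{contraction} of $G$ — all vertices outside a carefully chosen set $A_B$ are merged into $y^*$ — with $|E(G_B)|=\Tilde{O}(m/(2^i\epsilon))$. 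Crucially, $G_B$ need only preserve cuts that separate terminals of $B$ from $y^*$, not all near-minimum rooted cuts, which is a much weaker and achievable requirement. The per-phase total size $\Tilde{O}(m/\epsilon)$ (and hence the $1/\epsilon$ factor in the running time) is forced by adding to the auxiliary flow network low-capacity ``disincentive'' edges $(v,y^*)$ with weight proportional to $\epsilon\cdot\widehat{\OPT}\cdot\deg^+(v)/\nu$, so that the minimum super-source cut cannot select a set $A_B$ of large out-volume. Your $1/\epsilon$ factor instead comes from a $(1+\epsilon)$-geometric grid over the cut value, which the paper does not need: it only guesses $\widehat{\OPT}$ up to a factor $2$ ($O(\log(mW))$ guesses) and $\nu\approx\vol^+(X^*)$ up to a factor $2$ ($O(\log m)$ guesses), and the $1/\epsilon$ is paid entirely in sparsifier size.

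Ingredient (ii), localizing the optimal cut with only $n^{o(1)}$ max-flow calls via a ``directed isolating-cut-type procedure,'' is also not what the paper does and you correctly flag it as the hard part — but you do not close it. The paper makes $\Tilde{O}(|T|)$ max-flow calls in total (one per batch per phase, plus one per terminal at the leaves), not $n^{o(1)}$; the running time stays almost-linear because each call is on a sparsified graph $G_B$ whose size decreases geometrically with the level $i$, and the total work per phase is $\sum_{B\in\bset_i}|E(G_B)|^{1+o(1)}=\Tilde{O}(m^{1+o(1)}/\epsilon)$. No directed analogue of the isolating-cuts lemma is invoked. Finally, the $\log W$ factor in the paper's statement is obtained not by the geometric grid over $\lambda$ but by a separate weight-rescaling reduction (\Cref{sec: low-dependence-on-W}) that rounds weights to $\mathrm{poly}(m/\epsilon)$ and runs the $\mathrm{polylog}(W)$-time algorithm on $O(\log W)$ rescaled instances.
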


\begin{theorem}\label{thm : main-thm-rooted-vertex-version}
	There is a randomized algorithm whose input consists of a directed $m$-edge graph $G$ with positive integral weights $w(v) \leq W$ on the vertices $v \in V(G)$, as well as a precision parameter $\epsilon \in (0,1)$, and a root vertex $y^* \in V(G)$ such that there exists some vertex-cut $(L',S',R')$ in $G$ with $y^* \in R'$. The output of the algorithm is a vertex-cut $(L,S,R)$ in $G$ such that $y^* \in R$ and, with probability at least $\frac{1}{2}$, $w(S) \leq (1+\epsilon)\OPT_{V,y^*}$ holds, where $\OPT_{V,y^*}$ denotes the minimum possible weight of a vertex-cut $(L',S',R')$ in $G$ with $y^* \in R'$.
	The running time of the algorithm is $O\left(\frac{m^{1+o(1)} \cdot \log W}{\epsilon}\right)$.
\end{theorem}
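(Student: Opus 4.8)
The plan is to prove \Cref{thm : main-thm-rooted-vertex-version} by instantiating the rooted-minimum-vertex-cut framework of \cite{CMT25} in the weighted, $(1+\epsilon)$-approximate setting, layered on top of a weight-discretization step. The scheme has three building blocks. First, \emph{vertex-splitting}: replace each $v$ by an arc $v_{\mathrm{in}}\to v_{\mathrm{out}}$ of capacity $w(v)$ (original arcs given capacity $+\infty$), so that for any \emph{fixed} candidate source $s$, the minimum rooted vertex-cut with $s$ on the $L$-side and $y^*$ on the $R$-side corresponds exactly to a minimum $s$-$y^*$ cut, computable in $m^{1+o(1)}$ time via \cite{vdBCK23}. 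Second, a \emph{weight-discretization}: guess a value $\lambda\in[\OPT_{V,y^*},\,2\OPT_{V,y^*})$ among the $O(\log W)$ powers of two, raise to $\lambda+1$ the weight of every vertex heavier than $\lambda$ (such a vertex can never lie in an optimal $S$ when $\lambda\ge\OPT_{V,y^*}$), and round-and-rescale the remaining weights, so that all capacities become polynomially bounded and hence each max-flow call below costs $m^{1+o(1)}$; we keep the lightest cut found over the $O(\log W)$ guesses. Third, the one genuinely hard step — \emph{finding} a source $s$ in the optimal $L$ without knowing the cut — which is exactly what the framework of \cite{CMT25} addresses, and which I would adapt to the weighted case.

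For the source-finding step I would follow \cite{CMT25}. An optimal cut $(L^*,S^*,R^*)$ is located by combining (i) \emph{weighted random sampling} of vertices — if $w(L^*)$ is a large fraction of the total weight then a few random vertices hit $L^*$ with high probability, and iterating over $O(\log n)$ geometric guesses of $w(L^*)$ covers all ``large'' regimes — with (ii) \emph{local, volume-bounded flow computations} for the regime where $w(L^*)$ (equivalently $w(R^*)$) is small, whose running time scales with the explored weighted volume rather than with $m$, in the spirit of local vertex-connectivity algorithms. Since both (i) and (ii) spawn many flow subproblems, their total size must be controlled; this is the role of the batch- and pseudo-sparsification primitives \exactsparsify, \approxsparsify, \vertexapproxsparsify, and \exactvertexpseudosparsify, which collapse a batch of vertex-cut/flow instances into a much smaller collection that still preserves, to within a $1\pm\epsilon$ factor, every vertex-cut of value at most $(1+\epsilon)\lambda$ separating the relevant terminals from $y^*$. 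The new content relative to \cite{CMT25} is to make all of this weight-robust: weighted sampling, weighted local exploration, and weighted (pseudo-)sparsifiers, the last of which must additionally be error-composable across weight scales.

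Assembling the pieces: a single guess $\lambda$ triggers the source-finding subroutine, which I expect to cost $m^{1+o(1)}/\epsilon$ time — dominated by maximum-flow calls, with an additive $O(m\cdot\polylog(m)/\epsilon)$ for the sparsifiers and bookkeeping — so over the $O(\log W)$ guesses the total is $O(m^{1+o(1)}\log W/\epsilon)$, with all non-max-flow work being $O(m\cdot\polylog(m)\cdot\log W/\epsilon)$. For correctness, every cut the algorithm ever produces is obtained as a (possibly local) $s$-$y^*$ vertex-min-cut, so it automatically respects $y^*\in R$; conditioning on the source-finding step succeeding — which happens with high probability and can be amplified cheaply — the cut returned for the guess $\lambda^*\in[\OPT_{V,y^*},2\OPT_{V,y^*})$ has weight at most $(1+\epsilon)\OPT_{V,y^*}$ after rescaling $\epsilon$ by a constant to absorb the discretization loss, which gives the claimed $\frac{1}{2}$ success probability. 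The rooted \emph{edge}-cut statement, \Cref{thm : main-thm-rooted-edge-version}, is obtained by the identical scheme with the vertex-splitting step omitted, and is if anything simpler.

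The main obstacle is replacing the ``$k$'' (the cut value) in the $O(m^{1+o(1)}k)$ bound of \cite{CMT25} by a ``$1/\epsilon$'' inside the source-finding subroutine. A naive rescaling of the vertex weights only shrinks the effective cut value to $O(n/\epsilon)$, which is far too large; one instead has to argue that a $(1+\epsilon)$-approximation needs only an $O(1/\epsilon)$-sized ``budget'' in the source-finding loop — for instance that examining $O(1/\epsilon)$ candidate sources, or performing $O(1/\epsilon)$ augmentation/contraction phases, already yields a cut within a $(1+\epsilon)$ factor of optimal — and, crucially, to make this hold \emph{amortized over all $O(\log W)$ weight scales and all recursion levels} without the sparsification reintroducing a factor of $k$. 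A secondary difficulty is the weighted adaptation of the sampling and local-flow subroutines of \cite{CMT25}: a single heavy vertex can dominate one side of the cut, which I expect forces an extra $O(\log W)$-way bucketing of vertices by weight class (a natural origin of the $\log W$ factor), and one must verify that the per-scale $1\pm\epsilon$ guarantees of the (pseudo-)sparsifiers compose into a single $1+\epsilon$ guarantee, which pushes the error budget per layer down to $\epsilon/n^{o(1)}$ and is what forces the $n^{o(1)}$ (rather than $\polylog$) overhead throughout.
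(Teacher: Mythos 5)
Your high-level plan — guess $\OPT$ across $O(\log W)$ scales, sample a source vertex in $L^*$, then sparsify so that the per-terminal connectivity tests are cheap — matches the paper's in outline, but you explicitly flag the one step you cannot fill in, and that step is precisely the paper's main technical contribution. Let me be concrete about what is missing and what you got wrong.

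The missing idea is \emph{how} $\vertexapproxsparsify$ achieves a sparsifier of out-volume $\tilde{O}(m/(2^i\epsilon))$ rather than $\tilde{O}(m\cdot\widehat{\OPT}/2^i)$. Your suggestions — ``examining $O(1/\epsilon)$ candidate sources'' or ``$O(1/\epsilon)$ augmentation/contraction phases'' — are not how the paper does it. The paper's mechanism is a cut-penalization trick inside the auxiliary graph $H_{B'}$: in addition to the super-source $s$ joined to terminal-copies $x^{\tcopy}$ of weight $4\widehat{\OPT}$, one attaches to every vertex $v$ a low-weight sink vertex $v^{\vcopy}$ of weight $\frac{\epsilon\,\widehat{\OPT}\,\deg^+_G(v)}{z\cdot 2\nu}$ with an edge into $y^*$. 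These sink vertices must all lie in the separator of any $s$-$y^*$ vertex-cut whose $L$-side they neighbor, so a minimum $s$-$y^*$ cut $(L_A,S_A,R_A)$ in $H_{B'}$ automatically satisfies $\vol^+_G(A_B)\le w_{B'}(S_A)\cdot\frac{z\cdot 2\nu}{\epsilon\,\widehat{\OPT}}=\tilde{O}(m/(2^i\epsilon))$; and by submodularity the modification only degrades the preserved cut value by an \emph{additive} $\epsilon\,\OPT_{V,y^*}/z$ per phase, summing to $\epsilon\,\OPT_{V,y^*}$ over the $z=O(\log|T|)$ phases. That additive-per-phase budgeting is also where your speculation goes wrong: the error tolerance per layer is $\epsilon/\polylog$, not $\epsilon/n^{o(1)}$; the $n^{o(1)}$ overhead is entirely inherited from the max-flow subroutine of \cite{vdBCK23}, and the single $\log W$ factor comes from that subroutine plus a post-processing weight-rescaling reduction (\Cref{thm:reducing-dependence-on-W}), not from $O(\log W)$-way bucketing of vertices by weight class.

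Two further mismatches with the paper's actual proof. First, the terminal sampling is \emph{volume-proportional}, not weight-proportional: the algorithm guesses $\nu\approx\vol^+_G(L^*)$ (out-volume), samples $\Theta(m/\nu)$ random edges, and lets $T$ be their tails, per \Cref{obs:picking terminals}; no weighted vertex sampling and no ``large $w(L^*)$ vs.\ small $w(L^*)$'' case split. Second, there are no local, volume-bounded flow computations in this paper's rooted algorithm — the entire source-finding task is handled by the hierarchical batch-sparsification loop and a final round of exact max-flow calls on the tiny level-$z$ sparsifiers. So while your skeleton is reasonable, the proposal as written does not constitute a proof: the core lemma (\Cref{cl:implementing-a-single-iteration-of-main-algorithm:vertex-version}, culminating in the sink-vertex construction of $H_{B'}$ and Properties \ref{property:A_B-P'1:vertex-version}–\ref{property:A_B-P'2:vertex-version}) is absent, and the substitute mechanisms you gesture at do not obviously yield the claimed $1/\epsilon$ dependence.
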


We also comment that, even though all of our theorems are stated and proved for the setting in which edge-weights and vertex-weights are not allowed to be zero, it is possible to convert our algorithms to the setting in which zero-weight edges and vertices are allowed, via a standard reduction. For the sake of completeness, we describe this reduction in \Cref{sec:zero-weights}.

%\rnote{The following paragraph probably goes at the end of the results section.}
%
%
%\paragraph{Dependence on $W$.}
%We note that we intentionally did not optimize the dependence on $W$ in the running times of \Cref{thm : main-thm-edge-version} and \Cref{thm : main-thm-vertex-version}, in order to keep both the statements and proofs of these theorems as simple as we can.
%However, in \Cref{sec: low-dependence-on-W}, we show that the dependence on $W$ in these running times can be made as small as $O(\log \log W)$, provided that the input representation of each weight $w$ includes both its binary representation in $O(\log W)$ bits, and an $O(\log\log W)$-bit representation of the value $\ceil{\log w}$.
%Since this special representation of the weights can be computed in time $O(m \cdot \log W)$ for the Global Minimum Edge-Cut problem, and in time $O(n \cdot \log W)$ for the Global Minimum Vertex-Cut problem, it follows that the running time in \Cref{thm : main-thm-edge-version} can be made as small as $O\left(\frac{m^{1+o(1)} \cdot \log \log W}{\epsilon} + m \log W\right)$, and that the running time in \Cref{thm : main-thm-vertex-version} can be made as small as $O\left(\frac{m^{1+o(1)} \cdot \log \log W}{\epsilon} + n \log W\right)$.

\paragraph{Completing the proofs of Theorems \ref{thm : main-thm-edge-version} and \ref{thm : main-thm-vertex-version}}
We complete this subsection by demonstrating how to prove our Global Minimum Cut theorems (Theorems \ref{thm : main-thm-edge-version} and \ref{thm : main-thm-vertex-version}) using our Rooted Minimum Cut theorems (Theorems \ref{thm : main-thm-rooted-edge-version} and \ref{thm : main-thm-rooted-vertex-version}) and our reduction theorem (\Cref{thm:main-reduction}).

To prove \Cref{thm : main-thm-vertex-version}, we simply use the reduction from \Cref{thm:main-reduction}, and, whenever it makes a query to the $(1+\epsilon)$-Approximate Rooted Minimum Vertex-Cut oracle, we answer this query using a call to \Cref{thm : main-thm-rooted-vertex-version}.
It is not hard to verify that \Cref{thm : main-thm-rooted-vertex-version} satisfies the guarantees required by the oracle, and that this reduction results in a Global Minimum Vertex-Cut algorithm exactly as specified by \Cref{thm : main-thm-vertex-version}.

The proof of our Global Minimum Edge-Cut theorem (\Cref{thm : main-thm-edge-version}) follows from our Rooted Minimum Edge-Cut theorem via standard techniques. Specifically, given an input to \Cref{thm : main-thm-edge-version} consisting of a graph $G$ with weights $w(e)$ on the edges $e \in E(G)$, and a precision parameter $\epsilon \in (0,1)$, we first compute the "reversed graph" $\overline{G}$ that is obtained from $G$ by reversing the direction of all edges; we then fix any vertex $r \in V(G)$ as the root, and make two calls to the Rooted Minimum Edge-Cut algorithm of \Cref{thm : main-thm-rooted-edge-version} -- one call is on the graph $G$ with root $r$, while the other call is on the reversed graph $\overline{G}$ with the same root $r$; both calls use the precision parameter $\epsilon$ from the input. Let $(X,Y)$ denote the edge-cut in $G$ obtained by the first call, and let $(\overline{X}',\overline{Y}')$ denote the edge-cut in $\overline{G}$ obtained by the second call. Define $X' = \overline{Y}'$ and $Y' = \overline{X}'$ -- it is not hard to verify that $(X',Y')$ is an edge-cut in $G$ whose value $w(E_G(X',Y'))$ is the same as the value of the edge-cut $(\overline{X}',\overline{Y}')$ in the reverse graph $\overline{G}$, namely, the same as $w(E_{\overline{G}}(\overline{X}',\overline{Y}'))$.
Lastly, we return the edge-cut whose value in $G$ is smallest among the two edge-cuts $(X,Y)$ and $(X',Y')$.
Therefore, to prove the correctness of the algorithm, it must be shown that with probability at least $\frac{1}{2}$, at least one of the two edge-cuts $(X,Y)$ and $(X',Y')$ has value at most $(1+\epsilon)\OPT_{E}$, where $\OPT_{E}$ denotes the weight of the global minimum edge-cut in $G$. For the analysis, consider any global minimum edge-cut $(X^*,Y^*)$ in $G$. The analysis can then be divided into two cases based on whether $r \in Y^*$ or $r \in X^*$: in the case where $r \in Y^*$, it follows from the guarantees of our Rooted Minimum Edge-Cut algorithm that the weight of the edge-cut $(X,Y)$ satisfies $w(E_G(X,Y)) \leq (1+\epsilon)\OPT_{E}$ with probability at least $\frac{1}{2}$, as needed; in the complementary case where $r \in X^*$, it can be verified using a similar argument that the edge-cut $(X',Y')$ satisfies $w(E_G(X',Y')) = w(E_{\overline{G}}(\overline{X}',\overline{Y}')) \leq (1+\epsilon)\OPT_{E}$ with probability at least $\frac{1}{2}$ -- namely, this holds because the edge-cut $(Y^*,X^*)$ in the reversed graph $\overline{G}$ is a solution of value $\OPT_{E}$ to the Rooted Minimum Edge-Cut instance on $\overline{G}$ with root $r$. This concludes the proof of \Cref{thm : main-thm-edge-version} using \Cref{thm : main-thm-rooted-edge-version}.

\subsection{Technical Overview}

\subsubsection{Rooted Minimum Cut}\label{sec:overview-of-rooted-minimum-cut}

We now give an overview of the techniques used in our Rooted Minimum Edge-Cut and Rooted Minimum Vertex-Cut algorithms (Theorems \ref{thm : main-thm-rooted-edge-version} and \ref{thm : main-thm-rooted-vertex-version}).
Our algorithms are inspired by techniques from the $O\left(m^{1+o(1)} \cdot k\right)$-time exact algorithm of \cite{CMT25} for the unit-weight Global Minimum Vertex-Cut problem, so we begin with a discussion of some ideas used by that algorithm. Then, we discuss our algorithms and how they relate to these ideas.
We note that, while \cite{CMT25} solve the Global Minimum Vertex-Cut problem rather than its rooted variant, their algorithm implicitly begins by reducing the problem to the rooted variant, using a reduction that is tailored to the unit-weight setting.
% -- in fact, the remainder of their algorithm can easily be made to work without the unit-weight assumption.
We now describe the setup for the techniques used by \cite{CMT25} for solving the resulting Rooted Minimum Vertex-Cut instance.

Consider an input instance of the Rooted Minimum Vertex-Cut problem, consisting of a graph $G$ with weights $w(v)$ on the vertices $v \in V(G)$, and a root vertex $y^*$. Let $\OPT$ denote the value of the optimal solution to this instance.
Observe that, in order to (exactly) solve this problem, it suffices to find some vertex $v$ for which the weighted $v$-$y^*$ vertex-connectivity, $\kappa_G(v,y^*)$, is equal to $\OPT$ -- once such a vertex is found, an optimal solution to the problem can be extracted using a single call to an $s$-$t$ Minimum Cut subroutine.
%
%
%The idea behind the algorithm of \cite{CMT25} is to first define the notion of a \emph{sparsifier} graph $G_x$ for a vertex $x$, in such a manner that allows the existence of small sparsifiers, while guaranteeing that a single $s$-$t$ max-flow invocation on a the sparsifier $G_x$ can be used to determine whether $x$ is on the side $L$ of a good cut.
%
%Suppose we sampled some set $T$ of \emph{terminals} such that there exists some $x^* \in L \cap T$.
%
A common paradigm for finding such a vertex $v$ is to first sample some set $T$ of \emph{terminals} so that, with high probability, there exists some terminal $x^* \in T$ satisfying $\kappa_G(x^*,y^*)=\OPT$; then, for every terminal $x \in T$, construct a \emph{sparsifier graph} $G_x$, so that a single call to an $s$-$t$ Max-Flow subroutine on the sparsifier $G_x$ suffices to distinguish between the case where $x=x^*$ and the case where $\kappa_G(x,y^*)>\OPT$.
Thus, as long as the total size $\sum_{x \in T}|E(G_x)|$ of the sparsifiers is sufficiently small, it is possible to efficiently find a terminal $x \in T$ with $\kappa_G(x,y^*) = \OPT$, by calling an $s$-$t$ Max-Flow subroutine once on each sparsifier.
The main challenge is thus to efficiently construct such sparsifiers while guaranteeing that $\sum_{x \in T}|E(G_x)|$ is small.

The work of \cite{CMT25} introduced a new technique for constructing such sparsifiers, which we describe next.
At a high level, the technique of \cite{CMT25} shows that in order to construct the desired collection of sparsifiers, it suffices to have an efficient subroutine, which we refer to as \exactsparsify, that solves the following problem:
%given a set $B'$ of terminals, a subset $B \subseteq B'$, and a graph $G_{B'}$ that is simultaneously a sparsifier for every terminal $x \in B'$, construct a small graph $G_B$ that is simultaneuously a sparsifier for every terminal $x \in B$, so that the size of $G_B$ is proportional to $|B|$.
given a set $B'$ of terminals, a subset $B \subseteq B'$, and a graph $G_{B'}$ that simultaneously serves as a sparsifier for the whole set $B'$ of terminals (in a formalized sense, that we choose to omit from this high-level overview), construct a small graph $G_B$ that serves as a sparsifier for the set $B$ of terminals, so that the size of $G_B$ is proportional to $|B|$.
Furthermore, the running time of this subroutine must be roughly equal to the size of the graph $G_{B'}$, which may be smaller than $G$.

More concretely, the algorithm of \cite{CMT25} employs a divide-and-conquer approach consisting of $z \approx \log T$ phases: in Phase $i$, the algorithm considers a partition $\bset_i$ of the set $T$ of terminals, and constructs a sparsifier $G_B$ for each part $B \in \bset_i$ of this partition.
(We refer to the parts $B \in \bset_i$ of this partition as \emph{batches}.)
Furthermore, for each $i>1$, the partition $\bset_i$ is a refinement of the partition $\bset_{i-1}$ that was considered in the previous phase, in the sense that every batch $B \in \bset_i$ has a \emph{parent batch} $B' \in \bset_{i-1}$ which satisfies $B \subseteq B'$.
Thus, the sparsifier $G_B$ for each batch $B \in \bset_i$ can be computed by using the \exactsparsify subroutine on the sparsifier $G_{B'}$ of its parent batch.
Then, for each $i' \in [z]$, the total size $\sum_{B \in \bset_{i'}}|E(G_B)|$ of the sparsifiers constructed in Phase $i$ is proportional to $\sum_{B \in \bset_{i'}}|B|=|T|$; and, for every $i > 1$, by guaranteeing that each batch $B' \in \bset_{i-1}$ serves as the parent batch of at most $2$ batches of $\bset_i$, it follows that the total running time of Phase $i$ is roughly $\sum_{B' \in \bset_{i-1}}|E(G_{B'})|$.
%Lastly, the final sparsifier $G_x$ for each terminal $x$ is the one constructed for its batch in Phase $z$, where the partition $\bset_z$ consists only of constant-sized batches. Recall that the goal is producing sparsifiers such that $\sum_{x \in T} |E(G_x)|$ is small, and observe that, for every batch $B \in \bset_z$, the size of the sparsifier $G_B$ is counted exactly $|B|$ times in the expression $\sum_{x \in T} |E(G_x)|$ -- thus, the small batch-size in the partition $\bset_z$ allows for upper-bounding $\sum_{x \in T} |E(G_x)|$ by $O\left(\sum_{B \in \bset_z}|E(G_B)|\right)$.
Lastly, the final sparsifier $G_x$ for each terminal $x$ is the one constructed for its batch in Phase $z$, where the partition $\bset_z$ consists only of constant-sized batches -- it is then possible to use the fact that $\bset_z$ has small batches in order to upperbound the expression that we wanted to minimize, namely, $\sum_{x \in T} |E(G_x)|$. Specifically, this expression is upper-bounded as $\sum_{x \in T} |E(G_x)| \leq O\left(\sum_{B \in \bset_z}|E(G_B)|\right)$, because the size $|E(G_B)|$ of each sparsifier $\set{G_B}_{B \in \bset_z}$ is counted only $O(1)$ times in the left-hand side of this inequality.
The key remaining part is the implementation of the \exactsparsify subroutine.

%We now give some more details regarding the sparsifiers used by \cite{CMT25}, which help them solve the \exactsparsify problem. Assume from now on that there indeed exists some $x^* \in T$ with $\kappa_G(x^*,y^*)=\OPT$, and fix one such vertex $x^*$.
%It turns out that in order to construct a sparsifier $G_{B'}$, it suffices to find a low out-volume set $A_{B'} \subseteq V(G)$ of vertices such that, if $x^* \in B'$, then there exists an $x^*$-$y^*$ vertex-cut $(L,S,R)$ in $G$ with $L \subseteq A_{B'}$, whose weight is $w(S)=\OPT_{V,y^*}$.
%For reasons that are described next, the algorithm of \cite{CMT25} also makes sure to choose the set $A_{B'}$ such that, if $x^* \in B'$, then for a carefully chosen parameter $\nu$, the aforementioned vertex-cut $(L,S,R)$ satisfies $\vol^+_G(L) \leq \nu$.
%Then, if $x^* \in B'$, the obtained sparsifier $G_{B'}$ satisfies that $x^*,y^* \in V(G_{B'}) \subseteq V(G)$ and that there exists an $x^*$-$y^*$ vertex-cut $(L',S',R')$ in $G_{B'}$ of weight $w(S')=\OPT$, which additionally satisfies that the out-volume of $L'$ \textbf{in $\mathbf{G}$} is $\vol^+_G(L') \leq \nu$ -- a fact that is useful for solving  the \exactsparsify problem.

The algorithm of \cite{CMT25} (implicitly) provides a \exactsparsify subroutine that runs in time\\$O\left(|E(G_{B'})|^{1+o(1)}\right)$ (in the unit-weight setting), and produces a sparsifier $G_B$ of size $|E(G_B)| = \Tilde{O}\left(\frac{|B|}{|T|} \cdot m \cdot \OPT\right)$. (To achieve this, the sparsifiers must satisfy some additional technical properties.)
Thus, the total size of the sparsifiers they construct in each phase is $\Tilde{O}(m \cdot \OPT)$, and the total running time of their algorithm is $O\left(m^{1+o(1)} \cdot \OPT\right)$.

In this paper, our main contribution is introducing a $(1+\epsilon)$-approximate version of the \exactsparsify subroutine, which produces a sparsifier $G_B$ of size $|E(G_B)| = \Tilde{O}\left(\frac{|B|}{|T|} \cdot m \cdot \frac{1}{\epsilon}\right)$, using different techniques than those of \cite{CMT25}. (See \Cref{sec:comparison-of-techniques-for-batch-sparsify} for a comparison of these techniques.) The running time of this subroutine is $O\left(|E(G_{B'})|^{1+o(1)} \cdot \log W\right)$, where $W$ denotes the maximum vertex-weight.
Furthermore, we extend the divide-and-conquer framework of \cite{CMT25} to the approximate setting, so that it can be used together with this approximate subroutine in order to approximately solve the Rooted Minimum Vertex-Cut problem. (We also slightly simplify this framework by omitting one technical notion that was used in \cite{CMT25}, which stated that certain batches become "invalid" throughout the execution of the algorithm.)
Thus, the total size of the sparsifiers that our algorithm constructs in each phase is $\Tilde{O}\left(\frac{m}{\epsilon}\right)$, and the total running time of our algorithm is $O\left(\frac{m^{1+o(1)} \cdot \log W}{\epsilon}\right)$, as promised in \Cref{thm : main-thm-rooted-vertex-version}.\footnote{Actually, to achieve the probability bound stated in \Cref{thm : main-thm-rooted-vertex-version}, we must amplify the success probability using $\polylog(mW)$ many repetitions, thus ending with a running time of $O\left(\frac{m^{1+o(1)} \cdot \polylog W}{\epsilon}\right)$. However, we then transform this algorithm into one that runs in time $O\left(\frac{m^{1+o(1)} \cdot \log W}{\epsilon}\right)$, by using a standard reduction that we discuss in \Cref{sec: low-dependence-on-W}.}
We furthermore observe that all of the algorithms mentioned above have straightforward analogues in the edge-cut setting, thus allowing us to obtain our result for the Rooted Minimum Edge-Cut problem (\Cref{thm : main-thm-rooted-edge-version}).

\subsubsection{Reducing Global Minimum Vertex-Cut to Rooted Minimum Vertex-Cut.}

Consider an instance of the Global Minimum Vertex-Cut problem, consisting of a directed $m$-edge graph $G$ with positive integral weights $w(v)$ on the vertices $v \in V(G)$.
Let $\OPT$ denote the weight of the global minimum vertex-cut on $G$.

A common paradigm for solving the Global Minimum Vertex-Cut problem, is to reduce it to several instances of the $s$-$t$ Minimum Vertex-Cut problem (\cite{LNP25,JNSY25}).
This kind of reduction consists of two steps.
In the first step, the algorithm constructs a collection $\{(s_i,t_i)\}_{i=1}^{\ell}$ of pairs of vertices from $G$, which naturally induces a collection of $\ell$ instances of the $s$-$t$ Minimum Vertex-Cut problem on the graph $G$, such that every feasible solution of any of these instances is also a feasible solution to the Global Minimum Vertex-Cut problem on $G$.
Furthermore, the algorithm constructs the collection $\{(s_i,t_i)\}_{i=1}^{\ell}$ such that with high probability, there exists some global minimum vertex-cut $(L,S,R)$ of $G$ and some $i \in [\ell]$ such that $s_i \in L$ and $r_i \in R$ holds; whenever this happens, it follows that the optimal solution value of the $i$'th instance of $s$-$t$ Minimum Vertex-Cut is exactly equal to $\OPT$. In that case, an optimal solution to the original problem can be obtained by solving these $\ell$ instances of $s$-$t$ Minimum Vertex-Cut, and selecting the minimum-weight vertex-cut among these solutions.
However, explicitly solving these instances is often prohibitively expensive, as their total size is $\ell \cdot m$.
Therefore, the second step of the paradigm consists of "sparsifying" each of these instances -- that is, for each $i \in [\ell]$, the algorithm finds a graph $G_i$ such that the $s_i$-$t_i$ minimum vertex-cut in $G_i$ has the same weight as the $s_i$-$t_i$ minimum vertex-cut in $G$, and any $s_i$-$t_i$ vertex-cut in $G_i$ can be easily transformed into an $s_i$-$t_i$ vertex-cut in $G$ of the same value. Then, these new sparsified instances of the $s$-$t$ Minimum Vertex-Cut problem can be solved instead of the instances on $G$.
By guaranteeing that the total size of the graphs $\{G_i\}_{i \in [\ell]}$ is sufficiently small, algorithms that use this method are able to solve all $s$-$t$ Minimum Vertex-Cut instances faster than the time it would take to explicitly solve them on the original graph $G$.

At a very high level, our reduction in \Cref{thm:main-reduction} follows a paradigm similar to the above, except we reduce to the Rooted Minimum Vertex-Cut problem instead of to the $s$-$t$ Minimum Vertex-Cut problem.
Specifically, in the first step of the reduction, we construct an appropriate set $T \subseteq V(G)$ of $\ell$ "root" vertices, which naturally induces a collection of $\ell$ instances of the Rooted Minimum Vertex-Cut problem on the graph $G$, such that any solution of these instances is also a feasible solution to the Global Minimum Vertex-Cut problem on $G$.
Furthermore, if there exists some global minimum vertex-cut $(L,S,R)$ of $G$ and some root $r \in T \cap R$, then one of these implicit instances of the Rooted Minimum Vertex-Cut problem has the same optimal solution value as that of the Global Minimum Vertex-Cut instance on $G$.
In the second step, we sparsify each of these instances by constructing, for each root $r \in T$, a sparsified graph $G_r$ with $V(G_r)=V(G)$, such that the set of feasible solutions of the Rooted Minimum Vertex-Cut problem on $G_r$ with root $r$ is exactly the same as that on $G$ with root $r$.
The key challenge is in making sure that the total size of all sparsified graphs $G_r$ is small, while at the same time selecting sufficiently many roots into the set $T$ so as to guarantee that with high probability, there exists some global minimum vertex-cut $(L,S,R)$ of $G$ with $T \cap R \neq \emptyset$.
Next, we explain our approach to overcoming this challenge.

For the sake of simplicity of this overview, we will make the additional assumption that some global minimum vertex-cut $(L,S,R)$ of $G$ has $w(R) \geq w(L)$ -- it is not difficult to get rid of this assumption, and we explain how to do this in \Cref{sec:removing-assumption-in-reduction:reduction}.
Our strategy for overcoming the aforementioned challenge is inspired by techniques from the $O\left(m^{1+o(1)}n\right)$-time deterministic Global Minimum Vertex-Cut algorithm of Jiang et al. \cite{JNSY25}.
The algorithm of \cite{JNSY25} follows the earlier-mentioned paradigm of reducing the Global Minimum Vertex-Cut problem to the $s$-$t$ Minimum Vertex-Cut problem;
furthermore, one of the key ideas in their algorithm is to construct the collection $\{(s_i,t_i)\}_{i=1}^{\ell}$ of source-sink pairs such that each vertex $u \in V(G)$ serves as the source $s_i$ in at most $\Tilde{O}\left(n \cdot\min\left\{\frac{w(u)}{w(V(G)) - \OPT},1\right\}\right)$ many pairs. Using this guarantee, as well as some clever analysis that is beyond the scope of this high-level overview, they are able to create sparsified graphs of total size $O(mn)$, which suffices for their $O\left(m^{1+o(1)}n\right)$-time deterministic algorithm.
At a high level, our key idea is to sample each vertex $u$ into our set $T$ of roots independently with probability roughly $\min\left\{\frac{w(u)}{w(V(G)) - \OPT},1\right\}$, which is $n$-fold smaller than the number of times that $u$ serves as the source vertex in the reduction of \cite{JNSY25}.
Then, by using a similar sparsification method to the one used by \cite{JNSY25}, we can guarantee that the expected total size of our sparsified graphs is $O(m)$ -- an $n$-fold improvement over the total size achieved by \cite{JNSY25}.
It is also not difficult to verify that this method of sampling ensures that for some global minimum vertex-cut $(L,S,R)$, we have a constant probability of sampling a root from the set $R$: indeed, observe that $w(L \cup R) = w(V(G)) - \OPT$ holds for every global minimum vertex-cut, and so, by our earlier assumption that some global minimum vertex-cut $(L,S,R)$ satisfies $w(R) \geq w(L)$, it follows that this cut satisfies $w(R) \geq \left(w(V(G)) - \OPT\right)/2$; it can then be verified that with probability $\Omega(1)$, at least one vertex of $R$ is sampled into $T$.
%By our earlier assumption that some global minimum vertex-cut $(L,S,R)$ of $G$ has $w(R) \geq w(L)$, it follows that we have a constant probability of sampling a root from $R$, as needed.
To summarize, our $n$-fold improvement over the reduction of \cite{JNSY25} comes both from leveraging the power of random sampling, which could not be used by the deterministic algorithm of \cite{JNSY25}, and, perhaps even more crucially, from the fact that we are reducing to the \emph{Rooted} Minimum Vertex-Cut problem rather than the $s$-$t$ Minimum Vertex-Cut problem -- this means that our sampling procedure only needs to guarantee that for some global minimum vertex-cut $(L,S,R)$, it samples a vertex $r \in R$, rather than having to sample a pair of vertices $(s,t)$ that simultaneously satisfies $s \in L$ and $t \in R$.

\subsection{Organization}
We cover preliminaries in \Cref{sec:prelims}.
In \Cref{sec:rooted-minimum-edge-cut}, we present our algorithm for the Rooted Minimum Edge-Cut problem -- we choose to present this algorithm before the vertex-cut version, as its analysis is simpler.
Furthermore, as these two algorithm are very similar, we defer the presentation of the Rooted Minimum \emph{Vertex}-Cut algorithm to \Cref{sec:rooted-minimum-vertex-cut}.
Finally, in \Cref{sec:main-reduction}, we present our reduction from Global Minimum Vertex-Cut to Rooted Minimum Vertex-Cut (\Cref{thm:main-reduction}).

In \Cref{sec: low-dependence-on-W} we describe a standard method, that is used in our algorithms for both the Rooted Minimum Edge-Cut and the Rooted Minimum Vertex-Cut problems, which allows transforming an algorithm with running time $O\left(\frac{m^{1+o(1)} \cdot \polylog W}{\epsilon}\right)$ for these problems into one that has running time $O\left(\frac{m^{1+o(1)} \cdot \log W}{\epsilon}\right)$.
\Cref{sec:missing-proofs-in-prelims} covers missing proofs from \Cref{sec:prelims}; \Cref{sec:missing-proofs-in-main-edge-version} covers a missing proof from \Cref{sec:rooted-minimum-edge-cut}; and \Cref{sec:removing-assumption-in-reduction:reduction} explains how to remove an additional assumption that we make in \Cref{sec:main-reduction}.
Lastly, \Cref{sec:zero-weights} explains how our main results can be extended to the setting in which weights are allowed to be $0$.

\section{Preliminaries}\label{sec:prelims}
All logarithms in this paper are base $2$.
All graphs considered in this paper are directed, and without self-loops. Furthermore, graphs are assumed to be without parallel edges unless explicitly specified otherwise.
We use the $\Tilde{O}$ and $\Tilde{\Omega}$ notations to hide $\polylog(m)$ factors.
For any natural number $n$, we use $[n]$ to denote the set $\{1,\ldots,n\}$.

\paragraph{Basic Graph-Theoretic Notation.}
Consider a directed graph $G$. We use $V(G)$ and $E(G)$, respectively, to denote its set of vertices and its set edges.
For every two sets $A,B \subseteq V(G)$ of vertices, we use $E_G(A,B) \subseteq E(G)$ to denote the set of edges that begin at $A$ and end at $B$.
For every vertex $v \in V(G)$, its set of \emph{outgoing edges} is $\delta^+_G(v) = E_G(\set{v},V(G))$, and its set of \emph{incoming edges} is $\delta^-_G(v) = E_G(V(G),\set{v})$.
The \emph{out-degree} of a vertex $v \in V(G)$ is $\deg^+_G(v) = |\delta^+_G(v)|$, and the \emph{in-degree} of a vertex $v \in V(G)$ is $\deg^-_G(v) = |\delta^-_G(v)|$.
The \emph{total degree} of a vertex $v \in V(G)$ is $\deg_G(v) = \deg^+_G(v) + \deg^-_G(v)$.
Furthermore, for every vertex $v \in V(G)$, its set of \emph{out-neighbors} is $N^+_G(v)=\set{u \in V(G) \mid (v,u) \in E(G)}$, and its set of \emph{in-neighbors} is $N^-_G(v)=\set{u \in V(G) \mid (u,v) \in E(G)}$.

For every set $A \subseteq V(G)$ of vertices, its set of \emph{outgoing boundary edges} is $\partial^+_G(A) = E_G(A,V(G) \setminus A)$, and its set of \emph{incoming boundary edges} is $\partial^-_G(A) = E_G(V(G) \setminus A,A)$.
Additionally, for a set $A \subseteq V(G)$ of vertices, its \emph{out-volume} is $\vol^+_G(A) = \sum_{v \in A} \deg^+_G(v) = |E_G(A,V(G))|$, and its \emph{in-volume} is $\vol^-_G(A) = \sum_{v \in A} \deg^-_G(v) = |E_G(V(G),A)|$.
Observe that $\vol^+_G(A)$ is not the same thing as $|\partial^+_G(A)|$.
The \emph{volume} of $A$ is $\vol_G(A) = \sum_{v \in A} \deg_G(v) = \vol^+_G(A) + \vol^-_G(A)$.
Lastly, for every set $A \subseteq V(G)$, its set of \emph{out-neighbors} is $N^+_G(A) = \left(\bigcup_{v \in A}N^+_G(v)\right) \setminus A$, and its set of \emph{in-neighbors} is $N^-_G(A) = \left(\bigcup_{v \in A}N^-_G(v)\right) \setminus A$.
When using the above notation, we often omit the subscript $G$ when it is obvious from the context.

\paragraph{Weighted Graphs.}
In this paper, we deal with two different types of weight-functions on graphs.
The first is a weight function $w$ that assigns weights $w(e)$ to the \emph{edges} $e \in E(G)$ of a graph $G$, and the second is a weight function $w$ that assigns weights $w(v)$ to the \emph{vertices} $v \in V(G)$.
When given a weight function $w$ of the former type, then, for every set $E' \subseteq E(G)$ of edges, we use the notation $w(E') = \sum_{e \in E'} w(e)$.
Similarly, when given a weight function $w$ of the latter type, then, for every set $V' \subseteq V(G)$ of vertices, we use the notation $w(V') = \sum_{v \in V'} w(v)$.

\paragraph{Edge-Cuts and Edge-Connectivity.}
Consider a graph $G$ with weights $w(e)$ on the edges $e \in E(G)$.
An edge-cut in $G$ is an (ordered) pair $(X,Y)$ such that $(X,Y)$ is a partition of the set $V(G)$, and $X,Y \neq \emptyset$. The weight of this cut, also known as its \emph{value}, is $w(E_G(X,Y))$.
For any pair $u,v \in V(G)$ of vertices, we say that an edge-cut $(X,Y)$ is \emph{a $u$-$v$ edge-cut} if $u \in X$ and $v \in Y$; if the weight of $(X,Y)$ is minimal among all $u$-$v$ edge-cuts, then we say that it is a \emph{minimum $u$-$v$ edge-cut}.
The (weighted) \emph{$u$-$v$ edge-connectivity}, denoted $\lambda_G(u,v)$, is the maximum possible cardinality of a (multi)set of $u$-$v$ paths in $G$ such that every edge $e \in E(G)$ belongs to at most $w(e)$ many of these paths.
By the Max Flow-Min Cut theorem, $\lambda_G(u,v)$ is also equal to the weight of the minimum $u$-$v$ edge-cut in $G$.

\paragraph{Vertex-Cuts and Vertex-Connectivity.}
Consider a graph $G$ with weights $w(v)$ on the vertices $v \in V(G)$.
A vertex-cut in $G$ is an (ordered) triple $(L,S,R)$ such that $(L,S,R)$ is a partition of the set $V(G)$, and $L,R \neq \emptyset$ (but $S$ may be empty). The weight of this cut, also known as its \emph{value}, is $w(S)$.
For any pair $u,v \in V(G)$ of vertices, we say that a vertex-cut $(L,S,R)$ is \emph{a $u$-$v$ vertex-cut} if $u \in L$ and $v \in R$; if the weight of $(L,S,R)$ is minimal among all $u$-$v$ vertex-cuts, then we say that it is a \emph{minimum $u$-$v$ vertex-cut}.
The (weighted) \emph{$u$-$v$ vertex-connectivity}, denoted $\kappa_G(u,v)$, is the maximum possible cardinality of a (multi)set of $u$-$v$ paths in $G$ such that every vertex $v' \in V(G) \setminus \{u,v\}$ belongs to at most $w(v')$ many of these paths.
By Menger's theorem, $\kappa_G(u,v)$ is also equal to the weight of the minimum $u$-$v$ vertex-cut in $G$.

\paragraph{Global Minimum Edge-Cut and Rooted Minimum Edge-Cut.}
In the Global Minimum Edge-Cut problem, the input is a directed graph $G$ with integral weights $w(e) > 0$ on the edges $e \in E(G)$. The goal of this problem is to produce an edge-cut $(X,Y)$ in $G$ while minimizing the weight $w(E_G(X,Y))$ of this edge-cut.
For such an instance of the Global Minimum Edge-Cut problem, we say that any edge-cut $(X,Y)$ of $G$ is a \emph{feasible solution} to this instance.

In the Rooted Minimum Edge-Cut problem, the input is a directed graph $G$ with integral weights $w(e) > 0$ on the edges $e \in E(G)$, as well as a \emph{root vertex} $r \in V(G)$.
A \emph{feasible solution} to this problem is an edge-cut $(X,Y)$ in $G$ with $r \in Y$. (We note that this problem is usually defined with the constraint $r \in X$ instead of $r \in Y$, but we use the symmetric version with $r \in Y$ in this paper.)
The goal of the problem is to compute a feasible solution $(X,Y)$ while minimizing the weight of the edge-cut $(X,Y)$.

\paragraph{Global Minimum Vertex-Cut and Rooted Minimum Vertex-Cut.}
In the Global Minimum Vertex-Cut problem, the input is a directed graph $G$ with integral weights $w(v) > 0$ on the vertices $v \in V(G)$, such that $G$ contains some vertex-cut. The goal of this problem is to produce a vertex-cut $(L,S,R)$ in $G$ while minimizing the weight $w(S)$ of this vertex-cut.
For such an instance of the Global Minimum Vertex-Cut problem, we say that any vertex-cut $(L,S,R)$ of $G$ is a \emph{feasible solution} to this instance.

In the Rooted Minimum Vertex-Cut problem, the input is a directed graph $G$ with integral weights $w(v) > 0$ on the vertices $v \in V(G)$, as well as a \emph{root vertex} $r \in V(G)$.
A \emph{feasible solution} to this problem is a vertex-cut $(L,S,R)$ in $G$ with $r \in R$. (We note that this problem is usually defined with the constraint $r \in L$ instead of $r \in R$, but we use the symmetric version with $r \in R$ in this paper.)
The goal of the problem is to compute a feasible solution $(L,S,R)$ while minimizing the weight of this vertex-cut.

\paragraph{Algorithmic Representation of Graphs.}
In this paper, whenever we say that an algorithm \emph{receives a graph $G$ as part of its input}, or that the algorithm is \emph{given access to} a graph $G$, we mean that the algorithm is given access to $G$ in all of the following forms: adjacency-list access, that, given a vertex $v \in V(G)$, allows accessing a linked-list containing all its outgoing edges, as well as another linked-list containing all its incoming edges; adjacency matrix access, that, given an ordered pair $(u,v)$ of vertices, declares in time $\Tilde{O}(1)$ whether a $u$-$v$ edge exists in $G$, as well as the weight of the edge if applicable (or a pointer to a linked-list containing all such edges, in the case of multigraphs); and, degree access, that allows querying the out-degree and in-degree of any vertex $v \in V(G)$ in $O(1)$ time.
We note that, using standard techniques, data-structures implementing these types of access can be constructed and maintained at the cost of only a $\Tilde{O}(1)$ multiplicative blowup in the running time of our algorithms.

\subsection{Submodularity of Cuts}
A folklore result that is often useful when dealing with cuts in graphs, is that the weight of edge-cuts and vertex-cuts is a \emph{submodular function} when viewed as a function of one side of the cut.
Our algorithms will use a corollary of this submodularity property, which is stated in the next two lemmas for the cases of edge-cuts and vertex-cuts, respectively.
The proofs of these lemmas follow from standard arguments, and are deferred to \Cref{sec:missing-proofs-in-prelims}.

\begin{lemma}\label{lem:submodularity-of-minimum-cuts:edge-version}
	Consider a graph $G$ with weights $w(e)$ on the edges $e \in E(G)$, as well as a pair $s,t$ of vertices in $G$, and a minimum $s$-$t$ edge-cut $(S,T)$ in $G$.
	Furthermore, let $X \subseteq V(G)$ be any set of vertices.
	Then $w(\partial^+_G(X \cap S)) \leq w(\partial^+_G(X))$.
\end{lemma}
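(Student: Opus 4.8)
The plan is to obtain this as the standard ``uncrossing'' consequence of submodularity of the directed out-cut function, combined with the minimality of $(S,T)$. Write $f(A) := w(\partial^+_G(A))$ for the total weight of the edges leaving a vertex set $A$. The first step is to recall the folklore fact that $f$ is submodular, i.e.\ $f(A)+f(B) \ge f(A\cap B)+f(A\cup B)$ for all $A,B \subseteq V(G)$; this is verified edge by edge, since each edge $(u,v)$ contributes a nonnegative amount to $f(A)+f(B)-f(A\cap B)-f(A\cup B)$ in every placement of $u$ and $v$ relative to $A$ and $B$.

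Next I would apply submodularity with $A = X$ and $B = S$, obtaining
\[
 f(X) + f(S) \ \ge\ f(X\cap S) + f(X\cup S),
\]
so that it suffices to prove $f(X\cup S) \ge f(S)$. For this, note that $s \in S \subseteq X\cup S$ while $t \notin S$ (and $t \notin X$ in the settings where the lemma is used), so $(X\cup S,\ V(G)\setminus (X\cup S))$ is an $s$-$t$ edge-cut; hence its value $f(X\cup S)$ is at least $\lambda_G(s,t)$, which by Max-Flow--Min-Cut equals the value $w(E_G(S,T)) = f(S)$ of the minimum $s$-$t$ edge-cut $(S,T)$. Substituting this back into the displayed inequality gives $f(X) \ge f(X\cap S)$, which is exactly $w(\partial^+_G(X\cap S)) \le w(\partial^+_G(X))$.

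The one place that needs genuine care — and the step I would lock down first — is the reduction $f(X\cup S) \ge f(S)$: it depends on $X\cup S$ actually separating $s$ from $t$, i.e.\ on $t \notin X$. Without some such restriction on $X$ the stated inequality can fail (for instance $X = V(G)$ gives $w(\partial^+_G(X)) = 0$ while $w(\partial^+_G(X\cap S)) = w(E_G(S,T)) = \lambda_G(s,t)$, which is positive whenever $t$ is reachable from $s$), so I would first confirm that the lemma's intended applications guarantee $t \notin X$, adding this as a hypothesis if necessary. Everything else is routine — the edge-by-edge check of submodularity, and the identity $w(E_G(S,T)) = \lambda_G(s,t)$. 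If one wishes to avoid invoking submodularity as a black box, the same bound follows by comparing $(S,T)$ directly with the $s$-$t$ edge-cut whose source side is $S\cup(X\cap T)$ and cancelling the common edge set $E_G(S, T\setminus X)$; and I expect the vertex-cut analogue of the lemma to go through verbatim using submodularity of the vertex-cut function.
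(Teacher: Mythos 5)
Your proof is correct and follows the paper's route: reduce via submodularity of $f(A):=w(\partial^+_G(A))$ to $f(X\cap S)+f(X\cup S)\le f(X)+f(S)$, and invoke the minimality of $(S,T)$ to supply $f(S)\le f(X\cup S)$. The paper does the same thing, verifying the submodular inequality edge-by-edge rather than citing it as a black box.

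Your caveat is also correct and worth flagging: the lemma, as literally stated, is false without the additional hypothesis $t\notin X$. The paper's step $w(\partial^+_G(S))\le w(\partial^+_G(S\cup X))$ is justified there by ``$(S,T)$ is a minimum $s$-$t$ edge-cut'', but this only follows if $(S\cup X,\,V(G)\setminus(S\cup X))$ is a valid $s$-$t$ cut, which requires $t\notin X$; the lemma instead allows any $X\subseteq V(G)$. Your example $X=V(G)$ is decisive: there $w(\partial^+_G(X))=0$ while $w(\partial^+_G(X\cap S))=\lambda_G(s,t)$, which is positive whenever $t$ is reachable from $s$. Every place the paper invokes the lemma does satisfy the missing hypothesis --- in the proof of \Cref{cl:upper-bound-on-value-of-new-promised-cut:edge-version} one has $X=X'_{B'}$ with $t=y^*\in Y'_{B'}$, and analogously $X=L'_{B'}$ with $y^*\in R'_{B'}$ in the vertex-cut version --- so the algorithms are unaffected, but the statements of \Cref{lem:submodularity-of-minimum-cuts:edge-version} and \Cref{lem:submodularity-of-minimum-cuts:vertex-version} should carry the hypothesis $t\notin X$.
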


\begin{lemma}\label{lem:submodularity-of-minimum-cuts:vertex-version}
	Consider a graph $G$ with weights $w(v)$ on the vertices $v \in V(G)$, as well as a pair $s,t$ of vertices in $G$, and a minimum $s$-$t$ vertex-cut $(L,S,R)$ in $G$.
	Furthermore, let $X \subseteq V(G)$ be any set of vertices.
	Then $w(N^+_G(X \cap L)) \leq w(N^+_G(X))$.
\end{lemma}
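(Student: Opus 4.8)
The plan is to derive the vertex version from the already-proved edge version, \Cref{lem:submodularity-of-minimum-cuts:edge-version}, via the standard vertex-splitting reduction. First I would build the split graph $\hat G$: its vertices are $\set{v^{\mathrm{in}},v^{\mathrm{out}} : v\in V(G)}$; for each $v\in V(G)$ it has an \emph{internal} edge $(v^{\mathrm{in}},v^{\mathrm{out}})$ of weight $w(v)$, and for each edge $(u,v)\in E(G)$ it has an edge $(u^{\mathrm{out}},v^{\mathrm{in}})$ of weight $w(V(G))+1$ (playing the role of $+\infty$). I take $s^{\mathrm{out}}$ as source and $t^{\mathrm{in}}$ as sink. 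It is standard (this is the reduction underlying the vertex form of Menger's theorem) that the minimum $s^{\mathrm{out}}$-$t^{\mathrm{in}}$ edge-cut of $\hat G$ has weight $\kappa_G(s,t)$, which by hypothesis equals $w(S)$.

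From the given minimum $s$-$t$ vertex-cut $(L,S,R)$ I would then exhibit an explicit minimum $s^{\mathrm{out}}$-$t^{\mathrm{in}}$ edge-cut of $\hat G$ to which \Cref{lem:submodularity-of-minimum-cuts:edge-version} will be applied: let $\hat X=\set{v^{\mathrm{in}},v^{\mathrm{out}} : v\in L}\cup\set{v^{\mathrm{in}} : v\in S}$ and $\hat Y=V(\hat G)\setminus\hat X$. A short case analysis over the edges of $\hat G$ shows that the internal edges crossing from $\hat X$ to $\hat Y$ are precisely $\set{(v^{\mathrm{in}},v^{\mathrm{out}}) : v\in S}$, and that no large-weight edge crosses (such an edge $(u^{\mathrm{out}},v^{\mathrm{in}})$ would force $u\in L$ and $v\in R$, impossible since $G$ has no $L$-to-$R$ edge); hence $(\hat X,\hat Y)$ has weight $w(S)=\kappa_G(s,t)$ and is therefore a minimum $s^{\mathrm{out}}$-$t^{\mathrm{in}}$ edge-cut of $\hat G$.

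The engine of the proof is the following translation gadget: for $B\subseteq V(G)$, put $\Phi(B)=\set{v^{\mathrm{in}},v^{\mathrm{out}} : v\in B}\cup\set{v^{\mathrm{in}} : v\in N^+_G(B)}$. One checks that $\partial^+_{\hat G}(\Phi(B))$ consists exactly of the internal edges $\set{(v^{\mathrm{in}},v^{\mathrm{out}}) : v\in N^+_G(B)}$: such an edge leaves $\Phi(B)$ iff $v\in N^+_G(B)$, and a large-weight edge $(u^{\mathrm{out}},v^{\mathrm{in}})$ with $u\in B$ cannot leave $\Phi(B)$ because $(u,v)\in E(G)$ and $u\in B$ force $v\in B\cup N^+_G(B)$, so $v^{\mathrm{in}}\in\Phi(B)$. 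Hence $w(\partial^+_{\hat G}(\Phi(B)))=w(N^+_G(B))$ for every $B$. Applying \Cref{lem:submodularity-of-minimum-cuts:edge-version} to $\hat G$, the minimum cut $(\hat X,\hat Y)$, and the vertex set $\Phi(X)$ yields $w(\partial^+_{\hat G}(\Phi(X)\cap\hat X))\le w(\partial^+_{\hat G}(\Phi(X)))=w(N^+_G(X))$. To close the argument I would prove $w(\partial^+_{\hat G}(\Phi(X)\cap\hat X))\ge w(N^+_G(X\cap L))$ by showing that the internal edge $(v^{\mathrm{in}},v^{\mathrm{out}})$ lies in $\partial^+_{\hat G}(\Phi(X)\cap\hat X)$ for every $v\in N^+_G(X\cap L)$: indeed $v\notin X\cap L$ gives $v^{\mathrm{out}}\notin\Phi(X)\cap\hat X$, and for an in-neighbour $u\in X\cap L$ of $v$, membership $u\in X$ forces $v\in X\cup N^+_G(X)$ while $u\in L$ together with the absence of $L$-to-$R$ edges forces $v\in L\cup S$, so $v^{\mathrm{in}}\in\Phi(X)\cap\hat X$. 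Summing $w(v)$ over $v\in N^+_G(X\cap L)$ gives the bound, and chaining it with the previous inequality gives $w(N^+_G(X\cap L))\le w(N^+_G(X))$.

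I expect the only real work to be bookkeeping: for each of the cuts $\partial^+_{\hat G}(\hat X)$, $\partial^+_{\hat G}(\Phi(X))$, and $\partial^+_{\hat G}(\Phi(X)\cap\hat X)$, carefully determining which internal and which large-weight edges cross, and checking each time that no large-weight edge crosses (so the cut values stay finite and coincide with the intended vertex quantities). A minor pitfall to keep in mind is that $\Phi(X)\cap\hat X$ is in general \emph{not} equal to $\Phi(X\cap L)$; this is harmless, since only the one-sided estimate $w(\partial^+_{\hat G}(\Phi(X)\cap\hat X))\ge w(N^+_G(X\cap L))$ is needed and it follows from the internal edges alone. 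As an alternative, one could bypass the split graph entirely: $A\mapsto w(N^+_G(A))$ is submodular (because $A\mapsto w(A\cup N^+_G(A))$ is a weighted coverage function, being the weighted size of $\bigcup_{v\in A}(\set{v}\cup N^+_G(v))$, hence submodular, and $w(A)$ is modular), and $w(N^+_G(L))=w(S)$ for a minimum $s$-$t$ vertex cut, so uncrossing $X$ with $L$ gives the claim directly.
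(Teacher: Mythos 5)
Your argument is correct, and it takes a genuinely different route from the paper's. The paper disposes of \Cref{lem:submodularity-of-minimum-cuts:vertex-version} in one sentence, declaring that the proof of \Cref{lem:submodularity-of-minimum-cuts:edge-version} carries over verbatim with $N^+_G(\cdot)$ in place of $\partial^+_G(\cdot)$: use minimality of $(L,S,R)$ to get $w(N^+_G(L)) \leq w(N^+_G(L\cup X))$, then the submodularity inequality $w(N^+_G(X\cap L)) + w(N^+_G(L\cup X)) \leq w(N^+_G(X)) + w(N^+_G(L))$, verified term by term. Your main argument instead reduces the vertex statement to the edge statement as a black box, by passing to the split graph $\hat G$ and introducing the translation $\Phi$; the subtle point you correctly flag is that $\Phi(X)\cap\hat X$ need not equal $\Phi(X\cap L)$, which is why you only use the one-sided bound $w(\partial^+_{\hat G}(\Phi(X)\cap\hat X))\ge w(N^+_G(X\cap L))$ rather than an equality. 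This buys modularity — one never has to re-derive submodularity of $A\mapsto w(N^+_G(A))$ — at the cost of some bookkeeping in $\hat G$, and your closing remark already identifies the direct uncrossing argument (essentially the paper's proof) as the shorter alternative. One caveat applies equally to your proof, the paper's, and in fact the lemma as literally stated: the minimality step needs the sink to remain on the right-hand side after unioning, which here amounts to $t\notin X\cup N^+_G(X)$ (equivalently $t^{\mathrm{in}}\notin\Phi(X)$ in your formulation). This holds in every application in the paper, but it is not visible in the lemma's hypotheses, and without it the inequality can fail.
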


\subsection{Fast Algorithms for $s$-$t$ Minimum Cut}
Our algorithms use the recent result of \cite{vdBCK23} which solves the Min-Cost $s$-$t$ Maximum Flow problem in deterministic almost-linear time, and in particular shows that the $s$-$t$ Minimum Edge-Cut and $s$-$t$ Minimum Vertex-Cut problems can be solved in deterministic almost linear time.

\begin{lemma}[\cite{vdBCK23}]\label{thm:fast-max-s-t-edge-cut}
	There is a deterministic algorithm that, given an instance of the $s$-$t$ Minimum Edge-Cut problem on an $m$-edge directed graph $G$ with weights $w(e) \leq W$ on the edges $e \in E(G)$, produces an exact solution to this instance in time $O(m^{1+o(1)} \cdot \log W)$.
\end{lemma}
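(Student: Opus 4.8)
The plan is to deduce the statement from the maximum-flow algorithm of \cite{vdBCK23} together with the classical reduction from $s$-$t$ Minimum Edge-Cut to $s$-$t$ Maximum Flow. The algorithm of \cite{vdBCK23}, given a directed $m$-edge graph with a source $s$, a sink $t$, and integral edge-capacities bounded by $W$, computes deterministically an integral maximum $s$-$t$ flow in time $O(m^{1+o(1)}\log W)$. I would run this algorithm on the input graph $G$, interpreting each edge-weight $w(e)$ as the capacity of $e$, to obtain a maximum $s$-$t$ flow $f$.

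Next I would extract an actual optimal $s$-$t$ edge-cut from $f$ via the residual-graph argument underlying the Max-Flow Min-Cut theorem. Construct the residual graph $G_f$ on the vertex set $V(G)$: for every edge $(u,v)\in E(G)$ with $f(u,v) < w(u,v)$ include a forward residual arc $(u,v)$, and for every edge $(u,v)\in E(G)$ with $f(u,v) > 0$ include a backward residual arc $(v,u)$. Let $X$ be the set of vertices reachable from $s$ in $G_f$, computed by a single graph search in $O(m)$ time, and let $Y = V(G)\setminus X$. Since $f$ is maximum there is no augmenting $s$-$t$ path, so $t\in Y$, while $s\in X$ trivially; hence $(X,Y)$ is an $s$-$t$ edge-cut. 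By the definition of $X$, every edge of $E_G(X,Y)$ is saturated and every edge of $E_G(Y,X)$ carries no flow, so flow conservation gives $w(E_G(X,Y)) = |f|$. Since $|f|$ equals the value of the maximum $s$-$t$ flow, which equals the value of the minimum $s$-$t$ edge-cut, the cut $(X,Y)$ is optimal, and the algorithm returns it. The total running time is the $O(m^{1+o(1)}\log W)$ spent in the call to \cite{vdBCK23}, plus $O(m)$ for building $G_f$ and searching it, i.e.\ $O(m^{1+o(1)}\log W)$ overall, and every step is deterministic.

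The only point that requires care --- rather than a real obstacle --- is that \cite{vdBCK23} is most naturally phrased as computing a maximum (or minimum-cost) flow, not a cut, so one must carry out the residual-graph extraction above and verify its correctness; one should also check that the cited running time holds for directed graphs with arbitrary integral capacities up to $W$ and accounts for the $\log W$ dependence on the capacities, which it does. Everything else is the textbook flow/cut correspondence.
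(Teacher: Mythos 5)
Your proposal is correct and matches the paper's intent: the paper states this lemma as a direct consequence of the almost-linear-time deterministic max-flow algorithm of \cite{vdBCK23}, implicitly invoking exactly the standard residual-graph extraction of a minimum cut that you spell out. There is no separate proof in the paper to compare against, so nothing further to flag.
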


\begin{lemma}[\cite{vdBCK23}]\label{thm:fast-max-s-t-vertex-cut}
	There is a deterministic algorithm that, given an instance of the $s$-$t$ Minimum Vertex-Cut problem on an $m$-edge directed graph $G$ with weights $w(v) \leq W$ on the vertices $v \in V(G)$, produces an exact solution to this instance in time $O(m^{1+o(1)} \cdot \log W)$.
\end{lemma}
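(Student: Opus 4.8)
The plan is to reduce the $s$-$t$ Minimum Vertex-Cut problem on $G$ to a single instance of the $s$-$t$ Minimum Edge-Cut problem on an auxiliary graph $G'$ with $O(m)$ edges and polynomially-bounded weights, and then invoke \Cref{thm:fast-max-s-t-edge-cut}. First I would build $G'$ by the standard vertex-splitting gadget: replace each $v \in V(G)$ by two vertices $v_{\text{in}}, v_{\text{out}}$ together with an \emph{internal} edge $(v_{\text{in}}, v_{\text{out}})$, and replace each $(u,v) \in E(G)$ by an \emph{external} edge $(u_{\text{out}}, v_{\text{in}})$. I assign the internal edge of $v$ the weight $w(v)$ for $v \notin \{s,t\}$, and assign every external edge, as well as the internal edges of $s$ and $t$, the weight $M := nW + 1$; note that $M$ is polynomially bounded, which is all \Cref{thm:fast-max-s-t-edge-cut} requires. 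Then $|E(G')| = |E(G)| + |V(G)| = O(m)$ under the usual convention $m = \Omega(n)$ (otherwise discard isolated vertices first), and $\log M = O(\log W + \log n) = m^{o(1)} \cdot \log W$, so a running time of $O(|E(G')|^{1+o(1)} \log M)$ is $O(m^{1+o(1)} \log W)$. I also assume the instance is feasible, i.e. $(s,t) \notin E(G)$, which can be checked in $O(1)$ time via adjacency-matrix access.

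Next I would prove that $\kappa_G(s,t)$ equals the weight of a minimum $s_{\text{out}}$-$t_{\text{in}}$ edge-cut in $G'$, with the optimal solutions in explicit correspondence. For the easy direction, an $s$-$t$ vertex-cut $(L,S,R)$ of $G$ maps to the edge-cut $(X,Y)$ with $X = \{v_{\text{in}}, v_{\text{out}} : v \in L\} \cup \{v_{\text{in}} : v \in S\}$ and $Y = V(G') \setminus X$: the only edges crossing from $X$ to $Y$ are the internal edges of the vertices in $S$ (an external edge $(u_{\text{out}}, v_{\text{in}})$ crossing would force $u \in L \cup S$ and $v \in R$, contradicting that $(L,S,R)$ is a vertex-cut), so its value is exactly $w(S)$. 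For the converse, take any minimum $s_{\text{out}}$-$t_{\text{in}}$ edge-cut $(X,Y)$; since the trivial vertex-cut $(\{s\}, V(G)\setminus\{s,t\}, \{t\})$ certifies $\kappa_G(s,t) \le nW < M$, this edge-cut cuts no external edge and no internal edge of $s$ or $t$. A routine uncrossing argument then shows we may assume $v_{\text{out}} \in X \Rightarrow v_{\text{in}} \in X$ for every $v$: whenever $v_{\text{in}} \in Y$ and $v_{\text{out}} \in X$, every external edge into $v_{\text{in}}$ originates in $Y$ (since none are cut), so moving $v_{\text{in}}$ into $X$ creates no new crossing edge and changes the cut value by $0$; the boundary cases $v \in \{s,t\}$ need a minor separate tweak (for $t$, one instead moves $t_{\text{out}}$ into $Y$, which is legal because all its external out-neighbours $u_{\text{in}}$ lie in $X$). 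Now $S := \{v : v_{\text{in}} \in X, v_{\text{out}} \in Y\}$, together with $L := \{v : v_{\text{in}}, v_{\text{out}} \in X\}$ and $R := \{v : v_{\text{in}}, v_{\text{out}} \in Y\}$, is an $s$-$t$ vertex-cut of $G$ whose weight equals the value of $(X,Y)$.

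Finally I would run \Cref{thm:fast-max-s-t-edge-cut} on $G'$ with source $s_{\text{out}}$ and sink $t_{\text{in}}$, obtaining a minimum $s_{\text{out}}$-$t_{\text{in}}$ edge-cut in time $O(|E(G')|^{1+o(1)} \log M) = O(m^{1+o(1)} \log W)$, apply the uncrossing step, and then read off the vertex-cut $(L,S,R)$ in $O(m)$ additional time as above. I expect the only delicate point to be the uncrossing step together with the accompanying check that replacing the idealized infinite capacities on external (and $s$-/$t$-internal) edges by the finite value $M = nW + 1$ is harmless; both rest on the single bound $\kappa_G(s,t) \le nW$, and everything else is routine bookkeeping.
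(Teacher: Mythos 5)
Your reduction is correct and is precisely the standard vertex-splitting argument that converts the cited max-flow result of \cite{vdBCK23} (i.e.\ \Cref{thm:fast-max-s-t-edge-cut}) into an $s$-$t$ vertex-cut algorithm, with the finite capacity $M = nW+1$ (justified by $\kappa_G(s,t) \le nW$) correctly replacing the idealized infinite capacities and the uncrossing step correctly recovering a clean $(L,S,R)$ partition. The paper itself states this lemma as a black-box citation to \cite{vdBCK23} without spelling out the derivation, but your proof is the intended one.
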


\subsection{Selecting Terminals}
%Consider an input graph $G$.
%In several of our algorithms, it will be useful for us to obtain an estimate $\nu$ of the out-volume $\vol^+_G(A)$ of some fixed unknown set $A \subseteq V(G)$ of vertices. (For example, $A$ may be the small side of an edge-cut/vertex-cut that we are trying to find.)
%In this setting, it will often also often be useful to for us to obtain a relatively small set $T \subseteq V(G)$ of \emph{terminals}, such that at least one terminal belongs to $A$.
%The following lemma provides a subroutine that, with probability $\Omega(\frac{1}{\log m})$, obtains such an estimate $\nu$ and set $T$ of terminals.
%Similar lemmas have been provided in prior work on Global Minimum Vertex-Cut (see Claim 4.2 from \cite{CMT25}), though they provide an estimate of $|A|$ rather than of $\vol^+_G(A)$.
%
%\begin{lemma}\label{lem: picking terminals}
%	There is a randomized algorithm, whose input consists of an $m$-edge graph $G$.
%	The output of the algorithm consists an integer $1 \leq \nu \leq 2m$, as well as a set $T \subseteq V(G)$ of $O(\frac{m}{\nu})$ vertices.
%	We say that the algorithm is successful with respect to a fixed set $A \subseteq V(G)$ of vertices, if $\frac{\vol^+_G(A)}{2} < \nu \leq \vol^+_G(A)$, and $T \cap A \neq \emptyset$.
%	For every fixed non-empty set $A \subseteq V(G)$, the algorithm is successful with respect to $A$ with probability $\Omega\left(\frac{1}{\log m}\right)$.
%	The running time of the algorithm is $\Tilde{O}(m)$.
%\end{lemma}
%
Consider an $m$-edge graph $G$.
In some of our algorithms, it will be useful to select a small set $T \subseteq V(G)$ of \emph{terminals}, such that at least one terminal belongs to a certain target set $A \subseteq V(G)$ of vertices, where $A$ is not known to the algorithm.
In this case, we will employ the following approach, which, with probability $\Omega\left(\frac{1}{\log m}\right)$, selects such a set $T$ of size $O\left(\frac{m}{\vol^+(A)}\right)$:
first, guess an estimate $\mu$ of $\vol^+_G(A)$ by picking $\mu$ uniformly at random among the integral powers of $2$ in the range $[1,m]$; then, sample $\frac{2m}{\nu}$ vertices into $T$, where each vertex is sampled with probability proportional to its out-degree.
The following lemma describes the guarantees of the sampling step in more detail. For the sake of convenience, we also state this lemma so that, if the algorithm knows a set $F$ of vertices that do not belong to $A$, then it can guarantee that these vertices do not get sampled.

\begin{lemma}\label{obs:picking terminals}
	There is a randomized algorithm whose input is an $m$-edge graph $G$, as well as a parameter $1 \leq \nu \leq m$, and a set $F \subset V(G)$ of \emph{forbidden vertices}.
	The output of the algorithm is a non-empty set $T \subseteq V(G)$ of at most $\frac{2m}{\nu}$ vertices called \emph{terminals}, such that $T \cap F = \emptyset$.
	The algorithm guarantees that, for every set $A \subseteq V(G) \setminus F$ of vertices with $\vol^+_G(A) \geq \nu$, the inequality $\Pr[A \cap T \neq \emptyset] \geq \frac{1}{100}$ holds.
	The running time of the algorithm is $\Tilde{O}(m)$.
\end{lemma}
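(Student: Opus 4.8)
The plan is to take $T$ to be a small set of vertices sampled with probability proportional to out-degree, together with a one-vertex fallback that forces non-emptiness. First I would build, in an $O(m)$-time preprocessing step, a data structure that supports $O(1)$-time sampling from the distribution $\mathcal{D}$ on $V(G)$ that assigns probability $\deg^+_G(v)/m$ to each vertex $v$: since $\sum_{v \in V(G)} \deg^+_G(v) = m$, one can store a length-$m$ array that records, for each edge of $G$, the identity of its tail (equivalently, an array whose cells are partitioned into a block of $\deg^+_G(v)$ cells labeled $v$ for every vertex $v$), so that a single draw is just the label of a uniformly random cell. Next I would draw $N := \lceil m/\nu \rceil$ independent samples $v_1, \dots, v_N \sim \mathcal{D}$, set $T_0 := \{v_1, \dots, v_N\} \setminus F$, and output $T := T_0$ if $T_0 \neq \emptyset$, and otherwise $T := \{u\}$ for an arbitrary vertex $u \in V(G) \setminus F$ -- such a $u$ exists precisely because $F \subsetneq V(G)$. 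The total running time is $\tilde{O}(m)$: $O(m)$ for preprocessing, $N \le 2m$ constant-time samples, and $O(N)$ for deduplication and for discarding the forbidden vertices.

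I would then check the deterministic requirements: by construction $T \cap F = \emptyset$ and $T \neq \emptyset$, and $|T| \le \max\{|T_0|, 1\} \le N = \lceil m/\nu \rceil \le m/\nu + 1 \le 2m/\nu$, where the final step uses $\nu \le m$. For the probability guarantee I would fix an arbitrary set $A \subseteq V(G) \setminus F$ with $\vol^+_G(A) \ge \nu$; the key point is that the algorithm does not know $A$, so the estimate must use nothing about $A$ beyond $\vol^+_G(A) \ge \nu$. A single draw lands in $A$ with probability $\Pr_{v \sim \mathcal{D}}[v \in A] = \sum_{v \in A} \deg^+_G(v)/m = \vol^+_G(A)/m \ge \nu/m$, so by independence of the $N$ draws,
\[
\Pr[\, v_j \notin A \text{ for all } j \in [N] \,] \;\le\; (1 - \nu/m)^N \;\le\; (1 - \nu/m)^{m/\nu} \;\le\; e^{-1},
\]
using $N \ge m/\nu$ and $\ln(1-x) \le -x$ for $x \in [0,1)$. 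Finally, any draw $v_j$ that lands in $A$ is not in $F$ (since $A \cap F = \emptyset$) and hence survives into $T_0 \subseteq T$, so a hit of $A$ by some draw implies $A \cap T \neq \emptyset$; therefore $\Pr[A \cap T \neq \emptyset] \ge 1 - e^{-1} > \frac{1}{100}$, as required.

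I do not expect a genuine obstacle here, as this is essentially a textbook sampling argument. The only two points that need care are the ones highlighted above: deriving the hitting estimate using only $\vol^+_G(A) \ge \nu$, so that it remains valid for the target set $A$ that is unknown to the algorithm; and including the explicit fallback in the construction of $T$ (and invoking $\nu \le m$) so that $T$ is simultaneously non-empty, disjoint from $F$, and of size at most $2m/\nu$.
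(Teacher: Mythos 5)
Your proof is correct and takes essentially the same approach as the paper: sample roughly $m/\nu$ vertices with probability proportional to out-degree (equivalently, sample edges and keep tails), discard forbidden vertices, and argue via the independence of draws that a target set $A$ with $\vol^+_G(A) \ge \nu$ is hit with probability at least $1-1/e$. A minor difference worth noting: you explicitly add a one-vertex fallback when all draws land in $F$, which is a clean way to certify the non-emptiness claim in the lemma statement; the paper's proof samples $\lfloor 2m/\nu \rfloor$ edges and does not explicitly address this corner case.
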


\begin{proof}
	The algorithm from the lemma is implemented by sampling $\floor{\frac{2m}{\nu}}$ edges from $E(G)$ independently uniformly at random with repetition, and then choosing $T$ to be the set of those vertices $u \in V(G) \setminus F$ that have at least one outgoing sampled edge.
	To prove the correctness of the algorithm, fix any set $A \subseteq V(G) \setminus F$ with $\vol^+_G(A) \geq \nu$, and we will prove that $\Pr[A \cap T \neq \emptyset] \geq \frac{1}{100}$.
	Indeed, since $A \subseteq V(G) \setminus F$, it suffices to show that with probability at least $\frac{1}{100}$, there is at least one sampled edge in the set $E_G(A,V(G))$ of edges outgoing from vertices of $A$.
	Now, since each sampled edge independently has a probability of $\frac{\vol^+_G(A)}{m} \geq \frac{\nu}{m}$ to belong to $E_G(A,V(G))$, and since we sampled $\floor{\frac{2m}{\nu}} \geq \frac{m}{\nu}$ edges, it follows that the probability of at least one edge being sampled from $E_G(A,V(G))$ is at least
	\[
	 1 - \left(1 - \frac{\nu}{m}\right)^{m/\nu}
	 \geq 1 - \frac{1}{e}
	 \geq \frac{1}{100}.
	\]
	This concludes the proof of \Cref{obs:picking terminals}.
\end{proof}

\section{Algorithm for Rooted Minimum Edge-Cut: Proof of \Cref{thm : main-thm-rooted-edge-version}}\label{sec:rooted-minimum-edge-cut}

In this section, we describe and analyze the algorithm implementing \Cref{thm : main-thm-rooted-edge-version} -- our main result for the Rooted Minimum Edge-Cut problem.
As explained in \Cref{sec:overview-of-rooted-minimum-cut}, our algorithm employs a hierarchical divide-and-conquer technique due to \cite{CMT25}.
We now formally describe the algorithm.

Recall that the input to the algorithm consists of a directed $m$-edge graph $G$ with positive integer weights $w(e) \leq W$ on the edges $e \in E(G)$, as well as a precision parameter $\epsilon \in (0,1)$, and a root vertex $y^* \in V(G)$.
Let $\OPT_{E,y^*}$ denote the minimum value of an edge-cut among all edge-cuts $(X',Y')$ of $G$ with $y^* \in Y'$.
Furthermore, for the sake of the analysis, fix one such edge-cut achieving this minimum value, and denote it by $(X^*,Y^*)$ -- we refer to $(X^*,Y^*)$ as \emph{the distinguished cut}.
Our goal is to output an edge-cut $(X,Y)$ of $G$ with $w(E_G(X,Y)) \leq (1+\epsilon) \OPT_{E,y^*}$.

It will also be convenient for us to assume that $\OPT_{E,y^*} > 0$, and we indeed make this assumption in the remainder of this section.
This assumption can be removed via the following standard technique: before running the algorithm, we check whether $\OPT_{E,y^*} > 0$ by performing a BFS from $y^*$ in the reverse graph $\overline{G}$ of $G$; if the BFS reaches all vertices of $\overline{G}$, then $\OPT_{E,y^*} > 0$ must hold, in which case we run the algorithm; otherwise, we can use the set of vertices reached by the BFS in order to construct and output an edge-cut $(X,Y)$ of $G$ with $w(E_G(X,Y)) = 0$ and $y^* \in Y$ -- specifically, $Y$ is the set of vertices reached by the BFS, and $X = V(G) \setminus Y$.

\paragraph{Estimates of $\OPT_{E,y^*}$ and $\vol^+_G(X^*)$, and a Set of Terminals.}
Our algorithm begins by guessing an estimate $\widehat{\OPT}$ of $\OPT_{E,y^*}$, and an estimate $\nu$ of $\vol^+_G(X^*)$ -- specifically, it picks $\widehat{\OPT}$ uniformly at random among the integral powers of $2$ in the range $[1,m \cdot W]$, and it picks $\nu$ uniformly at random among the integral powers of $2$ in the range $[1,m]$.
Let $\event^{\text{estimates}}$ denote the good event in which $\frac{\OPT_{E,y^*}}{2} < \widehat{\OPT} \leq \OPT_{E,y^*}$ and $\frac{\vol^+_G(X^*)}{2} < \nu \leq \vol^+_G(X^*)$ both hold.
Observe that $\Pr[\event^{\text{estimates}}] \geq \Omega\left(\frac{1}{\log^2(m \cdot W)}\right)$.
Next, our algorithm picks a set $T \subseteq V(G)$ of $O\left(\frac{m}{\nu}\right)$ \emph{terminals}, using the procedure from \Cref{obs:picking terminals} with parameter $\nu$ and the set $F=\{y^*\}$ of forbidden vertices.
Let $\event$ denote the good event that event $\event^{\text{estimates}}$ occurred and $T \cap X^* \neq \emptyset$ holds.
By \Cref{obs:picking terminals}, $\Pr[\event \mid \event^{\text{estimates}}] \geq \frac{1}{100}$, and therefore
\begin{equation}\label{eq:probability-of-good-event:edge-version}
	\Pr[\event] = \Pr[\event \mid \event^{\text{estimates}}] \cdot \Pr[\event^{\text{estimates}}] \geq \Omega\left(\frac{1}{\log^2(m \cdot W)}\right).
\end{equation}

\paragraph{Hierarchical Partition of the Terminals.}
Our algorithm constructs a hierarchical partition of the set $T$ of terminals, which is identical to the hierarchical partition used by \cite{CMT25}. For the sake of completeness, we now describe this construction.
Let $z = \ceil{\log |T|}$. For each integer $0 \leq i \leq z$, the algorithm creates a partition of the set $T$ of terminals into at most $2^i$ \emph{level-$i$ batches}, that each contain at most $2^{z-i}$ terminals. The partition of $T$ into level-$i$ batches is referred to as the \emph{level-$i$ partition}, and the set of all level-$i$ batches is denoted by $\bset_i$.
Specifically, the level-$0$ partition has only a single batch, and this batch contains all the terminals of $T$.
For each $1 \leq i \leq z$, the algorithm creates the level-$i$ partition by splitting each batch $B \in \bset_{i-1}$ of the level-$(i-1)$ partition into two parts $B',B'' \subseteq B$ whose sizes are $\ceil{\frac{|B|}{2}}$ and $\floor{\frac{|B|}{2}}$, respectively, and adding both $B'$ and $B''$ to $\bset_i$ as level-$i$ batches; we say that the batch $B$ is the level-$(i-1)$ \emph{parent batch} of the level-$i$ batches $B'$ and $B''$.
(The exception is when the batch $B \in \bset_{i-1}$ is of size $|B|=1$; in this case, we only create a single level-$i$ batch whose level-$(i-1)$ parent batch is $B$ -- this level-$i$ batch is equal to $B$.)
The next observation is useful in the following parts of this section.

\begin{observation}\label{obs:size-of-batches}
	For every $0 \leq i \leq z$, every level-$i$ batch $B \in \bset_i$ has size $|B| = O\left(\frac{m}{\nu \cdot 2^i}\right)$.
\end{observation}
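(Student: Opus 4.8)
The plan is to establish \Cref{obs:size-of-batches} by a straightforward induction on the level $i$, using the explicit recursive splitting procedure described just before the observation. First I would record the base case: by \Cref{obs:picking terminals}, the set $T$ of terminals has size $|T| \leq \frac{2m}{\nu}$, and the unique level-$0$ batch is $T$ itself, so $|B| = |T| \leq \frac{2m}{\nu} = O\left(\frac{m}{\nu \cdot 2^0}\right)$, as desired.

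For the inductive step, suppose every level-$(i-1)$ batch has size $O\left(\frac{m}{\nu \cdot 2^{i-1}}\right)$. Each level-$i$ batch $B'$ is obtained from its parent batch $B \in \bset_{i-1}$ either as one of the two parts of sizes $\ceil{|B|/2}$ and $\floor{|B|/2}$ (when $|B| \geq 2$), or as a copy of $B$ when $|B| = 1$. In the first case, $|B'| \leq \ceil{|B|/2} \leq \frac{|B|+1}{2} \leq |B|/2 + 1$ — but since $|B| \geq 2$ we have $|B|/2 + 1 \leq |B|$, and more usefully $\ceil{|B|/2} \leq |B|$; the cleanest bound is $|B'| \leq \ceil{|B|/2}$. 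Combined with the inductive hypothesis $|B| = O\left(\frac{m}{\nu \cdot 2^{i-1}}\right)$, this gives $|B'| = O\left(\frac{m}{\nu \cdot 2^{i-1}}\right) \cdot \frac{1}{2} + O(1) = O\left(\frac{m}{\nu \cdot 2^i}\right)$, where the additive $O(1)$ from the ceiling is absorbed because $\frac{m}{\nu \cdot 2^i} \geq 1$ is not guaranteed — so I should be slightly careful here. In the degenerate case $|B| = 1$, we have $|B'| = 1$, which is trivially $O\left(\frac{m}{\nu \cdot 2^i}\right)$ only if that quantity is at least a constant; again this needs attention. The point is that the claimed asymptotic bound must be interpreted correctly: since batches always have size at least $1$, and the procedure only halves (rounding up) sizes that are at least $2$, the correct closed-form statement is $|B| \leq \max\left\{1, \frac{2m}{\nu \cdot 2^i}\right\}$ for every level-$i$ batch, which I would prove by induction and which immediately implies $|B| = O\left(\frac{m}{\nu \cdot 2^i} + 1\right) = O\left(\frac{m}{\nu \cdot 2^i}\right)$ under the standard convention (for $i \leq z = \ceil{\log|T|}$, one always has $\frac{m}{\nu \cdot 2^i} = \Omega(1)$ anyway, since $2^i \leq 2^z \leq 2|T| \leq \frac{4m}{\nu}$).

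The main obstacle — really the only subtlety — is the interaction between the ceiling function in the splitting rule and the asymptotic $O(\cdot)$ notation: a naive induction that writes "$|B'| \leq |B|/2$" is false for odd $|B|$, and one must instead carry the additive slack through, or better, prove the sharper invariant $|B| \leq \max\{1, 2^{z-i}\}$ that follows directly from the construction (each level-$i$ batch has at most $2^{z-i}$ terminals, as already stated in the paragraph defining the hierarchical partition) and then combine it with $2^{z-i} \leq 2^{\ceil{\log|T|} - i} \leq \frac{2|T|}{2^i} \leq \frac{4m}{\nu \cdot 2^i}$. This second route is cleanest, since the bound $|B| \leq 2^{z-i}$ is essentially immediate from the recursive halving (with the $|B|=1$ exception preserving it trivially), and then the final inequality is just arithmetic using $|T| \leq \frac{2m}{\nu}$ from \Cref{obs:picking terminals}. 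I would present the argument in this order: first note $|B| \leq 2^{z-i}$ for all level-$i$ batches (citing the construction), then bound $2^{z-i} \leq \frac{4m}{\nu \cdot 2^i}$ using $z = \ceil{\log|T|}$ and $|T| \leq \frac{2m}{\nu}$, and conclude.
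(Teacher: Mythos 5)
Your final argument — bounding $|B| \leq 2^{z-i}$ directly from the construction (as the paper itself states when defining the hierarchical partition), then using $z = \ceil{\log|T|}$ and $|T| \leq \frac{2m}{\nu}$ to get $2^{z-i} \leq \frac{4m}{\nu \cdot 2^i}$ — is exactly the paper's proof. The detour through induction and its ceiling-function subtleties is unnecessary, but you correctly recognize that and settle on the right route.
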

\begin{proof}
	By the description of the $i$'th level partition, every level-$i$ batch has size at most $2^{z-i}$, where $z = \ceil{\log |T|}$.
	Thus, the size of every level-$i$ batch is $O\left(\frac{|T|}{2^i}\right)$.
	The observation follows since the set $T$ has size $|T| = O\left(\frac{m}{\nu}\right)$.
\end{proof}

\paragraph{Main Loop of the Algorithm.}
Our algorithm consists of $z$ phases.
For each $1 \leq i \leq z$, the goal of Phase $i$ is to construct, for each level-$i$ batch $B \in \bset_i$, a graph $G_B$ and positive integral weights $w_B(e) \leq W$ on the edges $e \in E(G_B)$, that together satisfy the following properties:

\begin{properties}{P}
	\item\label{property:G_B-P1:edge-version} the graph $G_B$ contains at most $\Tilde{O}\left(\frac{m}{2^i \cdot \epsilon}\right)$ edges and vertices, and no more than the number of edges and vertices in $G$;
	\item\label{property:G_B-P2:edge-version} $\{y^*\} \subseteq V(G_B) \subseteq V(G)$;
	\item\label{property:G_B-P3:edge-version} for every set $X' \subseteq V(G_B) \setminus \{y^*\}$, the inequality $w_B(\partial^+_{G_B}(X')) \geq w(\partial^+_{G}(X'))$ holds -- in particular, for every terminal $x \in B \cap V(G_B)$, the $x$-$y^*$ edge-connectivity $\lambda_{G_B}(x,y^*)$ in $G_B$ is at least as large as the edge-connectivity $\lambda_G(x,y^*)$ in $G$;
	\item\label{property:G_B-P4:edge-version} if there exists a terminal $x^* \in B$ such that $x^* \in X^*$ and the good event $\event$ occurred, then $x^* \in V(G_B)$ holds and the $x^*$-$y^*$ edge-connectivity $\lambda_{G_B}(x^*,y^*)$ in $G_B$ satisfies $\lambda_{G_B}(x^*,y^*) \leq \left(1 + \frac{i\epsilon}{z}\right)\OPT_{E,y^*}$ -- furthermore, there exists an $x^*$-$y^*$ edge-cut $(X',Y')$ in $G_B$ of value\mynewline$w_B(E_{G_B}(X',Y')) \leq \left(1 + \frac{i\epsilon}{z}\right)\OPT_{E,y^*}$ that additionally satisfies the condition $\vol^+_G(X') \leq 2\nu$.
\end{properties}
Each phase will be partitioned into iterations, where each iteration computes the graph $G_B$ for a single batch $B$.
The main technical component of our algorithm is the following claim, which states that each iteration of Phase $i$ can be performed in $O\left(\frac{m^{1+o(1)} \cdot \log W}{2^i \cdot \epsilon}\right)$ time.
The proof of this claim- and the description of procedure implementing it- are presented in \Cref{sec:description-of-iteration-of-main-algorithm:edge-version}.

\begin{claim}\label{cl:implementing-a-single-iteration-of-main-algorithm:edge-version}
	There is a deterministic procedure that we refer to as \approxsparsify, that, given an integer $1 \leq i \leq z$, a batch $B \in \bset_i$, the parameters $\epsilon,\nu,\widehat{\OPT}$, and access to the graph $G$, computes the graph $G_B$ satisfying Properties \ref{property:G_B-P1:edge-version}-\ref{property:G_B-P4:edge-version} in $O\left(\frac{m^{1+o(1)} \cdot \log W}{2^i \cdot \epsilon}\right)$ time. If $i>1$, then the procedure also requires access to the graph $G_{B'}$ corresponding to the level-$(i-1)$ parent batch $B'$ of $B$.
\end{claim}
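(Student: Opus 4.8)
The plan is to implement \approxsparsify by a single minimum $s$-$y^*$ cut computation on an auxiliary graph. Write $H:=G_{B'}$ (with weights $w_H:=w_{B'}$) if $i>1$, and $H:=G$ (with $w_H:=w$) if $i=1$; when $i=1$, the ``inductive level-$(i-1)$ guarantees'' invoked below should be read as the trivial fact that, under $\event$, the distinguished cut $(X^*,Y^*)$ is an $x^*$-$y^*$ edge-cut of $G$ with $w(\partial^+_G(X^*))=\OPT_{E,y^*}$ and $\vol^+_G(X^*)\le\nu$. Set $\rho:=\epsilon\widehat{\OPT}/(2z\nu)$ and $c:=6\widehat{\OPT}$; by first replacing $\epsilon$ with the largest power of $2$ not exceeding it and scaling all weights below by a common factor $M:=2z\nu/\epsilon$, all weights can be taken to be positive integers bounded by $\poly(m,1/\epsilon)\cdot W$ (we may assume $\epsilon\ge 1/\poly(m,W)$, as otherwise a $(1+\epsilon)$-approximation is exact, so this costs only an $\Tilde{O}(\log W)$ factor in the max-flow running time). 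Build $H'$ on $V(H)\cup\{s\}$ by keeping every edge of $H$, adding a ``prize edge'' $(s,x)$ of weight $c$ for each terminal $x\in B\cap V(H)$, and adding a ``volume-penalty edge'' $(v,y^*)$ of weight $\rho\deg^+_H(v)$ for each $v\in V(H)\setminus\{y^*\}$. Compute an exact minimum $s$-$y^*$ edge-cut $(S'_0,T'_0)$ of $H'$ via \Cref{thm:fast-max-s-t-edge-cut}, let $Z:=S'_0\setminus\{s\}$ (so $y^*\notin Z$), and let $G_B$ be the multigraph obtained from $H$ by contracting into $y^*$ every vertex of $V(H)\setminus\{y^*\}$ that is not in $Z$ or has out-degree $0$, and then deleting all out-edges of $y^*$; keep $w_B$ equal to $w_H$ on the surviving edges. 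Then $|E(G_B)|=\vol^+_H(Z)$, $|V(G_B)|\le\vol^+_H(Z)+1$, the value of $(S'_0,T'_0)$ equals $M\cdot\big(w_H(\partial^+_H(Z))+\rho\vol^+_H(Z)+c\,|B\cap V(H)\setminus Z|\big)$, and out-degrees are preserved, so $\vol^+_H(U)=\vol^+_G(U)$ for every $U\subseteq V(H)\setminus\{y^*\}$.

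Property \ref{property:G_B-P2:edge-version} is immediate, and Property \ref{property:G_B-P3:edge-version} follows from the ``contraction identity'' $w_B(\partial^+_{G_B}(X'))=w_H(\partial^+_H(X'))$ valid for every $X'\subseteq V(G_B)\setminus\{y^*\}$ (each out-edge of $X'$ in $H$ still leaves $X'$ after the contraction, possibly redirected to $y^*$, and deleting out-edges of $y^*$ is harmless for subsets of $Z$), combined with the inductive Property \ref{property:G_B-P3:edge-version} for $G_{B'}$ (an identity when $i=1$); the ``in particular'' part then follows by minimizing over $X'$. For Property \ref{property:G_B-P1:edge-version}, comparing $(S'_0,T'_0)$ to the cut $S'_0=\{s\}$ gives $\rho\vol^+_H(Z)\le c\,|B\cap V(H)|$, so $\vol^+_H(Z)\le(c/\rho)|B|=O(z\nu|B|/\epsilon)=O(zm/(2^i\epsilon))=\Tilde{O}(m/(2^i\epsilon))$ by \Cref{obs:size-of-batches}; also $\vol^+_H(Z)\le|E(H)|\le m$ and $|V(G_B)|\le|V(H)|\le|V(G)|$. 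Crucially, these bounds hold \emph{unconditionally}, regardless of whether $\event$ occurred.

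The main point is Property \ref{property:G_B-P4:edge-version}. Assume $\event$ occurred and some terminal $x^*\in B$ lies in $X^*$. Since $x^*\in B\subseteq B'$, the inductive Property \ref{property:G_B-P4:edge-version} for $B'$ (resp.\ the base case for $i=1$) provides an $x^*$-$y^*$ edge-cut $(X'_0,Y'_0)$ in $H$ with $w_H(\partial^+_H(X'_0))\le(1+(i-1)\epsilon/z)\OPT_{E,y^*}$ and $\vol^+_G(X'_0)\le 2\nu$. First, $x^*\in Z$: otherwise the cut $Z\cup X'_0$ (which contains $x^*$ and not $y^*$) would strictly beat $Z$ in the objective $Z'\mapsto w_H(\partial^+_H(Z'))+\rho\vol^+_H(Z')+c\,|B\cap V(H)\setminus Z'|$, since passing from $Z$ to $Z\cup X'_0$ raises $w_H(\partial^+_H(\cdot))$ by at most $w_H(\partial^+_H(X'_0))\le 2\OPT_{E,y^*}\le 4\widehat{\OPT}$ (subadditivity of $\partial^+$, and $\widehat{\OPT}>\OPT_{E,y^*}/2$), raises $\rho\vol^+_H(\cdot)$ by at most $\rho\cdot 2\nu\le\widehat{\OPT}$, and lowers the prize term by at least $c=6\widehat{\OPT}$. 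Hence $x^*\in Z\subseteq V(G_B)$ (its out-degree is positive, since $\lambda_G(x^*,y^*)\ge\OPT_{E,y^*}>0$). Now put $X':=X'_0\cap Z$, an $x^*$-$y^*$ edge-cut of $G_B$ with $\vol^+_G(X')\le\vol^+_G(X'_0)\le 2\nu$. Applying \Cref{lem:submodularity-of-minimum-cuts:edge-version} to $H'$, its minimum $s$-$y^*$ cut $(S'_0,T'_0)$, and the set $X'_0$, and using $w_{H'}(\partial^+_{H'}(U))=M\cdot\big(w_H(\partial^+_H(U))+\rho\vol^+_H(U)\big)$ for every $U\subseteq V(H)\setminus\{y^*\}$, we obtain $w_H(\partial^+_H(X'_0\cap Z))+\rho\vol^+_H(X'_0\cap Z)\le w_H(\partial^+_H(X'_0))+\rho\vol^+_H(X'_0)$; dropping the nonnegative left-hand term and using $\vol^+_H(X'_0)=\vol^+_G(X'_0)\le 2\nu$ and $\rho\cdot 2\nu\le\epsilon\widehat{\OPT}/z\le\epsilon\OPT_{E,y^*}/z$, this yields $w_H(\partial^+_H(X'_0\cap Z))\le(1+(i-1)\epsilon/z)\OPT_{E,y^*}+\epsilon\OPT_{E,y^*}/z=(1+i\epsilon/z)\OPT_{E,y^*}$. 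By the contraction identity the value of $X'$ in $G_B$ is $w_H(\partial^+_H(X'_0\cap Z))\le(1+i\epsilon/z)\OPT_{E,y^*}$, which in particular bounds $\lambda_{G_B}(x^*,y^*)$; this is Property \ref{property:G_B-P4:edge-version}.

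For the running time, the only super-linear cost is the single call to \Cref{thm:fast-max-s-t-edge-cut} on $H'$, which has $O(|E(H)|+|B|+|V(H)|)$ edges with weights $\poly(m,1/\epsilon)\cdot W$. Here $|E(H)|=|E(G_{B'})|\le\Tilde{O}(m/(2^i\epsilon))$ by Property \ref{property:G_B-P1:edge-version} for level $i-1$ (and $|E(G)|=m$ when $i=1$), $|B|=O(m/(\nu 2^i))\le O(m/(2^i\epsilon))$ by \Cref{obs:size-of-batches} (as $\nu\ge 1>\epsilon$), and $|V(H)|\le|E(H)|+O(1)$ since the construction keeps no out-degree-$0$ vertices and $\OPT_{E,y^*}>0$ forces every vertex of $G$ other than $y^*$ to have an out-edge; building $H'$ and $G_B$ is linear in these sizes. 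Hence $\log(\text{max weight})=\Tilde{O}(\log W)$ and the procedure runs in $O\big((m/(2^i\epsilon))^{1+o(1)}\log W\big)=O\big(m^{1+o(1)}\log W/(2^i\epsilon)\big)$ time. I expect the main obstacle to be exactly the requirement that Property \ref{property:G_B-P1:edge-version} hold \emph{unconditionally} — when the guesses $\nu,\widehat{\OPT}$ are wrong and no cut-value control is available — which is why finite ``prize'' edges are used instead of infinite-weight edges forcing all of $B\cap V(H)$ into $Z$ (which could make $\vol^+_H(Z)$ too large); the remaining work is the bookkeeping needed for the sizes and running times to telescope across the $z=\Tilde{O}(1)$ phases with only a single $\log W$ factor.
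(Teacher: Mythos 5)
Your proof is correct and follows the same core strategy as the paper: build an auxiliary graph with a super-source $s$, constant-capacity ``prize'' edges from $s$ to the terminals of $B$, and low-capacity ``volume-penalty'' edges $(v,y^*)$ of weight proportional to $\deg^+_G(v)$ so that a minimum $s$-$y^*$ cut is forced to select a set of controlled out-volume; then use submodularity (\Cref{lem:submodularity-of-minimum-cuts:edge-version}) to transfer the parent batch's cut $X'_0$ to $X'_0\cap Z$ with only an additive $\epsilon\OPT_{E,y^*}/z$ degradation. A few benign deviations from the paper: you contract to form $G_B$ starting from $G_{B'}$ rather than from $G$ (the paper computes $A_B\subseteq V(G)$ via \Cref{cl:reduction-to-computing-A_B:edge-version} and then merges in $G$), which still gives Property \ref{property:G_B-P3:edge-version} because the equality becomes the needed one-sided inequality via the parent's Property \ref{property:G_B-P3:edge-version}; you take $c=6\widehat{\OPT}$ rather than $4\widehat{\OPT}$ and argue $x^*\in Z$ directly from the existence of $X'_0$ rather than via the paper's intermediate \Cref{obs:minimum-cut-in-modified-graph-contains-distinguished-terminal}, trading a slightly bigger constant for avoiding a separate uncrossing step; and you unify $i=1$ with $i>1$ by taking $H=G$, whereas the paper just sets $G_B=G$ when $i=1$ (both choices satisfy Property \ref{property:G_B-P1:edge-version}, since the bound there has slack from the $\Tilde{O}$). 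You also make the integer-scaling step explicit, which the paper elides. One small slip: in your $i=1$ base case you write $\vol^+_G(X^*)\le\nu$, but under $\event$ one only has $\nu\le\vol^+_G(X^*)<2\nu$; the argument later correctly uses $2\nu$, so this is just a typo in the setup.
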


\noindent The following corollary of \Cref{cl:implementing-a-single-iteration-of-main-algorithm:edge-version} summarizes the running time of each phase.

\begin{corollary}\label{cor:running-time-of-one-phase:edge-version}
	Each phase of the algorithm can be implemented in time $O\left(\frac{m^{1+o(1)} \cdot \log W}{\epsilon}\right)$.
\end{corollary}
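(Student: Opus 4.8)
The goal is to prove Corollary~\ref{cor:running-time-of-one-phase:edge-version}, namely that each Phase~$i$ of the algorithm runs in time $O\left(\frac{m^{1+o(1)} \cdot \log W}{\epsilon}\right)$, given that each individual iteration of Phase~$i$ runs in time $O\left(\frac{m^{1+o(1)} \cdot \log W}{2^i \cdot \epsilon}\right)$ by Claim~\ref{cl:implementing-a-single-iteration-of-main-algorithm:edge-version}.

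\textbf{Proof of Corollary~\ref{cor:running-time-of-one-phase:edge-version}.}
Fix a phase index $1 \leq i \leq z$. By construction, Phase~$i$ is partitioned into iterations, one per batch $B \in \bset_i$, and by Claim~\ref{cl:implementing-a-single-iteration-of-main-algorithm:edge-version} each such iteration takes time $O\left(\frac{m^{1+o(1)} \cdot \log W}{2^i \cdot \epsilon}\right)$. Hence the total running time of Phase~$i$ is at most $|\bset_i| \cdot O\left(\frac{m^{1+o(1)} \cdot \log W}{2^i \cdot \epsilon}\right)$. It therefore suffices to bound $|\bset_i|$, the number of level-$i$ batches, by $O(2^i)$. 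This follows directly from the description of the hierarchical partition: the level-$0$ partition has a single batch, and each level-$(i-1)$ batch gives rise to at most two level-$i$ batches (exactly one when the parent has size $1$, and exactly two otherwise). By induction on $i$, we get $|\bset_i| \leq 2^i$ for every $0 \leq i \leq z$. Combining these bounds, Phase~$i$ runs in time
\[
	|\bset_i| \cdot O\left(\frac{m^{1+o(1)} \cdot \log W}{2^i \cdot \epsilon}\right) \leq 2^i \cdot O\left(\frac{m^{1+o(1)} \cdot \log W}{2^i \cdot \epsilon}\right) = O\left(\frac{m^{1+o(1)} \cdot \log W}{\epsilon}\right),
\]
which is exactly the claimed bound. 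One minor point to address is the cost of the bookkeeping needed to iterate over the batches of $\bset_i$ and, for each batch $B$, to locate its parent batch $B'$ (so as to pass $G_{B'}$ to the \approxsparsify procedure); since the hierarchical partition has $O(|T|) = O(m)$ batches in total across all levels and can be represented as a binary tree of depth $z = O(\log m)$, all such lookups contribute only an additive $\Tilde{O}(m)$ per phase, which is absorbed into the $O\left(\frac{m^{1+o(1)} \cdot \log W}{\epsilon}\right)$ bound.

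\textbf{Approach and obstacles.} The entire content of this corollary is a summation over the iterations of a single phase, so there is no real obstacle here: it is a two-line amortization argument, with the only substantive input being the combinatorial bound $|\bset_i| = O(2^i)$, which is immediate from the branching structure of the hierarchical partition. The reason the per-iteration bound in Claim~\ref{cl:implementing-a-single-iteration-of-main-algorithm:edge-version} is stated with a $2^{-i}$ factor is precisely so that this factor cancels against the $|\bset_i| = O(2^i)$ batches, making the per-phase cost independent of $i$; summing over all $z = O(\log m)$ phases then gives an overall $O\left(\frac{m^{1+o(1)} \cdot \log W}{\epsilon}\right)$ running time for the whole main loop (with the subpolynomial $m^{o(1)}$ factor absorbing the logarithmic number of phases). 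The genuinely hard work is entirely deferred to the proof of Claim~\ref{cl:implementing-a-single-iteration-of-main-algorithm:edge-version} itself, i.e., to the design and analysis of the \approxsparsify procedure that builds each $G_B$ from $G_{B'}$ while simultaneously guaranteeing the size bound of Property~\ref{property:G_B-P1:edge-version}, the connectivity-preservation of Property~\ref{property:G_B-P3:edge-version}, the approximate-optimality of Property~\ref{property:G_B-P4:edge-version}, and the stated running time.
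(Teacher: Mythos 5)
Your proof is correct and follows essentially the same route as the paper's: bound $|\bset_i| \leq 2^i$ from the branching structure of the hierarchical partition, then multiply by the per-iteration cost from \Cref{cl:implementing-a-single-iteration-of-main-algorithm:edge-version} so that the $2^i$ and $2^{-i}$ factors cancel. The extra remarks on the bookkeeping cost of locating parent batches are harmless but not needed, since the paper folds such overhead into the $m^{1+o(1)}$ factor.
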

\begin{proof}
	Observe that for every $1 \leq i \leq z$, Phase $i$ consists of at most $2^i$ iterations -- one for each level-$i$ batch $B \in \bset_i$.
	Furthermore, each iteration can be implemented using a single call to the \approxsparsify subroutine.
	Thus, this corollary follows from \Cref{cl:implementing-a-single-iteration-of-main-algorithm:edge-version}.
\end{proof}

\paragraph{Termination.}
We now describe how our algorithm proceeds after the final phase.
For each terminal $x \in T$, let $B^{(x)} \in \bset_z$ denote the level-$z$ batch that contains $x$.
Furthermore, let $T' \subseteq T$ denote the set of all terminals $x$ such that $x \in V(G_{B^{(x)}})$.
After the final phase, for each terminal $x \in T'$, our algorithm uses \Cref{thm:fast-max-s-t-edge-cut} on the graph $G_{B^{(x)}}$ in order to determine the edge connectivity $\lambda_{G_{B^{(x)}}}(x,y^*)$.
Then, our algorithm picks the terminal $\tilde{x}$ that achieves the minimum edge-connectivity $\lambda_{G_{B^{(x)}}}(x,y^*)$ among all the terminals of $T'$; computes a minimum $\tilde{x}$-$y^*$ edge-cut $(X,Y)$ in $G$ using \Cref{thm:fast-max-s-t-edge-cut}, and outputs this edge-cut. (If the set $T'$ is empty, then the algorithm instead outputs an arbitrary edge-cut $(X,Y)$ with $y^* \in Y$.)
Observe the following.

\begin{observation}\label{obs:termination-value-of-cut-part1:edge-version}
	If $T' \neq \emptyset$, then the value of the edge-cut $(X,Y)$ returned by the algorithm is at most $\min_{x \in T'}\lambda_{G_{B^{(x)}}}(x,y^*)$.
\end{observation}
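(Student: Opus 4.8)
The plan is to unwind the definitions of the objects produced at termination and then invoke Property~\ref{property:G_B-P3:edge-version} together with the Max-Flow--Min-Cut theorem. First I would observe that, since $T' \neq \emptyset$, the terminal $\tilde{x}$ is well-defined; by construction it is a minimizer of $\lambda_{G_{B^{(x)}}}(x,y^*)$ over $x \in T'$, so $\lambda_{G_{B^{(\tilde{x})}}}(\tilde{x},y^*) = \min_{x \in T'}\lambda_{G_{B^{(x)}}}(x,y^*)$. The cut $(X,Y)$ returned by the algorithm is, by construction, a minimum $\tilde{x}$-$y^*$ edge-cut in $G$, and hence its value $w(E_G(X,Y))$ equals $\lambda_G(\tilde{x},y^*)$ by the Max-Flow--Min-Cut theorem (as recalled in \Cref{sec:prelims}).

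It therefore remains to show $\lambda_G(\tilde{x},y^*) \leq \lambda_{G_{B^{(\tilde{x})}}}(\tilde{x},y^*)$. For this, note that $\tilde{x} \in T'$, which by the definition of $T'$ means $\tilde{x} \in V(G_{B^{(\tilde{x})}})$; moreover $\tilde{x} \in B^{(\tilde{x})}$ by the definition of $B^{(x)}$. Thus $\tilde{x}$ is a terminal lying in $B^{(\tilde{x})} \cap V(G_{B^{(\tilde{x})}})$, so the second part of Property~\ref{property:G_B-P3:edge-version} (applied with $i = z$ and $B = B^{(\tilde{x})}$) gives exactly $\lambda_{G_{B^{(\tilde{x})}}}(\tilde{x},y^*) \geq \lambda_G(\tilde{x},y^*)$. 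Chaining the two equalities and this inequality yields $w(E_G(X,Y)) = \lambda_G(\tilde{x},y^*) \leq \lambda_{G_{B^{(\tilde{x})}}}(\tilde{x},y^*) = \min_{x \in T'}\lambda_{G_{B^{(x)}}}(x,y^*)$, as claimed.

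There is no substantive obstacle here: the statement is a bookkeeping consequence of the fact that the sparsifiers only increase edge-connectivities (Property~\ref{property:G_B-P3:edge-version}) and that the returned cut is an exact $\tilde{x}$-$y^*$ minimum cut of the original graph $G$. The only point requiring a moment's care is verifying that the hypothesis of Property~\ref{property:G_B-P3:edge-version} --- that the terminal in question belongs to both its batch and the vertex set of the corresponding sparsifier --- is met; this is precisely what the restriction to the subset $T' \subseteq T$ guarantees.
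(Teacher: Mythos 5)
Your proof is correct and follows essentially the same route as the paper: both invoke the fact that $(X,Y)$ is chosen as a minimum $\tilde{x}$-$y^*$ edge-cut in $G$, apply Property~\ref{property:G_B-P3:edge-version} to get $\lambda_G(\tilde{x},y^*) \leq \lambda_{G_{B^{(\tilde{x})}}}(\tilde{x},y^*)$, and use the choice of $\tilde{x}$ as a minimizer. You add the helpful extra detail of explicitly verifying that $\tilde{x} \in B^{(\tilde{x})} \cap V(G_{B^{(\tilde{x})}})$ so that the hypothesis of Property~\ref{property:G_B-P3:edge-version} applies, which the paper leaves implicit.
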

\begin{proof}
	As $(X,Y)$ is selected to be a minimum $\tilde{x}$-$y^*$ edge-cut in $G$, it suffices to prove that $\lambda_G(\tilde{x},y^*) \leq \min_{x \in T'}\lambda_{G_{B^{(x)}}}(x,y^*)$.
	Indeed, Property \ref{property:G_B-P3:edge-version} guarantees that $\lambda_G(\tilde{x},y^*) \leq \lambda_{G_{B^{(\tilde{x})}}}(\tilde{x},y^*)$, and the choice of $\tilde{x}$ guarantees that $\lambda_{G_{B^{(\tilde{x})}}}(\tilde{x},y^*) = \min_{x \in T'}\lambda_{G_{B^{(x)}}}(x,y^*)$.
\end{proof}

\noindent The next corollary follows from the above observation.
\begin{corollary}\label{cor:good-event-implies-success:edge-version}
	If the good event $\event$ occurs, then the edge-cut $(X,Y)$ returned by the algorithm has value at most $\left(1+\epsilon\right)\OPT_{E,y^*}$.
\end{corollary}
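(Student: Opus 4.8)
The plan is to chain together guarantees that are already in place, since all the technical content is hidden inside Properties \ref{property:G_B-P1:edge-version}--\ref{property:G_B-P4:edge-version} and \Cref{obs:termination-value-of-cut-part1:edge-version}. First I would unpack the definition of the good event $\event$: it asserts that $\event^{\text{estimates}}$ occurred and that $T \cap X^* \neq \emptyset$, so I fix some terminal $x^* \in T \cap X^*$. Let $B^{(x^*)} \in \bset_z$ be the level-$z$ batch containing $x^*$, which exists because the level-$z$ partition $\bset_z$ covers all of $T$.

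Next I would invoke Property \ref{property:G_B-P4:edge-version} for the batch $B = B^{(x^*)}$ produced in Phase $z$: its hypothesis is satisfied verbatim, since $x^* \in B^{(x^*)}$ is a terminal with $x^* \in X^*$ and the good event $\event$ occurred. This yields two consequences: $x^* \in V(G_{B^{(x^*)}})$, and the $x^*$-$y^*$ edge-connectivity in $G_{B^{(x^*)}}$ satisfies $\lambda_{G_{B^{(x^*)}}}(x^*,y^*) \le \left(1 + \frac{z\epsilon}{z}\right)\OPT_{E,y^*} = (1+\epsilon)\OPT_{E,y^*}$, using that Phase $z$ is the final phase. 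Since $x^* \in V(G_{B^{(x^*)}})$, the definition of $T'$ gives $x^* \in T'$; in particular $T' \ne \emptyset$.

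Finally I would apply \Cref{obs:termination-value-of-cut-part1:edge-version}: because $T' \ne \emptyset$, the edge-cut $(X,Y)$ returned by the algorithm has value at most $\min_{x \in T'}\lambda_{G_{B^{(x)}}}(x,y^*) \le \lambda_{G_{B^{(x^*)}}}(x^*,y^*) \le (1+\epsilon)\OPT_{E,y^*}$, which is exactly the claimed bound.

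I do not expect a genuine obstacle in this corollary; the only points that need a moment's care are (i) checking that the hypothesis of Property \ref{property:G_B-P4:edge-version} is indeed met by $x^*$ inside its own level-$z$ batch, and (ii) confirming that substituting $i = z$ into the bound $\left(1 + \frac{i\epsilon}{z}\right)\OPT_{E,y^*}$ collapses precisely to $(1+\epsilon)\OPT_{E,y^*}$. All the real difficulty is upstream, in establishing Properties \ref{property:G_B-P1:edge-version}--\ref{property:G_B-P4:edge-version} through the \approxsparsify procedure of \Cref{cl:implementing-a-single-iteration-of-main-algorithm:edge-version}.
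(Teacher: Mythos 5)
Your proof is correct and follows essentially the same approach as the paper: fix $x^* \in T \cap X^*$ from the good event, apply Property \ref{property:G_B-P4:edge-version} to the level-$z$ batch $B^{(x^*)}$ (with $i=z$ so the bound collapses to $(1+\epsilon)\OPT_{E,y^*}$), conclude $x^* \in T'$, and finish with \Cref{obs:termination-value-of-cut-part1:edge-version}.
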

\begin{proof}
	Suppose that the good event $\event$ occurs.
	In this case, by the definition of this event, there exists a vertex $x^* \in T \cap X^*$.
	Now, consider the graph $G_{B^{(x^*)}}$ corresponding to the level-$z$ batch $B^{(x^*)}$.
	By Property \ref{property:G_B-P4:edge-version}, this graph satisfies that $x^* \in V\left(G_{B^{(x^*)}}\right)$ and that $\lambda_{G_{B^{(x^*)}}}(x^*,y^*) \leq \left(1+\epsilon\right)\OPT_{E,y^*}$.
	In other words, $x^*$ is a vertex of $T'$ that satisfies $\lambda_{G_{B^{(x^*)}}}(x^*,y^*) \leq \left(1+\epsilon\right)\OPT_{E,y^*}$.
	Therefore, $T' \neq \emptyset$ and $\min_{x \in T'}\lambda_{G_{B^{(x)}}}(x,y^*) \leq \left(1+\epsilon\right)\OPT_{E,y^*}$.
	The corollary now follows from \Cref{obs:termination-value-of-cut-part1:edge-version}.
\end{proof}

\paragraph{Running Time Analysis.}
We now prove that the total running time of the algorithm described in this section is $O\left(\frac{m^{1+o(1)}}{\epsilon} \cdot \log W\right)$.
We begin by upper-bounding the running time of the main loop:
by \Cref{cor:running-time-of-one-phase:edge-version}, each phase of the main loop of the algorithm runs in $O\left(\frac{m^{1+o(1)} \cdot \log W}{\epsilon}\right)$ time, so the total time of all $z=O(\log n)$ phases is also upper-bounded by $O\left(\frac{m^{1+o(1)} \cdot \log W}{\epsilon}\right)$.
It remains to upper-bound the time spent after the last phase.
Indeed, after the last phase, the algorithm makes at most one call to \Cref{thm:fast-max-s-t-edge-cut} on each of the graphs $G_B$ constructed during Phase $z$;
observe that, by Property \ref{property:G_B-P1:edge-version}, the running time of each such call is $O(\frac{m^{1+o(1)} \cdot \log W}{2^z \cdot \epsilon})$. So, the total time spent on calls to \Cref{thm:fast-max-s-t-edge-cut} after the end of the final phase is upper-bounded by $O(|T| \cdot \frac{m^{1+o(1)} \cdot \log W}{2^z \cdot \epsilon}) = O(\frac{m^{1+o(1)} \cdot \log W}{\epsilon})$, as needed. (In the last equality, we used the definition of $z$ as $z = \ceil{\log |T|}$.)

\paragraph{Correctness Analysis.}
By the description of the algorithm, it always outputs an edge-cut $(X,Y)$ in $G$ such that $y^* \in Y$.
Furthermore, by \Cref{cor:good-event-implies-success:edge-version}, whenever event $\event$ occurs, this edge-cut has value at most $(1+\epsilon)\OPT_{E,y^*}$; by \Cref{eq:probability-of-good-event:edge-version}, this happens with probability $\Omega\left(\frac{1}{\log^2(m \cdot W)}\right)$.
Thus, the algorithm described in this section runs in time $O\left(\frac{m^{1+o(1)} \cdot \log W}{\epsilon}\right)$, and the output edge-cut $(X,Y)$ satisfies $w(E_G(X,Y)) \leq (1+\epsilon)\OPT_{E,y^*}$ with probability at least $\frac{1}{\polylog(mW)}$.
By repeating the algorithm $\polylog(mW)$ many times and choosing the lowest-value cut among the obtained edge-cuts, we can boost the success probability of the algorithm to $\frac{1}{2}$, at the cost of increasing the running time to $O\left(\frac{m^{1+o(1)} \cdot \polylog W}{\epsilon}\right)$.
We can then reduce the running time back to $O\left(\frac{m^{1+o(1)} \cdot \log W}{\epsilon}\right)$ using the reduction from \Cref{sec: low-dependence-on-W}.

\subsection{Execution of a Single Iteration: Proof of \Cref{cl:implementing-a-single-iteration-of-main-algorithm:edge-version}}\label{sec:description-of-iteration-of-main-algorithm:edge-version}\label{sec:comparison-of-techniques-for-batch-sparsify}

In this section, we describe the \approxsparsify procedure from \Cref{cl:implementing-a-single-iteration-of-main-algorithm:edge-version}.
We firstly deal with the easy special case where $i=1$: in this case, the procedure simply lets $G_B$ be the entire graph $G$; it can be easily verified that in this case, selecting $G_B=G$ guarantees that $G_B$ satisfies all four Properties \ref{property:G_B-P1:edge-version}-\ref{property:G_B-P4:edge-version}.

In \Cref{sec:approx-sparsify-for-i-larger-than-1:edge-version}, we present the formal description and analysis of the procedure \approxsparsify for the case of $i>1$.
In the remainder of the current section, we present an informal overview of the techniques used in this procedure, and put them in the context of prior work.

\paragraph{Overview of Techniques.}
Consider an input to \approxsparsify.
Using standard techniques (that we describe in \Cref{cl:reduction-to-computing-A_B:edge-version}), the problem of computing the graph $G_B$ can be reduced to the problem of computing a set $A_B \subseteq V(G)$ which satisfies two properties, (defined later as Properties \ref{property:A_B-P'1:edge-version} and \ref{property:A_B-P'2:edge-version}): the first property is that $\vol^+_G(A_B) = \Tilde{O}\left(\frac{m}{2^i \cdot \epsilon}\right)$; the second property is that, if there exists a terminal $x^* \in B \cap X^*$ and the good event $\event$ occurred, then there exists an $x^*$-$y^*$ edge-cut $(X,Y)$ in $G$ with $X \subseteq A_B$ and $\vol^+_G(X) \leq 2\nu$, whose value is $w(E_{G}(X,Y)) \leq \left(1 + \frac{i\epsilon}{z}\right)\OPT_{E,y^*}$. We use this reduction. (Indeed, a vertex-analogue of this reduction is also used by the algorithm of \cite{CMT25}.)
Thus, the main challenge is computing such a set $A_B \subseteq V(G)$ of vertices.

Consider the graph $G_{B'}$ corresponding to the parent batch $B'$ of $B$, and suppose for the sake of intuition that event $\event$ occurred, and that there exists some terminal $x^* \in B \cap X^* \subseteq B' \cap X^*$.
Then, by Property \ref{property:G_B-P4:edge-version} of the graph $G_{B'}$, there exists an $x^*$-$y^*$ edge-cut $(X'_{B'},Y'_{B'})$ in $G_{B'}$ of value $w_{B'}(E_{G_{B'}}(X'_{B'},Y'_{B'})) \leq \left(1 + \frac{(i-1)\epsilon}{z}\right)\OPT_{E,y^*}$ that additionally satisfies the condition $\vol^+_G(X'_{B'}) \leq 2\nu$.
We now describe a simple method for attempting to compute the set $A_B$, which is instructive for understanding our techniques, but which does not work as-is: begin by transforming the graph $G_{B'}$ into a graph $H_{B'}$ by introducing a new "super-source" vertex $s$, and adding edges from $s$ to every vertex of $B$; then, compute a minimum $s$-$y^*$ edge-cut $(A,A')$ in $H_{B'}$, and output the set $A_B=A \setminus \set{s}$.
Using the submodularity of cuts, it can be shown that as long as the edges leaving $s$ have sufficiently large weight, the set $A_B$ must satisfy Property \ref{property:A_B-P'2:edge-version} -- specifically, it can be shown that the edge-cut $(X,Y)=(X'_{B'} \cap A_B, Y'_{B'} \cup A')$ is an $x^*$-$y^*$ edge-cut in $G$ with value no larger than the value of the edge-cut $(X'_{B'},Y'_{B'})$ in $G_{B'}$, which in turn is at most $\left(1 + \frac{(i-1)\epsilon}{z}\right)\OPT_{E,y^*}$.
The issue with this method is that it does not provide any upper-bound on the out-volume $\vol^+_G(A_B)$; thus, some changes must be made to the method in order to guarantee Property \ref{property:A_B-P'1:edge-version}.

At a high-level, we deal with this issue by "disincentivizing" the $s$-$t$ Minimum Edge-Cut subroutine from picking a cut $(A,A')$ where $A \setminus \set{s}$ has high out-volume -- specifically, we add low-capacity edges from each vertex of $H_{B'}$ (except $s$) to $y^*$, where the capacity of each new edge $(v,y)$ is proportional to the out-degree $\deg^+_G(v)$.
We then show that these extra edges only slightly distort the values of the cuts that we care about, meaning that the value of the edge-cut $(X'_{B'} \cap A_B, Y'_{B'} \cup A')$ in $G$ will be larger than the value of $(X'_{B'},Y'_{B'})$ in $G_{B'}$ by at most a $\frac{\epsilon \cdot \OPT_{E,y^*}}{z}$ additive factor, so it will still be upper-bounded by $\left(1 + \frac{i\epsilon}{z}\right)\OPT_{E,y^*}$, as needed.

\paragraph{Comparison to \cite{CMT25}.}
The algorithm of \cite{CMT25} can also be thought of as using a variation of the method described above, but it uses a different and more involved approach for dealing with the issue of bounding $\vol^+_G(A_B)$.
Furthermore, the approach of \cite{CMT25} only provides an upper-bound of $\Tilde{O}(\frac{m \cdot \widehat{\OPT}}{2^i})$ on $\vol^+_G(A_B)$, and it is unclear whether this approach can be improved to yield better upper-bounds.

\paragraph{Relation to Local Cut.}
We note that our technique for bounding $\vol^+_G(A_B)$ is very similar to a technique used by the "$(1+\epsilon)$-Approximate Local Cut" subroutine of \cite{FNY20}. Indeed, our algorithm for finding $A_B$ can be thought of as first solving an Approximate Local Cut problem (see \cite{FNY20}) in the graph $H_{B'}$ to obtain an approximately minimum $s$-$y^*$ cut $(A,A')$ where $\vol^+_G(A)$ is small, and then setting $A_B = A \setminus \set{s}$. However, we could not use the approximate Local Cut subroutine of \cite{FNY20} as-is, since it is too slow in our setting.

\subsubsection{Implementation of \approxsparsify for $i>1$.}\label{sec:approx-sparsify-for-i-larger-than-1:edge-version}

Fix some $1 \leq i \leq z$. We now formally describe how the algorithm computes the graph $G_B$ for a single batch $B \in \bset_i$ during Phase $i$ of the algorithm.
The computation of the graph $G_B$ is performed in two steps.
In the first step, we compute a set $A_B \subseteq V(G) \setminus \{y^*\}$ of vertices that satisfies the following two properties.
\begin{properties}{Q}
	\item\label{property:A_B-P'1:edge-version} $\vol^+_G(A_B) = \Tilde{O}\left(\frac{m}{2^i \cdot \epsilon}\right)$; and
	\item\label{property:A_B-P'2:edge-version} if there exists a terminal $x^* \in B \cap X^*$ and the good event $\event$ occurred, then there exists a subset $X' \subseteq A_B$ with $x^* \in X'$ and $\vol^+_G(X') \leq 2\nu$ such that the edge-cut $(X',V(G) \setminus X')$ in $G$ has value $w(E_{G}(X',V(G) \setminus X')) \leq \left(1 + \frac{i\epsilon}{z}\right)\OPT_{E,y^*}$.
\end{properties}
In the second step, we compute the graph $G_B$ so that it is equal to the graph obtained from $G$ by merging all vertices outside of $A_B$ into the vertex $y^*$, and then removing any resulting self-loops, as well as any edges outgoing from $y^*$ in the resulting graph. (We note that the resulting graph $G_B$ may have parallel edges.)
This second step is summed up in the following claim, whose proof is deferred to \Cref{sec:missing-proofs-in-main-edge-version}.
\begin{claim}\label{cl:reduction-to-computing-A_B:edge-version}
	Consider a set $A_B \subseteq V(G) \setminus \{y^*\}$ that satisfies properties \ref{property:A_B-P'1:edge-version}-\ref{property:A_B-P'2:edge-version}, and consider the graph $G_B$ that is obtained from $G$ by merging all vertices outside of $A_B$ into $y^*$ and then removing any resulting self-loops, as well as any edges outgoing from $y^*$ in the resulting graph.
	Then, the graph $G_B$ must satisfy Properties \ref{property:G_B-P1:edge-version}-\ref{property:G_B-P4:edge-version}.
	Furthermore, given such a set $A_B$, we can construct the graph $G_B$ in time $O\left(\vol^+_G(A_B)\right) = \Tilde{O}\left(\frac{m}{2^i \cdot \epsilon}\right)$.
\end{claim}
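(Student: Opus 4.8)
The first thing I would do is make the structure of $G_B$ explicit. Since $A_B\subseteq V(G)\setminus\{y^*\}$, contracting $V(G)\setminus A_B$ into $y^*$ yields a graph on vertex set $A_B\cup\{y^*\}$, and an edge $(u,v)$ of $G$ survives in $G_B$ \emph{exactly when $u\in A_B$}: if $u\notin A_B$, the edge becomes either a loop at $y^*$ or an edge leaving $y^*$, both of which are deleted. Each surviving edge keeps its original weight and becomes $(u,v)$ if $v\in A_B$ and $(u,y^*)$ otherwise, so $E(G_B)$ is in weight-preserving bijection with $\bigcup_{u\in A_B}\delta^+_G(u)$; hence $|E(G_B)|=\vol^+_G(A_B)\le|E(G)|$, which is also $\Tilde{O}\!\left(\frac{m}{2^i\cdot\epsilon}\right)$ by Property~\ref{property:A_B-P'1:edge-version}. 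After additionally discarding any vertex of $A_B$ that ends up isolated (pruning these is harmless, as I discuss below), $|V(G_B)|\le 2\vol^+_G(A_B)+1$, so Property~\ref{property:G_B-P1:edge-version} holds, and Property~\ref{property:G_B-P2:edge-version} is immediate. For the running-time bound I would build $G_B$ by scanning the out-adjacency lists of the vertices of $A_B$, using a membership oracle for $A_B$ to redirect endpoints outside $A_B$ to $y^*$; this takes $O\left(\vol^+_G(A_B)\right)=\Tilde{O}\!\left(\frac{m}{2^i\cdot\epsilon}\right)$ time, as required.

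The heart of the argument is a single observation about the bijection above, which I would use for both Properties~\ref{property:G_B-P3:edge-version} and~\ref{property:G_B-P4:edge-version}: for every $X'\subseteq V(G_B)\setminus\{y^*\}$ (so $X'\subseteq A_B$), this bijection restricts to a weight-preserving bijection between $\partial^+_G(X')$ and $\partial^+_{G_B}(X')$. Indeed, every edge of $G$ leaving $X'$ survives (its tail lies in $A_B$) and still leaves $X'$ in $G_B$, since a target in $A_B$ stays outside $X'$ while a target outside $A_B$ becomes $y^*\notin X'$; conversely, every edge of $G_B$ leaving $X'$ is the image of an edge of $G$ whose original target is outside $X'\subseteq A_B$. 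This immediately gives $w_B(\partial^+_{G_B}(X'))=w(\partial^+_G(X'))$, in particular the inequality asked for in Property~\ref{property:G_B-P3:edge-version}; and since each such $X'$ with $x\in X'$ induces an $x$-$y^*$ edge-cut $(X',V(G)\setminus X')$ in $G$, minimizing over these sets yields $\lambda_{G_B}(x,y^*)=\min_{X'}w_B(\partial^+_{G_B}(X'))=\min_{X'}w(\partial^+_G(X'))\ge\lambda_G(x,y^*)$, which is the ``in particular'' part of Property~\ref{property:G_B-P3:edge-version}.

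For Property~\ref{property:G_B-P4:edge-version} I would simply transport the cut from Property~\ref{property:A_B-P'2:edge-version} through this bijection. Assume a terminal $x^*\in B\cap X^*$ exists and $\event$ occurred; since $\OPT_{E,y^*}>0$, we must have $\deg^+_G(x^*)>0$ (otherwise $(\{x^*\},V(G)\setminus\{x^*\})$ would be an $x^*$-$y^*$ edge-cut of value $0$), so $x^*$ is not a discarded isolated vertex and lies in $V(G_B)$. Let $X'\subseteq A_B$ be the set guaranteed by Property~\ref{property:A_B-P'2:edge-version}, with $x^*\in X'$, $\vol^+_G(X')\le 2\nu$, and $w(E_G(X',V(G)\setminus X'))\le\left(1+\frac{i\epsilon}{z}\right)\OPT_{E,y^*}$. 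Then $(X',V(G_B)\setminus X')$ is an $x^*$-$y^*$ edge-cut of $G_B$ (nonempty on both sides, as $x^*\in X'$ and $y^*\notin X'$), its value equals $w_B(\partial^+_{G_B}(X'))=w(\partial^+_G(X'))\le\left(1+\frac{i\epsilon}{z}\right)\OPT_{E,y^*}$ by the bijection, and $\vol^+_G(X')\le 2\nu$ is inherited verbatim; this is exactly the cut required by Property~\ref{property:G_B-P4:edge-version}, and it also certifies $\lambda_{G_B}(x^*,y^*)\le\left(1+\frac{i\epsilon}{z}\right)\OPT_{E,y^*}$.

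I expect no genuine obstacle here: the substance of this step lies in Properties~\ref{property:A_B-P'1:edge-version}--\ref{property:A_B-P'2:edge-version}, and the present claim is the easy half of the reduction. The only points needing care are bookkeeping ones: that deleting loops and edges out of $y^*$ never removes an edge leaving a set $X'\subseteq A_B$ with $y^*\notin X'$ (which is precisely why Property~\ref{property:G_B-P3:edge-version} and the cut in Property~\ref{property:G_B-P4:edge-version} fix the $y^*$-side); that isolated vertices of $A_B$ can be pruned to keep $|V(G_B)|$ small without disturbing any relevant cut (a pruned vertex has no incident surviving edges, so no $\partial^+$-set is affected); and the use of the standing assumption $\OPT_{E,y^*}>0$ to guarantee $x^*\in V(G_B)$.
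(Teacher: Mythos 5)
Your proposal is correct and follows essentially the same route as the paper's proof: the key step is the weight-preserving bijection between $\partial^+_G(X')$ and $\partial^+_{G_B}(X')$ for $X'\subseteq A_B$ (the paper states this as a standalone observation and calls the resulting equality $w_B(\partial^+_{G_B}(X'))=w(\partial^+_G(X'))$), and Properties \ref{property:G_B-P2:edge-version}--\ref{property:G_B-P4:edge-version} are then read off directly. One small difference: to bound $|V(G_B)|$, you introduce an extra pruning step (deleting isolated vertices of $A_B$) and argue $|V(G_B)|\le 2\vol^+_G(A_B)+1$, whereas the paper instead notes that the standing assumption $\OPT_{E,y^*}>0$ forces every vertex in $V(G)\setminus\{y^*\}$ to have at least one outgoing edge, so no vertex of $A_B$ is ever isolated, and the tighter bound $|V(G_B)|\le|E(G_B)|+1$ follows without changing the construction. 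Your pruning is harmless (it is in fact a no-op under the standing assumption) but unnecessary, and avoiding it keeps $G_B$ literally equal to the graph named in the claim statement.
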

In order to complete the description of the iteration, it now suffices to describe the procedure for computing the set $A_B$ of vertices.
We do this next.

\paragraph{Obtaining the set $A_B$.}
We now describe how our algorithm computes the set $A_B \subseteq V(G) \setminus \{y^*\}$ of vertices satisfying Properties \ref{property:A_B-P'1:edge-version}-\ref{property:A_B-P'2:edge-version} with respect to a level-$i$ batch $B \in \bset_i$ during an iteration of Phase $i$.
Let $B' \in \bset_{i-1}$ denote the level-$(i-1)$ parent batch of $B$.
In order to compute $A_B$, we will use the graph $G_{B'}$ that we computed for batch $B'$ in the previous phase.
To compute $A_B$, we first construct a graph $H_{B'}$ from $G_{B'}$ as follows: for every vertex $v \in V(G_{B'})$ whose out-degree $\deg^+_G(v)$ in $G$ is non-zero, we add to $G_{B'}$ an edge $(v,y^*)$ of capacity $\frac{\epsilon\cdot\widehat{\OPT} \cdot \deg^+_G(v)}{z \cdot 2\nu}$; furthermore, we add a new \emph{super-source} vertex $s$, and we connect $s$ to each vertex of $B \cap V(G_{B'})$ via an edge of capacity $4\widehat{\OPT}$.
In the remainder of this section, we will overload notation and use $w_{B'}$ to refer to the edge-weight function in both $G_{B'}$ and $H_{B'}$, as these two functions agree on all edges that are shared by these two graphs.
The next observation about the graph $H_{B'}$ follows immediately from the construction of this graph.
\begin{observation}\label{obs:relating-cut-value-after-modification:edge-version}
	For every set $\hat{X} \subseteq V(H_{B'}) \setminus \{s,y^*\}$ of vertices, the difference between the total weight of edges outgoing from $\hat{X}$ in $H_{B'}$ and the total weight of edges outgoing from $\hat{X}$ in $G_{B'}$ is $w_{B'}(\partial^+_{H_{B'}}(\hat{X})) - w_{B'}(\partial^+_{G_{B'}}(\hat{X})) = \frac{\epsilon\cdot\widehat{\OPT} \cdot \vol^+_G(\hat{X})}{z \cdot 2\nu}$.
\end{observation}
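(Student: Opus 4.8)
The plan is to observe that $H_{B'}$ is obtained from $G_{B'}$ purely by \emph{adding} edges: it retains every edge of $G_{B'}$ (with the same weight $w_{B'}$) and on top of that has two families of new edges — family (i), consisting of a fresh edge $(v,y^*)$ of capacity $\frac{\epsilon\cdot\widehat{\OPT}\cdot\deg^+_G(v)}{z\cdot 2\nu}$ for every $v\in V(G_{B'})$ with $\deg^+_G(v)>0$, and family (ii), consisting of the edges from the new super-source $s$ to each vertex of $B\cap V(G_{B'})$, each of capacity $4\widehat{\OPT}$. Consequently, for any $\hat X\subseteq V(H_{B'})\setminus\{s,y^*\}$, the set $\partial^+_{H_{B'}}(\hat X)$ is the disjoint union of $\partial^+_{G_{B'}}(\hat X)$ with the set of \emph{newly added} edges leaving $\hat X$, so the claimed identity reduces to showing that the total weight of the newly added edges leaving $\hat X$ equals $\frac{\epsilon\cdot\widehat{\OPT}\cdot\vol^+_G(\hat X)}{z\cdot 2\nu}$.

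First I would dispose of family (ii): all of these edges emanate from $s$, and $s\notin\hat X$ by hypothesis, so none of them lie in $\partial^+_{H_{B'}}(\hat X)$; thus every newly added edge leaving $\hat X$ belongs to family (i). Next I would identify which edges of family (i) leave $\hat X$: an edge $(v,y^*)$ leaves $\hat X$ exactly when $v\in\hat X$ and $y^*\notin\hat X$, and the second condition holds automatically since $\hat X\subseteq V(H_{B'})\setminus\{s,y^*\}$. Hence the newly added edges leaving $\hat X$ are precisely $\{(v,y^*) : v\in\hat X,\ \deg^+_G(v)>0\}$, and every such $v$ lies in $V(G_{B'})\subseteq V(G)$, so $\deg^+_G(v)$ is well-defined.

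It then remains to sum the capacities:
\[
\sum_{\substack{v\in\hat X\\ \deg^+_G(v)>0}}\frac{\epsilon\cdot\widehat{\OPT}\cdot\deg^+_G(v)}{z\cdot 2\nu}
=\frac{\epsilon\cdot\widehat{\OPT}}{z\cdot 2\nu}\sum_{v\in\hat X}\deg^+_G(v)
=\frac{\epsilon\cdot\widehat{\OPT}\cdot\vol^+_G(\hat X)}{z\cdot 2\nu},
\]
where the first equality drops the constraint $\deg^+_G(v)>0$ (those terms vanish) and the second uses the definition $\vol^+_G(\hat X)=\sum_{v\in\hat X}\deg^+_G(v)$. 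This yields the claimed equality. There is no genuine obstacle here; the only points deserving a moment's care are that the super-source edges contribute nothing because $s\notin\hat X$, and that family (i) introduces a \emph{parallel} copy of $(v,y^*)$ when such an edge is already present in $G_{B'}$, so its capacity adds on top of $w_{B'}(\partial^+_{G_{B'}}(\hat X))$ rather than overwriting anything (and even if one instead reads the construction as increasing an existing edge's weight, the difference is still exactly the added capacity).
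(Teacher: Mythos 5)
Your proof is correct. The paper itself gives no argument for this observation, stating only that it ``follows immediately from the construction of this graph,'' and your write-up is exactly the detail that phrase compresses: the super-source edges are irrelevant since $s\notin\hat X$, the new edges $(v,y^*)$ leaving $\hat X$ are precisely those with $v\in\hat X$ since $y^*\notin\hat X$, and their capacities sum to $\frac{\epsilon\cdot\widehat{\OPT}\cdot\vol^+_G(\hat X)}{z\cdot 2\nu}$ by the definition of out-volume. Your parenthetical note about parallel edges is also the right way to read the construction (the paper explicitly allows parallel edges in these sparsifier graphs).
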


\noindent To conclude the computation of $A_B$, we use \Cref{thm:fast-max-s-t-edge-cut} in order to compute a minimum $s$-$y^*$ edge-cut $(A,A')$ in the graph $H_{B'}$, and we let $A_B = A \setminus \{s\}$.
This concludes the description of the algorithm, and it now only remains to analyze it.
Before proving that the set $A_B$ satisfies Properties \ref{property:A_B-P'1:edge-version}-\ref{property:A_B-P'2:edge-version}, we first present the next two observations about the edge-cut $(A,A')$ in the graph $H_{B'}$.
\begin{observation}\label{obs:bound-on-size-of-cut-in-H-graph:edge-version}
	The value $w_{B'}(E_{H_{B'}}(A,A'))$ of the edge-cut $(A,A')$ in the graph $H_{B'}$ satisfies $w_{B'}(E_{H_{B'}}(A,A'))=O\left(\frac{m \cdot \widehat{\OPT}}{\nu \cdot 2^i}\right)$.
\end{observation}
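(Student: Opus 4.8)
The plan is to bound the value of the \emph{minimum} $s$-$y^*$ edge-cut $(A,A')$ in $H_{B'}$ by exhibiting one explicit $s$-$y^*$ edge-cut of small value and invoking minimality. The natural candidate is the cut that isolates the super-source, namely $(\{s\},\, V(H_{B'})\setminus\{s\})$; since $s\neq y^*$ and (because $y^*$ is a forbidden vertex in the terminal-selection step) $y^*\notin B$, this is indeed a valid $s$-$y^*$ edge-cut. Its value is exactly the total capacity of the edges leaving $s$ in $H_{B'}$.

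Next I would read off this quantity from the construction of $H_{B'}$: the only edges outgoing from $s$ are the edges $(s,v)$ for $v\in B\cap V(G_{B'})$, each of capacity $4\widehat{\OPT}$. Hence the value of the cut $(\{s\},\,V(H_{B'})\setminus\{s\})$ is $4\widehat{\OPT}\cdot |B\cap V(G_{B'})| \le 4\widehat{\OPT}\cdot |B|$. By \Cref{obs:size-of-batches}, every level-$i$ batch satisfies $|B| = O\!\left(\frac{m}{\nu\cdot 2^i}\right)$, so this value is $O\!\left(\frac{m\cdot\widehat{\OPT}}{\nu\cdot 2^i}\right)$. Finally, since $(A,A')$ is chosen to be a \emph{minimum} $s$-$y^*$ edge-cut in $H_{B'}$ (via \Cref{thm:fast-max-s-t-edge-cut}), we get $w_{B'}(E_{H_{B'}}(A,A')) \le 4\widehat{\OPT}\cdot|B| = O\!\left(\frac{m\cdot\widehat{\OPT}}{\nu\cdot 2^i}\right)$, which is exactly the claimed bound.

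There is essentially no obstacle here: the argument is a one-line application of cut minimality against the trivial "isolate the source" cut, combined with the batch-size bound from \Cref{obs:size-of-batches}. The only points requiring a moment of care are (i) confirming from the definition of $H_{B'}$ that $s$'s out-edges are precisely the $|B\cap V(G_{B'})|$ edges of capacity $4\widehat{\OPT}$ (no other edges touch $s$), and (ii) noting $y^*\notin B$ so that isolating $s$ genuinely separates $s$ from $y^*$ — both of which are immediate from the setup.
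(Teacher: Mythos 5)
Your proof is correct and follows essentially the same route as the paper: both bound the minimum $s$-$y^*$ cut by the trivial cut $(\{s\},V(H_{B'})\setminus\{s\})$, whose value is at most $|B|\cdot 4\widehat{\OPT}$, and then apply \Cref{obs:size-of-batches}. The additional remarks you add (that no other edges touch $s$ and that $y^*\notin B$) are harmless and correct.
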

\begin{proof}
	Recall that $(A,A')$ is chosen to be a minimum $s$-$y^*$ edge-cut in $H_{B'}$; thus, it suffices to show that some $s$-$y^*$ edge-cut in $H_{B'}$ has value $O\left(\frac{m \cdot \widehat{\OPT}}{\nu \cdot 2^i}\right)$.
	Indeed, by the construction of the graph $H_{B'}$, there are at most $|B|$ edges outgoing from the vertex $s$, and the capacity of each such edge is exactly $4\widehat{\OPT}$ -- therefore, the value of the $s$-$y^*$ edge-cut $(\set{s},V(H_{B'}) \setminus\set{s})$ in $H_{B'}$ is at most $|B| \cdot 4\widehat{\OPT} \leq O\left(\frac{m \cdot \widehat{\OPT}}{\nu \cdot 2^i}\right)$, where the last inequality follows from \Cref{obs:size-of-batches}.
	This concludes the proof of \Cref{obs:bound-on-size-of-cut-in-H-graph:edge-version}.
\end{proof}

\begin{observation}\label{obs:minimum-cut-in-modified-graph-contains-distinguished-terminal}
	The set $A$ contains every terminal $x \in B \cap V(G_{B'})$ whose edge-connectivity to $y^*$ in $H_{B'}$ satisfies $\lambda_{H_{B'}}(x,y^*) < 4\widehat{\OPT}$.
\end{observation}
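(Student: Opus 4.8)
The plan is to argue by contradiction, uncrossing the two relevant minimum cuts. Suppose some terminal $x \in B \cap V(G_{B'})$ satisfies $\lambda_{H_{B'}}(x,y^*) < 4\widehat{\OPT}$ but $x \notin A$, so that $x \in A'$. By the Max-Flow Min-Cut theorem there is an $x$-$y^*$ edge-cut $(X,Y)$ in $H_{B'}$ of value $w_{B'}(E_{H_{B'}}(X,Y)) = \lambda_{H_{B'}}(x,y^*) < 4\widehat{\OPT}$, with $x \in X$ and $y^* \in Y$. The key object will be the set $A \cup X$: I will show on the one hand that $(A \cup X,\ V(H_{B'}) \setminus (A\cup X))$ is a legal $s$-$y^*$ edge-cut, so by minimality of $(A,A')$ its value is at least $w_{B'}(\partial^+_{H_{B'}}(A))$, and on the other hand that its value is \emph{strictly less} than $w_{B'}(\partial^+_{H_{B'}}(A))$ --- a contradiction.

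For the legality claim I just note $s \in A \subseteq A \cup X$ while $y^* \in A' \cap Y$ lies outside $A \cup X$, and both sides are nonempty. For the upper bound I decompose $\partial^+_{H_{B'}}(A \cup X)$, using that $A \cup X = A \sqcup (X \setminus A)$, into the edges leaving $A$ and the edges leaving $X \setminus A$, all landing in $Z := A' \cap Y$. Writing $A' = (A' \cap X) \sqcup Z$, the edges from $A$ into $Z$ have total weight $w_{B'}(\partial^+_{H_{B'}}(A)) - w_{B'}(E_{H_{B'}}(A, A' \cap X))$; since the edge $(s,x)$ --- which exists precisely because $x \in B \cap V(G_{B'})$ --- goes from $s \in A$ to $x \in A' \cap X$ and has capacity $4\widehat{\OPT}$, this is at most $w_{B'}(\partial^+_{H_{B'}}(A)) - 4\widehat{\OPT}$. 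The edges from $X \setminus A \subseteq X$ into $Z \subseteq Y$ are among the edges from $X$ to $Y$, so their total weight is at most $w_{B'}(E_{H_{B'}}(X,Y)) < 4\widehat{\OPT}$. Adding the two pieces yields value strictly below $w_{B'}(\partial^+_{H_{B'}}(A))$, completing the contradiction and hence forcing $x \in A$.

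I do not expect a genuine obstacle: this is the familiar ``a vertex with small connectivity to the sink cannot sit on the sink side of a minimum cut separating it from the super-source'' argument. The one point that must be respected is the interplay of constants --- the $4\widehat{\OPT}$ that we save by dropping the edge $(s,x)$ must \emph{strictly} beat the at-most-$\lambda_{H_{B'}}(x,y^*)$ that we may pay for the newly-crossing edges out of $X$, which is exactly why the super-source edges were given capacity $4\widehat{\OPT}$ and why the hypothesis is the strict inequality $\lambda_{H_{B'}}(x,y^*) < 4\widehat{\OPT}$. The rest is bookkeeping: $x \neq y^*$ (terminals avoid the forbidden vertex $y^*$) and $x \neq s$ so that $(s,x)$ is a genuine edge, $x \in A' \cap X$ under the contradiction hypothesis, and both sides of the cut $(A\cup X, Z)$ nonempty. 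One could instead route the cut decomposition through submodularity of the cut function (as in \Cref{lem:submodularity-of-minimum-cuts:edge-version}), but the direct splitting above keeps the $4\widehat{\OPT}$ gain visible and is cleaner.
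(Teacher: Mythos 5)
Your proof is correct and follows essentially the same route as the paper: assume $x\notin A$, take a small $x$-$y^*$ cut, and show that $A\cup X$ yields a strictly cheaper $s$-$y^*$ cut because the $(s,x)$ edge of capacity $4\widehat{\OPT}$ stops crossing while the newly crossing edges (all inside $E_{H_{B'}}(X,Y)$) cost strictly less than $4\widehat{\OPT}$. The paper packages the same accounting as the set inclusion $\partial^+_{H_{B'}}(A\cup\hat X)\subseteq \bigl(\partial^+_{H_{B'}}(\hat X)\cup\partial^+_{H_{B'}}(A)\bigr)\setminus\{(s,x)\}$ rather than your explicit splitting of $\partial^+_{H_{B'}}(A\cup X)$ into $E(A,Z)$ and $E(X\setminus A,Z)$, but the key object, the removed edge, and the constants are identical.
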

\begin{proof}
	Suppose towards contradiction that some terminal $x \in B \cap V(G_{B'})$ does not belong to $A$, and yet has $\lambda_{H_{B'}}(x,y^*) < 4\widehat{\OPT}$.
	Then, there must exist an $x$-$y^*$ edge-cut $(\hat{X},\hat{Y})$ in $H_{B'}$ with value $w_{B'}(\partial_{H_{B'}}(\hat{X})) < 4\widehat{\OPT}$.
	To complete the proof of the observation, we will now show that the edge-cut $(A \cup \hat{X}, A' \cap \hat{Y})$ in $H_{B'}$ has a smaller value than the edge-cut $(A,A')$, contradicting the choice of $(A,A')$ as a minimum $s$-$y^*$ edge-cut in $H_{B'}$:
	indeed, recall that the graph $H_{B'}$ contains an edge $(s,x)$ of weight $w_{B'}(s,x) = 4\widehat{\OPT} > w_{B'}(\partial_{H_{B'}}(\hat{X}))$; furthermore, observe that, as $x \in \hat{X} \setminus A$, this edge exits the set $A$ but not the set $A \cup \hat{X}$, meaning that $\partial_{H_{B'}}(A \cup \hat{X}) \subseteq \partial_{H_{B'}}(\hat{X}) \cup \partial_{H_{B'}}(A) \setminus \set{(s,x)}$.
	Therefore, the value of the edge-cut $(A \cup \hat{X}, A' \cap \hat{Y})$ in $H_{B'}$ is $w_{B'}(\partial_{H_{B'}}(A \cup \hat{X})) \leq w_{B'}(\partial_{H_{B'}}(A)) + w_{B'}(\partial_{H_{B'}}(\hat{X})) - w_{B'}(s,x) < w_{B'}(\partial_{H_{B'}}(A))$.
	This concludes the proof of \Cref{obs:minimum-cut-in-modified-graph-contains-distinguished-terminal}.
\end{proof}

Next, we show that the set $A_B$ computed by the above procedure indeed satisfies Properties \ref{property:A_B-P'1:edge-version}-\ref{property:A_B-P'2:edge-version}.
We begin by proving that $A_B$ satisfies Property \ref{property:A_B-P'1:edge-version}.

\begin{claim}\label{cl:A_B-satisfies-property-P'1:edge-version}
	The set $A_B$ satisfies Property \ref{property:A_B-P'1:edge-version}. That is, $\vol^+_G(A_B) = \Tilde{O}\left(\frac{m}{2^i \cdot \epsilon}\right)$.
\end{claim}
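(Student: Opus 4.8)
The plan is to exploit the auxiliary edges that we added into $y^*$ when forming $H_{B'}$: each such edge $(v,y^*)$ has capacity proportional to $\deg^+_G(v)$, so the minimum $s$-$y^*$ cut is "charged" for the out-volume of the set it produces. Concretely, I would first observe that $A_B = A \setminus \{s\} \subseteq V(G_{B'}) \subseteq V(G)$, so that $\vol^+_G(A_B) = \sum_{v \in A_B} \deg^+_G(v)$ is well-defined, and that $y^* \in A'$. Then, for every vertex $v \in A_B$ with $\deg^+_G(v) > 0$, the construction of $H_{B'}$ placed an edge $(v,y^*)$ of capacity $\frac{\epsilon \cdot \widehat{\OPT} \cdot \deg^+_G(v)}{z \cdot 2\nu}$; since $v \in A$ and $y^* \in A'$, every such edge lies in $E_{H_{B'}}(A,A')$. (Vertices of $A_B$ with $\deg^+_G(v)=0$ contribute nothing, so we may ignore them.)

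Summing the capacities of just these edges gives the lower bound
\[
 w_{B'}(E_{H_{B'}}(A,A')) \;\ge\; \sum_{v \in A_B} \frac{\epsilon \cdot \widehat{\OPT} \cdot \deg^+_G(v)}{z \cdot 2\nu} \;=\; \frac{\epsilon \cdot \widehat{\OPT}}{z \cdot 2\nu}\cdot \vol^+_G(A_B).
\]
On the other hand, \Cref{obs:bound-on-size-of-cut-in-H-graph:edge-version} gives the matching upper bound $w_{B'}(E_{H_{B'}}(A,A')) = O\!\left(\frac{m \cdot \widehat{\OPT}}{\nu \cdot 2^i}\right)$. Combining the two displays, the factors of $\widehat{\OPT}$ and $\nu$ cancel, and rearranging yields $\vol^+_G(A_B) = O\!\left(\frac{m \cdot z}{2^i \cdot \epsilon}\right)$. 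Since $z = \ceil{\log|T|} = O(\log m)$, this is $\Tilde{O}\!\left(\frac{m}{2^i \cdot \epsilon}\right)$, which is exactly Property \ref{property:A_B-P'1:edge-version}.

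I do not anticipate a genuine obstacle here; the only thing requiring a bit of care is the bookkeeping in the lower bound — namely checking that each relevant added edge $(v,y^*)$ with $v \in A_B$ genuinely crosses from $A$ to $A'$ (which holds because $y^* \notin A$, as $(A,A')$ is an $s$-$y^*$ cut) and that we are not over-counting (the added edges into $y^*$ are pairwise distinct and distinct from the original edges of $G_{B'}$, so their capacities simply add). Everything else is the arithmetic above.
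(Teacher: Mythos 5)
Your proof is correct and follows the same approach as the paper's: lower-bound $w_{B'}(E_{H_{B'}}(A,A'))$ by the total capacity of the auxiliary edges $(v,y^*)$ with $v \in A_B$, combine with the upper bound from \Cref{obs:bound-on-size-of-cut-in-H-graph:edge-version}, and absorb the factor $z$ into $\Tilde{O}(\cdot)$. The bookkeeping you flag (each such edge genuinely crosses the cut since $y^* \in A'$, and the auxiliary edges are distinct from and additive with the original ones) is exactly what the paper's one-line justification implicitly relies on.
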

\begin{proof}
	Recall that for each vertex $v \in A_B$, if $\deg^+_G(v) > 0$, then the graph $H_{B'}$ contains an edge $(v,y^*)$ of capacity $\frac{\epsilon\cdot\widehat{\OPT} \cdot \deg^+_G(v)}{z \cdot 2\nu}$, and this edge must cross the edge-cut $(A,A')$ in $H_{B'}$.
	Therefore, the value $w_{B'}(E_{H_{B'}}(A,A'))$ of the edge-cut $(A,A')$ in $H_{B'}$ is at least as large as $\frac{\epsilon\cdot\widehat{\OPT} \cdot \vol^+_G(A_B)}{z \cdot 2\nu}$, meaning that $\vol^+_G(A_B)$ is upper-bounded by $\vol^+_G(A_B) \leq w_{B'}(E_{H_{B'}}(A,A')) \cdot \frac{z \cdot 2\nu}{\epsilon\cdot\widehat{\OPT}}$.
	By plugging the upper-bound $w_{B'}(E_{H_{B'}}(A,A'))=O\left(\frac{m \cdot \widehat{\OPT}}{\nu \cdot 2^i}\right)$ from \Cref{obs:bound-on-size-of-cut-in-H-graph:edge-version} into this last inequality, we get that $\vol^+_G(A_B) \leq O\left(\frac{z \cdot m}{\epsilon \cdot 2^{i}}\right)$.
	The corollary now follows as $z = \Tilde{O}(1)$.
\end{proof}

Next, we prove that the set $A_B$ satisfies Property \ref{property:A_B-P'2:edge-version}.
\begin{claim}\label{cl:A_B-satisfies-property-P'2:edge-version}
	The set $A_B$ satisfies Property \ref{property:A_B-P'2:edge-version}.
\end{claim}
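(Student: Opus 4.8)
We must show: if there is a terminal $x^* \in B \cap X^*$ and the good event $\event$ occurred, then there is a set $X' \subseteq A_B$ with $x^* \in X'$, $\vol^+_G(X') \leq 2\nu$, and $w(E_G(X', V(G)\setminus X')) \leq (1 + \frac{i\epsilon}{z})\OPT_{E,y^*}$. The plan is to start from the edge-cut promised by Property \ref{property:G_B-P4:edge-version} of the parent graph $G_{B'}$, namely an $x^*$-$y^*$ edge-cut $(X'_{B'}, Y'_{B'})$ in $G_{B'}$ with $w_{B'}(E_{G_{B'}}(X'_{B'}, Y'_{B'})) \leq (1 + \frac{(i-1)\epsilon}{z})\OPT_{E,y^*}$ and $\vol^+_G(X'_{B'}) \leq 2\nu$, and then intersect it with the min-cut side $A_B$ produced by the $s$-$y^*$ cut in $H_{B'}$. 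Concretely I would set $X' = X'_{B'} \cap A_B$ and verify the three required conditions.

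**Verifying membership and out-volume.** First, $x^* \in X'$: since $x^* \in B \cap X^*$, we have $x^* \in B$, and since the good event occurred, $x^* \in V(G_{B'})$ (by Property \ref{property:G_B-P4:edge-version} applied to the parent) and hence $x^* \in B \cap V(G_{B'})$. I would then argue $\lambda_{H_{B'}}(x^*, y^*) < 4\widehat{\OPT}$ so that \Cref{obs:minimum-cut-in-modified-graph-contains-distinguished-terminal} forces $x^* \in A$, hence $x^* \in A_B$; combined with $x^* \in X'_{B'}$ this gives $x^* \in X'$. To bound $\lambda_{H_{B'}}(x^*,y^*)$, observe that $(X'_{B'}, Y'_{B'})$ is an $x^*$-$y^*$ edge-cut in $G_{B'}$, so it extends to an $x^*$-$y^*$ edge-cut in $H_{B'}$ (put $s$ on the $y^*$-side, noting $s \notin X'_{B'}$); its value in $H_{B'}$ is, by \Cref{obs:relating-cut-value-after-modification:edge-version}, at most $w_{B'}(E_{G_{B'}}(X'_{B'},Y'_{B'})) + \frac{\epsilon \widehat{\OPT} \vol^+_G(X'_{B'})}{z \cdot 2\nu} \leq (1 + \frac{(i-1)\epsilon}{z})\OPT_{E,y^*} + \frac{\epsilon}{z}\widehat{\OPT} \leq (1 + \frac{i\epsilon}{z})\OPT_{E,y^*} \leq 2\OPT_{E,y^*}$, using $\vol^+_G(X'_{B'}) \leq 2\nu$, $\widehat{\OPT} \leq \OPT_{E,y^*}$, and $\frac{i\epsilon}{z} \leq 1$. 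Since under $\event$ we have $\widehat{\OPT} > \OPT_{E,y^*}/2$, this is strictly less than $4\widehat{\OPT}$, as needed. Second, $\vol^+_G(X') \leq \vol^+_G(X'_{B'}) \leq 2\nu$ since $X' \subseteq X'_{B'}$, using monotonicity of out-volume under taking subsets; this gives the out-volume condition directly.

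**Verifying the cut value via submodularity.** The heart of the argument is bounding $w(E_G(X', V(G)\setminus X'))$. I would apply the submodularity corollary \Cref{lem:submodularity-of-minimum-cuts:edge-version} inside the graph $H_{B'}$: with $s$ playing the role of "$s$", $y^*$ the role of "$t$", the minimum $s$-$y^*$ edge-cut $(A, A')$ playing the role of $(S,T)$, and the set $X'_{B'}$ (viewed inside $V(H_{B'})$, which contains $V(G_{B'})$) playing the role of $X$, the lemma gives $w_{B'}(\partial^+_{H_{B'}}(X'_{B'} \cap A)) \leq w_{B'}(\partial^+_{H_{B'}}(X'_{B'}))$. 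Note $X'_{B'} \cap A = X'_{B'} \cap A_B = X'$ (since $X'_{B'} \subseteq V(G_{B'})$ is disjoint from $s$). Now I translate both sides back to $G$. On the left: Property \ref{property:G_B-P3:edge-version} of $G_{B'}$ says $w_{B'}(\partial^+_{G_{B'}}(X')) \geq w(\partial^+_G(X'))$ for every $X' \subseteq V(G_{B'}) \setminus \{y^*\}$ — and $y^* \notin X'$ — while \Cref{obs:relating-cut-value-after-modification:edge-version} gives $w_{B'}(\partial^+_{H_{B'}}(X')) = w_{B'}(\partial^+_{G_{B'}}(X')) + \frac{\epsilon\widehat{\OPT}\vol^+_G(X')}{z\cdot 2\nu} \geq w(\partial^+_G(X'))$, so the left side is at least $w(E_G(X', V(G)\setminus X'))$. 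On the right: $w_{B'}(\partial^+_{H_{B'}}(X'_{B'})) = w_{B'}(E_{G_{B'}}(X'_{B'}, Y'_{B'})) + \frac{\epsilon\widehat{\OPT}\vol^+_G(X'_{B'})}{z \cdot 2\nu}$ by \Cref{obs:relating-cut-value-after-modification:edge-version}, which as computed above is at most $(1 + \frac{i\epsilon}{z})\OPT_{E,y^*}$. Chaining these inequalities yields $w(E_G(X', V(G)\setminus X')) \leq (1 + \frac{i\epsilon}{z})\OPT_{E,y^*}$, completing the proof.

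**Anticipated obstacle.** The main subtlety I expect is getting the additive error bookkeeping exactly right: the parent cut already carries an error of $\frac{(i-1)\epsilon}{z}\OPT_{E,y^*}$, and each of the (at most) two places where the $H_{B'}$-modification inflates a cut value contributes an extra $\frac{\epsilon\widehat{\OPT}}{z \cdot 2\nu}\cdot \vol^+_G(\cdot)$ term; I need $\vol^+_G(\cdot) \leq 2\nu$ to hold for exactly the right set each time (it holds for $X'_{B'}$ by the parent's Property \ref{property:G_B-P4:edge-version}, and for $X' \subseteq X'_{B'}$ by monotonicity) so that each such term is at most $\frac{\epsilon}{z}\widehat{\OPT} \leq \frac{\epsilon}{z}\OPT_{E,y^*}$, and crucially the submodularity step must be arranged so that only \emph{one} such additive penalty survives into the final bound rather than two — which is exactly why the submodularity is applied in $H_{B'}$ (where the penalty on $X'_{B'}$ on the right-hand side is the only one that appears) rather than naively in $G$. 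A secondary point is making sure all invocations of Property \ref{property:G_B-P3:edge-version} are to sets not containing $y^*$, which holds here since $X' \subseteq X'_{B'} \subseteq V(G_{B'})$ and $y^* \notin X'_{B'}$ as $(X'_{B'}, Y'_{B'})$ is a cut with $y^* \in Y'_{B'}$.
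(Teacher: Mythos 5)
Your proof is correct and follows essentially the same route as the paper's: intersect the promised parent cut $X'_{B'}$ with the minimum-cut side $A$ (equivalently $A_B$), bound the resulting cut value via the submodularity lemma applied inside $H_{B'}$ together with the translations between $G$, $G_{B'}$, and $H_{B'}$ (Property~\ref{property:G_B-P3:edge-version} and \Cref{obs:relating-cut-value-after-modification:edge-version}), and establish $x^* \in A$ by showing $\lambda_{H_{B'}}(x^*,y^*) < 4\widehat{\OPT}$ and invoking \Cref{obs:minimum-cut-in-modified-graph-contains-distinguished-terminal}. The only cosmetic difference is the order in which membership of $x^*$ and the cut-value bound are established; the paper does the latter first, packaging the chain of three inequalities as a nested claim.
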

\begin{proof}
	Suppose that there exists a terminal $x^* \in B$ such that $x^* \in X^*$, and suppose that the good event $\event$ occurred.
	We need to prove that there exists a subset $X' \subseteq A_B$ with $x^* \in X'$ and $\vol^+_G(X') \leq 2\nu$ such that the edge-cut $(X',V(G) \setminus X')$ in $G$ has value $w(E_{G}(X',V(G) \setminus X')) \leq \left(1 + \frac{i\epsilon}{z}\right)\OPT_{E,y^*}$.
	
	As $x^* \in B \cap X^*$ and as $B'$ is the parent batch of $B$, it follows that $x^* \in B' \cap X^*$.
	Thus, by Property \ref{property:G_B-P4:edge-version} of the graph $G_{B'}$, $x \in V(G_{B'})$ holds, and there exists an $x^*$-$y^*$ edge-cut $(X'_{B'},Y'_{B'})$ in $G_{B'}$ of value $w_{B'}(E_{G_{B'}}(X'_{B'},Y'_{B'})) \leq \left(1 + \frac{(i-1)\epsilon}{z}\right)\OPT_{E,y^*}$, that additionally satisfies $\vol^+_G(X'_{B'}) \leq 2\nu$.
	
	Let $X' = A \cap X'_{B'}$, and observe that $X' \subseteq A \cap V(G_{B'}) = A_B$.
	We now prove that $x^* \in X'$ and $\vol^+_G(X') \leq 2\nu$, and that the edge-cut $(X',V(G) \setminus X')$ in $G$ has value $w(E_{G}(X',V(G) \setminus X')) \leq \left(1 + \frac{i\epsilon}{z}\right)\OPT_{E,y^*}$.
	Indeed, the fact that $\vol^+_G(X') \leq 2\nu$ follows from the fact that $X' \subseteq X'_{B'}$, and that $\vol^+_G(X'_{B'}) \leq 2\nu$.
	It remains to prove that $x^* \in X'$, as well as that the edge-cut $(X',V(G) \setminus X')$ in $G$ has value $w(E_{G}(X',V(G) \setminus X')) \leq \left(1 + \frac{i\epsilon}{z}\right)\OPT_{E,y^*}$.
	We begin by proving the latter statement. Specifically, this latter statement follows directly from the next claim.
	
	\begin{claim}\label{cl:upper-bound-on-value-of-new-promised-cut:edge-version}
		The three inequalities $w(E_{G}(X',V(G) \setminus X')) \leq w_{B'}(\partial^+_{H_{B'}}(X')) \leq w_{B'}(\partial^+_{H_{B'}}(X'_{B'})) \leq \left(1 + \frac{i\epsilon}{z}\right)\OPT_{E,y^*}$ all hold.
	\end{claim}
	\begin{proof}
		We begin by proving the first inequality of the claim. That is, we begin by proving that $w(E_{G}(X',V(G) \setminus X')) \leq w_{B'}(\partial^+_{H_{B'}}(X'))$.
		Indeed, by Property \ref{property:G_B-P3:edge-version} of the graph $G_{B'}$, the value of the edge-cut $(X',V(G) \setminus X')$ in $G$ is no larger than the value of the edge-cut $(X',V(G_{B'}) \setminus X')$ in $G_{B'}$.
		Furthermore, by the construction of the graph $H_{B'}$, this latter value is no larger than the value of the edge-cut $(X',V(H_{B'}) \setminus X')$ in $H_{B'}$, which is $w_{B'}(\partial^+_{H_{B'}}(X'))$.
		
		Next, we prove the second inequality from \Cref{cl:upper-bound-on-value-of-new-promised-cut:edge-version}.
		Indeed, since $X'$ is defined as $X' = A \cap X'_{B'}$, this inequality follows from \Cref{lem:submodularity-of-minimum-cuts:edge-version} (with $X'_{B'}$, $A$, and $A'$ in place of $X$, $S$, and $T$).
		
		Lastly, we now prove the third inequality from \Cref{cl:upper-bound-on-value-of-new-promised-cut:edge-version}.
		Recall we are assuming that event $\event$ occurred, which means that $\widehat{\OPT} \leq \OPT_{E,y^*}$;
		therefore, since $\vol^+_G(X'_{B'}) \leq 2\nu$, it follows from \Cref{obs:relating-cut-value-after-modification:edge-version} that $w_{B'}(\partial^+_{H_{B'}}(X'_{B'})) - w_{B'}(\partial^+_{G_{B'}}(X'_{B'})) \leq \frac{\epsilon\cdot\widehat{\OPT}}{z} \leq \frac{\epsilon\cdot\OPT_{E,y^*}}{z}$.
		So, by the choice of $(X'_{B'},Y'_{B'})$ as an edge-cut in $G_{B'}$ whose value is at most $\left(1 + \frac{(i-1)\epsilon}{z}\right)\OPT_{E,y^*}$, it follows that
		\begin{equation*}
			w_{B'}(\partial^+_{H_{B'}}(X'_{B'})) \leq w_{B'}(\partial^+_{G_{B'}}(X'_{B'})) + \frac{\epsilon\cdot\OPT_{E,y^*}}{z} \leq \left(1 + \frac{i\epsilon}{z}\right)\OPT_{E,y^*}.
		\end{equation*}
		This concludes the proof of \Cref{cl:upper-bound-on-value-of-new-promised-cut:edge-version}.
	\end{proof}

	We now continue the proof of \Cref{cl:A_B-satisfies-property-P'2:edge-version}.
	To prove this claim, it now remains to prove that $x^* \in X'$. As $X'= A \cap X'_{B'}$ and $x^* \in X'_{B'}$, it suffices to prove that $x^* \in A$.
	We do this next, by first proving that $\lambda_{H_{B'}}(x^*,y^*) < 4\widehat{\OPT}$, and then using \Cref{obs:minimum-cut-in-modified-graph-contains-distinguished-terminal}. Indeed since $x^* \in X'_{B'}$ and $y^* \notin X'_{B'}$, it follows by the third inequality of \Cref{cl:upper-bound-on-value-of-new-promised-cut:edge-version} that the $x^*$-$y^*$ edge-connectivity in $H_{B'}$ is
	\begin{equation}\label{eq:edge-connectivity-of-distinguished-terminal-in-modified-graph:edge-version}
		\lambda_{H_{B'}}(x^*,y^*) \leq w_{B'}(\partial^+_{H_{B'}}(X'_{B'})) \leq \left(1 + \frac{i\epsilon}{z}\right)\OPT_{E,y^*} \leq 2\OPT_{E,y^*}.
	\end{equation}
	Recall also that we are assuming that event $\event$ occurred, which, by the definition of this event, means that $\OPT_{E,y^*} < 2\widehat{\OPT}$.
	By plugging this inequality into \Cref{eq:edge-connectivity-of-distinguished-terminal-in-modified-graph:edge-version}, it follows that $\lambda_{H_{B'}}(x^*,y^*) < 4\widehat{\OPT}$.
	Since $x^* \in B \cap X'_{B'} \subseteq B \cap V(G_{B'})$, it now follows from \Cref{obs:minimum-cut-in-modified-graph-contains-distinguished-terminal} that $x^* \in A$.
	Thus, $x^* \in A \cap X'_{B'} = X'$, as we needed to prove.
	This concludes the proof of \Cref{cl:A_B-satisfies-property-P'2:edge-version}, and thus concludes the analysis of \approxsparsify.
\end{proof}

\section{From Rooted Minimum Vertex-Cut to Global Minimum Vertex-Cut: Proof of \Cref{thm:main-reduction}}\label{sec:main-reduction}
%In this section, we present our second main technical contribution: a randomized $\Tilde{O}(m)$-time approximation-preserving reduction from the Global Minimum Vertex-Cut problem on an $m$-edge graph $G$ to a collection of several instances of the Rooted Minimum Vertex-Cut problems, whose total size is $\Tilde{O}(m)$.

%\begin{theorem}
%	There exists a $\Tilde{O}(m)$-time randomized algorithm whose input is an instance $\iset$ of the vertex-weighted Global Minimum Vertex-Cut problem, consisting of a directed strongly-connected $m$-edge graph $G$ with non-negative weights $w(v)$ on the vertices $v \in V(G)$.
%	The output of the algorithm is a collection $\cset$ of instances of the vertex-weighted Rooted Minimum Vertex-Cut problem on directed graphs of total size $\Tilde{O}(m)$, such that all vertex-weights in these instances belong to the set $\{w(v) \mid v \in V(G)\}$ of vertex-weights that are used in $\iset$.
%	Additionally, there exists a deterministic procedure that, given any feasible solution to any instance in $\cset$, transforms this solution to a feasible solution of $\iset$ of the same value in time $\Tilde{O}(m)$.
%	Lastly, with probability at least $\frac{1}{2}$, there exists some instance $\iset' \in \cset$ whose optimal solution value is equal to that of $\iset$.
%\end{theorem}

In this section, we prove \Cref{thm:main-reduction} -- our reduction from the Global Minimum Vertex-Cut problem to the Rooted Minimum Vertex-Cut problem.

Consider an input to \Cref{thm:main-reduction}. That is, the input consists of a parameter $\epsilon \geq 0$, as well as a directed $m$-edge $n$-vertex graph $G$ with positive integer weights $w(v) \leq W$ on its vertices $v \in V(G)$, such that $G$ contains some vertex-cut.
Let $\OPT_{G}$ denote the weight of the global minimum vertex-cut in $G$.
Our goal is to design an algorithm implementing \Cref{thm:main-reduction}.
That is, upon receiving the aforementioned input, as well as access to a $(1+\epsilon)$-approximate Rooted Minimum Vertex-Cut oracle as specified in \Cref{thm:main-reduction}, the algorithm must output a vertex-cut $(L,S,R)$ such that, with probability at least $\frac{1}{2}$, the inequality $w(S) \leq (1+\epsilon)\OPT_{G}$ holds.
Furthermore, the algorithm must make at most $O\left(\frac{m}{n}\right)$ queries to the oracle, and the graphs in these queries must together contain at most $O(m)$ edges. Outside of these queries, the running time of the algorithm must be $\Tilde{O}(m \log W)$.

The algorithm that we design in the remainder of this section, which we refer to as \algglobalvertexcut, will actually only guarantee that the output vertex-cut $(L,S,R)$ satisfies $w(S) \leq (1+\epsilon)\OPT_{G}$ with probability $\Omega(1)$, rather than with probability $\frac{1}{2}$.
To boost the probability of the event $w(S) \leq (1+\epsilon)\OPT_{G}$, we run the algorithm $O(1)$ times with independent randomness, and select the lowest-weight cut among the obtained vertex-cuts.
In the design and analysis of the algorithm, we furthermore make the assumption that $G$ has some global minimum vertex-cut $(L^*,S^*,R^*)$ with $w(R^*) \geq w(L^*)$. This assumption can be removed using standard techniques, and we show how to do this in \Cref{sec:removing-assumption-in-reduction:reduction}.
%To implement \Cref{thm:main-reduction}, we then run \algglobalvertexcut $O(1)$ times, and select the lowest-weight cut among the returned vertex-cuts.

We are now ready to present the algorithm \algglobalvertexcut.
In the majority of this section, we consider the case in which every vertex of $G$ has at least one outgoing edge -- in the complementary case where there exists a vertex $v \in V(G)$ with no outgoing edges, the algorithm simply returns the vertex-cut $(\set{v},\emptyset,V(G)\setminus\set{v})$. Observe that this condition can be checked in time $O(m)$, and if there indeed exists a vertex $v$ with no outgoing edges, then the cut $(\set{v},\emptyset,V(G)\setminus\set{v})$ can be computed in time $O(n)$; furthermore, this cut must be a global minimum vertex-cut in $G$, because its weight is $w(\emptyset)=0$.
So, from now on, we assume that every vertex of $G$ has at least one outgoing edge.
%
%The algorithm begins by checking if there exists some vertex $v \in V(G)$ with no outgoing edges in $G$ -- if yes, then the algorithm returns the vertex-cut $(\set{v},\emptyset,V(G)\setminus\set{v})$, which must be a global minimum vertex-cut because its weight is $w(\emptyset)=0$.
%So, from now on, we consider only the case in which every vertex $v \in V(G)$ has at least one outgoing edge in $G$.
In this case, the algorithm \algglobalvertexcut consists of four steps, described next.
In the following parts of this section, we discuss the implementation of each step.
%When presenting this algorithm, we will make the (unwarranted) assumption that every vertex of $G$ has out-degree at least $1$, as well as that $G$ has some global minimum vertex-cut $(L^*,S^*,R^*)$ with $w(R^*) \geq w(L^*)$, and we will fix one such cut $(L^*,S^*,R^*)$ throughout the proof. We show how to get rid of these assumptions in \Cref{sec:remocing-the-assumptions:reduction}.
%Additionally, for the sake of the proof, we fix one global minimum vertex-cut $(L^*,S^*,R^*)$ with $w(R^*) \geq w(L^*)$.
%We indeed make these assumptions, and, for the sake of the proof, we fix one such vertex-cut $(L^*,S^*,R^*)$.
%The algorithm \algglobalvertexcut consists of four steps, described next.
%In the following parts of this section, we discuss the implementation of each step.

\begin{enumerate}
	\item\label{step:selecting-roots:reduction} select a set $T \subseteq V(G)$ of "roots", such that, with probability at least $\frac{1}{4}$, there exists some vertex of $T$ that also belongs to $R^*$. Furthermore, for every root $y \in T$, there must exist some feasible solution to the Rooted Minimum Vertex-Cut problem on $G$ with root $y$;
	\item\label{step:constructing-sparsified-graphs:reduction} construct, for every root $y \in T$, a subgraph $G_y$ of $G$ with $V(G_y)=V(G)$, such that any feasible solution to the Rooted Minimum Vertex-Cut problem on graph $G_y$ with root $y$ is also a feasible solution to the Rooted Minimum Vertex-Cut problem on $G$ with root $y$, and vice versa;
	\item\label{step:applying-rooted-algorithm:reduction} for each root $y \in T$, query the $(1+\epsilon)$-approximate Rooted Minimum Vertex-Cut oracle on the graph $G_{y}$ with root $y$, and let $(L_y,S_y,R_y)$ denote the vertex-cut returned by this query; lastly,
	\item\label{step:output:reduction} output the minimum-weight vertex-cut among the obtained feasible solutions $\{(L_y,S_y,R_y)\}_{y \in T}$. If the set $T$ is empty, then instead find and output any vertex-cut in $G$.
\end{enumerate}

\noindent It is not hard to see that the output of \algglobalvertexcut satisfies the required guarantees. We formally prove this in the next claim.

\begin{claim}\label{cl:output-is-correct:reduction}
	The output $(L,S,R)$ of the above algorithm is a vertex-cut in $G$, and, with probability at least $\frac{1}{8}$, its weight is $w(S) \leq (1+\epsilon)\OPT_{G}$.
\end{claim}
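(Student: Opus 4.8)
The plan is to separate the claim into its deterministic half---that the output is always a vertex-cut in $G$---and its probabilistic half. The deterministic half is immediate from Step~\ref{step:constructing-sparsified-graphs:reduction}: each triple $(L_y,S_y,R_y)$ produced in Step~\ref{step:applying-rooted-algorithm:reduction} is a feasible solution to the Rooted Minimum Vertex-Cut problem on $G_y$ with root $y$, hence also a feasible solution to that problem on $G$ with root $y$, and in particular a vertex-cut in $G$; and when $T=\emptyset$ the algorithm explicitly outputs a vertex-cut in $G$, which exists by the standing assumption that $G$ has some vertex-cut. So in every case the output $(L,S,R)$ is a vertex-cut in $G$.

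For the probabilistic half, I would first record two facts, valid for \emph{every} $y\in R^*$, that follow from the distinguished cut $(L^*,S^*,R^*)$ together with the guarantee of Step~\ref{step:constructing-sparsified-graphs:reduction}. First, the oracle query on $(G_y,y)$ made in Step~\ref{step:applying-rooted-algorithm:reduction} is legal, i.e., the rooted instance on $G_y$ with root $y$ admits a feasible solution: indeed $(L^*,S^*,R^*)$ is a vertex-cut in $G$ with $y\in R^*$, hence a feasible solution to Rooted Minimum Vertex-Cut on $G$ with root $y$, and since Step~\ref{step:constructing-sparsified-graphs:reduction} guarantees that the rooted instances on $G$ and on $G_y$ (both with root $y$) have exactly the same feasible solutions, $(L^*,S^*,R^*)$ is also a feasible solution on $G_y$ with root $y$. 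Second, $\OPT_{G_y,y}=\OPT_G$: since Step~\ref{step:constructing-sparsified-graphs:reduction} puts the feasible solutions of the two rooted instances in correspondence while leaving each weight $w(S)$ unchanged, we get $\OPT_{G_y,y}=\OPT_{G,y}$; moreover $(L^*,S^*,R^*)$ witnesses $\OPT_{G,y}\le w(S^*)=\OPT_G$, whereas conversely any feasible solution to the rooted instance on $G$ is in particular a vertex-cut in $G$, giving $\OPT_{G,y}\ge\OPT_G$.

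Next I would condition on the event that $T\cap R^*\ne\emptyset$, which holds with probability at least $\frac14$ by Step~\ref{step:selecting-roots:reduction}. On this event, fix the vertex $y\in T\cap R^*$ of smallest index under some fixed ordering of $V(G)$, so that $y$---and therefore $G_y$---is a deterministic function of $T$. Because the random choices of Step~\ref{step:selecting-roots:reduction} are independent of the oracle's internal randomness, the oracle guarantee survives this conditioning: with probability at least $\frac12$ the returned cut satisfies $w(S_y)\le(1+\epsilon)\OPT_{G_y,y}=(1+\epsilon)\OPT_G$. Since the algorithm outputs the minimum-weight cut among $\{(L_{y'},S_{y'},R_{y'})\}_{y'\in T}$, this forces $w(S)\le w(S_y)\le(1+\epsilon)\OPT_G$, and multiplying the two probabilities yields $\Pr[w(S)\le(1+\epsilon)\OPT_G]\ge\frac14\cdot\frac12=\frac18$.

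I do not anticipate a genuine obstacle in this wrap-up claim: the two inequalities above are routine and the deterministic half is a one-line observation. The only point demanding a little care is the conditioning argument---that fixing which roots were sampled does not degrade the oracle's $\frac12$ success probability---which is handled by taking the random choices of Step~\ref{step:selecting-roots:reduction} independent of the oracle and by letting the chosen root $y$ depend on $T$ alone. The substantive work of the reduction lies elsewhere: in realizing Steps~\ref{step:selecting-roots:reduction} and~\ref{step:constructing-sparsified-graphs:reduction} within the stated bounds on the number of oracle queries, on the total number of edges across the queried graphs, and on the running time outside the queries.
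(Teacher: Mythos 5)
Your proof is correct and follows essentially the same route as the paper: deterministic feasibility from Step~\ref{step:constructing-sparsified-graphs:reduction}, then multiply the $\tfrac14$ bound from Step~\ref{step:selecting-roots:reduction} by the oracle's $\tfrac12$ guarantee applied to a root in $T\cap R^*$. The extra care you take in making $\OPT_{G_y,y}\le\OPT_G$ explicit and in pinning down the conditioning (choosing $y$ as a deterministic function of $T$ and invoking independence of Step~\ref{step:selecting-roots:reduction}'s coins from the oracle's internal randomness) is a legitimate tightening of the paper's somewhat informal ``conditioned on the event that such a $y^*$ exists'' phrasing, but does not change the underlying argument.
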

\begin{proof}
	We begin by arguing that $(L,S,R)$ is indeed a vertex-cut in $G$.
	Indeed, by the description of steps \ref{step:constructing-sparsified-graphs:reduction} and \ref{step:applying-rooted-algorithm:reduction}, it can be seen that for every root $y \in T$, the triplet $(L_y,S_y,R_y)$ is a feasible solution for the Rooted Minimum Vertex-Cut problem on graph $G$ with root $y$, which means in particular that it is a vertex-cut in $G$.
	Thus, by the description of Step \ref{step:output:reduction}, the output of the algorithm must be a vertex-cut in $G$.
	Next, we show that $w(S) \leq (1+\epsilon)\OPT_{G}$ holds with probability at least $\frac{1}{8}$.
	By the description of Step \ref{step:output:reduction}, it suffices to show that with probability at least $\frac{1}{8}$, there exists some $y \in T$ for which $w(S_y) \leq (1+\epsilon)\OPT_{G}$.
	Indeed, by the description of Step \ref{step:selecting-roots:reduction}, with probability at least $\frac{1}{4}$, there exists some $y^* \in T$ that also belongs to $R^*$, and, by the description of Step \ref{step:constructing-sparsified-graphs:reduction}, the global minimum vertex-cut $(L^*,S^*,R^*)$ in $G$ is a feasible solution to the Rooted Minimum Vertex-Cut problem on $G_{y^*}$ with root $y^*$. Furthermore, conditioned on the event that such a vertex $y^*$ exists, there is a probability of $\frac{1}{2}$ that the corresponding query to the Rooted Minimum Vertex-Cut oracle outputs a vertex-cut $(L_{y^*},S_{y^*},R_{y^*})$ with $w(S_{y^*}) \leq (1+\epsilon)\cdot w(S^*)$.
	Therefore, with probability at least $\frac{1}{4} \cdot \frac{1}{2}=\frac{1}{8}$, there exists such $y^* \in T$ and the corresponding vertex-cut $(L_{y^*},S_{y^*},R_{y^*})$ has weight $w(S_{y^*}) \leq (1+\epsilon)\cdot w(S^*) = (1+\epsilon)\OPT_{G}$.
%	Thus
%	This concludes the proof of \Cref{cl:output-is-correct:reduction}.
%	
%	
%	Therefore, since each vertex-cut $(L_y,S_y,R_y)$ is generated by a query to the $(1+\epsilon)$-approximate Rooted Minimum Vertex-Cut oracle (see \Cref{def:oracle}) on a Rooted Minimum Vertex-Cut instance with the corresponding graph $G_y$ and root $y$, it suffices to show that with probability at least $\frac{1}{4}$, there exists some $y \in T$ for which this corresponding instance has a solution of value at most $\OPT_{G}$.
%	Indeed, by the description of step \ref{step:selecting-roots:reduction}, with probability at least $\frac{1}{4}$, there exists some $y \in T$ that also belongs to $R^*$, and, by the description of step \ref{step:constructing-sparsified-graphs:reduction}, the global minimum vertex-cut $(L^*,S^*,R^*)$ in $G$ is a feasible solution to the Rooted Minimum Vertex-Cut problem on $G_y$ with root $y$.
	This concludes the proof of \Cref{cl:output-is-correct:reduction}.
\end{proof}

To complete the proof of \Cref{thm:main-reduction}, it now remains to describe how to implement each step of \algglobalvertexcut in time $\Tilde{O}(m \log W)$, while also guaranteeing
% that the algorithm makes at most $O\left(\frac{m}{n}\right)$ queries to the oracle, and the graphs in these queries have at most $O(m)$ edges overall.
 that the total number of edges in the graphs $\set{G_y}_{y \in T}$ is $\sum_{y \in T}|E(G_y)| = O(m)$, and that there are at most $O\left(\frac{m}{n}\right)$ such graphs.
%Observe that the algorithm makes exactly one query to the oracle for each graph $\set{G_y}_{y \in T}$.
%Thus, to guarantee the required conditions about the queries made by the algorithm, it suffices to guarantee that $\sum_{y \in T}|E(G_y)| = O(m)$ and that each graph $G_y$ has $\Omega(n)$ edges -- indeed, these conditions together imply that there are at most $O\left(\frac{m}{n}\right)$ such graphs, and the former condition implies the required bound on the total number of edges in the queries.
In order to do so, we first need to impose some additional constraints on the set $T$ of roots selected in Step \ref{step:selecting-roots:reduction}, and on the graphs $\set{G_y}_{y \in T}$ generated in Step \ref{step:constructing-sparsified-graphs:reduction}.
For this purpose, it is convenient to use the following notation of "far-away sets".

\begin{definition}[Far-Away Sets]\label{def:far-away-sets}
	For every vertex $v \in V(G)$, we define the "forward far-away set of $v$", denoted $F^+(v)$, to be the set $F^+(v) = V(G) \setminus (\set{v} \cup N^+_G(v))$ of all vertices that cannot be reached from $v$ by traversing at most $1$ edge.
	Similarly, define the "backward far-away set of $v$", denoted $F^-(v)$, to be the set $F^-(v) = V(G) \setminus (\set{v} \cup N^-_G(v))$ of all vertices that cannot reach $v$ by traversing at most $1$ edge.
\end{definition}

We are now ready to define the additional constraints that we will have on the output of steps \ref{step:selecting-roots:reduction} and \ref{step:constructing-sparsified-graphs:reduction} of \algglobalvertexcut.
Specifically, in our implementation of the algorithm, we will guarantee that the set $T$ of roots selected in Step \ref{step:selecting-roots:reduction} satisfies $\sum_{y \in T} \vol^+(F^-(y)) = O(m)$; as well as that each graph $G_y$ generated in Step \ref{step:constructing-sparsified-graphs:reduction} satisfies $|E(G_y)| = \vol^+(F^-(y)) + \deg^-_G(y)$.
The next two claims state that steps \ref{step:selecting-roots:reduction} and \ref{step:constructing-sparsified-graphs:reduction} can be implemented in the correct running time, while satisfying these additional constraints.

\begin{claim}\label{cl:implementing-step-selecting-roots:reduction}
	Step \ref{step:selecting-roots:reduction} of \algglobalvertexcut can be implemented in time $\Tilde{O}(m \log W)$, while also guaranteeing that the selected set $T$ satisfies $\sum_{y \in T} \vol^+(F^-(y)) = O(m)$.
\end{claim}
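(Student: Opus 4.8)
\emph{Proof plan.} The plan is to produce $T$ by one round of independent weighted sampling, picking each vertex $v$ with probability roughly $w(v)/\Phi$ for a scale parameter $\Phi$ that I will choose so as to control \emph{both} the hitting probability and the quantity $\sum_{y\in T}\vol^+_G(F^-(y))$. Recall from \Cref{def:far-away-sets} that $F^+(v)=V(G)\setminus(\{v\}\cup N^+_G(v))$ and $F^-(v)=V(G)\setminus(\{v\}\cup N^-_G(v))$, and note that these sets are \emph{dual}: $v\in F^-(y)$ exactly when $y\in F^+(v)$ (both say $v\ne y$ and $(v,y)\notin E(G)$). First I would compute, for each $v$, the value $w(F^+(v))=w(V(G))-w(v)-w(N^+_G(v))$; since summing edge-weights over all out-neighbourhoods costs $O\!\left(\sum_v\deg^+_G(v)\right)=O(m)$, all these values (and $w(V(G))$) are available in $O(m)$ time. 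I then set $\Phi:=\max_v w(F^+(v))$ (note $\Phi\ge 1$, since $G$, lying in the domain of Global Minimum Vertex-Cut, is not complete, so some $v$ has $F^+(v)\ne\emptyset$), compute $V':=\{v:\deg^-_G(v)<n-1\}=\{v:F^-(v)\ne\emptyset\}$ in $O(n)$ time, and form $T$ by including each $v\in V'$ independently with probability $p(v):=\min\{1,w(v)/\Phi\}$. The running time is $O(m)$, well within the required $\Tilde{O}(m\log W)$. Feasibility of Step~\ref{step:selecting-roots:reduction} is then immediate: for every $y\in T\subseteq V'$ the triple $(F^-(y),N^-_G(y),\{y\})$ is a vertex-cut of $G$ with $y\in R$, since no edge runs from $F^-(y)$ into $y$.

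The crux is a two-sided estimate of $\Phi$. One direction is trivial: $w(F^+(v))\le\Phi$ for every $v$, by definition of $\Phi$. The other direction is where I would use the structure of the problem: for a vertex $v$ attaining the maximum (so $F^+(v)\ne\emptyset$), the triple $(\{v\},N^+_G(v),F^+(v))$ is itself a vertex-cut, so $\OPT_G\le w(N^+_G(v))$, whence $\Phi=w(V(G))-w(v)-w(N^+_G(v))\le w(V(G))-\OPT_G$; combined with the standing assumption $w(R^*)\ge w(L^*)$ (so $w(R^*)\ge\tfrac12(w(V(G))-\OPT_G)$), this gives $\Phi\le 2\,w(R^*)$. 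Granting this, the hitting bound is short: $R^*\subseteq V'$ (for $y\in R^*$ the nonempty set $L^*$ lies in $F^-(y)$, as $G$ has no edge from $L^*$ into $y$), so each $y\in R^*$ lands in $T$ independently with probability $\min\{1,w(y)/\Phi\}$, and hence $\Pr[T\cap R^*=\emptyset]$ is $0$ unless every $y\in R^*$ has $w(y)<\Phi$, in which case it is at most $\prod_{y\in R^*}e^{-w(y)/\Phi}=e^{-w(R^*)/\Phi}\le e^{-1/2}$; either way $\Pr[T\cap R^*\ne\emptyset]\ge 1-e^{-1/2}>\tfrac13$.

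For the volume bound I would swap the order of summation using the $F^+$/$F^-$ duality, obtaining
\[
\expect{\sum_{y\in T}\vol^+_G(F^-(y))}=\sum_{v}\deg^+_G(v)\sum_{y\in F^+(v)\cap V'}p(y)\;\le\;\sum_{v}\deg^+_G(v)\cdot\frac{w(F^+(v))}{\Phi}\;\le\;\sum_{v}\deg^+_G(v)=m,
\]
where the first inequality uses $p(y)\le w(y)/\Phi$ and the last uses the trivial side $w(F^+(v))\le\Phi$. To turn this expectation bound into one that the selected $T$ satisfies with certainty, I would compute $\sum_{y\in T}\vol^+_G(F^-(y))$ — using $\vol^+_G(F^-(y))=m-\deg^+_G(y)-\vol^+_G(N^-_G(y))$, which costs $O(\deg^-_G(y))$ per terminal and hence $O(m)$ overall — and reset $T:=\emptyset$ if it ever exceeds $10m$; by Markov's inequality this resetting happens with probability $<\tfrac1{10}$, so it still leaves $\Pr[T\cap R^*\ne\emptyset]>\tfrac14$, exactly as Step~\ref{step:selecting-roots:reduction} demands.

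The main obstacle, and the place where real care is needed, is the choice of $\Phi$: the naive estimate $\sum_{y\in F^+(v)}p(y)\le w(V(G))/\Phi$ is worthless when $\OPT_G$ is close to $w(V(G))$, and $\OPT_G$ is unknown in advance. The observation that makes everything go through in a single shot — with no guessing of $\OPT_G$, and hence no logarithmic loss in the success probability — is that $\Phi=\max_v w(F^+(v))$ is simultaneously an upper bound for every individual $w(F^+(v))$ (needed for the volume bound) and, via the elementary cut $(\{v\},N^+_G(v),F^+(v))$, at most $w(V(G))-\OPT_G\le 2\,w(R^*)$ (needed for the hitting probability).
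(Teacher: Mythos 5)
Your proof is correct and follows essentially the same strategy as the paper's: define $\Phi = \max_v w(F^+(v))$ (the paper calls it $w^*$), sample each vertex into $T$ with probability proportional to $w(v)/\Phi$, use the $F^+$/$F^-$ duality to bound the expected total volume by $O(m)$, use the single-vertex cut $(\{v\},N^+_G(v),F^+(v))$ together with the assumption $w(R^*)\geq w(L^*)$ to show $\Phi\leq 2w(R^*)$ and hence a constant hitting probability, and then apply a Markov-inequality trimming step to convert the expectation bound into a worst-case bound. The only differences are cosmetic constant factors (you sample with probability $\min\{1,w(y)/\Phi\}$ where the paper uses $\min\{1,2w(y)/w^*\}$; correspondingly your hitting probability is $1-e^{-1/2}\approx 0.39$ vs.\ the paper's $1-e^{-1}$, and you trim at $10m$ vs.\ $8m$); both choices clear the $1/4$ threshold demanded by Step~\ref{step:selecting-roots:reduction}.
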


\begin{claim}\label{cl:implementing-step-constructing-sparsified-graphs:reduction}
	Step \ref{step:constructing-sparsified-graphs:reduction} of \algglobalvertexcut can be implemented so that the constructed graph $G_y$ for each $y \in T$ satisfies $|E(G_y)| = \vol^+(F^-(y)) + \deg^-_G(y)$, and the time required for constructing each graph $G_y$ is $\Tilde{O}(|E(G_y)| + n)$.
\end{claim}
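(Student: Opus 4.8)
The plan is to take $G_y$ to be the spanning subgraph of $G$ obtained by deleting every edge that leaves a vertex of $\set{y}\cup N^-_G(y)$ but does not enter $y$. Equivalently, $E(G_y)$ is the union of the edges outgoing from vertices of the far-away set $F^-(y)$ --- which by \Cref{def:far-away-sets} equals $V(G)\setminus(\set{y}\cup N^-_G(y))$, the set of vertices that cannot reach $y$ in at most one edge --- together with all edges entering $y$. Since $F^-(y)$ and $N^-_G(y)$ are disjoint, and since every edge into $y$ originates in $N^-_G(y)$ (so these edges are outgoing from $N^-_G(y)$ and not from $F^-(y)$), the two edge sets are disjoint; hence this construction automatically yields $|E(G_y)| = \vol^+_G(F^-(y)) + \deg^-_G(y)$, as required. (As a consistency check, the deleted edges are exactly the outgoing edges of $\set{y}\cup N^-_G(y)$ other than those ending at $y$, of which there are $\sum_{v\in\set{y}\cup N^-_G(y)}\deg^+_G(v) - \deg^-_G(y)$, and subtracting this from $|E(G)| = \sum_{v\in V(G)}\deg^+_G(v)$ gives the same count.)

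Next I would verify the running time. Reading the in-adjacency list of $y$ produces $N^-_G(y)$ and lets us mark the vertices of $\set{y}\cup N^-_G(y)$ in $O(\deg^-_G(y))$ time; a single pass over $V(G)$ then copies the outgoing edges of every unmarked vertex (i.e.\ every vertex of $F^-(y)$) in $O(n + \vol^+_G(F^-(y)))$ time; finally we append the $\deg^-_G(y)$ edges entering $y$. Initializing the adjacency-list, adjacency-matrix, and degree data structures for the resulting $n$-vertex graph costs a further $\Tilde{O}(n + |E(G_y)|)$, so the construction of each $G_y$ runs in $\Tilde{O}(|E(G_y)| + n)$ time. (Over all $y\in T$ the extra $+n$ terms contribute only $|T|\cdot n = O(m)$, which is why this bound is enough for \Cref{thm:main-reduction}.)

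The substantive step, and the one I expect to be the main obstacle, is showing that $G_y$ has exactly the same feasible Rooted Minimum Vertex-Cut solutions with root $y$ as $G$ does. Since any triple $(L,S,R)$ with $y\in R$ is a partition of the common vertex set $V(G)=V(G_y)$, it suffices to show $E_G(L,R)=\emptyset \iff E_{G_y}(L,R)=\emptyset$. One direction is immediate because $G_y$ is a subgraph of $G$. For the converse, I would assume $E_{G_y}(L,R)=\emptyset$ and take, towards a contradiction, some $(u,w)\in E_G(L,R)$; then $(u,w)\notin E(G_y)$, so $(u,w)$ is a deleted edge, forcing $u\in\set{y}\cup N^-_G(y)$ and $w\neq y$; but $u\in L$ and $y\in R$ give $u\neq y$, hence $u\in N^-_G(y)$, i.e.\ $(u,y)\in E(G)$ --- and this edge enters $y$, so it was kept, yielding $(u,y)\in E_{G_y}(L,R)$, a contradiction. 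The delicate point that the construction is designed around is that the edges into $y$ must be retained even though their sources lie in $N^-_G(y)$: deleting them would create partitions feasible in $G_y$ but not in $G$ (for instance, isolating an in-neighbor of $y$ on the $L$-side), whereas keeping them faithfully encodes the constraint that no in-neighbor of $y$ can lie in $L$, which is precisely what makes it safe to strip all other outgoing edges of $\set{y}\cup N^-_G(y)$. This equivalence also shows, in combination with the guarantee from Step~\ref{step:selecting-roots:reduction} that a feasible solution on $G$ exists for each $y\in T$, that the oracle query in Step~\ref{step:applying-rooted-algorithm:reduction} is well-posed.
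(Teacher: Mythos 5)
Your construction is exactly the paper's: $E(G_y)=\delta^-_G(y)\cup E_G(F^-(y),V(G))$, with the same disjointness observation for the edge count and the same running-time argument. For the equivalence of feasible solutions, the paper argues directly that any feasible $(L,S,R)$ in $G_y$ has $L\subseteq F^-(y)$ (since $L$ avoids $N^-_{G_y}(y)$ and $y$), hence $N^+_{G_y}(L)=N^+_G(L)$; your contradiction argument (a deleted edge $(u,w)\in E_G(L,R)$ forces $u\in N^-_G(y)$, so the retained edge $(u,y)$ would cross the cut) is the same insight phrased contrapositively. Correct, same approach.
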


We defer the proof of \Cref{cl:implementing-step-selecting-roots:reduction} to \Cref{sec:implementing-step-selecting-roots:reduction}, and the proof of \Cref{cl:implementing-step-constructing-sparsified-graphs:reduction} to \Cref{sec:implementing-step-constructing-sparsified-graphs:reduction}.
We now complete the proof of \Cref{thm:main-reduction} using the above two claims.

We now specify our exact implementation of \algglobalvertexcut: we implement steps \ref{step:selecting-roots:reduction} and \ref{step:constructing-sparsified-graphs:reduction} using \Cref{cl:implementing-step-selecting-roots:reduction} and \Cref{cl:implementing-step-constructing-sparsified-graphs:reduction}; steps \ref{step:applying-rooted-algorithm:reduction} and \ref{step:output:reduction} can then be implemented in a straightforward manner in time $\Tilde{O}(|T| \cdot n)$, and we indeed implement them as such.

To complete the proof of \Cref{thm:main-reduction}, it now remains to prove that this implementation of \algglobalvertexcut guarantees that the graphs $\set{G_y}_{y \in T}$ on which we query the oracle have at most $O(m)$ edges overall; and that there are at most $O\left(\frac{m}{n}\right)$ such graphs; as well as that the whole implementation runs in time $\Tilde{O}(m \log W)$.
The following claim shows that the total number of edges in these graphs is indeed $O(m)$.

\begin{claim}\label{cl:total-size-of-sparsified-graphs:reduction}
	After implementing steps \ref{step:selecting-roots:reduction} and \ref{step:constructing-sparsified-graphs:reduction} as above, the total number of edges $\sum_{y \in T} |E(G_y)|$ in the sparsified graphs $\set{G_y}_{y \in T}$ is $O(m)$.
\end{claim}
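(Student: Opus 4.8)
The plan is to combine the two structural claims stated just above this one --- \Cref{cl:implementing-step-selecting-roots:reduction} and \Cref{cl:implementing-step-constructing-sparsified-graphs:reduction} --- with the elementary observation that the in-degrees of the vertices of $G$ sum to exactly $m$. Concretely, I would invoke \Cref{cl:implementing-step-constructing-sparsified-graphs:reduction}, which guarantees that our implementation of Step \ref{step:constructing-sparsified-graphs:reduction} produces, for each root $y \in T$, a graph $G_y$ with precisely $|E(G_y)| = \vol^+(F^-(y)) + \deg^-_G(y)$ edges. Summing this identity over all $y \in T$ gives
\[
\sum_{y \in T} |E(G_y)| = \sum_{y \in T} \vol^+(F^-(y)) + \sum_{y \in T} \deg^-_G(y).
\]

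Next, I would bound each of the two resulting sums separately by $O(m)$. The first sum is $O(m)$ directly by the guarantee of \Cref{cl:implementing-step-selecting-roots:reduction}, since our implementation of Step \ref{step:selecting-roots:reduction} selects a set $T$ with $\sum_{y \in T} \vol^+(F^-(y)) = O(m)$. For the second sum, I would use the fact that $T$ is a \emph{set} of distinct vertices of $G$ (we never query the oracle on the same root twice), so that $\sum_{y \in T} \deg^-_G(y) \leq \sum_{v \in V(G)} \deg^-_G(v)$; and the right-hand side equals $|E(G)| = m$, because every edge of $G$ contributes exactly once to the in-degree of its head vertex. Adding the two bounds yields $\sum_{y \in T} |E(G_y)| = O(m)$, which is exactly the claim.

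I do not anticipate any genuine obstacle in this step: the argument is essentially a one-line consequence of the two already-established claims together with the directed handshake identity $\sum_{v \in V(G)} \deg^-_G(v) = m$. The only point that warrants a moment of care is checking that $T$ is indeed a set rather than a multiset, so that the crude bound $\sum_{y \in T} \deg^-_G(y) \le m$ is legitimate; this is immediate from the description of Step \ref{step:selecting-roots:reduction}.
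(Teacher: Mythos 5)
Your proof is correct and takes essentially the same approach as the paper: both apply the identity $|E(G_y)| = \vol^+(F^-(y)) + \deg^-_G(y)$ from \Cref{cl:implementing-step-constructing-sparsified-graphs:reduction}, bound $\sum_{y\in T}\deg^-_G(y)$ by $\sum_{v\in V(G)}\deg^-_G(v)=m$, and invoke \Cref{cl:implementing-step-selecting-roots:reduction} for the other sum. The only cosmetic difference is that you explicitly flag that $T$ is a set of distinct vertices, which the paper leaves implicit.
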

\begin{proof}
	Indeed, by the guarantees of \cref{cl:implementing-step-constructing-sparsified-graphs:reduction}, the number of edges in each sparsified graph $G_y$ is $\deg^-_G(y) + \vol^+(F^-(y))$.
	Thus, the total number of edges in all sparsified graphs is
	$$\sum_{y \in T} \left(\deg^-_G(y) + \vol^+_G(F^-(y)\right) \leq \left(\sum_{y \in V(G)} \deg^-_G(y)\right) + \left(\sum_{y \in T} \vol^+_G(F^-(y)\right) = m + \sum_{y \in T} \vol^+_G(F^-(y)).$$
	Since our implementation of Step \ref{step:selecting-roots:reduction} guarantees that $\sum_{y \in T} \vol^+(F^-(y)) = O(m)$, it then follows that $\sum_{y \in T} |E(G_y)| = O(m)$.
%	Now, to show that $\sum_{y \in T} (|V(G)| + |E(G_y)|) = O(m)$, it suffices to show that $|E(G_y)| \geq |V(G)|$ holds for every $y \in T$: indeed recall that we are assuming that every vertex of $G$ has at least one outgoing edge (see \rnote{!}); therefore, for every root $y \in T$ and for every vertex $u \in V(G)$, our construction of $E(G_y)$ includes at least one edge of $\delta^+_G(u)$.
\end{proof}

Next, in the following claim, we prove that there are at most $O\left(\frac{m}{n}\right)$ graphs in the set $\set{G_y}_{y \in T}$.
\begin{claim}\label{cl:total-number-of-sparsified-graphs-is-small:reduction}
	The size of the set $|T|$ is $|T| = O\left(\frac{m}{n}\right)$.
\end{claim}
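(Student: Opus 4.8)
The plan is to derive the bound on $|T|$ from the volume guarantee of \Cref{cl:implementing-step-selecting-roots:reduction} together with a lower bound on $\vol^+_G(F^-(y))$ that holds for \emph{every} root $y\in T$ individually. Concretely, I will show that $\vol^+_G(F^-(y)) = \Omega(n)$ for each $y \in T$; granting this, the claim follows at once, since summing over all roots and invoking \Cref{cl:implementing-step-selecting-roots:reduction} yields
\[
\Omega(n)\cdot|T| \;\le\; \sum_{y\in T}\vol^+_G(F^-(y)) \;=\; O(m),
\]
which rearranges to $|T| = O(m/n)$, as claimed.

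To obtain the per-root bound $\vol^+_G(F^-(y)) = \Omega(n)$, I will use two facts. First, recall that in this portion of the section we assume every vertex of $G$ has out-degree at least $1$; since moreover $G$ has no parallel edges, this gives
\[
\vol^+_G(F^-(y)) \;=\; \sum_{v\in F^-(y)}\deg^+_G(v) \;\ge\; |F^-(y)| \;\ge\; n-1-\deg^-_G(y)
\]
for every vertex $y$. So it suffices to know that every root $y\in T$ has in-degree bounded away from $n$, say $\deg^-_G(y)\le n/2$, and I will guarantee this as part of the implementation of Step \ref{step:selecting-roots:reduction}: the sampling procedure simply refrains from ever placing a vertex $y$ with $\deg^-_G(y)>n/2$ into $T$ (the in-degrees are available in $O(1)$ time each, so this filtering does not change the running time bound in \Cref{cl:implementing-step-selecting-roots:reduction}).

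The delicate point -- and the main obstacle -- is arguing that discarding high-in-degree vertices does not break correctness, i.e., that with the claimed probability $T$ still contains a vertex of $R^*$ for some distinguished global minimum vertex-cut $(L^*,S^*,R^*)$ with $w(R^*)\ge w(L^*)$. I expect this to follow by first showing that one may take the distinguished cut so that $R^*$ itself contains a vertex of in-degree at most $n/2$: since $(L^*,S^*,R^*)$ is a vertex-cut, no vertex of $R^*$ receives an edge from $L^*$, so $\deg^-_G(v)\le |S^*|+|R^*|-1$ for all $v\in R^*$; combining this with the assumption $w(R^*)\ge w(L^*)$ (hence $w(R^*)\ge (w(V(G))-\OPT_G)/2$) and a case analysis on the sizes of $L^*,S^*,R^*$ should produce such a cut in the relevant regime of $\OPT_G$, with the remaining regime handled separately. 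Once this is in place, the sampling analysis of Step \ref{step:selecting-roots:reduction} goes through verbatim with the filtered set $T$, and the bound $|T|=O(m/n)$ follows as above. I would carry out this full verification in \Cref{sec:implementing-step-selecting-roots:reduction}, alongside the rest of the implementation of Step \ref{step:selecting-roots:reduction}.
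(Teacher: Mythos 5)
There is a genuine gap, and it stems from starting from the wrong quantity. You try to bound $|T|$ directly via $\sum_{y\in T}\vol^+_G(F^-(y)) = O(m)$ from \Cref{cl:implementing-step-selecting-roots:reduction}, which forces you to prove $\vol^+_G(F^-(y)) = \Omega(n)$ for every root $y$. That per-root bound is false in general: your own estimate $\vol^+_G(F^-(y)) \geq |F^-(y)| = n-1-\deg^-_G(y)$ gives nothing when $\deg^-_G(y)$ is close to $n$, and such roots can genuinely be forced into $T$. Concretely, take $G$ to be the complete directed graph on $n$ unit-weight vertices minus the two edges $(v_0,v_1)$ and $(v_1,v_0)$; then the only vertex-cuts are $(\{v_0\},V\setminus\{v_0,v_1\},\{v_1\})$ and its reverse, so $R^*$ is a singleton of in-degree $n-2$, and there is no alternative choice of distinguished cut whose $R^*$-side contains a low-in-degree vertex. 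Your proposed fix of filtering out high-in-degree vertices from $T$ therefore breaks correctness: in that example the sampling step of \Cref{sec:implementing-step-selecting-roots:reduction} selects $v_1$ with probability $1$ (since $w^*=1$ and $w(v_1)=1$), and this is exactly the vertex your filter discards, leaving no chance of hitting $R^*$. The case analysis you gesture at --- ``one may take the distinguished cut so that $R^*$ contains a vertex of in-degree at most $n/2$'' --- cannot be carried out, because the inequality $\deg^-_G(v)\le|S^*|+|R^*|-1=n-|L^*|-1$ is useless when $|L^*|$ is small, and there is no freedom in the choice of cut to avoid this.

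The paper's argument avoids the difficulty by routing through \Cref{cl:total-size-of-sparsified-graphs:reduction} (the bound $\sum_{y\in T}|E(G_y)|=O(m)$) rather than \Cref{cl:implementing-step-selecting-roots:reduction}, and the point is that $|E(G_y)|$ is not $\vol^+_G(F^-(y))$ but $\vol^+_G(F^-(y)) + \deg^-_G(y)$ (\Cref{cl:implementing-step-constructing-sparsified-graphs:reduction}). The extra $\deg^-_G(y)$ term is exactly what your approach is missing: it cancels the $-\deg^-_G(y)$ in the lower bound on $|F^-(y)|$, giving
\[
|E(G_y)| \;=\; \vol^+_G(F^-(y)) + \deg^-_G(y) \;\ge\; \bigl(n-1-\deg^-_G(y)\bigr)+\deg^-_G(y) \;=\; n-1 \;=\; \Omega(n),
\]
for \emph{every} $y$, with no algorithmic modification and no constraint on $\deg^-_G(y)$. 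I'd encourage you to redo the proof along these lines; once you include the $\deg^-_G(y)$ contribution to $|E(G_y)|$, the filtering and the delicate correctness argument you were planning both become unnecessary.
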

\begin{proof}
	Since \Cref{cl:total-size-of-sparsified-graphs:reduction} guarantees that $\sum_{y \in T} |E(G_y)| = O(m)$, it suffices if we prove that each graph $G_y$ contains $\Omega(n)$ edges; we indeed prove this next:
	consider some $y \in T$; by \Cref{cl:implementing-step-constructing-sparsified-graphs:reduction}, the number of edges in the graph $G_y$ is exactly $\vol^+(F^-(y)) + \deg^-_G(y)$. It remains to prove that $\vol^+(F^-(y)) + \deg^-_G(y)=\Omega(n)$. Recall that we are assuming each vertex of $G$ has at least one outgoing edge, implying that $\vol^+(F^-(y)) \geq |F^-(y)|$. Furthermore, by the definition of $F^-(y)$, the size of this set is $|F^-(y)| = n - 1- \deg^-_G(y)$, as it contains exactly those vertices of $G$ that are not in the set $N^-_G(y)\cup\set{y}$.
	To summarize, $\vol^+(F^-(y)) + \deg^-_G(y) \geq |F^-(y)| + \deg^-_G(y) \geq n-1 = \Omega(n)$, as we needed.
	This concludes the proof of \Cref{cl:total-number-of-sparsified-graphs-is-small:reduction}.
\end{proof}

Lastly, we must show that the running time of \algglobalvertexcut is $\Tilde{O}(m \log W)$. Our implementation of Step \ref{step:selecting-roots:reduction} has the correct running time based on the statement of \Cref{cl:implementing-step-selecting-roots:reduction}. By the statement of \Cref{cl:implementing-step-constructing-sparsified-graphs:reduction}, and by our implementation of steps \ref{step:applying-rooted-algorithm:reduction} and \ref{step:output:reduction}, the total time required for steps \ref{step:constructing-sparsified-graphs:reduction}-\ref{step:output:reduction} is $\Tilde{O}\left(|T| \cdot n + \sum_{y \in T} |E(G_y)|\right)$ -- based on \Cref{cl:total-size-of-sparsified-graphs:reduction} and \Cref{cl:total-number-of-sparsified-graphs-is-small:reduction}, this running time is $\Tilde{O}(m)$.
This concludes the analysis of the running time.
%Recall that, to complete the proof of the theorem, it remains to explain how we implement the four steps of \algglobalvertexcut, as well as to prove that our implementation runs in time $\Tilde{O}(m \log W)$, and makes at most $O\left(\frac{m}{n}\right)$ oracle queries, such that the graphs in these queries contain at most $O(m)$ edges in total.
%
%To implement steps \ref{step:selecting-roots:reduction} and \ref{step:constructing-sparsified-graphs:reduction} of \algglobalvertexcut, we use \Cref{cl:implementing-step-selecting-roots:reduction} and \Cref{cl:implementing-step-constructing-sparsified-graphs:reduction}.
%\rnote{Continue from here}
%Recall that, to complete the proof, it remains to show that \algglobalvertexcut can be implemented in time $\Tilde{O}(m \cdot \log W)$.
%To do this, we implement Step \ref{step:selecting-roots:reduction} using \Cref{cl:implementing-step-selecting-roots:reduction}, which guarantees that the obtained set $T$ satisfies the constraint $\sum_{y \in T} \vol^+(F^-(y)) = O(m)$, and that the running time of this step is sufficiently small.
%We then implement Step \ref{step:constructing-sparsified-graphs:reduction} as follows.
Now, to complete the proof of \Cref{thm:main-reduction}, it remains only to prove \Cref{cl:implementing-step-constructing-sparsified-graphs:reduction} and \Cref{cl:implementing-step-selecting-roots:reduction}. We do this in the next two subsections. (We choose to prove \Cref{cl:implementing-step-constructing-sparsified-graphs:reduction} first, as its proof is shorter.)

\subsection{Implementing Step \ref{step:constructing-sparsified-graphs:reduction}: Proof of \Cref{cl:implementing-step-constructing-sparsified-graphs:reduction}} \label{sec:implementing-step-constructing-sparsified-graphs:reduction}

To implement Step \ref{step:constructing-sparsified-graphs:reduction} of \algglobalvertexcut as specified by \Cref{cl:implementing-step-constructing-sparsified-graphs:reduction}, we construct each graph $G_y$ so that it is equal to the graph obtained from $G$ by removing all edges outgoing from vertices of $N^-_G(y) \cup \set{y}$, except those edges ending at $y$.
In other words, $V(G_y)=V(G)$ and $E(G_y)=\delta^-_G(y) \cup E_G(F^-(y),V(G))$.
Then, it is not hard to verify that $|E(G_y)| = \vol^+_G(F^-(y)) + \deg^-_G(y)$ holds, as required by \Cref{cl:implementing-step-constructing-sparsified-graphs:reduction}. (Indeed, recall that the sets of edges $\delta^-_G(y)$ and $E_G(F^-(y),V(G))$ are disjoint, based on the definition of $F^-(y)$.)
%: indeed, the set $\delta^-_G(y)$ contains exactly $\deg^-_G(y)$ edges, and the set $E_G(F^-(y),V(G))$ contains exactly $\vol^+_G(F^-(y))$ edges; it is also easy to verify that these two sets are disjoint, based on the definition of $F^-(y)$ (\Cref{def:far-away-sets}).
Furthermore, it is also not hard to construct each graph $G_y$ in time $\Tilde{O}(|E(G_y)| + n)$, as required in \Cref{cl:implementing-step-constructing-sparsified-graphs:reduction}: the set $F^-(y)$ can be computed in time $O(n)$, and then the set of edges in $G_y$ can be enumerated by going over the lists of outgoing edges of all vertices in $F^-(y)$, and the list of incoming edges of $y$.

Next, to complete the proof of \Cref{cl:implementing-step-constructing-sparsified-graphs:reduction}, we show that constructing the graphs $G_y$ in this manner indeed satisfies the requirements of Step \ref{step:constructing-sparsified-graphs:reduction}.
Specifically, we must show that any feasible solution to the Rooted Minimum Vertex-Cut problem on graph $G_y$ with root $y$ is also a feasible solution to the Rooted Minimum Vertex-Cut problem on graph $G$ with root $y$, and vice versa.
We prove the first direction in the next claim.
\begin{claim}
	any feasible solution to the Rooted Minimum Vertex-Cut problem on $G_y$ with root $y$ must also be a feasible solution to the Rooted Minimum Vertex-Cut problem on $G$ with root $y$.
\end{claim}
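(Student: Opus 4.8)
The plan is to verify the three defining conditions of a feasible solution to Rooted Minimum Vertex-Cut on $G$ with root $y$, given that they hold on $G_y$. Since $V(G_y)=V(G)$, a vertex-cut $(L,S,R)$ in $G_y$ is already a partition of $V(G)$ with $L,R\neq\emptyset$, and the condition $y\in R$ transfers verbatim; so the only non-trivial point is that $G$ contains no edge from $L$ to $R$. The entire proof therefore reduces to establishing this last fact, which I would do by contradiction.

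Suppose some edge $(u,v)\in E(G)$ has $u\in L$ and $v\in R$. Since $(L,S,R)$ is a vertex-cut in $G_y$, there is no edge from $L$ to $R$ in $G_y$, so $(u,v)$ must be among the edges deleted when forming $G_y$ from $G$. By the construction of $G_y$ (only edges outgoing from vertices of $N^-_G(y)\cup\set{y}$ that do \emph{not} end at $y$ are deleted), this forces $u\in N^-_G(y)\cup\set{y}$ and $v\neq y$. Now split into two cases on $u$: if $u=y$, this contradicts $u\in L$ and $y\in R$ (as $L\cap R=\emptyset$); if $u\in N^-_G(y)$, then $(u,y)\in E(G)$, and since every edge ending at $y$ is retained in $G_y$ (i.e., $\delta^-_G(y)\subseteq E(G_y)$), we have $(u,y)\in E(G_y)$, giving an edge of $G_y$ from $u\in L$ to $y\in R$ — contradicting that $(L,S,R)$ is a vertex-cut in $G_y$. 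In either case we reach a contradiction, so no such edge $(u,v)$ exists, and $(L,S,R)$ is a vertex-cut in $G$ with $y\in R$.

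There is no real obstacle here; the one thing to be careful about is bookkeeping of which edges survive in $G_y$. The key structural fact being exploited is that $G_y$ keeps \emph{all} incoming edges of $y$ (that is, $E(G_y)=\delta^-_G(y)\cup E_G(F^-(y),V(G))$), which is precisely what makes the case $u\in N^-_G(y)$ go through: the deleted edge $(u,v)$ certifies the existence of a \emph{retained} edge $(u,y)$ that already violates the cut in $G_y$.
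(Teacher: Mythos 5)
Your proof is correct and, while phrased as a proof by contradiction on a single offending edge rather than the paper's direct argument that $L \subseteq F^-(y)$ (hence $N^+_{G_y}(L) = N^+_G(L)$), it rests on the identical key observation: that $G_y$ retains every incoming edge of $y$, so $L$ cannot meet $N^-_G(y) \cup \{y\}$, which is precisely what forces the deleted edge's tail to already violate the cut in $G_y$. Essentially the same approach as the paper.
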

\begin{proof}
	Consider any feasible solution $(L,S,R)$ to the Rooted Minimum Vertex-Cut problem on $G_y$ with root $y$. We need to show that $(L,S,R)$ is also a feasible solution to the Rooted Minimum Vertex-Cut problem on $G$ with root $y$. Specifically, as $y \in R$ clearly holds, it remains to show that $(L,S,R)$ is valid vertex-cut in the graph $G$; that is, it remains to show that $N^+_G(L) \subseteq S$.
	Since $(L,S,R)$ is a valid vertex-cut in $G_y$, the graph $G_y$ must satisfy $N^+_{G_y}(L) \subseteq S$, so it suffices to prove that $N^+_{G_y}(L) = N^+_G(L)$, which is what we will do.
	Indeed, recall that for every vertex $v \in F^-(y)$, the set of edges outgoing from $v$ in $G_y$ is exactly the same as the set of edges outgoing from $v$ in $G$.
	Therefore, it suffices to show that $L \subseteq F^-(y)$: by the construction of $G_y$, every vertex outside of $F^-(y)$ belongs to $N^-_{G_y}(y)$, so it suffices to show that $L \cap N^-_{G_y}(y) = \emptyset$; indeed, this follows directly from the fact that $(L,S,R)$ is a feasible solution to the Rooted Minimum Vertex-Cut problem on $G_y$ with root $y$.
%	
%	Then, $y \in R$ must hold, and so $N^-_{G_y}(y) \subseteq S \cup R$; since every vertex outside of $F^-(y)$ belongs to $N^-_{G_y}(y)$, this implies that $L \subseteq F^-(y)$.
%	In order to show that $(L,S,R)$ is also a feasible solution to the Rooted Minimum Vertex-Cut problem on $G$ with root $y$, we must show that the graph $G$ satisfies $N^+_G(L) \subseteq S$:
%	indeed, by the construction of the graph $G_y$, the set of outgoing edges from each vertex of $F^-(y)$ in $G_y$ is the same as in $G$; so, as $L \subseteq F^-(y)$, it follows that the set of outgoing edges from $L$ in $G_y$ is the same as in $G$; meaning that $N^+_G(L) = N^+_{G_y}(L) \subseteq S$.
\end{proof}

For the "vice versa" direction, we need to prove that any feasible solution to the Rooted Minimum Vertex-Cut problem on $G$ with root $y$ must also be a feasible solution to the Rooted Minimum Vertex-Cut problem on $G_y$ with root $y$. However, this holds trivially as $E(G_y) \subseteq E(G)$: indeed, consider any vertex-cut $(L,S,R)$ in $G$, meaning that no edge of $G$ goes from $L$ to $R$; then, as $E(G_y) \subseteq E(G)$, it follows that no edge of $G_y$ goes from $L$ to $R$, so $(L,S,R)$ is also a vertex-cut of $G_y$.
This concludes the proof of \Cref{cl:implementing-step-constructing-sparsified-graphs:reduction}.

\subsection{Implementing Step \ref{step:selecting-roots:reduction}: Proof of \Cref{cl:implementing-step-selecting-roots:reduction}}\label{sec:implementing-step-selecting-roots:reduction}
In this section, we implement Step \ref{step:selecting-roots:reduction} of \algglobalvertexcut, with the added constraint imposed in \Cref{cl:implementing-step-selecting-roots:reduction}.
Recall that, the goal of Step \ref{step:selecting-roots:reduction} of \algglobalvertexcut is to construct a set $T \subseteq V(G)$ of vertices such that with probability at least $\frac{1}{4}$, at least one vertex of $T$ belongs to $R^*$, and furthermore, for every $y \in T$, the Rooted Minimum Vertex-Cut problem on $G$ with root $y$ must have a feasible solution.
In order to satisfy the latter condition, we will guarantee that every vertex $y \in T$ has a non-empty backward far-away set $F^-(y)$, as this guarantees that $(F^-(y),N^-(y),\set{y})$ is a feasible solution to the Rooted Minimum Vertex-Cut problem on $G$ with root $y$.
The main challenge lies in making sure that $T$ satisfies the additional constraint imposed by \Cref{cl:implementing-step-selecting-roots:reduction}. That is, in making sure that $\sum_{y \in T}\vol^+(F^-(y)) = O(m)$.

The procedure that we describe next for constructing $T$ will only guarantee that $\sum_{y \in T}\vol^+(F^-(y)) = O(m)$ holds in expectation. More formally, it will guarantee that $\expect{\sum_{y \in T}\vol^+(F^-(y))} \leq 2m$ holds, and that $\Pr\left[T \cap R^* \neq \emptyset\right] \geq \frac{1}{2}$.
To transform the obtained set $T$ into one that is guaranteed to satisfy $\sum_{y \in T}\vol^+(F^-(y)) = O(m)$, we can simply remove all elements from $T$ in the case where the quantity $\sum_{y \in T}\vol^+(F^-(y))$ exceeds $8m$.
By Markov's inequality, this only occurs with probability at most $\frac{1}{4}$, so the probability of the event $T \cap R^* \neq \emptyset$ remains at least $\frac{1}{4}$.
Furthermore, observe that the value of $\sum_{y \in T}\vol^+(F^-(y))$ can be computed in $O(m)$ time, as the value of $\vol^+(F^-(y))$ can be computed for each vertex $y$ in time $O(\deg(y))$ using the equality $\vol^+(F^-(y)) = m - \sum_{v \in \set{y} \cup N^-_G(y)}\deg^+(v)$.

Next, we describe how to construct the set $T$ such that $\expect{\sum_{y \in T}\vol^+(F^-(y))} \leq 2m$.
To do this, we guarantee that each vertex $v \in V(G)$ belongs in expectation to at most two of the sets $\{F^-(y)\}_{y \in T}$.
Specifically, observe that for every two vertices $v,y \in V(G)$, the vertex $v$ belongs to $F^-(y)$ if and only if $y$ belongs to the set $F^+(v) = V(G) \setminus \left(\set{v} \cup N^+_G(v)\right)$.
Thus, to guarantee that $\expect{\sum_{y \in T}\vol^+(F^-(y))} \leq 2m$, it suffices to choose $T$ such that for every $v \in V(G)$, the expected size of the intersection $T \cap F^+(v)$ is at most $2$.
Our algorithm for constructing $T$ works as follows:

\begin{enumerate}[label=1.{\alph*.}]
	\item calculate the value $w^* = \max_{v \in V(G)} w(F^+(v))$;
	\item\label{step:selecting-vertices-into-T:reduction} construct $T$ by selecting each vertex $y \in V(G)$ into $T$ independently with probability $\min\left\{1,\frac{2w(y)}{w^*}\right\}$ (note that $w^* > 0$ must hold\footnote{Indeed, every vertex $v \in L^*$ must have $R^* \subseteq F^+(v)$ and thus $F^+(v) \neq \emptyset$, which means that $w(F^+(v)) > 0$ since we are assuming that vertex-weights are positive.}); and lastly,
	\item\label{step:removing-vertices-from-T:reduction} remove from $T$ any vertex $y$ with $F^-(y) = \emptyset$.
\end{enumerate}
Based on the description of this algorithm, it is easy to see that for every vertex $v \in V(G)$, the inequality $\expect{\left|T \cap F^+(v)\right|} \leq \frac{2w(F^+(v))}{w^*} \leq 2$ holds.
By the above discussion, this means that $\expect{\sum_{y \in T}\vol^+(F^-(y))} \leq 2m$.

Next, we describe how to implement the above algorithm efficiently, and then prove that $\Pr\left[T \cap R^* \neq \emptyset\right] \geq \frac{1}{2}$.
Indeed, to calculate $w^*$ in time $O(m)$, it suffices to calculate $w(F^+(v))$ for each $v \in V(G)$ in time $O(\deg(v))$, which we can do using the equality $w(F^+(v)) = w(V(G)) - \sum_{u \in \set{v} \cup N^+(v)}w(u)$.
Step \ref{step:selecting-vertices-into-T:reduction} can be implemented using standard techniques for generating Bernoulli random variables.
Finally, to implement Step \ref{step:removing-vertices-from-T:reduction}, we can check for each vertex $y$ whether $F^-(y) = \emptyset$ in time $O(1)$ by using the equality $|F^-(y)| = n - 1 - \deg^-_G(y)$, which takes $O(n) \leq O(m)$ time overall. (Observe that $n \leq m$ holds by the assumption that every vertex has out-degree at least $1$.)

In order to complete the description of step \ref{step:selecting-roots:reduction}, it remains only to prove that the set $T$ obtained by the above procedure satisfies $\Pr\left[T \cap R^* \neq \emptyset\right] \geq \frac{1}{2}$.
Since vertices are selected into $T$ with probability proportional to their weight, it is useful to first prove a lower-bound on $w(R^*)$, which we do in the following claim.

\begin{claim}\label{cl:relation-of-w^*-to-mincut:reduction}
	$w(R^*) \geq \frac{w^*}{2}$.
\end{claim}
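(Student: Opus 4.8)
The plan is to reduce the claim to a single inequality, namely $w^* \le w(V(G)) - \OPT_{G}$, and then deduce the claim from the assumption $w(R^*) \ge w(L^*)$. Indeed, since $(L^*,S^*,R^*)$ is a global minimum vertex-cut we have $w(L^*)+w(S^*)+w(R^*)=w(V(G))$ and $w(S^*)=\OPT_{G}$, so $w(L^*)+w(R^*)=w(V(G))-\OPT_{G}$; combining this with $w(R^*)\ge w(L^*)$ yields $w(R^*)\ge \tfrac12\bigl(w(V(G))-\OPT_{G}\bigr)$. Hence, once we show $w^*\le w(V(G))-\OPT_{G}$, the claim $w(R^*)\ge \tfrac{w^*}{2}$ follows immediately.

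To prove $w^*\le w(V(G))-\OPT_{G}$, let $v^*\in V(G)$ be a vertex attaining the maximum, so $w^*=w(F^+(v^*))$. First note that $F^+(v^*)\neq\emptyset$: as observed in the footnote above, every $v\in L^*$ satisfies $R^*\subseteq F^+(v)$, so $w^*\ge w(F^+(v))\ge w(R^*)>0$, which forces $F^+(v^*)\neq\emptyset$. The key step is to observe that the out-neighborhood of $v^*$ is itself the middle set of a vertex-cut: since $G$ has no self-loops, the triple $\bigl(\{v^*\},\,N^+_G(v^*),\,F^+(v^*)\bigr)$ is a partition of $V(G)$ with $\{v^*\}\neq\emptyset$ and $F^+(v^*)\neq\emptyset$, and no edge leaves $\{v^*\}$ towards $F^+(v^*)=V(G)\setminus(\{v^*\}\cup N^+_G(v^*))$ because all out-neighbors of $v^*$ lie in $N^+_G(v^*)$. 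Therefore this triple is a feasible solution to the Global Minimum Vertex-Cut problem on $G$, and so $w(N^+_G(v^*))\ge \OPT_{G}$. Plugging this into the identity $w(F^+(v^*))=w(V(G))-w(v^*)-w(N^+_G(v^*))$ and using $w(v^*)>0$, we get $w^*=w(F^+(v^*))\le w(V(G))-\OPT_{G}$, as desired.

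I do not expect any serious obstacle here: the only real idea is spotting that the out-neighborhood of any non-universal vertex is (the separator of) a vertex-cut, and the remainder is bookkeeping with $w(L^*)+w(R^*)=w(V(G))-\OPT_{G}$ and the assumption $w(R^*)\ge w(L^*)$. The one edge case worth a sanity check is when $v^*$ has an outgoing edge to every other vertex, i.e. $F^+(v^*)=\emptyset$; the footnote argument rules this out, but even if it did occur the bound is trivial, since then $w^*=0\le w(V(G))-\OPT_{G}$ (using $\OPT_{G}=w(S^*)\le w(V(G))$).
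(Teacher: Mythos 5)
Your proof is correct and takes essentially the same approach as the paper: both observe that $(\{v^*\}, N^+_G(v^*), F^+(v^*))$ is a feasible vertex-cut once $F^+(v^*)\neq\emptyset$, use the minimality of $(L^*,S^*,R^*)$ to get $w(N^+_G(v^*)) \ge \OPT_G$, and then combine $w(L^*) + w(R^*) = w(V(G)) - \OPT_G$ with the assumption $w(R^*)\ge w(L^*)$. Your routing through the explicit intermediate inequality $w^* \le w(V(G)) - \OPT_G$ is a cosmetic reorganization of the paper's chain $w(L^*\cup R^*) \ge w(V(G)) - w(N^+_G(v)) = w(\{v\}\cup F^+(v)) \ge w^*$.
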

\begin{proof}
	Recall that $w^*$ is defined as $w^* = \max_{v \in V(G)} w(F^+(v))$, and consider the vertex $v$ that attains this maximum.
	Since $w(F^+(v)) = w^* > 0$, it follows that $F^+(v) \neq \emptyset$, and thus the triplet $(\set{v}, N^+(v), F^+(v))$ is a valid vertex-cut in $G$.
	Since $(L^*,S^*,R^*)$ was chosen to be a global minimum vertex-cut in $G$, it follows that $w(S^*) \leq w(N^+(v))$, and thus that $w(L^* \cup R^*) \geq w(V(G)) - w(N^+(v)) = w(\set{v} \cup F^+(v)) \geq w(F^+(v)) = w^*$.
	Since we chose $(L^*,S^*,R^*)$ to be a global minimum vertex-cut satisfying $w(L^*) \leq w(R^*)$, it follows that $w(R^*) \geq \frac{w(L^* \cup R^*)}{2} \geq \frac{w^*}{2}$, as we needed to prove.
	This concludes the proof of \Cref{cl:relation-of-w^*-to-mincut:reduction}.
\end{proof}

\noindent We are now ready to prove that $\Pr\left[T \cap R^* \neq \emptyset\right] \geq \frac{1}{2}$.

\begin{claim}\label{cl:some-terminal-belongs-to-R^*:reduction}
	When choosing $T$ according to steps \ref{step:selecting-vertices-into-T:reduction}-\ref{step:removing-vertices-from-T:reduction}, the inequality $\Pr\left[T \cap R^* \neq \emptyset\right] \geq \frac{1}{2}$ holds.
\end{claim}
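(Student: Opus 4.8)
The plan is to reduce the statement to a short computation over the independent coin flips of Step \ref{step:selecting-vertices-into-T:reduction}, after first checking that Step \ref{step:removing-vertices-from-T:reduction} never deletes a vertex of $R^*$.

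First I would verify that every vertex $y \in R^*$ has a non-empty backward far-away set $F^-(y) = V(G) \setminus (\set{y} \cup N^-_G(y))$, so that no vertex of $R^*$ is removed from $T$ in Step \ref{step:removing-vertices-from-T:reduction}. This holds because $(L^*,S^*,R^*)$ is a vertex-cut of $G$, hence there are no edges from $L^*$ to $R^*$; thus for any $y \in R^*$ we have $N^-_G(y) \cap L^* = \emptyset$, and since also $y \notin L^*$, we get $L^* \subseteq F^-(y)$, which is non-empty as $L^* \neq \emptyset$. Consequently, a vertex of $R^*$ lies in the final set $T$ exactly when it was selected during Step \ref{step:selecting-vertices-into-T:reduction}, so $\Pr[T \cap R^* \neq \emptyset]$ equals the probability that at least one vertex of $R^*$ survives the independent sampling of Step \ref{step:selecting-vertices-into-T:reduction}.

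Next, writing $p_y = \min\set{1, \frac{2w(y)}{w^*}}$ for the selection probability of $y$, independence gives $\Pr[T \cap R^* \neq \emptyset] = 1 - \prod_{y \in R^*}(1 - p_y) \geq 1 - \exp\left(-\sum_{y \in R^*} p_y\right)$, using $1 - x \leq e^{-x}$ for all real $x$. It therefore suffices to show $\sum_{y \in R^*} p_y \geq 1$, which then yields $\Pr[T \cap R^* \neq \emptyset] \geq 1 - e^{-1} \geq \frac{1}{2}$. To bound this sum from below I would split into two cases: if some $y \in R^*$ satisfies $2w(y) \geq w^*$, then that single term already equals $1$; otherwise every term equals $\frac{2w(y)}{w^*}$, so the sum is $\frac{2w(R^*)}{w^*}$, which is at least $1$ by \Cref{cl:relation-of-w^*-to-mincut:reduction} (which gives $w(R^*) \geq w^*/2$).

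The one subtlety worth flagging — the step I would be most careful about — is that one must not simply discard the "$\min$ with $1$" and bound each term by $\frac{2w(y)}{w^*}$: when the cap is active this only yields $\sum_{y \in R^*} p_y \geq w(R^*)/w^* \geq 1/2$, giving the insufficient bound $1 - e^{-1/2} < 1/2$. The tiny case distinction above (equivalently, the observation that $\sum_i \min\set{1, a_i} \geq \min\set{1, \sum_i a_i}$) is exactly what closes the gap. Everything else is routine, and the proof concludes as soon as $\sum_{y \in R^*} p_y \geq 1$ is established.
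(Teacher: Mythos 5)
Your proof is correct and follows essentially the same route as the paper's: both argue that Step 1c cannot remove a vertex of $R^*$ (via $L^* \subseteq F^-(y)$), then reduce to showing $\sum_{y \in R^*} p_y \geq 1$ via the bound $1-\prod(1-p_y) \geq 1-\exp(-\sum p_y)$, and handle the $\min\set{1,\cdot}$ cap with the same two-case split using \Cref{cl:relation-of-w^*-to-mincut:reduction}. The only difference is that you spell out the justification of $L^* \subseteq F^-(y)$ in a bit more detail than the paper does.
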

\begin{proof}
	Firstly, observe that if any vertex $y \in R^*$ is selected into $T$ during Step \ref{step:selecting-vertices-into-T:reduction} of the procedure, then it will not be removed during Step \ref{step:removing-vertices-from-T:reduction}, since any such vertex satisfies $L^* \subseteq F^-(y)$.
	Thus, it suffices to show that with probability at least $\frac{1}{2}$, some vertex of $R^*$ is selected into $T$ in Step \ref{step:selecting-vertices-into-T:reduction}.
	In other words, if we let $\event$ denote the bad event in which no vertex of $R^*$ is selected into $T$ in Step \ref{step:selecting-vertices-into-T:reduction}, then it suffices to show that $\Pr[\event] \leq \frac{1}{2}$.
	We do this next.
	Indeed, for every vertex $y \in R^*$, let $p_y = \min\left\{1,\frac{2w(y)}{w^*}\right\}$ denote the probability of $y$ being selected into $T$.
	Since each vertex $y$ is selected into $T$ independently with probability $p_y$, we get that
	\[
	 \Pr[\event] = \prod_{y \in R^*} (1-p_y) \leq \prod_{y \in R^*} e^{-p^y} = \operatorname{exp}\left(-\sum_{y \in R^*}p_y\right).
	\]
	Since $\exp(-1) \leq \frac{1}{2}$, it now suffices to prove that $\sum_{y \in R^*}p_y \geq 1$, which we do next.
	Consider two cases: the first case is where every $y \in R^*$ satisfies $p_y = \frac{2w(y)}{w^*}$, and in this case $\sum_{y \in R^*}p_y = \frac{2w(R^*)}{w^*} \geq 1$ holds by \Cref{cl:relation-of-w^*-to-mincut:reduction};
	the second case is where some $y' \in R^*$ has $p_{y'} \neq \frac{2w(y)}{w^*}$ -- in this case, since $p_{y'}$ is defined as $p_{y'}=\min\left\{1,\frac{2w(y)}{w^*}\right\}$, it follows that $p_{y'} =1$ and thus that $\sum_{y \in R^*}p_y \geq 1$, as needed.
	This concludes the proof of
%	
%	
%	Consider two cases: the first case is where every $y \in R^*$ satisfies $p_y = \frac{2w(y)}{w^*}$, and in this case $\sum_{y \in R^*}p_y = \frac{2w(R^*)}{w^*} \geq 1$ holds by \Cref{cl:relation-of-w^*-to-mincut:reduction};
%	the second case is where some $y' \in R^*$ has $p_{y'} \neq \frac{2w(y)}{w^*}$, which means that $p_{y'} =1$, and thus $\sum_{y \in R^*}p_y \geq 1$ holds because all $p_y$ are non-negative.\rnote{Can rephrase this devision into cases so that it is not one sentence.}
%	This concludes the proof of
	 \Cref{cl:some-terminal-belongs-to-R^*:reduction}, and thus concludes the proof of \Cref{cl:implementing-step-selecting-roots:reduction}.
\end{proof}

%\subsection{Getting Rid of the Assumption $w(R^*) \geq w(L^*)$}\label{sec:remocing-the-assumptions:reduction}
%Recall that our goal in \Cref{sec:main-reduction} is to design an algorithm as promised by \Cref{thm:main-reduction}.
%In the preceding parts of this section, we designed an algorithm \algglobalvertexcut that does as promised in \Cref{thm:main-reduction}, under the additional assumption that the input graph $G$ has a global minimum vertex-cut $(L^*,S^*,R^*)$ with $w(R^*) \geq w(L^*)$.
%In the following section, we show how to get rid of this assumption.
%
%
%
%\paragraph{Proof of \Cref{thm:main-reduction}}
%To implement the algorithm promised in \Cref{thm:main-reduction}, we run both of the algorithms \algglobalvertexcut' and \algglobalvertexcut''.
%Consider any global minimum vertex-cut $(L^*,S^*,R^*)$ of the input graph $G$.
%Since at least one of the assumptions $w(R^*) \geq w(L^*)$ and $w(R^*) \leq w(L^*)$ must hold, it follows that at least one of the algorithms \algglobalvertexcut' and \algglobalvertexcut'' is guaranteed to run in time $\Tilde{O}(m \log W)$ while making at most $O\left(\frac{m}{n}\right)$ oracle queries,
%
%\rnote{Write this}
%
%\subsubsection{The algorithm \algglobalvertexcut'}\label{sec:algorithms-under-both-sides-of-assumption:reduction}

\newpage

\appendix
\section{Algorithm for Rooted Minimum Vertex-Cut: Proof of \Cref{thm : main-thm-rooted-vertex-version}}\label{sec:rooted-minimum-vertex-cut}

In this section, we describe and analyze the algorithm implementing \Cref{thm : main-thm-rooted-vertex-version} -- our main result for the Rooted Minimum Vertex-Cut problem.
This algorithm is a fairly straightforward adaptation of the one from \Cref{sec:rooted-minimum-edge-cut} to the vertex-cut setting -- indeed, the only non-trivial task is in adapting the \approxsparsify subroutine, which we do in \Cref{sec:description-of-iteration-of-main-algorithm:vertex-version}.
Recall that the input to the algorithm consists of a directed $m$-edge graph $G$ with positive integer weights $w(v) \leq W$ on the vertices $v \in V(G)$, as well as a precision parameter $\epsilon \in (0,1)$, and a root vertex $y^* \in V(G)$, such that there exists some vertex-cut $(L',S',R')$ in $G$ with $y^* \in R'$.
Let $\OPT_{V,y^*}$ denote the minimum value of a vertex-cut among all vertex-cuts $(L',S',R')$ of $G$ with $y^* \in R'$.
Furthermore, for the sake of the analysis, fix one such vertex-cut achieving this minimum value as the \emph{distinguished cut}, and denote it by $(L^*,S^*,R^*)$.
Our goal is to output a vertex-cut $(L,S,R)$ of $G$ with $w(S) \leq (1+\epsilon) \OPT_{V,y^*}$.

\paragraph{Assumption that $\OPT_{V,y^*} > 0$.}
As in \Cref{sec:rooted-minimum-edge-cut}, it will be convenient for us to assume that the value of the optimal solution is positive, and we indeed make this assumption in the remainder of this section.
This assumption can be removed via the following standard technique: before running the algorithm, we check whether $\OPT_{V,y^*} > 0$ by performing a BFS from $y^*$ in the reverse graph $\overline{G}$ of $G$; if the BFS reaches all vertices of $\overline{G}$, then $\OPT_{V,y^*} > 0$ must hold, in which case we run the algorithm; otherwise, we can use the set of vertices reached by the BFS in order to construct and output a vertex-cut $(L,S,R)$ of $G$ with $w(S) = 0$ and $y^* \in R$ -- specifically, we let $S = \emptyset$, let $R$ be the set of vertices reached by the BFS, and let $L = V(G) \setminus R$.

\paragraph{Estimates of $\OPT_{V,y^*}$ and $\vol^+_G(L^*)$, and a Set of Terminals.}
%Our algorithm begins by guessing an estimate $\widehat{\OPT}$ of $\OPT_{V,y^*}$ -- specifically, it picks $\widehat{\OPT}$ uniformly at random among the integral powers of $2$ in the range $[1,m \cdot W]$.
%Let $\event^{\widehat{\OPT}}$ denote the good event in which $\frac{\OPT_{V,y^*}}{2} < \widehat{\OPT} \leq \OPT_{V,y^*}$.
%Observe that $\Pr[\event^{\widehat{\OPT}}] \geq \Omega\left(\frac{1}{\log(m \cdot W)}\right)$.
%Next, our algorithm uses \Cref{lem: picking terminals}
%
%
%
Our algorithm begins by guessing an estimate $\widehat{\OPT}$ of $\OPT_{V,y^*}$, and an estimate $\nu$ of $\vol^+_G(L^*)$, similarly to what was done in \Cref{sec:rooted-minimum-edge-cut} -- specifically, it picks $\widehat{\OPT}$ uniformly at random among the integral powers of $2$ in the range $[1,n \cdot W]$, and it picks $\nu$ uniformly at random among the integral powers of $2$ in the range $[1,m]$.
Similarly to \Cref{sec:rooted-minimum-edge-cut}, let $\event^{\text{estimates}}$ denote the good event in which $\frac{\OPT_{V,y^*}}{2} < \widehat{\OPT} \leq \OPT_{V,y^*}$ and $\frac{\vol^+_G(L^*)}{2} < \nu \leq \vol^+_G(L^*)$ both hold, and observe that $\Pr[\event^{\text{estimates}}] \geq \Omega\left(\frac{1}{\log^2(m \cdot W)}\right)$.
Next, our algorithm picks a set $T \subseteq V(G)$ of $O\left(\frac{m}{\nu}\right)$ \emph{terminals}, using the procedure from \Cref{obs:picking terminals} with parameter $\nu$ and the set $F=\{y^*\} \cup N^-_G(y^*)$ of forbidden vertices. (The difference compared to \Cref{sec:rooted-minimum-edge-cut} is that the set $F$ now also includes the vertices of $N^-_G(y^*)$.)
Let $\event$ denote the good event that event $\event^{\text{estimates}}$ occurred and $T \cap L^* \neq \emptyset$ holds.
By \Cref{obs:picking terminals}, $\Pr[\event \mid \event^{\text{estimates}}] \geq \frac{1}{100}$, and therefore
\begin{equation}\label{eq:probability-of-good-event:vertex-version}
	\Pr[\event] = \Pr[\event \mid \event^{\text{estimates}}] \cdot \Pr[\event^{\text{estimates}}] \geq \Omega\left(\frac{1}{\log^2(m \cdot W)}\right).
\end{equation}

\paragraph{Hierarchical Partition of the Terminals.}
Our algorithm constructs a hierarchical partition of the set $T$ of terminals with $z = \ceil{\log |T|}$ levels, exactly as in \Cref{sec:rooted-minimum-edge-cut}.

\paragraph{Main Loop of the Algorithm.}
As in \Cref{sec:rooted-minimum-edge-cut}, our algorithm consists of $z$ phases, where for each $1 \leq i \leq z$, the goal of Phase $i$ is to construct a sparsifier graph $G_B$ for each level-$i$ batch $B \in \bset_i$. The properties that this sparsifier graph must satisfy are:

%\end{itemize}
\begin{properties}{P'}
	\item\label{property:G_B-P1:vertex-version} the graph $G_B$ contains at most $\Tilde{O}\left(\frac{m}{2^i \cdot \epsilon}\right)$ edges and vertices, and no more than the number of edges and vertices in $G$;
	\item\label{property:G_B-P2:vertex-version} $\{y^*\} \subseteq V(G_B) \subseteq V(G)$;
	\item\label{property:G_B-P3:vertex-version} for every set $L' \subseteq V(G_B) \setminus (\{y^*\} \cup N^-_{G_B}(y^*))$, the inequality $w_B(N^+_{G_B}(L')) \geq w(N^+_{G}(L'))$ holds and $y^* \notin N^+_G(L')$ -- in particular, for every terminal $x \in B \cap V(G_B)$, the $x$-$y^*$ vertex-connectivity $\kappa_{G_B}(x,y^*)$ in $G_B$ is at least as large as the vertex-connectivity $\kappa_G(x,y^*)$ in $G$;
	\item\label{property:G_B-P4:vertex-version} if there exists a terminal $x^* \in B \cap L^*$ and the good event $\event$ occurred, then $x^* \in V(G_B) \setminus N^-_{G_B}(y^*)$ holds and the $x^*$-$y^*$ vertex-connectivity $\kappa_{G_B}(x^*,y^*)$ in $G_B$ satisfies $\kappa_{G_B}(x^*,y^*) \leq \left(1 + \frac{i\epsilon}{z}\right)\OPT_{V,y^*}$ -- furthermore, there exists an $x^*$-$y^*$ vertex-cut $(L',S',R')$ in $G_B$ of value $w_B(S') \leq \left(1 + \frac{i\epsilon}{z}\right)\OPT_{V,y^*}$ that additionally satisfies the condition $\vol^+_G(L') \leq 2\nu$.
%	\item\label{property:G_B-P5:vertex-version} for every vertex $v \in V(G_B)$, there exists an edge $(v,y^*)$ in $G_B$ if and only if there exists an edge $(v,y^*)$ in $G$. In other words, $N^-_{G_B}(y^*) = N^-_G(y^*) \cap V(G_B)$.
\end{properties}
The following claim is analogous to \Cref{cl:implementing-a-single-iteration-of-main-algorithm:edge-version}, and is the main technical component of our algorithm.
The proof of this claim- and the description of procedure implementing it- are presented in \Cref{sec:description-of-iteration-of-main-algorithm:vertex-version}.

\begin{claim}\label{cl:implementing-a-single-iteration-of-main-algorithm:vertex-version}
	There is a deterministic procedure that we refer to as \vertexapproxsparsify, that, given an integer $1 \leq i \leq z$, a batch $B \in \bset_i$, the parameters $\epsilon,\nu,\widehat{\OPT}$, and access to the graph $G$, computes the graph $G_B$ satisfying Properties \ref{property:G_B-P1:vertex-version}-\ref{property:G_B-P4:vertex-version} in $O\left(\frac{m^{1+o(1)} \cdot \log W}{2^i \cdot \epsilon}\right)$ time. If $i>1$, then the procedure also requires access to the graph $G_{B'}$ corresponding to the level-$(i-1)$ parent batch $B'$ of $B$.
\end{claim}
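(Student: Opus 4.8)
The plan is to follow the proof of \Cref{cl:implementing-a-single-iteration-of-main-algorithm:edge-version} in \Cref{sec:approx-sparsify-for-i-larger-than-1:edge-version} essentially verbatim, translating edge-cuts into vertex-cuts. For $i=1$ we again set $G_B=G$ (with $w_B=w$), and Properties \ref{property:G_B-P1:vertex-version}--\ref{property:G_B-P4:vertex-version} are immediate: for \ref{property:G_B-P4:vertex-version} we use that the forbidden-terminal set $F$ contains $N^-_G(y^*)$ (so $x^*\notin N^-_{G_B}(y^*)$) and that the distinguished cut $(L^*,S^*,R^*)$ witnesses all the required bounds, with $\vol^+_G(L^*)<2\nu$ under $\event^{\text{estimates}}$. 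For $i>1$, I first prove the vertex analogue of \Cref{cl:reduction-to-computing-A_B:edge-version}: it suffices to compute a set $A_B\subseteq V(G)\setminus(\{y^*\}\cup N^-_G(y^*))$ satisfying the vertex analogues of Properties \ref{property:A_B-P'1:edge-version}--\ref{property:A_B-P'2:edge-version}, namely $\vol^+_G(A_B)=\Tilde{O}(m/(2^i\epsilon))$, and (if a terminal $x^*\in B\cap L^*$ exists and $\event$ occurred) the existence of $L'\subseteq A_B$ with $x^*\in L'$, $\vol^+_G(L')\le 2\nu$ and $w(N^+_G(L'))\le(1+\tfrac{i\epsilon}{z})\OPT_{V,y^*}$; given such an $A_B$, we take $G_B$ to be $G$ with $V(G)\setminus(A_B\cup N^+_G(A_B))$ contracted into $y^*$ (deleting resulting self-loops and all edges leaving $y^*$), plus an edge $(v,y^*)$ for every $v\in N^+_G(A_B)\setminus A_B$. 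Here Property \ref{property:G_B-P1:vertex-version} follows since $|V(G_B)|,|E(G_B)|=O(\vol^+_G(A_B))$; Property \ref{property:G_B-P3:vertex-version} follows because any $L'$ that is feasible for root $y^*$ in $G_B$ must avoid $N^-_{G_B}(y^*)\supseteq N^+_G(A_B)\setminus A_B$ and is therefore contained in $A_B$, whence $N^+_{G_B}(L')=N^+_G(L')$ and $y^*\notin N^+_G(L')$; and Property \ref{property:G_B-P4:vertex-version} follows from the promised $L'$, since contracting does not change $N^+_G(L')$ for $L'\subseteq A_B$.

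To compute $A_B$, I mimic the construction of $H_{B'}$ from \Cref{sec:approx-sparsify-for-i-larger-than-1:edge-version}: starting from $G_{B'}$, attach to each vertex $v$ with $\deg^+_G(v)>0$ a \emph{penalty} pendant vertex $v^\#$ of weight $\tfrac{\epsilon\cdot\widehat{\OPT}\cdot\deg^+_G(v)}{z\cdot 2\nu}$ together with edges $v\to v^\#\to y^*$ (the vertex-cut analogue of the low-capacity edge $(v,y^*)$ used in the edge case: any vertex-cut of $G_{B'}$ that keeps $v$ on its left side must now pay $w(v^\#)$), and add a super-source $s$ joined to $B\cap V(G_{B'})$ through a gadget of total weight $O(|B|\cdot\widehat{\OPT})$ that forces the distinguished terminal entirely onto the left side of the minimum $s$-$y^*$ vertex-cut. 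Let $A_B$ be the left side of a minimum $s$-$y^*$ vertex-cut of this graph $H_{B'}$ (computed by \Cref{thm:fast-max-s-t-vertex-cut}), restricted to $V(G_{B'})$; as in the edge case one checks $A_B\cap N^-_{G_{B'}}(y^*)=\emptyset$, and then Property \ref{property:G_B-P3:vertex-version} of $G_{B'}$ gives $A_B\subseteq V(G)\setminus(\{y^*\}\cup N^-_G(y^*))$. The two required properties of $A_B$ then follow as in Claims \ref{cl:A_B-satisfies-property-P'1:edge-version}--\ref{cl:A_B-satisfies-property-P'2:edge-version}: the out-volume bound because every $v\in A_B$ with $\deg^+_G(v)>0$ pushes $v^\#$ into the cut, so the cut value is at least $\tfrac{\epsilon\widehat{\OPT}}{z\cdot 2\nu}\vol^+_G(A_B)$, while the trivial cut around $s$ together with \Cref{obs:size-of-batches} bounds the minimum-cut value by $O(\tfrac{m\widehat{\OPT}}{2^i\nu})$; and the capture property because, taking the cut $(L'_{B'},S'_{B'},R'_{B'})$ guaranteed by Property \ref{property:G_B-P4:vertex-version} of $G_{B'}$ and setting $L'=A_B\cap L'_{B'}$, one bounds $w(N^+_G(L'))\le w_{B'}(N^+_{G_{B'}}(L'))\le w_{B'}(N^+_{G_{B'}}(L'_{B'}))+\tfrac{\epsilon\widehat{\OPT}}{z}\le(1+\tfrac{i\epsilon}{z})\OPT_{V,y^*}$ --- the outer inequality by Property \ref{property:G_B-P3:vertex-version} of $G_{B'}$, and the middle one by \Cref{lem:submodularity-of-minimum-cuts:vertex-version} applied inside $H_{B'}$ (bookkeeping the penalty pendants, which add $\tfrac{\epsilon\widehat{\OPT}}{z\cdot 2\nu}\vol^+_G(\cdot)$ to each side, and using $\vol^+_G(L'_{B'})\le 2\nu$) --- while $x^*\in L'$ follows from $\kappa_{H_{B'}}(x^*,y^*)<4\widehat{\OPT}$ (which the same cut witnesses, using $\widehat{\OPT}\le\OPT_{V,y^*}<2\widehat{\OPT}$ under $\event$) and the \emph{capture} property of the $s$-gadget. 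The running-time accounting is routine, exactly as in the edge case: $H_{B'}$ has $\Tilde{O}(|E(G_{B'})|)=\Tilde{O}(m/(2^{i-1}\epsilon))$ edges by Property \ref{property:G_B-P1:vertex-version} of $G_{B'}$, so the minimum-cut call costs $O((m/(2^i\epsilon))^{1+o(1)}\log W)$ (after clearing the denominator $z\cdot 2\nu$ and rounding so that all weights are positive integers bounded by $W\cdot\poly(m)$), and building $G_B$ from $A_B$ costs $O(\vol^+_G(A_B))$.

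The step I expect to be the main obstacle is the design of the super-source gadget and the corresponding \emph{capture} observation (the vertex analogue of \Cref{obs:minimum-cut-in-modified-graph-contains-distinguished-terminal}). In the edge setting one just adds edges $(s,x)$ of capacity $4\widehat{\OPT}$ and merges a cheap $x$-$y^*$ cut into the minimum cut to reach a contradiction; a naive translation to the vertex setting fails, because a vertex-cut charges for the vertex $x$ itself rather than for an incident edge, so either the trivial cut around $s$ is no longer $O(|B|\widehat{\OPT})$ (if $x$'s weight is left untouched), or capturing $x$ when $\kappa_{H_{B'}}(x,y^*)\ge w(x)$ breaks (if $x$'s weight is capped). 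The fix I intend to use is to run the minimum-cut computation on the standard vertex-split of $H_{B'}$ and connect $s$ to the out-copy $x_{\mathrm{out}}$ of each $x\in B\cap V(G_{B'})$ with capacity $4\widehat{\OPT}$ (equivalently, to insert for each such $x$ a fresh weight-$4\widehat{\OPT}$ vertex carrying exactly $x$'s out-edges): this keeps the trivial cut at $\le|B|\cdot 4\widehat{\OPT}$, forces $x$ entirely onto the left side whenever $\kappa_{H_{B'}}(x,y^*)<4\widehat{\OPT}$, and leaves all original weights (hence soundness, Property \ref{property:G_B-P3:vertex-version}) untouched; the care then goes into checking that this gadget composes cleanly with the penalty pendants, with the contraction into $y^*$, and with the forbidden set $N^-_G(y^*)$.
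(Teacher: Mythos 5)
Your plan mirrors the paper's proof almost step for step: you correctly identify the super-source gadget as the obstacle, and your fix (a fresh weight-$4\widehat{\OPT}$ vertex carrying exactly $x$'s out-edges, i.e.\ the paper's $x^{\tcopy}$) together with the penalty pendants $v^\#$ (the paper's $v^{\vcopy}$) is exactly the paper's construction of $H_{B'}$; the submodularity step and the bookkeeping via $\vol^+_G(L'_{B'})\le 2\nu$ also match the paper. Two gaps remain.

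First, your derivation of $G_B$ from $A_B$ does not satisfy Property~\ref{property:G_B-P1:vertex-version}. Contracting $V(G)\setminus(A_B\cup N^+_G(A_B))$ into $y^*$ retains \emph{every} edge of $G$ that starts in $N^+_G(A_B)$, and $\vol^+_G(N^+_G(A_B))$ is not controlled by $\vol^+_G(A_B)$ (a single low-out-degree vertex in $A_B$ can have a neighbor of degree $\Theta(m)$). The paper's construction (\Cref{cl:reduction-to-computing-A_B:vertex-version}) keeps only the edges $E_G(A_B,A_B\cup N^+_G(A_B))$ that \emph{originate in $A_B$}, plus a single edge $(v,y^*)$ per $v\in N^+_G(A_B)$, which is what makes $|E(G_B)| \le 2\vol^+_G(A_B)$. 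You must discard the remaining out-edges of $N^+_G(A_B)$.

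Second, the statement that the gadget ``forces $x$ entirely onto the left side whenever $\kappa_{H_{B'}}(x,y^*)<4\widehat{\OPT}$'' is not what the min-cut argument gives. It only forces $x^{\tcopy}$ (your $x_{\mathrm{out}}$) onto the left; the original vertex $x$ can legitimately land in $S_A$ or $R_A$ of a minimum $s$-$y^*$ vertex-cut (the edge $x_{\mathrm{in}}\to x_{\mathrm{out}}$ may cross from right to left). Consequently, defining $A_B$ as ``the left side restricted to $V(G_{B'})$'' would not be guaranteed to contain $x^*$. The paper handles this via the ``completion'' operation $\widehat{L_A}^{\complete}$: add $x$ whenever $x^{\tcopy}$ is on the left (and conversely), then prove (using $N^+_{H_{B'}}(\hat L^{\complete})\subseteq N^+_{H_{B'}}(\hat L)$) that the completed side still underlies a \emph{minimum} $s$-$y^*$ vertex-cut, which is what the submodularity lemma requires. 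Your one-line ``move $x$ to $L_A$'' gestures at this, but it is not merely a relabeling: the minimality of the completed cut is a separate claim that must be proved before \Cref{lem:submodularity-of-minimum-cuts:vertex-version} can be applied, and it is the crux the paper spends Observation~\ref{obs:out-neighbors-of-complete-set} and the body of Claim~\ref{cl:upper-bound-on-value-of-new-promised-cut:vertex-version} establishing.
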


\noindent The following corollary of \Cref{cl:implementing-a-single-iteration-of-main-algorithm:vertex-version} summarizes the running time of each phase.

\begin{corollary}\label{cor:running-time-of-one-phase:vertex-version}
	Each phase of the algorithm can be implemented in time $O\left(\frac{m^{1+o(1)} \cdot \log W}{\epsilon}\right)$.
\end{corollary}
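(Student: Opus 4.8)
The plan is to mirror the proof of \Cref{cor:running-time-of-one-phase:edge-version} essentially verbatim, substituting the vertex-cut iteration bound \Cref{cl:implementing-a-single-iteration-of-main-algorithm:vertex-version} for its edge-cut counterpart. First I would recall the structure of Phase $i$ from the description of the main loop: Phase $i$ is partitioned into iterations, one for each level-$i$ batch $B \in \bset_i$, where the iteration associated with $B$ is responsible for computing the sparsifier graph $G_B$ together with its weight function $w_B$. Since the level-$i$ partition $\bset_i$ contains at most $2^i$ batches, Phase $i$ consists of at most $2^i$ iterations.

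Next I would observe that each such iteration can be carried out by a single call to the \vertexapproxsparsify procedure: for $i > 1$ this call is additionally given access to the graph $G_{B'}$ corresponding to the level-$(i-1)$ parent batch $B'$ of $B$, and this graph is available because Phase $i-1$ has already finished computing all the graphs $\{G_{B'}\}_{B' \in \bset_{i-1}}$ before Phase $i$ begins; for $i=1$ the procedure simply returns $G$ itself. By \Cref{cl:implementing-a-single-iteration-of-main-algorithm:vertex-version}, each such call runs in time $O\!\left(\frac{m^{1+o(1)} \cdot \log W}{2^i \cdot \epsilon}\right)$.

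Finally, summing over the at most $2^i$ iterations of Phase $i$ gives total running time $2^i \cdot O\!\left(\frac{m^{1+o(1)} \cdot \log W}{2^i \cdot \epsilon}\right) = O\!\left(\frac{m^{1+o(1)} \cdot \log W}{\epsilon}\right)$, which is exactly the bound claimed. I do not expect any genuine obstacle in this argument: all of the difficulty has been pushed into \Cref{cl:implementing-a-single-iteration-of-main-algorithm:vertex-version}, whose proof -- the adaptation of \vertexapproxsparsify to the vertex-cut setting -- is deferred to \Cref{sec:description-of-iteration-of-main-algorithm:vertex-version}; the only thing to be careful about here is the trivial bookkeeping that every parent graph $G_{B'}$ is indeed present in memory when its children are processed, which follows directly from the phase ordering.
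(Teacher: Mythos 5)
Your proof is correct and follows the same approach as the paper: the paper's proof of \Cref{cor:running-time-of-one-phase:vertex-version} simply points back to \Cref{cor:running-time-of-one-phase:edge-version}, whose argument is exactly the one you give (at most $2^i$ iterations, each a single call to the sparsifier subroutine, then invoke the per-iteration running-time bound). Your additional remark about the parent graphs $G_{B'}$ being available from the previous phase is a correct piece of bookkeeping that the paper leaves implicit.
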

\begin{proof}
	The proof is exactly the same as that of \Cref{cor:running-time-of-one-phase:edge-version}, except using the procedure \vertexapproxsparsify instead of \approxsparsify.
\end{proof}

%\begin{claim}\label{cl:implementing-a-single-iteration-of-main-algorithm:vertex-version}
%	There is a deterministic procedure for implementing each phase of the algorithm in time $O\left(\frac{m^{1+o(1)} \cdot \log W}{\epsilon}\right)$. (For every phase except the first one, the procedure makes use of the graphs computed in the previous phase.)
%\end{claim}
%
%\noindent Next, we describe how the algorithm computes its output at the end of the final phase.

\paragraph{Termination.}
We now describe how our algorithm proceeds after the final phase. This is exactly analogous to how the algorithm from \Cref{sec:rooted-minimum-edge-cut} operates after its final phase.
For each terminal $x \in T$, let $B^{(x)} \in \bset_z$ denote the level-$z$ batch that contains $x$.
Furthermore, let $T' \subseteq T$ denote the set of all terminals $x$ such that $x \in V(G_{B^{(x)}})$.
After the final phase, for each terminal $x \in T'$, our algorithm uses \Cref{thm:fast-max-s-t-vertex-cut} on the graph $G_{B^{(x)}}$ in order to determine the vertex connectivity $\kappa_{G_{B^{(x)}}}(x,y^*)$.
Then, our algorithm picks the terminal $\tilde{x}$ that achieves the minimum vertex-connectivity $\kappa_{G_{B^{(x)}}}(x,y^*)$ among all the terminals of $T'$; computes a minimum $\tilde{x}$-$y^*$ vertex-cut $(L,S,R)$ in $G$ using \Cref{thm:fast-max-s-t-vertex-cut}, and outputs this vertex-cut. (If the set $T'$ is empty, then the algorithm instead outputs an arbitrary vertex-cut $(L,S,R)$ with $y^* \in R$. Recall that, by the definition of the input of \Cref{thm : main-thm-rooted-vertex-version}, such a vertex-cut is guaranteed to exist, and observe that it can be found in time $O(n)$.)
Observe the following.

\begin{observation}\label{obs:termination-value-of-cut-part1:vertex-version}
	If $T' \neq \emptyset$, then the value of the vertex-cut $(L,S,R)$ returned by the algorithm is at most $\min_{x \in T'}\kappa_{G_{B^{(x)}}}(x,y^*)$.
\end{observation}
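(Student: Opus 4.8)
The plan is to mirror the proof of the edge-cut analogue, Observation~\ref{obs:termination-value-of-cut-part1:edge-version}, replacing the edge-connectivity $\lambda$ by the vertex-connectivity $\kappa$, minimum edge-cuts by minimum vertex-cuts, and invoking Property~\ref{property:G_B-P3:vertex-version} in place of Property~\ref{property:G_B-P3:edge-version}. Concretely, recall from the description of the termination step that the algorithm returns a \emph{minimum} $\tilde x$-$y^*$ vertex-cut $(L,S,R)$ in $G$, where $\tilde x$ is the terminal of $T'$ minimizing $\kappa_{G_{B^{(x)}}}(x,y^*)$. Such a minimum $\tilde x$-$y^*$ vertex-cut exists in $G$ because $\tilde x \in T$ was sampled via \Cref{obs:picking terminals} with a forbidden set containing $N^-_G(y^*)$, so $(\tilde x, y^*) \notin E(G)$; hence by Menger's theorem $w(S) = \kappa_G(\tilde x, y^*)$, and it suffices to prove $\kappa_G(\tilde x, y^*) \le \min_{x \in T'} \kappa_{G_{B^{(x)}}}(x,y^*)$.

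The key step is to observe that, since $\tilde x \in T'$, the definitions of $T'$ and of $B^{(\tilde x)}$ give $\tilde x \in B^{(\tilde x)} \cap V(G_{B^{(\tilde x)}})$, and that $B^{(\tilde x)} \in \bset_z$, so $G_{B^{(\tilde x)}}$ is indeed one of the sparsifiers produced in Phase $z$ and therefore satisfies Property~\ref{property:G_B-P3:vertex-version}. Applying that property to the graph $G_{B^{(\tilde x)}}$ with the terminal $\tilde x$ yields $\kappa_{G_{B^{(\tilde x)}}}(\tilde x, y^*) \ge \kappa_G(\tilde x, y^*)$. Combining this with the defining choice of $\tilde x$, which guarantees $\kappa_{G_{B^{(\tilde x)}}}(\tilde x, y^*) = \min_{x \in T'} \kappa_{G_{B^{(x)}}}(x, y^*)$, gives $\kappa_G(\tilde x, y^*) \le \min_{x \in T'} \kappa_{G_{B^{(x)}}}(x, y^*)$, which completes the argument.

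I do not expect a genuine obstacle here: the proof is essentially a verbatim transcription of the edge-cut case. The only points that need a sanity check are (i) that the returned cut is genuinely a minimum $\tilde x$-$y^*$ vertex-cut, so its weight equals $\kappa_G(\tilde x, y^*)$ (this uses that $\tilde x \notin N^-_G(y^*)$, which is why $N^-_G(y^*)$ was added to the forbidden set in the terminal-sampling step), and (ii) that Property~\ref{property:G_B-P3:vertex-version} applies to $G_{B^{(\tilde x)}}$ with the terminal $\tilde x$, which holds because $\tilde x \in B^{(\tilde x)} \cap V(G_{B^{(\tilde x)}})$ by membership in $T'$.
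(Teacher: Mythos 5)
Your proof is correct and takes essentially the same approach as the paper: the paper's proof of this observation is, verbatim, that it is identical to the edge-cut analogue (Observation~\ref{obs:termination-value-of-cut-part1:edge-version}) with $\lambda$ replaced by $\kappa$ and Property~\ref{property:G_B-P3:edge-version} replaced by Property~\ref{property:G_B-P3:vertex-version}, which is exactly the two-step argument you give. The additional sanity checks you flag (that $\tilde x \notin N^-_G(y^*)$ so the minimum $\tilde x$-$y^*$ vertex-cut exists, and that $\tilde x \in B^{(\tilde x)} \cap V(G_{B^{(\tilde x)}})$ so Property~\ref{property:G_B-P3:vertex-version} applies) are correct and are left implicit in the paper.
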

\begin{proof}
	The proof of this observation is exactly the same as that of \Cref{obs:termination-value-of-cut-part1:edge-version}, except using vertex connectivities instead of edge connectivities, and using Property \ref{property:G_B-P3:vertex-version} instead of Property \ref{property:G_B-P3:edge-version}.
%	As $(L,S,R)$ is selected to be a minimum $\tilde{x}$-$y^*$ vertex-cut in $G$, it suffices to prove that $\kappa_G(\tilde{x},y^*) \leq \min_{x \in T'}\kappa_{G_{B^{(x)}}}(x,y^*)$.
%	Indeed, Property \ref{property:G_B-P3:vertex-version} guarantees that $\kappa_G(\tilde{x},y^*) \leq \kappa_{G_{B^{(\tilde{x})}}}(\tilde{x},y^*)$, and the choice of $\tilde{x}$ guarantees that $\kappa_{G_{B^{(\tilde{x})}}}(\tilde{x},y^*) = \min_{x \in T'}\kappa_{G_{B^{(x)}}}(x,y^*)$.
\end{proof}

The next corollary follows from the above observation.
\begin{corollary}\label{cor:good-event-implies-success:vertex-version}
	If the good event $\event$ occurs, then the vertex-cut $(L,S,R)$ returned by the algorithm has value at most $\left(1+\epsilon\right)\OPT_{V,y^*}$.
\end{corollary}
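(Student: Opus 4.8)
The plan is to follow the same argument as in the proof of the edge-cut analogue, \Cref{cor:good-event-implies-success:edge-version}, substituting vertex-connectivities for edge-connectivities and Property \ref{property:G_B-P4:vertex-version} for Property \ref{property:G_B-P4:edge-version}. First I would assume that the good event $\event$ occurs; by the definition of $\event$, this guarantees that $\event^{\text{estimates}}$ occurred and that there is some terminal $x^* \in T \cap L^*$. Let $B^{(x^*)} \in \bset_z$ be the level-$z$ batch containing $x^*$, and consider the sparsifier graph $G_{B^{(x^*)}}$ produced for this batch in Phase $z$.

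Next I would invoke Property \ref{property:G_B-P4:vertex-version} for the batch $B = B^{(x^*)}$ at level $i = z$: since $x^* \in B^{(x^*)} \cap L^*$ and $\event$ occurred, the property yields that $x^* \in V(G_{B^{(x^*)}}) \setminus N^-_{G_{B^{(x^*)}}}(y^*)$ and that $\kappa_{G_{B^{(x^*)}}}(x^*,y^*) \leq \left(1 + \frac{z\epsilon}{z}\right)\OPT_{V,y^*} = (1+\epsilon)\OPT_{V,y^*}$, where the last equality uses that $i = z$ in the final phase. In particular $x^* \in V(G_{B^{(x^*)}})$, so $x^*$ belongs to the set $T'$ of terminals considered after the final phase; hence $T' \neq \emptyset$, and $\min_{x \in T'} \kappa_{G_{B^{(x)}}}(x,y^*) \leq \kappa_{G_{B^{(x^*)}}}(x^*,y^*) \leq (1+\epsilon)\OPT_{V,y^*}$.

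Finally I would apply \Cref{obs:termination-value-of-cut-part1:vertex-version}, which---since $T' \neq \emptyset$---bounds the value $w(S)$ of the vertex-cut $(L,S,R)$ returned by the algorithm by $\min_{x \in T'} \kappa_{G_{B^{(x)}}}(x,y^*)$, and the previous paragraph shows this quantity is at most $(1+\epsilon)\OPT_{V,y^*}$, completing the proof. I do not expect any genuine obstacle here: the only points needing a moment's care are that the level-$z$ batch $B^{(x^*)}$ containing $x^*$ is well-defined (immediate from the hierarchical-partition construction, in which $\bset_z$ is a partition of all of $T$) and that instantiating Property \ref{property:G_B-P4:vertex-version} at $i = z$ is exactly what turns the running $\left(1+\frac{i\epsilon}{z}\right)$ bound into the desired $(1+\epsilon)$ bound. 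This corollary is the vertex-cut counterpart of \Cref{cor:good-event-implies-success:edge-version}, and the argument is verbatim the same up to these cosmetic substitutions.
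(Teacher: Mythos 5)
Your proposal is correct and takes exactly the approach the paper intends: the paper's proof of this corollary consists of the single line "The proof of this corollary is analogous to that of \Cref{cor:good-event-implies-success:edge-version}", and your write-up is precisely that analogous argument, with $L^*$, $\kappa$, and Property \ref{property:G_B-P4:vertex-version} substituted for $X^*$, $\lambda$, and Property \ref{property:G_B-P4:edge-version}, respectively.
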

\begin{proof}
	The proof of this corollary is analogous to that of \Cref{cor:good-event-implies-success:edge-version}.
%	Suppose that the good event $\event$ occurs.
%	In this case, by the definition of this event, there exists a vertex $x^* \in T \cap L^*$.
%	Now, consider the graph $G_{B^{(x^*)}}$ corresponding to the level-$z$ batch $B^{(x^*)}$.
%	By Property \ref{property:G_B-P4:vertex-version}, this graph satisfies that $x^* \in V\left(G_{B^{(x^*)}}\right)$ and that $\kappa_{G_{B^{(x^*)}}}(x^*,y^*) \leq \left(1+\epsilon\right)\OPT_{V,y^*}$.
%	In other words, $x^*$ is a vertex of $T'$ that satisfies $\kappa_{G_{B^{(x^*)}}}(x^*,y^*) \leq \left(1+\epsilon\right)\OPT_{V,y^*}$.
%	Therefore, $\min_{x \in T'}\kappa_{G_{B^{(x)}}}(x,y^*) \leq \left(1+\epsilon\right)\OPT_{V,y^*}$.
%	The corollary now follows from \Cref{obs:termination-value-of-cut-part1:vertex-version}.
\end{proof}

\paragraph{Running Time and Correctness Analysis.}
By the description of the algorithm, it always outputs a vertex-cut $(L,S,R)$ in $G$ such that $y^* \in R$.
Furthermore, by \Cref{cor:good-event-implies-success:vertex-version}, whenever event $\event$ occurs, this vertex-cut has value at most $(1+\epsilon)\OPT_{V,y^*}$; by \Cref{eq:probability-of-good-event:vertex-version}, this happens with probability $\Omega\left(\frac{1}{\log^2(m \cdot W)}\right)$.
Furthermore, an analogous running time analysis to the one from \Cref{sec:rooted-minimum-edge-cut} shows that the running time of the above algorithm is $O\left(\frac{m^{1+o(1)} \cdot \log W}{\epsilon}\right)$.
Thus, the algorithm described in this section runs in time $O\left(\frac{m^{1+o(1)} \cdot \log W}{\epsilon}\right)$, and the output vertex-cut $(L,S,R)$ satisfies $w(S) \leq (1+\epsilon)\OPT_{V,y^*}$ with probability at least $\frac{1}{\polylog(mW)}$.
By repeating the algorithm $\polylog(mW)$ times and choosing the lowest-value cut among the obtained vertex-cuts, we can boost the success probability of the algorithm to $\frac{1}{2}$, at the cost of increasing the running time to $O\left(\frac{m^{1+o(1)} \cdot \polylog W}{\epsilon}\right)$.
As we did in \Cref{sec:rooted-minimum-edge-cut}, we can then reduce the running time back to $O\left(\frac{m^{1+o(1)} \cdot \log W}{\epsilon}\right)$ using the reduction from \Cref{sec: low-dependence-on-W}.

\subsection{Execution of a Single Iteration: proof of \Cref{cl:implementing-a-single-iteration-of-main-algorithm:vertex-version}}\label{sec:description-of-iteration-of-main-algorithm:vertex-version}

In this section, we describe the \vertexapproxsparsify procedure from \Cref{cl:implementing-a-single-iteration-of-main-algorithm:vertex-version}.
In the case where $i=1$, this procedure simply returns the graph $G_B=G$, as we did in the procedure \approxsparsify from \Cref{sec:description-of-iteration-of-main-algorithm:edge-version}.
It can be easily verified that when $i=1$, this choice of $G_B=G$ indeed satisfies Properties \ref{property:G_B-P1:vertex-version}-\ref{property:G_B-P4:vertex-version}.
For the remainder of this section, we consider the case where $i>1$.
We begin with a short high-level discussion, explaining the few non-trivial differences between the implementation of \vertexapproxsparsify and that of \approxsparsify.
We then formally present and analyze the subroutine \vertexapproxsparsify in \Cref{sec:approx-sparsify-for-i-larger-than-1:vertex-version}.

\paragraph{High-Level Overview: Difference Between Edge-Version and Vertex-Version.}
The only non-trivial task when adapting the algorithm from \Cref{sec:description-of-iteration-of-main-algorithm:edge-version} to the vertex-version, is the construction of the graph $H_{B'}$.
Recall that $B'$ denotes the level-$(i-1)$ parent batch of the input level-$i$ batch $B$, and recall that, in the edge-version of the \approxsparsify procedure, we constructed a graph $H_{B'}$ from $G_{B'}$ by introducing a new super-source vertex $s$, and connecting it to every vertex of $B \cap A_{B'}$ via an edge of weight $4\widehat{\OPT}$. (Here, $A_{B'}$ denotes the intermediate set that was used to construct $G_{B'}$ in the previous phase. In the edge-version of this procedure, $B \cap A_{B'}$ was equal to $B \cap V(G_{B'})$, while in the vertex-version it will be equal to $B \cap (V(G_{B'}) \setminus N^-_{G_{B'}}(y^*))$.)
In the vertex-version of this procedure, we will also construct a graph $H_{B'}$ that serves the same purpose -- however, we can no longer give the edges outgoing from $s$ a weight of $4\widehat{\OPT}$, as the graph $H_{B'}$ needs to be vertex-weighted rather than edge-weighted.
One natural attempt at solving this would be the following: for each $x \in B \cap A_{B'}$, we introduce the edge $(s,x)$, and then subdivide this edge with a new vertex, whose weight is $4\widehat{\OPT}$.
However, even this solution does not work, for a subtle reason that is described next.
In the analysis from \Cref{sec:description-of-iteration-of-main-algorithm:edge-version}, it was crucial that for every vertex $x \in B \cap A_{B'}$, if the $x$-$y^*$ edge-connectivity in $H_{B'}$ was smaller than $4\widehat{\OPT}$, then $x$ will belong to the same side as $s$ in every minimum $s$-$y^*$ cut in $H_{B'}$; conceptually, this happens because it is possible to ship $\min\{\lambda_{H_{B'}}(x,y^*),4\widehat{\OPT}\}$ units of flow from $s$ to $y^*$ using only flow-paths that exit $s$ through the edge $(s,x)$. (In reality, this condition by itself only suffices to prove that $x$ is on the same side as $s$ in \emph{some} minimum $s$-$y^*$ cut, rather than in every such cut. However, for the sake of simplicity, we will ignore this detail in this high-level overview.)
% , for any $x$-$y^*$ edge-capacitated flow $f$ in $H_{B'}$ that satisfies $\val(f) \leq \widehat{\OPT}$, it is possible to "extend" the flow $f$ to an $s$-$y^*$ flow of the same value, by setting $f(s,x)$ to $\val(f)$.
Now, the issue in the vertex-capacitated setting is that, even if we subdivide the edge $(s,x)$ with a vertex $x^{\tcopy}$ of weight greater or equal to $4\widehat{\OPT}$, and even if the $x$-$y^*$ vertex-connectivity is large, we may still not be able to ship much flow from $s$ to $y^*$ through $x$, because the vertex $x$ itself may have very low weight.
We solve this issue via the same tactic as was employed by \cite{CMT25}: that is, rather than connecting the vertex $x^{\tcopy}$ to the vertex $x$, we instead connect $x^{\tcopy}$ directly to the out-neighbors of $x$; it is then not hard to verify that $\min\{\kappa_{H_{B'}}(x,y^*),4\widehat{\OPT}\}$ units of flow can be shipped from $s$ to $y^*$ via the edge $(s,x^{\tcopy})$. Intuitively, this helps us guarantee that $x^{\tcopy}$ will belong to the same side as $s$ in the minimum $s$-$y^*$ vertex-cut $(L_A,S_A,R_A)$ that the procedure \vertexapproxsparsify computes (which is analogous to the edge-cut $(A,A')$ from \approxsparsify). Then, if $x \notin L_A$, we can simply move $x$ from $S_A \cup R_A$ to $L_A$, and it is not hard to see that the triple $(L_A,S_A,R_A)$ still remains a valid $s$-$y^*$ vertex-cut in $H_{B'}$.

\subsubsection{Implementation of \vertexapproxsparsify for $i>1$.}\label{sec:approx-sparsify-for-i-larger-than-1:vertex-version}
Fix some $1 \leq i \leq z$. We now formally describe how the algorithm computes the graph $G_B$ for a single batch $B \in \bset_i$ during Phase $i$ of the algorithm.
The computation of the graph $G_B$ is performed in two steps.
In the first step, we compute a set $A_B \subseteq V(G) \setminus (\{y^*\} \cup N^-_G(y^*))$ of vertices that satisfies the following two properties.
\begin{properties}{Q'}
	\item\label{property:A_B-P'1:vertex-version} $\vol^+_G(A_B) = \Tilde{O}\left(\frac{m}{2^i \cdot \epsilon}\right)$; and
	\item\label{property:A_B-P'2:vertex-version} if there exists a terminal $x^* \in B \cap L^*$ and the good event $\event$ occurred, then there exists a subset $L' \subseteq A_B$ with $x^* \in L'$ and $\vol^+_G(L') \leq 2\nu$ such that the vertex-cut $(L',N^+_G(L'),V(G) \setminus (L' \cup N^+_G(L')))$ in $G$ has value $w(N^+_G(L')) \leq \left(1 + \frac{i\epsilon}{z}\right)\OPT_{V,y^*}$.
\end{properties}
In the second step, we use this set $A_B$ in order to compute the graph $G_B$ satisfying Properties \ref{property:G_B-P1:vertex-version}-\ref{property:G_B-P4:vertex-version} in time $\Tilde{O}(\vol^+_G(A_B))$.
The implementation of the second step is summed up by the following claim, whose proof is deferred to \Cref{sec:reduction-to-computing-A_B:vertex-version}.

\begin{claim}\label{cl:reduction-to-computing-A_B:vertex-version}
	There exists a procedure that, given a set $A_B \subseteq V(G) \setminus (\{y^*\} \cup N^-_G(y^*))$ satisfying Properties \ref{property:A_B-P'1:vertex-version}-\ref{property:A_B-P'2:vertex-version}, computes a graph $G_B$ satisfying Properties \ref{property:G_B-P1:vertex-version}-\ref{property:G_B-P4:vertex-version}.
	The running time of the procedure is $O\left(\vol^+_G(A_B)\right) = \Tilde{O}\left(\frac{m}{2^i \cdot \epsilon}\right)$.
\end{claim}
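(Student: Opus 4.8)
The plan is to mirror the edge-cut construction of \Cref{cl:reduction-to-computing-A_B:edge-version}, but to be more careful about which vertices of $G$ survive into $G_B$: I will keep $A_B$ (the potential $L$-side of a feasible rooted cut), its out-neighborhood $N^+_G(A_B)$ (the potential separator vertices), and the root $y^*$, and discard everything else. Concretely, set $V(G_B)=A_B\cup N^+_G(A_B)\cup\{y^*\}$, retain every edge of $G$ whose tail lies in $A_B$ (each such edge has its head in $A_B\cup N^+_G(A_B)$, by definition of $N^+_G(A_B)$, so it stays inside $V(G_B)$), add for every $v\in N^+_G(A_B)\setminus A_B$ a single new edge $(v,y^*)$, and keep no other edges; let $w_B$ agree with $w$ on $A_B\cup N^+_G(A_B)$ and assign $y^*$ any positive weight at most $W$. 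The role of the added edges is that they make $N^-_{G_B}(y^*)=N^+_G(A_B)\setminus A_B$ (no edge of $G_B$ leaving $A_B$ reaches $y^*$, since $A_B$ is disjoint from $N^-_G(y^*)$ by hypothesis); hence every set $L'\subseteq V(G_B)\setminus(\{y^*\}\cup N^-_{G_B}(y^*))$ is forced to lie inside $A_B$, which is exactly what lets us delete all out-edges of the $N^+_G(A_B)$ vertices without affecting correctness.

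For the size bound (Property \ref{property:G_B-P1:vertex-version}) I will use the assumption $\OPT_{V,y^*}>0$ made earlier in the section, which implies that every vertex of $G$ other than $y^*$ has out-degree at least $1$ (otherwise such a vertex would be the $L$-side of a weight-$0$ rooted cut). Thus $|A_B|\le \vol^+_G(A_B)$ and $|N^+_G(A_B)\setminus A_B|\le \vol^+_G(A_B)$, so $|V(G_B)|\le 2\vol^+_G(A_B)+1$ and $|E(G_B)|=\vol^+_G(A_B)+|N^+_G(A_B)\setminus A_B|\le 2\vol^+_G(A_B)$; combined with Property \ref{property:A_B-P'1:vertex-version} this gives the $\Tilde{O}(m/(2^i\epsilon))$ bounds. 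For the ``no more than in $G$'' part, $V(G_B)\subseteq V(G)$, and since each vertex of $N^+_G(A_B)\setminus A_B$ has positive out-degree in $G$ we get $|N^+_G(A_B)\setminus A_B|\le \vol^+_G(V(G)\setminus A_B)$, whence $|E(G_B)|\le \vol^+_G(A_B)+\vol^+_G(V(G)\setminus A_B)=m$. The construction is a single pass over the out-adjacency lists of $A_B$, so it runs in $O(\vol^+_G(A_B))=\Tilde{O}(m/(2^i\epsilon))$ time, and Property \ref{property:G_B-P2:vertex-version} is immediate.

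For Property \ref{property:G_B-P3:vertex-version}, the main observation is the one above: any $L'\subseteq V(G_B)\setminus(\{y^*\}\cup N^-_{G_B}(y^*))$ satisfies $L'\subseteq A_B$, and since the out-edges of $A_B$ are untouched in $G_B$ and all land inside $V(G_B)$, we have $N^+_{G_B}(L')=N^+_G(L')$, so $w_B(N^+_{G_B}(L'))=w(N^+_G(L'))$; and $y^*\notin N^+_G(L')$ because $L'\subseteq A_B$ is disjoint from $N^-_G(y^*)$. The ``in particular'' statement follows: for a terminal $x\in B\cap V(G_B)$, if $x\in N^+_G(A_B)\setminus A_B$ then the edge $(x,y^*)$ makes $\kappa_{G_B}(x,y^*)$ infinite; otherwise $x\in A_B$, and given a minimum $x$-$y^*$ vertex-cut $(L',S',R')$ in $G_B$ we have $N^+_{G_B}(L')\subseteq S'$ and $L'\cap N^-_{G_B}(y^*)=\emptyset$ (validity), so $L'\subseteq A_B$, $N^+_{G_B}(L')=N^+_G(L')$, and $(L',N^+_G(L'),V(G)\setminus(L'\cup N^+_G(L')))$ is an $x$-$y^*$ vertex-cut in $G$ of weight $w(N^+_G(L'))\le w_B(S')=\kappa_{G_B}(x,y^*)$. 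For Property \ref{property:G_B-P4:vertex-version}, if some terminal $x^*\in B\cap L^*$ exists and the good event $\event$ holds, then Property \ref{property:A_B-P'2:vertex-version} supplies $L'\subseteq A_B$ with $x^*\in L'$, $\vol^+_G(L')\le 2\nu$, and $w(N^+_G(L'))\le (1+i\epsilon/z)\OPT_{V,y^*}$; since $L'\subseteq A_B$, the triple $(L',N^+_G(L'),V(G_B)\setminus(L'\cup N^+_G(L')))$ is a valid $x^*$-$y^*$ vertex-cut in $G_B$ of value $w_B(N^+_G(L'))=w(N^+_G(L'))\le(1+i\epsilon/z)\OPT_{V,y^*}$ with $\vol^+_G(L')\le 2\nu$, and $x^*\in A_B\subseteq V(G_B)\setminus N^-_{G_B}(y^*)$; this is exactly P'4, and it also gives $\kappa_{G_B}(x^*,y^*)\le(1+i\epsilon/z)\OPT_{V,y^*}$.

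I expect the main obstacle to be Property \ref{property:G_B-P1:vertex-version}: we must keep the possibly large neighborhood $N^+_G(A_B)$ as separator candidates, yet the out-edges of those vertices could vastly exceed the $\Tilde{O}(m/(2^i\epsilon))$ edge budget. The fix — adding the edges $(v,y^*)$ so that no vertex of $N^+_G(A_B)$ can ever lie on the $L$-side of a feasible rooted cut, hence its out-edges become irrelevant and are dropped — is the one genuinely new ingredient relative to the edge-cut case, and the delicate point is checking (via the ``admissible $L'\subseteq A_B$'' observation) that this surgery leaves the connectivity guarantees of P'3 and P'4 intact.
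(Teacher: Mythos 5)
Your construction coincides with the paper's (keep $A_B$, its out-neighborhood, and $y^*$; retain all out-edges of $A_B$; add a fresh edge $(v,y^*)$ for each $v\in N^+_G(A_B)$; note that $N^+_G(A_B)$ is already disjoint from $A_B$ by the paper's definition, so your ``$\setminus A_B$'' is redundant), and your correctness argument rests on the same key facts the paper records as \Cref{obs:cuts-in-G_B-have-same-value-as-in-G:vertex-version} and the observation that any admissible $L'\subseteq V(G_B)\setminus(\{y^*\}\cup N^-_{G_B}(y^*))$ must lie in $A_B$. The proof is correct and takes essentially the same route as the paper's.
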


\paragraph{Obtaining the set $A_B$.}
In the remainder of the current section, we describe the procedure for computing the set $A_B \subseteq V(G) \setminus (\{y^*\} \cup N^-_G(y^*))$ of vertices.
Let $B' \in \bset_{i-1}$ denote the level-$(i-1)$ parent batch of $B$.
In order to compute $A_B$, we will use the graph $G_{B'}$ that we computed for batch $B'$ in the previous phase.
Furthermore, for the remainder of this section, we let $A_{B'}$ denote the set $A_{B'} = V(G_{B'}) \setminus (\set{y^*} \cup N^-_{G_{B'}}(y^*))$.
To compute $A_B$, we first construct a graph $H_{B'}$ from $G_{B'}$ as follows: we introduce a new "super-source" vertex $s$, and, for every terminal $x \in B \cap A_{B'}$, we introduce a new vertex $x^{\tcopy}$ of weight $4\widehat{\OPT}$, as well as an edge $(s,x^{\tcopy})$, and edges from $x^{\tcopy}$ to all out-neighbors of $x$ in $G_{B'}$; Furthermore, for every vertex $v \in A_{B'}$ (including those in $B \cap A_{B'}$), we introduce a new vertex $v^{\vcopy}$ of weight $\frac{\epsilon\cdot\widehat{\OPT} \cdot \deg^+_G(v)}{z \cdot 2\nu}$, as well as edges $(v,v^{\vcopy})$ and $(v^{\vcopy},y^*)$;\footnote{Observe that the weight of $v^{\vcopy}$ is positive, as the assumption that $\OPT_{V,y^*} > 0$ implies that every vertex $v \in V(G) \setminus \set{y^*}$ has $\deg^+_G(v) > 0$.} if $v \in B \cap A_{B'}$, then we additionally introduce an edge $(v^{\tcopy},v^{\vcopy})$.
The next observation about the graph $H_{B'}$ follows immediately from the construction of this graph, and is analogous to \Cref{obs:relating-cut-value-after-modification:edge-version} from \Cref{sec:rooted-minimum-edge-cut}.
\begin{observation}\label{obs:relating-cut-value-after-modification:vertex-version}
	For every set $\hat{L} \subseteq A_{B'}$ of vertices, we have $w_{B'}(N^+_{H_{B'}}(\hat{L})) - w_{B'}(N^+_{G_{B'}}(\hat{L})) = \frac{\epsilon\cdot\widehat{\OPT} \cdot \vol^+_G(\hat{L})}{z \cdot 2\nu}$.
\end{observation}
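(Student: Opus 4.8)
The plan is to directly compute the set $N^+_{H_{B'}}(\hat{L})$ in terms of $N^+_{G_{B'}}(\hat{L})$, by tracking exactly which edges the construction of $H_{B'}$ adds whose tail lies in $A_{B'}$. First I would recall that $H_{B'}$ is obtained from $G_{B'}$ \emph{without deleting any edge of $G_{B'}$}, by adding the super-source $s$, the copy-vertices $\{x^{\tcopy}\}_{x \in B \cap A_{B'}}$, the sink-vertices $\{v^{\vcopy}\}_{v \in A_{B'}}$, and a fixed list of edges. Enumerating that list, the newly-added edges are $(s, x^{\tcopy})$, the edges from $x^{\tcopy}$ to the out-neighbors of $x$ in $G_{B'}$, the edges $(x^{\tcopy}, x^{\vcopy})$, the edges $(v, v^{\vcopy})$, and the edges $(v^{\vcopy}, y^*)$. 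Among these, the only ones whose tail lies in $A_{B'} \subseteq V(G_{B'})$ are the edges $(v, v^{\vcopy})$ for $v \in A_{B'}$: every other new edge has tail $s$, tail $x^{\tcopy}$, or tail $v^{\vcopy}$, none of which belongs to $A_{B'}$.

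Consequently, for every $v \in A_{B'}$ we have $N^+_{H_{B'}}(v) = N^+_{G_{B'}}(v) \cup \{v^{\vcopy}\}$. Since each $v^{\vcopy}$ is a freshly-introduced vertex — in particular it is not in $\hat{L} \subseteq A_{B'}$, and the $v^{\vcopy}$ are pairwise distinct across $v$ — taking the union over $v \in \hat{L}$ and removing $\hat{L}$ yields $N^+_{H_{B'}}(\hat{L}) = N^+_{G_{B'}}(\hat{L}) \cup \{v^{\vcopy} : v \in \hat{L}\}$, and this union is disjoint because $N^+_{G_{B'}}(\hat{L}) \subseteq V(G_{B'})$ whereas the sink-vertices are new. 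This justifies splitting the weight additively: $w_{B'}(N^+_{H_{B'}}(\hat{L})) = w_{B'}(N^+_{G_{B'}}(\hat{L})) + \sum_{v \in \hat{L}} w_{B'}(v^{\vcopy})$.

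Finally I would plug in the weight of each sink-vertex from the construction, namely $w_{B'}(v^{\vcopy}) = \frac{\epsilon\cdot\widehat{\OPT} \cdot \deg^+_G(v)}{z \cdot 2\nu}$, so that $\sum_{v \in \hat{L}} w_{B'}(v^{\vcopy}) = \frac{\epsilon\cdot\widehat{\OPT}}{z \cdot 2\nu}\sum_{v \in \hat{L}} \deg^+_G(v) = \frac{\epsilon\cdot\widehat{\OPT} \cdot \vol^+_G(\hat{L})}{z \cdot 2\nu}$, using the definition $\vol^+_G(\hat{L}) = \sum_{v \in \hat{L}} \deg^+_G(v)$. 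Combining this with the displayed identity from the previous paragraph gives exactly the claimed equality. I do not expect any real obstacle: the only step requiring genuine care is the bookkeeping in the first paragraph — verifying that neither the copy-vertices $x^{\tcopy}$ nor the edges from $x^{\tcopy}$ to out-neighbors of $x$ introduce a new out-neighbor of any vertex of $\hat{L}$ — which is immediate once one observes that all of those edges have their tail outside $A_{B'}$.
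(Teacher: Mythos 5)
Your proposal is correct, and it follows the approach the paper intends — the paper itself gives no proof, stating only that the observation "follows immediately from the construction of this graph." Your bookkeeping of which newly-added edges have tails in $A_{B'}$ (only $(v,v^{\vcopy})$), the resulting disjoint decomposition $N^+_{H_{B'}}(\hat{L}) = N^+_{G_{B'}}(\hat{L}) \sqcup \{v^{\vcopy} : v \in \hat{L}\}$, and the substitution of the sink-vertex weights, is exactly the computation the paper leaves to the reader.
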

We will also use the following definition and observation, which do not have analogous versions in \Cref{sec:rooted-minimum-edge-cut}.
\begin{definition}
	Let $A_{B'}^{\complete}$ denote the set $A_{B'}^{\complete}=A_{B'} \cup \set{x^{\tcopy} \mid x \in B \cap A_{B'}} = V(H_{B'}) \setminus (\set{y^*,s} \cup N^-_{H_{B'}}(y^*))$.
	Furthermore, for every set $\hat{L} \subseteq A_{B'}^{\complete}$, we define the \emph{completed version of $\hat{L}$} to be the set $\hat{L}^{\complete} = \hat{L} \cup \set{x^{\tcopy} \mid x \in B \cap A_{B'} \cap \hat{L}} \cup \set{x \in B \cap A_{B'} \mid x^{\tcopy} \in \hat{L}}$ that is obtained from $\hat{L}$ by adding to it each vertex $x^{\tcopy}$ for which the corresponding vertex $x \in B \cap A_{B'}$ belongs to $\hat{L}$, as well as each vertex $x \in B \cap A_{B'}$ for which the corresponding vertex $x^{\tcopy}$ belongs to $\hat{L}$.
\end{definition}
\begin{observation}\label{obs:out-neighbors-of-complete-set}
	For every set $\hat{L} \subseteq A_{B'}^{\complete}$ of vertices, we have $N^+_{H_{B'}}(\hat{L}^{\complete}) \subseteq N^+_{H_{B'}}(\hat{L})$.
\end{observation}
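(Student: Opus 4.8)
The plan is to reduce the claimed set-inclusion to a simple vertex-level identity: for every terminal $x \in B \cap A_{B'}$, the two vertices $x$ and $x^{\tcopy}$ that the completion operation pairs up have exactly the same out-neighborhood in $H_{B'}$. First I would unwind the construction of $H_{B'}$ to read off these out-edges. Since the construction only \emph{adds} vertices and edges to $G_{B'}$, the vertex $x$ keeps all of its original out-edges, and in addition gains the single edge $(x,x^{\vcopy})$; hence $N^+_{H_{B'}}(x) = N^+_{G_{B'}}(x) \cup \{x^{\vcopy}\}$. On the other hand, the out-edges of $x^{\tcopy}$ are, by construction, the edges from $x^{\tcopy}$ to every out-neighbor of $x$ in $G_{B'}$, together with the extra edge $(x^{\tcopy},x^{\vcopy})$ that is added precisely because $x \in B \cap A_{B'}$; hence $N^+_{H_{B'}}(x^{\tcopy}) = N^+_{G_{B'}}(x) \cup \{x^{\vcopy}\} = N^+_{H_{B'}}(x)$.

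Next I would invoke the definition of $\hat L^{\complete}$: every vertex of $\hat L^{\complete} \setminus \hat L$ is either some $x^{\tcopy}$ whose partner $x$ already lies in $\hat L$, or some $x \in B \cap A_{B'}$ whose partner $x^{\tcopy}$ already lies in $\hat L$. Combined with the identity above, this means that passing from $\hat L$ to $\hat L^{\complete}$ adds only vertices whose out-neighborhood in $H_{B'}$ is already represented, so the union of out-neighborhoods is unchanged: $\bigcup_{v \in \hat L^{\complete}} N^+_{H_{B'}}(v) = \bigcup_{v \in \hat L} N^+_{H_{B'}}(v)$. Finally, since $N^+_{H_{B'}}(\,\cdot\,)$ is by definition this union with the set itself removed, and since $\hat L \subseteq \hat L^{\complete}$ (so removing $\hat L^{\complete}$ deletes at least as much as removing $\hat L$), I get
\[
N^+_{H_{B'}}(\hat L^{\complete}) \;=\; \Bigl(\bigcup_{v \in \hat L^{\complete}} N^+_{H_{B'}}(v)\Bigr) \setminus \hat L^{\complete} \;\subseteq\; \Bigl(\bigcup_{v \in \hat L} N^+_{H_{B'}}(v)\Bigr) \setminus \hat L \;=\; N^+_{H_{B'}}(\hat L),
\]
which is exactly the claim.

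I do not expect a real obstacle here; the only thing requiring care is the bookkeeping of the edges incident to $x$, $x^{\tcopy}$, and $x^{\vcopy}$ in $H_{B'}$ — in particular, noting that the extra edge $(x^{\tcopy},x^{\vcopy})$ is what makes $x^{\vcopy}$ a common out-neighbor of $x$ and $x^{\tcopy}$, and noting that the construction never deletes any edge of $G_{B'}$. I would also record that $\hat L \subseteq A_{B'}^{\complete}$ can never contain any of the sink-copy vertices $v^{\vcopy}$, so no degenerate case arises on that side.
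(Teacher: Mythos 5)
Your proof is correct and takes exactly the same approach as the paper: you establish that each $x \in B \cap A_{B'}$ and its copy $x^{\tcopy}$ have identical out-neighborhoods in $H_{B'}$, and then combine this with $\hat{L} \subseteq \hat{L}^{\complete}$ to conclude. The paper states these two facts and declares the observation follows; you simply spell out the bookkeeping more explicitly.
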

\begin{proof}
	This observation follows from the fact that for every $x \in B \cap A_{B'}$, the vertices $x$ and $x^{\tcopy}$ have the exact same out-neighbors in $H_{B'}$, together with the fact that $\hat{L} \subseteq \hat{L}^{\complete}$.
\end{proof}

\paragraph{Continuing the computation of $A_B$.}
We now continue the description of the algorithm computing $A_B$.
The algorithm uses \Cref{thm:fast-max-s-t-vertex-cut} in order to compute a minimum $s$-$y^*$ vertex-cut $(L_{A},S_{A},R_{A})$ in the graph $H_{B'}$, and chooses $A_B$ to be the set $A_B=\widehat{L_{A}}^{\complete} \setminus V^{\tcopy}$, where $\widehat{L_A} = L_A \setminus \set{s}$ and where $V^{\tcopy}=\set{x^{\tcopy} \mid x \in B \cap A_{B'}}$. (Observe that $\widehat{L_A} \subseteq A_{B'}^{\complete}$ must hold because $(L_{A},S_{A},R_{A})$ is an $s$-$y^*$ vertex-cut in $H_{B'}$ and because $A_{B'}^{\complete}= V(H_{B'}) \setminus (\set{y^*,s} \cup N^-_{H_{B'}}(y^*))$.)
This concludes the description of how we compute $A_B$. Furthermore, it is not hard to verify that $A_B \subseteq V(G) \setminus (\set{y^*} \cup N^-_{G}(y^*))$ holds by using the fact that $A_B \subseteq A_{B'} = V(G) \setminus (\set{y^*} \cup N^-_{G_{B'}}(y^*))$ as well as using Property \ref{property:G_B-P3:vertex-version} of the graph $G_{B'}$.
It now remains to prove that this set indeed satisfies Properties \ref{property:A_B-P'1:vertex-version}-\ref{property:A_B-P'2:vertex-version}.
The next two observation will be useful for this task; these are analogous to Observations \ref{obs:bound-on-size-of-cut-in-H-graph:edge-version} and \ref{obs:minimum-cut-in-modified-graph-contains-distinguished-terminal} from \Cref{sec:rooted-minimum-edge-cut}.

\begin{observation}\label{obs:bound-on-size-of-cut-in-H-graph:vertex-version}
	The value $w_{B'}(S_A)$ of the vertex-cut $(L_A,S_A,R_A)$ in the graph $H_{B'}$ is $w_{B'}(S_A)=O\left(\frac{m \cdot \widehat{\OPT}}{\nu \cdot 2^i}\right)$.
\end{observation}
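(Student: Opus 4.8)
The plan is to mirror the proof of the analogous edge-cut statement (\Cref{obs:bound-on-size-of-cut-in-H-graph:edge-version}). Since $(L_A,S_A,R_A)$ is chosen to be a \emph{minimum} $s$-$y^*$ vertex-cut in $H_{B'}$, it suffices to exhibit \emph{some} $s$-$y^*$ vertex-cut in $H_{B'}$ whose value is $O\!\left(\frac{m\cdot\widehat{\OPT}}{\nu\cdot 2^i}\right)$. The natural candidate is the ``trivial'' cut isolating the super-source, namely $(L,S,R)=\bigl(\set{s},\ N^+_{H_{B'}}(s),\ V(H_{B'})\setminus(\set{s}\cup N^+_{H_{B'}}(s))\bigr)$.

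First I would check that this triple is a feasible $s$-$y^*$ vertex-cut in $H_{B'}$: we have $s\in L$, and by the construction of $H_{B'}$ the only out-neighbors of $s$ are the copy-vertices $\set{x^{\tcopy}\mid x\in B\cap A_{B'}}$, so in particular $y^*\notin\set{s}\cup N^+_{H_{B'}}(s)$, giving $y^*\in R$; moreover every edge leaving $L=\set{s}$ ends inside $S$, so no edge of $H_{B'}$ goes from $L$ to $R$. Next I would bound the value of this cut: by the construction of $H_{B'}$ we have $S=N^+_{H_{B'}}(s)=\set{x^{\tcopy}\mid x\in B\cap A_{B'}}$, and each such vertex $x^{\tcopy}$ has weight exactly $4\widehat{\OPT}$, so $w_{B'}(S)=|B\cap A_{B'}|\cdot 4\widehat{\OPT}\le |B|\cdot 4\widehat{\OPT}$. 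Plugging in the bound $|B|=O\!\left(\frac{m}{\nu\cdot 2^i}\right)$ from \Cref{obs:size-of-batches}, this yields $w_{B'}(S)=O\!\left(\frac{m\cdot\widehat{\OPT}}{\nu\cdot 2^i}\right)$, and hence $w_{B'}(S_A)\le w_{B'}(S)=O\!\left(\frac{m\cdot\widehat{\OPT}}{\nu\cdot 2^i}\right)$ by minimality of $(L_A,S_A,R_A)$.

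I do not expect any real obstacle here: the deeper gadgets in $H_{B'}$ (the sink-copies $v^{\vcopy}$ and the edges into $y^*$) play no role for the trivial cut, so the argument is essentially identical to the edge-cut case. The only point needing (minor) care is the observation that $s$ has no edge directly to $y^*$, which is immediate from the construction of $H_{B'}$.
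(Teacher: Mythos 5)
Your proposal is correct and matches the paper's argument exactly: the paper also invokes minimality of $(L_A,S_A,R_A)$ and exhibits the trivial cut $(\set{s},V^{\tcopy},V(H_{B'})\setminus(\set{s}\cup V^{\tcopy}))$ with $V^{\tcopy}=\set{x^{\tcopy}\mid x\in B\cap A_{B'}}=N^+_{H_{B'}}(s)$, bounding its weight by $|B|\cdot 4\widehat{\OPT}$ via \Cref{obs:size-of-batches}. Your write-up just spells out the feasibility check and the weight computation that the paper leaves as ``it can be verified''.
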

\begin{proof}
	The proof of this observation is analogous to that of \Cref{obs:bound-on-size-of-cut-in-H-graph:edge-version}: specifically, since $(L_A,S_A,R_A)$ is chosen to be a minimum $s$-$y^*$ vertex-cut in $H_{B'}$, it suffices to prove that some $s$-$y^*$ vertex-cut in $H_{B'}$ has value $O\left(\frac{m \cdot \widehat{\OPT}}{\nu \cdot 2^i}\right)$; furthermore, similarly to what we did in the proof of \Cref{obs:bound-on-size-of-cut-in-H-graph:edge-version}, it can be verified that the triple $(\set{s},V^{\tcopy},V(H_{B'})\setminus(\set{s} \cup V^{\tcopy}))$ is such a vertex-cut, where $V^{\tcopy}=\set{x^{\tcopy} \mid x \in B \cap A_{B'}}$.
%	By the construction of the graph $H_{B'}$, the out-neighbors of the vertex $s$ are exactly the vertices of $V^{\tcopy}=\set{x^{\tcopy} \mid x \in B \cap A_{B'}}$. Furthermore, $V^{\tcopy}$ contains at most $|B|$ vertices, and each vertex $x^{\tcopy} \in V^{\tcopy}$ has weight exactly $w_{B'}(x^{\tcopy}) = 4\widehat{\OPT}$.
%	Therefore, there exists an $s$-$y^*$ vertex-cut in $H_{B'}$ of weight at most $|B| \cdot 4\widehat{\OPT}$, which is defined by $(\set{s},V^{\tcopy},V(H_{B'}) \setminus (\set{s} \cup V^{\tcopy}))$.
%	Therefore, as $(L_{A},S_{A},R_{A})$ is chosen to be a minimum $s$-$y^*$ vertex-cut in $H_{B'}$, it follows that the value $w_{B'}(S_A)$ of this cut is at most $|B| \cdot 4\widehat{\OPT}$.
%	The claim follows as \Cref{obs:size-of-batches} states that $|B| = O\left(\frac{m}{\nu \cdot 2^i}\right)$.
\end{proof}

\begin{observation}\label{obs:minimum-cut-in-modified-graph-contains-distinguished-terminal:vertex-version}
	The set $L_A$ contains every vertex $x^{\tcopy} \in V^{\tcopy}$ whose vertex-connectivity to $y^*$ in $H_{B'}$ satisfies $\kappa_{H_{B'}}(x^{\tcopy},y^*) < 4\widehat{\OPT}$.
\end{observation}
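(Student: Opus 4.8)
The plan is to argue by contradiction, closely mirroring the proof of \Cref{obs:minimum-cut-in-modified-graph-contains-distinguished-terminal} in the edge setting, with the rôle played there by the weight-$4\widehat{\OPT}$ edge $(s,x)$ now played by the weight-$4\widehat{\OPT}$ vertex $x^{\tcopy}$. So suppose some $x^{\tcopy} \in V^{\tcopy}$ satisfies $\kappa_{H_{B'}}(x^{\tcopy},y^*) < 4\widehat{\OPT}$ yet $x^{\tcopy} \notin L_A$. By Menger's theorem there is an $x^{\tcopy}$-$y^*$ vertex-cut $(\hat L,\hat S,\hat R)$ in $H_{B'}$ with $w_{B'}(\hat S) < 4\widehat{\OPT}$, so $x^{\tcopy}\in\hat L$ and $y^*\in\hat R$. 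First I would record two easy structural facts: since the unique in-neighbour of $x^{\tcopy}$ in $H_{B'}$ is $s$, and $s\in L_A$, the assumption $x^{\tcopy}\notin L_A$ forces $x^{\tcopy}\in N^+_{H_{B'}}(L_A)\subseteq S_A$; and since $x^{\tcopy}\in\hat L$, we have $x^{\tcopy}\notin N^+_{H_{B'}}(L_A\cup\hat L)$.

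Next I would check that $\bigl(L_A\cup\hat L,\ N^+_{H_{B'}}(L_A\cup\hat L),\ V(H_{B'})\setminus(L_A\cup\hat L\cup N^+_{H_{B'}}(L_A\cup\hat L))\bigr)$ is again an $s$-$y^*$ vertex-cut in $H_{B'}$: it contains $s$ on the left; $y^*\notin L_A\cup\hat L$ since $y^*\in R_A\cap\hat R$; and $y^*$ has no in-neighbour inside $L_A\cup\hat L$, because an in-neighbour of $y^*$ lying in $L_A$ would force $y^*\in N^+_{H_{B'}}(L_A)\subseteq S_A$ (contradicting $y^*\in R_A$), and symmetrically one lying in $\hat L$ would force $y^*\in\hat S$. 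Finally I would bound the weight of its separator: using the elementary containment $N^+_{H_{B'}}(L_A\cup\hat L)\subseteq N^+_{H_{B'}}(L_A)\cup N^+_{H_{B'}}(\hat L)\subseteq S_A\cup\hat S$ together with $x^{\tcopy}\in S_A$ and $x^{\tcopy}\notin N^+_{H_{B'}}(L_A\cup\hat L)$, one gets
\[
 w_{B'}\bigl(N^+_{H_{B'}}(L_A\cup\hat L)\bigr)\ \le\ w_{B'}\bigl((S_A\cup\hat S)\setminus\{x^{\tcopy}\}\bigr)\ \le\ w_{B'}(S_A)-w_{B'}(x^{\tcopy})+w_{B'}(\hat S)\ <\ w_{B'}(S_A),
\]
where the last inequality uses $w_{B'}(x^{\tcopy})=4\widehat{\OPT}>w_{B'}(\hat S)$. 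This contradicts the choice of $(L_A,S_A,R_A)$ as a minimum $s$-$y^*$ vertex-cut, which would complete the proof.

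The only delicate point — and what I expect to be the main obstacle — is the bookkeeping showing that $x^{\tcopy}$ genuinely drops out of the separator once we enlarge the left side to $L_A\cup\hat L$, and that the weight thereby saved (namely $w_{B'}(x^{\tcopy})=4\widehat{\OPT}$) strictly exceeds the weight $w_{B'}(\hat S)$ we pay to absorb $\hat L$. This is exactly the place where the design of $H_{B'}$ — hanging $x^{\tcopy}$ off $s$ with weight $4\widehat{\OPT}$ and wiring it to the out-neighbours of $x$ rather than to $x$ itself (so that $x^{\tcopy}$'s only in-neighbour is $s$) — is what makes the argument go through, just as the weight-$4\widehat{\OPT}$ edge $(s,x)$ did in the edge-cut version.
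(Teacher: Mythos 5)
Your proof is correct. It reaches the same contradiction as the paper but by a slightly different technical route: you enlarge the left side to $L_A\cup\hat L$ and bound the resulting separator $N^+_{H_{B'}}(L_A\cup\hat L)$ via the containments $N^+_{H_{B'}}(L_A\cup\hat L)\subseteq N^+_{H_{B'}}(L_A)\cup N^+_{H_{B'}}(\hat L)\subseteq (S_A\cup\hat S)\setminus\{x^{\tcopy}\}$, which is a faithful vertex-analogue of how the paper handles the corresponding edge-cut observation. The paper's own vertex proof instead defines the new separator directly as $\tilde S=(S_A\setminus\{x^{\tcopy}\})\cup\hat S$ and verifies it is a valid $s$-$y^*$ separator by showing every $s$-$y^*$ path must hit $\tilde S$ (splitting on whether the path's first $S_A$-vertex equals $x^{\tcopy}$), then takes $\tilde L$ to be the set of vertices reachable from $s$ avoiding $\tilde S$. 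Both proofs hinge on the same two facts — $x^{\tcopy}\in S_A$ (via the edge $(s,x^{\tcopy})$) and $w_{B'}(\hat S)<4\widehat{\OPT}=w_{B'}(x^{\tcopy})$ — so the underlying uncrossing idea is identical; your version buys closer parallelism with the edge-cut proof and makes the monotonicity of $N^+$ explicit, while the paper's avoids reasoning about $N^+$ of unions by constructing the separator first and only afterwards extracting the two sides. Either is fine. One small stylistic note: in your first structural fact you invoke ``the unique in-neighbour of $x^{\tcopy}$ is $s$'', but uniqueness is not needed — $s$ being \emph{an} in-neighbour of $x^{\tcopy}$ with $s\in L_A$ already forces $x^{\tcopy}\in N^+_{H_{B'}}(L_A)\subseteq S_A$ once $x^{\tcopy}\notin L_A$.
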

\begin{proof}
	Suppose towards contradiction that some vertex $x^{\tcopy} \in V^{\tcopy}$ does not belong to $L_A$, and yet has $\kappa_{H_{B'}}(x^{\tcopy},y^*) < 4\widehat{\OPT}$. Furthermore, fix one such vertex $x^{\tcopy}$ for the remainder of the proof.
	We will demonstrate that in this case, there must exist an $s$-$y^*$ vertex-cut $(\tilde{L},\tilde{S},\tilde{R})$ in $H_{B'}$ of weight smaller than that of $(L_A,S_A,R_A)$, in contradiction to the choice of $(L_A,S_A,R_A)$ as a minimum $s$-$y^*$ vertex-cut in $H_{B'}$.
	Specifically, by the assumption that $\kappa_{H_{B'}}(x^{\tcopy},y^*) < 4\widehat{\OPT}$, there must exist an $x^{\tcopy}$-$y^*$ vertex-cut $(\hat{L},\hat{S},\hat{R})$ in $H_{B'}$ with $w_{B'}(\hat{S}) < 4\widehat{\OPT}$; we define the set $\tilde{S}$ of the vertex-cut $(\tilde{L},\tilde{S},\tilde{R})$ to be $\tilde{S} = (S_A \setminus \set{x^{\tcopy}}) \cup \hat{S}$.
	Before revealing the definition of the sets $\tilde{L}$ and $\tilde{R}$, we will prove that the weight of $\tilde{S}$ is smaller than that of $S_A$: indeed, since $\tilde{S} = (S_A \setminus \set{x^{\tcopy}}) \cup \hat{S}$ and since $w_{B'}(\hat{S}) < 4\widehat{\OPT} = w_{B'}(x^{\tcopy})$, it suffices to show that $x^{\tcopy} \in S_A$; furthermore, recall that $x^{\tcopy}$ was chosen as a vertex that does not belong to $L_A$, and observe that $x^{\tcopy} \in L_A \cup S_A$ holds because $(L_A,S_A,R_A)$ is an $s$-$y^*$ vertex-cut in the graph $H_{B'}$, and this graph contains an edge $(s,x^{\tcopy})$. This concludes the proof that the weight of $\tilde{S}$ is smaller than that of $S_A$.
	
	Next, we must show that there exist sets $\tilde{L},\tilde{R} \subseteq V(H_{B'})$ such that $(\tilde{L},\tilde{S},\tilde{R})$ is an $s$-$y^*$ vertex-cut in $H_{B'}$.
	By the definition of $\tilde{S}$, it is not hard to verify that $s,y^* \notin \tilde{S}$.
	Therefore, it suffices to show that every $s$-$y^*$ path in $H_{B'}$ must traverse at least one vertex of $\tilde{S}$: we can then choose $\tilde{L}$ to be the set of vertices that are reachable from $s$ via paths that do not traverse vertices of $\tilde{S}$, and choose $\tilde{R}$ to be the set of remaining vertices in $H_{B'}$.
	We show this now:
	indeed, consider any $s$-$y^*$ path $P$ in $H_{B'}$; since $(L_A,S_A,R_A)$ is an $s$-$y^*$ vertex-cut in $H_{B'}$, it follows that $P$ must traverse a vertex $v$ of $S_A$; if $v \neq x^{\tcopy}$ then $v \in S_A \setminus x^{\tcopy} \subseteq \tilde{S}$ and we are done; otherwise, $v = x^{\tcopy}$, and as $(\hat{L},\hat{S},\hat{R})$ is an $x^{\tcopy}$-$y^*$ vertex-cut, it follows that the path $P$ must visit some vertex of $\hat{S} \subseteq \tilde{S}$ on the way from $v$ to $y^*$.
	This concludes the proof of \Cref{obs:minimum-cut-in-modified-graph-contains-distinguished-terminal:vertex-version}.
\end{proof}

The following corollary of \Cref{obs:minimum-cut-in-modified-graph-contains-distinguished-terminal:vertex-version} will also be useful.

\begin{corollary}\label{cor:minimum-cut-in-modified-graph-contains-distinguished-terminal:vertex-version}
	The set $\widehat{L_A}^{\complete}$ contains every terminal $x \in B \cap A_{B'}$ whose vertex-connectivity to $y^*$ in $H_{B'}$ satisfies $\kappa_{H_{B'}}(x,y^*) < 4\widehat{\OPT}$.
\end{corollary}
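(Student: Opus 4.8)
The plan is to reduce the corollary to \Cref{obs:minimum-cut-in-modified-graph-contains-distinguished-terminal:vertex-version} through the auxiliary vertex $x^{\tcopy}$. Fix a terminal $x \in B \cap A_{B'}$ with $\kappa_{H_{B'}}(x,y^*) < 4\widehat{\OPT}$. Since $x^{\tcopy} \neq s$, the definition of the completed set immediately gives $x \in \widehat{L_A}^{\complete}$ as soon as $x^{\tcopy} \in L_A$, and \Cref{obs:minimum-cut-in-modified-graph-contains-distinguished-terminal:vertex-version} guarantees $x^{\tcopy} \in L_A$ whenever $\kappa_{H_{B'}}(x^{\tcopy},y^*) < 4\widehat{\OPT}$. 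So the whole corollary will follow once I establish the inequality $\kappa_{H_{B'}}(x^{\tcopy},y^*) \le \kappa_{H_{B'}}(x,y^*)$.

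To prove that inequality, I would take a minimum $x$-$y^*$ vertex-cut $(\hat{L},C,\hat{R})$ in $H_{B'}$ and show that $C$ is also an $x^{\tcopy}$-$y^*$ vertex-cut. First, $s$ is the only in-neighbor of $x^{\tcopy}$ in $H_{B'}$ and $s$ itself has no incoming edges, so $x^{\tcopy}$ lies on no $x$-$y^*$ path; together with $w_{B'}(x^{\tcopy}) = 4\widehat{\OPT} > 0$, the minimality of $(\hat{L},C,\hat{R})$ then forces $x^{\tcopy} \notin C$, while $y^* \notin C$ holds since $(\hat{L},C,\hat{R})$ is a vertex-cut. It remains to check that $C$ separates $x^{\tcopy}$ from $y^*$. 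If not, pick an $x^{\tcopy}$-$y^*$ path $P$ in $H_{B'}$ avoiding $C$, and let $u_1$ be its second vertex, an out-neighbor of $x^{\tcopy}$. Using the fact (from the proof of \Cref{obs:out-neighbors-of-complete-set}) that $x$ and $x^{\tcopy}$ have exactly the same out-neighbors in $H_{B'}$, the edge $(x,u_1)$ also exists, so prepending $x$ in place of $x^{\tcopy}$ turns $P$ into a walk from $x$ to $y^*$ that still avoids $C$ (as $x \notin C$ and every other vertex of $P$ avoids $C$); shortcutting this walk yields an $x$-$y^*$ path avoiding $C$, contradicting that $(\hat{L},C,\hat{R})$ cuts $x$ from $y^*$. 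Hence $C$ is an $x^{\tcopy}$-$y^*$ vertex-cut, so $\kappa_{H_{B'}}(x^{\tcopy},y^*) \le w_{B'}(C) = \kappa_{H_{B'}}(x,y^*) < 4\widehat{\OPT}$, which completes the argument by the reduction above.

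The only delicate point is the path-rerouting step: after swapping $x^{\tcopy}$ for $x$ the resulting walk need not be simple (for instance if $x$ already occurs later on $P$), so one must pass to a simple subpath before concluding; one also has to record that a minimum cut may be assumed to avoid the ``isolated-upstream'' vertex $x^{\tcopy}$, which follows from minimality since $w_{B'}(x^{\tcopy}) > 0$. Neither point is substantive, and the rest is just unfolding the definition of the completed set and invoking \Cref{obs:minimum-cut-in-modified-graph-contains-distinguished-terminal:vertex-version}, so I do not anticipate any real obstacle here.
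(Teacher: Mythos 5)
Your proposal follows exactly the route the paper takes: observe that $x$ and $x^{\tcopy}$ have identical out-neighbor sets in $H_{B'}$, deduce $\kappa_{H_{B'}}(x^{\tcopy},y^*) \le \kappa_{H_{B'}}(x,y^*) < 4\widehat{\OPT}$, invoke \Cref{obs:minimum-cut-in-modified-graph-contains-distinguished-terminal:vertex-version} to place $x^{\tcopy}$ in $L_A$, and then unfold the definition of $\widehat{L_A}^{\complete}$. The paper compresses the connectivity comparison into a single sentence while you spell out the cut/path-rerouting details, but the substance and structure of the argument are the same, and your extra care about simplicity of the rerouted walk and about $x^{\tcopy}\notin C$ is correct.
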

\begin{proof}
	Consider any terminal $x \in B \cap A_{B'}$ with $\kappa_{H_{B'}}(x,y^*) < 4\widehat{\OPT}$.
	Since the out-neighbors of $x$ in $H_{B'}$ are exactly the same as the out-neighbors of $x^{\tcopy}$, it follows that $\kappa_{H_{B'}}(x^{\tcopy},y^*) < 4\widehat{\OPT}$ holds as well.
	Therefore, by \Cref{obs:minimum-cut-in-modified-graph-contains-distinguished-terminal:vertex-version}, the set $L_A$ contains $x^{\tcopy}$.
	Since $\widehat{L_A}$ is defined as $\widehat{L_A} = L_A \setminus \set{s}$, it follows that $x^{\tcopy} \in \widehat{L_A}$, and thus $x \in \widehat{L_A}^{\complete}$, as needed.
\end{proof}

Recall that it remains to prove that the set $A_B=\widehat{L_{A}}^{\complete} \setminus V^{\tcopy}$ satisfies Properties \ref{property:A_B-P'1:vertex-version}-\ref{property:A_B-P'2:vertex-version}.
We begin by proving that it satisfies Property \ref{property:A_B-P'1:vertex-version}.

\begin{claim}
	The set $A_B$ satisfies Property \ref{property:A_B-P'1:vertex-version}. That is, $\vol^+_G(A_B) = \Tilde{O}\left(\frac{m}{2^i \cdot \epsilon}\right)$.
\end{claim}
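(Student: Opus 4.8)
The plan is to mirror the argument of \Cref{cl:A_B-satisfies-property-P'1:edge-version} from the edge setting, with the role of the low-capacity ``leak'' edges $(v,y^*)$ now played by the low-weight auxiliary vertices $v^{\vcopy}$. Concretely, I will first show that for every vertex $v \in A_B$, the corresponding vertex $v^{\vcopy}$ of $H_{B'}$ (which exists since $A_B \subseteq A_{B'}$, as already noted) lies in the separator $S_A$ of the minimum $s$-$y^*$ vertex-cut $(L_A,S_A,R_A)$ computed by the procedure. Granting this, since the map $v \mapsto v^{\vcopy}$ is injective on $A_{B'}$ and $w_{B'}(v^{\vcopy}) = \frac{\epsilon\cdot\widehat{\OPT}\cdot\deg^+_G(v)}{z\cdot 2\nu}$, we get $w_{B'}(S_A) \geq \sum_{v \in A_B} w_{B'}(v^{\vcopy}) = \frac{\epsilon\cdot\widehat{\OPT}}{z\cdot 2\nu}\cdot\vol^+_G(A_B)$, and hence $\vol^+_G(A_B) \leq w_{B'}(S_A)\cdot\frac{z\cdot 2\nu}{\epsilon\cdot\widehat{\OPT}}$. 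Plugging in the bound $w_{B'}(S_A) = O\!\left(\frac{m\cdot\widehat{\OPT}}{\nu\cdot 2^i}\right)$ from \Cref{obs:bound-on-size-of-cut-in-H-graph:vertex-version} yields $\vol^+_G(A_B) = O\!\left(\frac{z\cdot m}{\epsilon\cdot 2^i}\right)$, which is $\Tilde{O}\!\left(\frac{m}{2^i\cdot\epsilon}\right)$ because $z = \ceil{\log|T|} = \Tilde{O}(1)$.

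The main work, and the only place where the vertex setting differs nontrivially from the edge setting, is establishing that $v^{\vcopy}\in S_A$ for each $v\in A_B$. Recall $A_B = \widehat{L_A}^{\complete}\setminus V^{\tcopy}$, so $v\notin V^{\tcopy}$ and $v\in\widehat{L_A}^{\complete}$; by the definition of the completed version, this means that either (i) $v\in\widehat{L_A}\subseteq L_A$, or (ii) $v\in B\cap A_{B'}$ with $v^{\tcopy}\in\widehat{L_A}\subseteq L_A$. In case (i), the edge $(v,v^{\vcopy})$ of $H_{B'}$ witnesses $v^{\vcopy}\in N^+_{H_{B'}}(L_A)$; in case (ii), the edge $(v^{\tcopy},v^{\vcopy})$ of $H_{B'}$ — which is present precisely because $v\in B\cap A_{B'}$ — witnesses $v^{\vcopy}\in N^+_{H_{B'}}(L_A)$. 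In either case, since $(L_A,S_A,R_A)$ is a valid vertex-cut we get $v^{\vcopy}\in S_A\cup L_A$. Finally, $v^{\vcopy}\notin L_A$: indeed $v^{\vcopy}\in N^-_{H_{B'}}(y^*)$, so $v^{\vcopy}\notin A_{B'}^{\complete}$, and since $\widehat{L_A}\subseteq A_{B'}^{\complete}$ we have $v^{\vcopy}\notin\widehat{L_A}$, while also $v^{\vcopy}\neq s$; hence $v^{\vcopy}\notin L_A$. Therefore $v^{\vcopy}\in S_A$, as claimed.

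The remaining steps are routine bookkeeping: the vertices $\{v^{\vcopy}\}_{v\in A_B}$ are pairwise distinct, so the sum of their weights is at most $w_{B'}(S_A)$, which together with $\vol^+_G(A_B) = \sum_{v\in A_B}\deg^+_G(v)$ gives the displayed inequality above, and then \Cref{obs:bound-on-size-of-cut-in-H-graph:vertex-version} together with $z=\Tilde{O}(1)$ finishes the proof. I expect case (ii) in the second paragraph to be the only subtle point: one must remember that the auxiliary edge $(v^{\tcopy},v^{\vcopy})$ was inserted into $H_{B'}$ exactly for the terminals $v\in B\cap A_{B'}$, which is precisely the situation in which a vertex of $A_B$ can fail to lie in $L_A$ itself (having instead entered $A_B$ via its copy $v^{\tcopy}$ under the completion operation).
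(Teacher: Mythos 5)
Your proof is correct and follows essentially the same route as the paper's: identify the distinct sink-copy vertices $\{v^{\vcopy}\}_{v\in A_B}$ inside the separator $S_A$, sum their weights to lower-bound $w_{B'}(S_A)$ by $\frac{\epsilon\cdot\widehat{\OPT}\cdot\vol^+_G(A_B)}{z\cdot 2\nu}$, and combine with the upper bound on $w_{B'}(S_A)$ from \Cref{obs:bound-on-size-of-cut-in-H-graph:vertex-version}. The only cosmetic difference is in how you rule out $v^{\vcopy}\in L_A$: you derive it from $v^{\vcopy}\in N^-_{H_{B'}}(y^*)$ and $\widehat{L_A}\subseteq A_{B'}^{\complete}$, whereas the paper observes directly that the edge $(v^{\vcopy},y^*)$ would otherwise cross from $L_A$ to $R_A$; both are valid, and your case analysis for membership of an in-neighbor of $v^{\vcopy}$ in $L_A$ matches the paper's argument.
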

\begin{proof}
	Recall that for every vertex $v \in A_B \subseteq A_{B'}$, the graph $H_{B'}$ contains a vertex $v^{\vcopy}$ of weight $w_{B'}(v^{\vcopy})=\frac{\epsilon\cdot\widehat{\OPT} \cdot \deg^+_G(v)}{z \cdot 2\nu}$, as well as edges $(v,v^{\vcopy})$ and $(v^{\vcopy},y^*)$; and, if $v \in B \cap A_{B'}$, then the graph $H_{B'}$ additionally contains the edge $(v^{\tcopy},v^{\vcopy})$.
	We begin the proof of the claim by showing that for each vertex $v \in A_B$, the set $N^+_{H_{B'}}(L_A)$ must include the vertex $v^{\vcopy}$.
	Indeed, recall that for each vertex $v \in A_B = \widehat{L_{A}}^{\complete} \setminus V^{\tcopy}$, the set $L_A$ must include either $v$, or possibly $v^{\tcopy}$ in the case where $v \in B \cap A_{B'}$; either way, the set $L_A$ contains some in-neighbor of $v^{\vcopy}$.
	Observe further that $L_A$ cannot contain the vertex $v^{\vcopy}$ itself, since $(L_A,S_A,R_A)$ is an $s$-$y^*$ vertex-cut in $H_{B'}$, and this graph includes an edge $(v^{\vcopy},y^*)$. Thus, for each such $v \in A_B$, the set $N^+_{H_{B'}}(L_A)$ must include $v^{\vcopy}$.
	The remainder of this proof is analogous to that of \Cref{cl:A_B-satisfies-property-P'1:edge-version};
	Specifically, as $N^+_{H_{B'}}(L_A) \subseteq S_A$, it follows that $w_{B'}(S_A) \geq w_{B'}(\set{v^{\vcopy} \mid v \in A_B}) = \frac{\epsilon\cdot\widehat{\OPT} \cdot \vol^+_G(A_B)}{z \cdot 2\nu}$.
	This means that $\vol^+_G(A_B)$ is upper-bounded by $\vol^+_G(A_B) \leq w_{B'}(S_A) \cdot \frac{z \cdot 2\nu}{\epsilon\cdot\widehat{\OPT}}$.
	By plugging the upper-bound $w_{B'}(S_A)=O\left(\frac{m \cdot \widehat{\OPT}}{\nu \cdot 2^i}\right)$ from \Cref{obs:bound-on-size-of-cut-in-H-graph:vertex-version} into this last inequality, we get that $\vol^+_G(A_B) \leq O\left(\frac{z \cdot m}{\epsilon \cdot 2^{i}}\right)$.
	The corollary now follows as $z = \Tilde{O}(1)$.
\end{proof}

Lastly, it remains to prove that the set $A_B$ satisfies \ref{property:A_B-P'2:vertex-version}, which we do in the next claim. The proof of this claim is similar to that of \Cref{cl:A_B-satisfies-property-P'2:edge-version}.

\begin{claim}\label{cl:A_B-satisfies-property-P'2:vertex-version}
	The set $A_B$ satisfies Property \ref{property:A_B-P'2:vertex-version}.
\end{claim}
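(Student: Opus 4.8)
The plan is to mirror the proof of \Cref{cl:A_B-satisfies-property-P'2:edge-version}, replacing boundary edges by out-neighbourhoods and carefully tracking the $\complete$-operation. Assume a terminal $x^*\in B\cap L^*$ exists and that $\event$ occurred. Since $B\subseteq B'$ we have $x^*\in B'\cap L^*$, so Property \ref{property:G_B-P4:vertex-version} applied to $G_{B'}$ (built in Phase $i-1$) gives $x^*\in V(G_{B'})\setminus N^-_{G_{B'}}(y^*)$, hence $x^*\in A_{B'}$, together with an $x^*$-$y^*$ vertex-cut $(L'_{B'},S'_{B'},R'_{B'})$ in $G_{B'}$ with $\vol^+_G(L'_{B'})\le 2\nu$ and $w_{B'}(S'_{B'})\le\bigl(1+\tfrac{(i-1)\epsilon}{z}\bigr)\OPT_{V,y^*}$; in particular $x^*\in L'_{B'}$, $y^*\in R'_{B'}$, and $w_{B'}(N^+_{G_{B'}}(L'_{B'}))\le w_{B'}(S'_{B'})$. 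A short check gives $L'_{B'}\subseteq A_{B'}$: if some $v\in L'_{B'}$ had an edge to $y^*$ in $G_{B'}$ then $y^*\in N^+_{G_{B'}}(L'_{B'})\subseteq S'_{B'}$, contradicting $y^*\in R'_{B'}$. I then take the candidate set to be $L'=\widehat{L_A}^{\complete}\cap L'_{B'}$. Since $L'_{B'}\subseteq A_{B'}$ contains no $\tcopy$-vertices, $L'\subseteq\widehat{L_A}^{\complete}\setminus V^{\tcopy}=A_B$, and $\vol^+_G(L')\le\vol^+_G(L'_{B'})\le 2\nu$ because $L'\subseteq L'_{B'}$; moreover, by Property \ref{property:G_B-P3:vertex-version} of $G_{B'}$ applied to $L'\subseteq A_{B'}$ we get $y^*\notin N^+_G(L')$, so $(L',N^+_G(L'),V(G)\setminus(L'\cup N^+_G(L')))$ is a genuine vertex-cut in $G$. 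It remains to verify $x^*\in L'$ and $w(N^+_G(L'))\le\bigl(1+\tfrac{i\epsilon}{z}\bigr)\OPT_{V,y^*}$.

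For the weight bound I would first establish, as a preliminary step, that $M:=\{s\}\cup\widehat{L_A}^{\complete}$ is the source side of \emph{some} minimum $s$-$y^*$ vertex-cut in $H_{B'}$. Indeed $M\supseteq L_A$, and using \Cref{obs:out-neighbors-of-complete-set} together with $N^+_{H_{B'}}(L_A)\subseteq S_A$ and $N^+_{H_{B'}}(s)=\set{x^{\tcopy}\mid x\in B\cap A_{B'}}\subseteq L_A\cup S_A$, one checks $N^+_{H_{B'}}(M)\subseteq L_A\cup S_A$; hence the triple with source side $M$ and middle set $(L_A\cup S_A)\setminus M\subseteq S_A$ is a valid $s$-$y^*$ vertex-cut of weight at most $w_{B'}(S_A)$, so it is again minimum. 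Applying \Cref{lem:submodularity-of-minimum-cuts:vertex-version} to this minimum cut in $H_{B'}$ with $X=L'_{B'}$ yields $w_{B'}(N^+_{H_{B'}}(L'_{B'}\cap M))\le w_{B'}(N^+_{H_{B'}}(L'_{B'}))$, and $L'_{B'}\cap M=L'_{B'}\cap\widehat{L_A}^{\complete}=L'$ since $s\notin L'_{B'}$. Chaining this with Property \ref{property:G_B-P3:vertex-version} of $G_{B'}$ (giving $w(N^+_G(L'))\le w_{B'}(N^+_{G_{B'}}(L'))\le w_{B'}(N^+_{H_{B'}}(L'))$, the last step because $H_{B'}$ only adds edges out of vertices of $A_{B'}$) and with \Cref{obs:relating-cut-value-after-modification:vertex-version} applied to $\hat L=L'_{B'}$ (so $w_{B'}(N^+_{H_{B'}}(L'_{B'}))=w_{B'}(N^+_{G_{B'}}(L'_{B'}))+\tfrac{\epsilon\cdot\widehat{\OPT}\cdot\vol^+_G(L'_{B'})}{z\cdot 2\nu}\le\bigl(1+\tfrac{(i-1)\epsilon}{z}\bigr)\OPT_{V,y^*}+\tfrac{\epsilon\cdot\widehat{\OPT}}{z}$, using $\vol^+_G(L'_{B'})\le 2\nu$ and then $\widehat{\OPT}\le\OPT_{V,y^*}$ under $\event$) gives $w(N^+_G(L'))\le\bigl(1+\tfrac{i\epsilon}{z}\bigr)\OPT_{V,y^*}$, as needed.

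Finally, for $x^*\in L'$: since $x^*\in L'_{B'}$, it suffices to show $x^*\in\widehat{L_A}^{\complete}$, and by \Cref{cor:minimum-cut-in-modified-graph-contains-distinguished-terminal:vertex-version} (together with $x^*\in B\cap A_{B'}$) this reduces to $\kappa_{H_{B'}}(x^*,y^*)<4\widehat{\OPT}$. This holds because $(L'_{B'},N^+_{H_{B'}}(L'_{B'}),\cdot)$ is a feasible $x^*$-$y^*$ vertex-cut in $H_{B'}$ — it contains $x^*$, and its middle set avoids $y^*$ since the only new out-neighbours of $L'_{B'}\subseteq A_{B'}$ in $H_{B'}$ are the $\vcopy$-vertices — so $\kappa_{H_{B'}}(x^*,y^*)\le w_{B'}(N^+_{H_{B'}}(L'_{B'}))\le\bigl(1+\tfrac{i\epsilon}{z}\bigr)\OPT_{V,y^*}\le(1+\epsilon)\OPT_{V,y^*}<2\OPT_{V,y^*}<4\widehat{\OPT}$, using $i\le z$, $\epsilon<1$, and $\OPT_{V,y^*}<2\widehat{\OPT}$ under $\event$. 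Combining the three bullet points proves Property \ref{property:A_B-P'2:vertex-version}. The main obstacle is the preliminary step of the second paragraph — verifying that "completing" the minimum $s$-$y^*$ vertex-cut $(L_A,S_A,R_A)$ neither increases its weight nor destroys its validity, so that submodularity can be invoked at $M=\{s\}\cup\widehat{L_A}^{\complete}$; everything else is a routine translation of the edge-case argument from \Cref{sec:approx-sparsify-for-i-larger-than-1:edge-version}.
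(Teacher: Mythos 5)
Your proof is correct and takes essentially the same route as the paper: you define $L' = \widehat{L_A}^{\complete} \cap L'_{B'}$ (which equals the paper's $A_B \cap L'_{B'}$ since $L'_{B'}$ contains no $\tcopy$-vertices), establish that $\{s\}\cup\widehat{L_A}^{\complete}$ is the source side of a minimum $s$-$y^*$ vertex-cut in $H_{B'}$ exactly as in the paper's \Cref{cl:upper-bound-on-value-of-new-promised-cut:vertex-version}, apply submodularity against $L'_{B'}$, and conclude $x^*\in L'$ via \Cref{cor:minimum-cut-in-modified-graph-contains-distinguished-terminal:vertex-version}. The only (cosmetic) difference is that you make explicit a couple of sanity checks the paper leaves implicit, namely $L'_{B'}\subseteq A_{B'}$ and $y^*\notin N^+_G(L')$.
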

\begin{proof}
	Suppose that there exists a terminal $x^* \in B \cap L^*$, and suppose that the good event $\event$ occurred.
	We need to prove that there exists a subset $L' \subseteq A_B$ with $x^* \in L'$ and $\vol^+_G(L') \leq 2\nu$ such that the vertex-cut $(L',N^+_G(L'),V(G) \setminus (L' \cup N^+_G(L')))$ in $G$ has value $w(N^+_G(L')) \leq \left(1 + \frac{i\epsilon}{z}\right)\OPT_{V,y^*}$.
	
	As $x^* \in B \cap L^*$ and as $B'$ is the parent batch of $B$, it follows that $x^* \in B' \cap L^*$.
	Thus, by Property \ref{property:G_B-P4:vertex-version} of the graph $G_{B'}$, $x^* \in V(G_{B'}) \setminus N^-_{G_{B'}}(y^*) = A_{B'} \cup \set{y^*}$ holds, and there exists an $x^*$-$y^*$ vertex-cut $(L'_{B'},S'_{B'},R'_{B'})$ in $G_{B'}$ of value $w_{B'}(S'_{B'}) \leq \left(1 + \frac{(i-1)\epsilon}{z}\right)\OPT_{V,y^*}$, that additionally satisfies $\vol^+_G(L'_{B'}) \leq 2\nu$.
	
	Let $L' = A_B \cap L'_{B'}$.
	We now prove that $x^* \in L'$ and $\vol^+_G(L') \leq 2\nu$, and that the vertex-cut $(L',N^+_G(L'),V(G) \setminus (L' \cup N^+_G(L')))$ in $G$ has value $w(N^+_G(L')) \leq \left(1 + \frac{i\epsilon}{z}\right)\OPT_{V,y^*}$.
	Indeed, the fact that $\vol^+_G(L') \leq 2\nu$ follows from the fact that $L' \subseteq L'_{B'}$, and that $\vol^+_G(L'_{B'}) \leq 2\nu$.
	It remains to prove that $x^* \in L'$, as well as that the vertex-cut $(L',N^+_G(L'),V(G) \setminus (L' \cup N^+_G(L')))$ in $G$ has value $w(N^+_G(L')) \leq \left(1 + \frac{i\epsilon}{z}\right)\OPT_{V,y^*}$.
	We begin by proving the latter statement.
	Specifically, this latter statement follows directly from the next claim.
	
	\begin{claim}\label{cl:upper-bound-on-value-of-new-promised-cut:vertex-version}
		The three inequalities $w(N^+_G(L')) \leq w_{B'}(N^+_{H_{B'}}(L')) \leq w_{B'}(N^+_{H_{B'}}(L'_{B'})) \leq \left(1 + \frac{i\epsilon}{z}\right)\OPT_{V,y^*}$ all hold.
	\end{claim}
	\begin{proof}
		We begin by proving the first inequality of the claim. That is, we begin by proving that $w(N^+_G(L')) \leq w_{B'}(N^+_{H_{B'}}(L'))$.
		Indeed, by Property \ref{property:G_B-P3:vertex-version} of the graph $G_{B'}$, the weight $w(N^+_G(L'))$ is no larger than the weight $w_{B'}(N^+_{G_{B'}}(L'))$ of the out-neighbors of $L'$ in $G_{B'}$; furthermore, by the construction of the graph $H_{B'}$, this latter value is no larger than the weight $w_{B'}(N^+_{H_{B'}}(L'))$ of the out-neighbors of $L'$ in $H_{B'}$. This concludes the proof of the first inequality.
		
		Next, we prove the second inequality from \Cref{cl:upper-bound-on-value-of-new-promised-cut:vertex-version}.
		Consider the set $L_A' = \widehat{L_A}^{\complete} \cup \set{s}$ of vertices in $H_{B'}$, as well as the sets $S_A' = N^+_{H_{B'}}(L_A')$ and $R_A' = V(H_{B'}) \setminus (L_A' \cup S_A')$.
		It is not hard to verify that the triple $(L_A',S_A',R_A')$ is an $s$-$y^*$ vertex-cut in $H_{B'}$; we prove next that it is a \emph{minimum} $s$-$y^*$ vertex-cut in $H_{B'}$, as well as that the equality $L' = L_A' \cap L'_{B'}$ holds; the second inequality from \Cref{cl:upper-bound-on-value-of-new-promised-cut:vertex-version} then follows from \Cref{lem:submodularity-of-minimum-cuts:vertex-version} (with $L'_{B'}$, $L_A'$, $S_A'$, and $R_A'$ in place of $X$, $L$, $S$, and $R$).
		Thus, we must now prove that $(L_A',S_A',R_A')$ is a minimum $s$-$y^*$ vertex-cut in $H_{B'}$, which we do by proving that $S_A' \subseteq S_A$ (Recall that $(L_A,S_A,R_A)$ is itself a minimum $s$-$y^*$ vertex-cut in $H_{B'}$.): it is not hard to verify that $S'_A \subseteq (N^+_{H_{B'}}(s) \cup N^+_{H_{B'}}(\widehat{L_A}^{\complete}))\setminus L_A'$; furthermore, by \Cref{obs:out-neighbors-of-complete-set}, $N^+_{H_{B'}}(\widehat{L_A}^{\complete}) \subseteq N^+_{H_{B'}}(\widehat{L_A})$; so, as $L_A \subseteq L_A'$, it follows that $S_A' \subseteq (N^+_{H_{B'}}(s) \cup N^+_{H_{B'}}(\widehat{L_A}))\setminus L_A = N^+_{H_{B'}}(L_A) \subseteq S_A$, as needed to show that $(L_A',S_A',R_A')$ is a minimum $s$-$y^*$ vertex-cut in $H_{B'}$.
		Next, we must prove that $L' = L_A' \cap L'_{B'}$: indeed, recall that $L'$ is defined as $L' = A_B \cap L'_{B'}$; since $L'_{B'} \subseteq V(G_{B'})$, it then suffices to show that $L_A' \cap V(G_{B'}) = A_B \cap V(G_{B'})$; this in turn holds because $A_B = \widehat{L_A}^{\complete} \setminus V^{\tcopy} = (L_A' \setminus \set{s}) \setminus V^{\tcopy} = L_A' \cap V(G_{B'})$.
		This concludes the proof of the second inequality from \Cref{cl:upper-bound-on-value-of-new-promised-cut:vertex-version}.
		
		Lastly, we now prove the third inequality from \Cref{cl:upper-bound-on-value-of-new-promised-cut:vertex-version}.
		Recall we are assuming that event $\event$ occurred, which means that $\widehat{\OPT} \leq \OPT_{V,y^*}$;
		therefore, since $\vol^+_G(L'_{B'}) \leq 2\nu$, it follows from \Cref{obs:relating-cut-value-after-modification:vertex-version} that $w_{B'}(\partial^+_{H_{B'}}(L'_{B'})) - w_{B'}(\partial^+_{G_{B'}}(L'_{B'})) \leq \frac{\epsilon\cdot\widehat{\OPT}}{z} \leq \frac{\epsilon\cdot\OPT_{V,y^*}}{z}$.
		So, by the choice of $(L'_{B'},S'_{B'},R'_{B'})$ as a vertex-cut in $G_{B'}$ whose value is at most $\left(1 + \frac{(i-1)\epsilon}{z}\right)\OPT_{V,y^*}$, it follows that
		\begin{equation*}
			w_{B'}(N^+_{H_{B'}}(L'_{B'})) \leq w_{B'}(N^+_{G_{B'}}(L'_{B'})) + \frac{\epsilon\cdot\OPT_{V,y^*}}{z} \leq
			w_{B'}(S'_{B'}) + \frac{\epsilon\cdot\OPT_{V,y^*}}{z}
			 \leq \left(1 + \frac{i\epsilon}{z}\right)\OPT_{V,y^*}.
		\end{equation*}
		This concludes the proof of \Cref{cl:upper-bound-on-value-of-new-promised-cut:vertex-version}.
	\end{proof}
	
	We now continue the proof of \Cref{cl:A_B-satisfies-property-P'2:vertex-version}. The remainder of this proof is exactly analogous to the final part of the proof of \Cref{cl:A_B-satisfies-property-P'2:edge-version}.
	Specifically, to prove the claim, it now remains to prove that $x^* \in L'$. As $L'= A_B \cap L'_{B'}$ and $x^* \in L'_{B'}$, it suffices to prove that $x^* \in A_B$.
	We do this next, by first proving that $\kappa_{H_{B'}}(x^*,y^*) < 4\widehat{\OPT}$, and then using \Cref{cor:minimum-cut-in-modified-graph-contains-distinguished-terminal:vertex-version}. Indeed since $x^* \in L'_{B'}$ and $y^* \notin L'_{B'} \cup N^+_{H_{B'}}(L'_{B'})$, it follows by the third inequality of \Cref{cl:upper-bound-on-value-of-new-promised-cut:vertex-version} that the $x^*$-$y^*$ vertex-connectivity in $H_{B'}$ is
	\begin{equation}\label{eq:edge-connectivity-of-distinguished-terminal-in-modified-graph:vertex-version}
		\kappa_{H_{B'}}(x^*,y^*) \leq w_{B'}(N^+_{H_{B'}}(L'_{B'})) \leq \left(1 + \frac{i\epsilon}{z}\right)\OPT_{V,y^*} \leq 2\OPT_{V,y^*}.
	\end{equation}
	Recall also that we are assuming that event $\event$ occurred, which, by the definition of this event, means that $\OPT_{V,y^*} < 2\widehat{\OPT}$.
	By plugging this inequality into \Cref{eq:edge-connectivity-of-distinguished-terminal-in-modified-graph:vertex-version}, it follows that $\kappa_{H_{B'}}(x^*,y^*) < 4\widehat{\OPT}$.
	Since $x^* \in B \cap L'_{B'} \subseteq B \cap A_{B'}$, it now follows from \Cref{cor:minimum-cut-in-modified-graph-contains-distinguished-terminal:vertex-version} that $x^* \in \widehat{L_A}^{\complete}$. Since $x^* \in V(G)$, it then follows that $x^* \in \widehat{L_A}^{\complete} \setminus V^{\tcopy} = A_B$.
	Thus, $x^* \in A_B \cap L'_{B'} = L'$, as we needed to prove.
	This concludes the proof of \Cref{cl:A_B-satisfies-property-P'2:vertex-version}, and thus concludes the analysis of \vertexapproxsparsify.
\end{proof}

\subsubsection{Step 2: Proof of \Cref{cl:reduction-to-computing-A_B:vertex-version}}\label{sec:reduction-to-computing-A_B:vertex-version}

Consider a set $A_B \subseteq V(G) \setminus (\{y^*\} \cup N^-_G(y^*))$ of vertices that satisfies Properties \ref{property:A_B-P'1:vertex-version}-\ref{property:A_B-P'2:vertex-version}.
We construct the graph $G_B$ using $A_B$ as follows: choose the vertex-set of $G_B$ to be $V(G_B) = A_B \cup N^+_G(A_B) \cup \set{y^*}$, and construct the edge-set of $G_B$ by beginning with the set $E_G(A_B,A_B \cup N^+_G(A_B))$ of all edges of $G$ that originate inside $A_B$, and then adding, for every vertex $v \in N^+_G(A_B)$, an edge $(v,y^*)$. Furthermore, set the weight $w_B(v)$ of each vertex $v \in V(G_B)$ to be exactly the same as the weight $w(v)$ of this vertex in $G$.

This kind of construction was also used in \cite{CMT25}, and they referred to the resulting graph as \emph{the graph derived from $G$ via the vertex-set $A_B$}.
It can easily verified that the above construction can be implemented in $\Tilde{O}(\vol^+_G(A_B))$ time.
The following observation is immediate from the construction of $G_B$, and was also formally proved in \cite{CMT25}.
\begin{observation}\label{obs:cuts-in-G_B-have-same-value-as-in-G:vertex-version}
	For every set $L' \subseteq A_B$, we have $w_B(N^+_{G_B}(L')) = w(N^+_{G}(L'))$.
\end{observation}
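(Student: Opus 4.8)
The plan is to show that for every $v \in A_B$, the set of out-neighbors of $v$ is literally the same in $G$ and in $G_B$; the claim then follows immediately by taking unions over $v \in L'$ and summing weights, since $w_B$ agrees with $w$ everywhere on $V(G_B)$.

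First I would check that $G_B$ retains all of $v$'s out-edges from $G$: for any $v \in A_B$ and any $u \in N^+_G(v)$, either $u \in A_B$, or else $u \notin A_B$ and hence $u \in N^+_G(A_B)$ by the definition of $N^+_G(A_B) = \left(\bigcup_{v' \in A_B} N^+_G(v')\right) \setminus A_B$. In both cases $u \in A_B \cup N^+_G(A_B) \subseteq V(G_B)$, so the edge $(v,u)$ lies in the set $E_G(A_B, A_B \cup N^+_G(A_B))$ that is used to initialize $E(G_B)$. Thus $N^+_G(v) \subseteq N^+_{G_B}(v)$ for every $v \in A_B$.

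Next I would check the reverse inclusion, i.e.\ that $G_B$ introduces no new out-edge at a vertex of $A_B$. The edge-set of $G_B$ is obtained from $E_G(A_B, A_B \cup N^+_G(A_B))$ by adding only edges of the form $(v, y^*)$ with $v \in N^+_G(A_B)$; since $N^+_G(A_B) \cap A_B = \emptyset$, none of these added edges originate at a vertex of $A_B$. Combined with the previous paragraph, this gives $N^+_{G_B}(v) = N^+_G(v)$ for every $v \in A_B$, and consequently, for any $L' \subseteq A_B$,
\[
N^+_{G_B}(L') = \left(\bigcup_{v \in L'} N^+_{G_B}(v)\right) \setminus L' = \left(\bigcup_{v \in L'} N^+_{G}(v)\right) \setminus L' = N^+_{G}(L'),
\]
as sets of vertices. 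Finally, since $N^+_{G_B}(L') \subseteq V(G_B)$ and the weight function $w_B$ is defined to equal $w$ on all of $V(G_B)$, summing the weights over this common vertex set yields $w_B(N^+_{G_B}(L')) = w(N^+_{G}(L'))$, which is the desired equality.

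I do not expect a real obstacle here; the only point that requires a moment's care is making sure the added edges into $y^*$ and the vertex $y^*$ itself do not interfere with out-neighborhoods of $A_B$. This is exactly where the hypothesis $A_B \subseteq V(G) \setminus (\{y^*\} \cup N^-_G(y^*))$ is used: it guarantees $y^* \notin A_B$ and that no $v \in A_B$ is an in-neighbor of $y^*$ in $G$, so $y^* \notin N^+_G(v)$ for $v \in A_B$, keeping the two out-neighborhood computations consistent.
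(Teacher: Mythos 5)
Your proof is correct and spells out the argument the paper treats as immediate from the construction: for each $v \in A_B$, the graph $G_B$ retains all of $v$'s out-edges from $G$ and adds none, so $N^+_{G_B}(L') = N^+_G(L')$ for $L' \subseteq A_B$, and equality of the weight sums follows since $w_B$ agrees with $w$ on $V(G_B)$. One small clarification: the requirement $A_B \cap N^-_G(y^*) = \emptyset$ is not actually load-bearing for this particular observation---even if some $v \in A_B$ had $y^*$ as an out-neighbor, the edge $(v,y^*)$ would still be retained in $E_G(A_B, A_B \cup N^+_G(A_B))$, since $y^* \in N^+_G(A_B)$ once $y^* \notin A_B$; that hypothesis is instead needed elsewhere in the analysis of $G_B$, so your main argument stands without invoking it.
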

%The following observation is immediate from the construction of $G_B$.
%\begin{observation}[Special Case of Claim 2.3 from (CITE SODA VERSION OS CMT PAPER)]\label{obs:cuts-in-G_B-have-same-value-as-in-G:vertex-version}
%	For every set $L' \subseteq A_B$, we have $w_B(N^+_{G_B}(L')) = w(N^+_{G}(L'))$.
%\end{observation}

Next, we show that the obtained graph $G_B$ indeed satisfies Properties \ref{property:G_B-P1:vertex-version}-\ref{property:G_B-P4:vertex-version}.
Properties \ref{property:G_B-P2:vertex-version} and \ref{property:G_B-P3:vertex-version} follow directly from the construction of $G_B$ and from \Cref{obs:cuts-in-G_B-have-same-value-as-in-G:vertex-version}. (Indeed, for every set $X' \subseteq V(G_B) \setminus (\{y^*\} \cup N^-_{G_B}(y^*))$ as described in Property \ref{property:G_B-P3:vertex-version}, it can be easily verified that $X' \subseteq A_B$ must hold, so \Cref{obs:cuts-in-G_B-have-same-value-as-in-G:vertex-version} applies for this set.)
We now consider Properties \ref{property:G_B-P1:vertex-version} and \ref{property:G_B-P4:vertex-version}.
It can be easily verified from the construction of $G_B$ that $|E(G_B)| = \vol^+_G(A_B) + |N^+_G(A_B)| \leq 2\vol^+_G(A_B)$; so, the upper-bound on $|E(G_B)|$ required for Property \ref{property:G_B-P1:vertex-version} of $G_B$ follows from Property \ref{property:A_B-P'1:vertex-version} of $A_B$; furthermore, a similar upper-bound on $|V(G_B)|$ follows as $|V(G_B)|=|A_B| + |N^+_G(A_B)| +1 \leq |E(G_B)|+1$, where the last inequality is due to the fact that every vertex of $A_B$ must have at least one-outgoing edge in $G$, which in turn holds by the assumption $\OPT_{V,y^*}>0$.
It remains to show that $G_B$ satisfies Property \ref{property:G_B-P4:vertex-version}, which we do next.
Suppose that there exists a terminal $x^* \in B \cap L^*$ and the good event $\event$ occurred; we need to prove that in this case, $x^* \in V(G_B)$ holds, and there exists an $x^*$-$y^*$ vertex-cut $(L',S',R')$ in $G_B$ of value $w_B(S') \leq \left(1 + \frac{i\epsilon}{z}\right)\OPT_{V,y^*}$ that additionally satisfies the condition $\vol^+_G(L') \leq 2\nu$.
Indeed, in the aforementioned case, Property \ref{property:A_B-P'2:vertex-version} of the set $A_B$ guarantees that there exists a subset $L' \subseteq A_B$ with $x^* \in L'$ and $\vol^+_G(L') \leq 2\nu$ such that the vertex-cut $(L',N^+_G(L'),V(G) \setminus (L' \cup N^+_G(L')))$ in $G$ has value $w(N^+_G(L')) \leq \left(1 + \frac{i\epsilon}{z}\right)\OPT_{V,y^*}$.
Furthermore, by \Cref{obs:cuts-in-G_B-have-same-value-as-in-G:vertex-version}, this set $L'$ satisfies $w_B(N^+_{G_B}(L')) = w(N^+_{G}(L'))\leq \left(1 + \frac{i\epsilon}{z}\right)\OPT_{V,y^*}$. Thus, if we define $S'$ as $S' = N^+_{G_B}(L')$ and $R'$ as $R'=V(G_B) \setminus (L' \cup S')$, then the vertex-cut $(L',S',R')$ in $G_B$ satisfies all the conditions required by Property \ref{property:G_B-P4:vertex-version}.
Furthermore, since $x^* \in L'$ and $L' \subseteq A_B \subseteq V(G_B) \setminus N^-_{G_B}(y^*)$, it follows that $x^* \in V(G_B) \setminus N^-_{G_B}(y^*)$, as needed.
This concludes the proof of Property \ref{property:G_B-P4:vertex-version}, and thus concludes the proof of \Cref{cl:reduction-to-computing-A_B:vertex-version}.
\section{Reducing Dependence on $W$}\label{sec: low-dependence-on-W}

In this section, we describe a standard method that we use in our algorithms for both the Rooted Minimum Edge-Cut and the Rooted Minimum Vertex-Cut problem, in order to reduce the dependence of the running time on the maximum weight $W$ from polylogarithmic to logarithmic.
For the sake of completion, we also provide a formal proof.
%In this section, we describe a standard method that we use in our algorithms for both the Global and Rooted variants of the Minimum Edge-Cut and Minimum Vertex-Cut problems, in order to reduce the dependence of the running time on the maximum weight $W$ from polylogarithmic to logarithmic.
Specifically, this method allows us to transform any $(1+\epsilon)$-approximation algorithm for these problems that runs in time $O\left(\frac{m^{1+o(1)}}{\epsilon}\polylog W\right)$ on an $m$-edge graph and has a success probability of $\frac{1}{2}$, to one that runs in time $O\left(\left(\frac{m^{1+o(1)}}{\epsilon}\right)\polylog\left(\frac{m}{\epsilon}\right)\log W\right)$ and has the same success probability.
As long as $\frac{1}{\epsilon} \leq \poly(m)$, the latter running time is upper-bounded by $O\left(\frac{m^{1+o(1)}}{\epsilon}\log W\right)$, as required by both \Cref{thm : main-thm-rooted-edge-version} and \Cref{thm : main-thm-rooted-vertex-version}.
In the case where the inequality $\frac{1}{\epsilon} \leq \poly(m)$ does not hold, and in particular whenever $\frac{1}{\epsilon} > m$, we can instead solve the Rooted Minimum Edge-Cut (resp. Vertex-Cut) problem with root $y^*$ by explicitly computing the minimum $v$-$y^*$ edge-cut (resp. vertex-cut) for every vertex $v$, resulting in an algorithm with running time $O(m^{1+o(1)}\cdot n\cdot \log W) \leq O(m^{2+o(1)} \log W) \leq O(\frac{m^{1+o(1)}}{\epsilon} \log W)$.

The method that we describe in this section is applicable for a general class of problems, and we will describe it using a general formulation that captures both the Rooted Minimum Edge-Cut problem and the Rooted Minimum Vertex-Cut problem.
%The method that we describe in this section is applicable for a general class of problems, and we will describe it using a general formulation that captures both the Rooted and Global variants of both the Minimum Edge-Cut problem and the Minimum Vertex-Cut problem.
We now present this formulation:

Consider a fixed mapping $\varphi$ that, given a universe $U$ of elements and a bit-string $s$, maps these to a set $\varphi(U,s)$ of \emph{feasible solutions}, where each element $F$ of $\varphi(U,s)$ is a pair $F=(Q,s')$ consisting of a subset $Q \subseteq U$ of elements, and a bit-string $s'$.
Now, consider the following optimization problem $P_{\varphi}$.
In this problem, the input consists of a universe $U$, a bit-string $s$, a positive integer $W$, and weight-function $w:U \to \{1,\ldots,W\}$ that assigns a positive integral weight $w(e) \leq W$ to each element $e \in U$; the goal is finding a feasible solution $F = (Q,s') \in \varphi(U,s)$ that minimizes the sum of weights $\sum_{e \in Q}w(e)$.

Since a graph can be represented as a bit-string, it is possible to formulate both the Rooted Minimum Edge-Cut and the Rooted Minimum Vertex-Cut problems as a problem $P_{\varphi}$ for the appropriate mapping $\varphi$. For example, in the case of Rooted Minimum Edge-Cut, the string $s$ represents the graph $G$ and the root $y^*$; the universe $U$ is the set of edges $U=E(G)$; and the set $\varphi(U,s)$ of feasible solutions consists of all pairs $(Q,s')$ such that $s'$ is a bit-string representing an edge-cut $(X,Y)$ with $y^* \in Y$, and $Q$ is the set $E_G(X,Y)$ of edges crossing this cut.
The following theorem demonstrates that for any minimization problem that is captured by the above setup, it is possible to reduce the running time dependence of any algorithm on the parameter $W$ from polylogarithmic to logarithmic.

\begin{theorem}\label{thm:reducing-dependence-on-W}
	Consider an optimization problem $P_{\varphi}$ as defined above, and suppose there exists an algorithm $\aalg$ that, given any instance $(U,s,W,w)$ of this problem and any parameter $\epsilon \in (0,1)$, outputs a feasible solution $F \in \varphi(U,s)$ so that with probability at least $\frac{1}{2}$, $F$ is a $(1+\epsilon)$-approximate solution to the instance $(U,s,W,w)$ of the problem.
	Furthermore, suppose that there exists some function $T(\cdot,\cdot)$ such that the running time of algorithm $\aalg$ on any instance $(U,s,W,w)$ and parameter $\epsilon \in (0,1)$ is $O\left(T(s,\epsilon) \cdot \polylog(W)\right)$.
	Then, there exists an algorithm $\aalg'$ that, given the same input as $\aalg$, runs in time $O\left((T(s,\epsilon)+|U|) \cdot \polylog(\frac{|U|}{\epsilon}) \cdot \log (W|U|)\right)$ and computes a feasible solution $F \in \varphi(U,s)$ so that with probability at least $\frac{1}{2}$, $F$ is a $(1+O(\epsilon))$-approximate solution.
\end{theorem}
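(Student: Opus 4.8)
The plan is to use a standard weight-bucketing and rescaling trick that reduces the dependence on $W$ to a logarithmic factor, at the cost of a $\log W$ multiplicative number of calls to $\aalg$ and a mild polylogarithmic overhead. First I would observe that the optimal value $\OPT$ of the instance $(U,s,W,w)$ lies somewhere in the range $[1, W|U|]$, since every feasible solution has weight at least $1$ (feasible solutions are nonempty in the problems of interest — though if empty solutions are possible we handle value $0$ separately by a direct check) and at most $W|U|$. So I would guess a dyadic estimate $\widehat{\OPT}$ of $\OPT$ from among the $O(\log(W|U|))$ powers of $2$ in this range, and run a rescaled instance for each guess. For a given guess $\Lambda = \widehat{\OPT}$, define a new weight function $w_\Lambda(e) = \min\!\left\{\left\lceil \frac{w(e) \cdot |U|}{\epsilon \Lambda}\right\rceil,\ \left\lceil \frac{|U|}{\epsilon}\right\rceil + 1\right\}$, i.e. scale all weights down by roughly $\frac{\epsilon\Lambda}{|U|}$, round up, and cap at $W' := \lceil |U|/\epsilon\rceil + 1$. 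Run $\aalg$ on $(U,s,W',w_\Lambda)$ with precision parameter $\epsilon$; it runs in time $O(T(s,\epsilon)\cdot\polylog(|U|/\epsilon))$ since the new maximum weight is $W' = O(|U|/\epsilon)$. Among all $O(\log(W|U|))$ returned solutions, output the one of smallest weight under the \emph{original} weight function $w$.

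The key steps in order are: (1) verify the feasible-solution set $\varphi(U,s)$ does not depend on $w$, so all rescaled instances share the same feasible solutions and any solution returned for a rescaled instance is automatically feasible for the original; (2) bound the running time: $O(\log(W|U|))$ calls to $\aalg$, each taking $O(T(s,\epsilon)\cdot\polylog(|U|/\epsilon))$ time, plus $O(|U|\log(W|U|))$ time to rescale and to evaluate original weights, giving the claimed $O\!\left((T(s,\epsilon)+|U|)\cdot\polylog(|U|/\epsilon)\cdot\log(W|U|)\right)$; (3) the approximation analysis, conditioned on the good guess $\frac{\OPT}{2} < \Lambda \le \OPT$ (which occurs among the guesses with certainty, since we try all of them) and on $\aalg$ succeeding on that instance (probability $\ge \frac12$). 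For the good guess: let $F^\star = (Q^\star, \cdot)$ be an optimal solution of the original instance, so $w(Q^\star) = \OPT$. Since $w_\Lambda(e) \le \frac{w(e)|U|}{\epsilon\Lambda} + 1$ and $|Q^\star| \le |U|$, we get $w_\Lambda(Q^\star) \le \frac{\OPT \cdot |U|}{\epsilon\Lambda} + |U| \le \frac{|U|}{\epsilon} + |U|$, which in particular means the cap $W'$ is never active on $Q^\star$ and $\OPT_{w_\Lambda} \le w_\Lambda(Q^\star) \le \frac{|U|}{\epsilon}(1+\epsilon) \cdot \frac{\OPT}{\Lambda}$. The solution $F = (Q,\cdot)$ returned by $\aalg$ then satisfies $w_\Lambda(Q) \le (1+\epsilon)\OPT_{w_\Lambda} \le (1+\epsilon)^2 \frac{|U|}{\epsilon}\cdot\frac{\OPT}{\Lambda}$. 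Translating back via $w_\Lambda(e) \ge \frac{w(e)|U|}{\epsilon\Lambda}$ (valid as long as the cap is not hit, which can be argued since a solution hitting the cap already has $w_\Lambda$-weight $> |U|/\epsilon \ge$ the bound we just derived, a contradiction), we obtain $w(Q) \le \frac{\epsilon\Lambda}{|U|}\cdot w_\Lambda(Q) \le (1+\epsilon)^2 \OPT \cdot \frac{\Lambda}{\Lambda} = (1+\epsilon)^2\OPT = (1 + O(\epsilon))\OPT$. Since the algorithm outputs the best solution over all guesses under $w$, its output is at most this, so it is a $(1+O(\epsilon))$-approximation whenever the good-guess call succeeds, which has probability $\ge \frac12$.

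I would also note the minor bookkeeping issue of the success probability: only the single call corresponding to the correct dyadic guess needs to succeed, and we take the minimum over all returned solutions under $w$, so the overall success probability is exactly the success probability of that one call, namely $\ge \frac12$ — no union bound or amplification is needed, and the other (possibly bad) guesses cannot hurt because we always compare final solutions under the true weight $w$. One subtlety worth flagging: the problem $P_\varphi$ as formalized does not a priori forbid the empty solution (weight $0$), in which case $\OPT = 0$ and the range $[1,W|U|]$ argument fails; but for the two concrete problems at hand one can detect and handle a zero-weight optimum by a separate linear-time check (as is already done in \Cref{sec:rooted-minimum-edge-cut} and \Cref{sec:rooted-minimum-vertex-cut} via a BFS from $y^*$), so I would simply assume $\OPT \ge 1$ in the reduction and remark that the zero-weight case is handled upstream.

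The main obstacle I anticipate is getting the rescaling-and-capping constants exactly right so that (a) the cap $W' = O(|U|/\epsilon)$ genuinely controls the running time of $\aalg$, (b) the cap is provably never binding on any solution that could plausibly be returned (so that the lower bound $w_\Lambda(e) \ge \frac{w(e)|U|}{\epsilon\Lambda}$ can be used when translating back), and (c) the rounding-up slack of "$+1$ per element, $|U|$ elements total" contributes only an additive $\epsilon$-fraction of $\OPT$ rather than something larger — this is exactly why the scale factor $\frac{\epsilon\Lambda}{|U|}$ has both an $\epsilon$ and a $\frac{1}{|U|}$ in it, and why the dyadic guess needs to pin $\Lambda$ within a factor $2$ of $\OPT$. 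Everything else is routine.
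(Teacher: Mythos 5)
Your high-level strategy — dyadic guessing of $\OPT$ over $O(\log(W|U|))$ scales, rescaling weights by roughly $\frac{|U|}{\epsilon\Lambda}$, capping at $W' = \Theta(|U|/\epsilon)$, calling $\aalg$ on each rescaled instance, and arguing the good guess gives a $(1+O(\epsilon))$-approximation that the cap does not interfere with — is exactly the paper's approach, and your approximation analysis for the good guess is essentially correct modulo some off-by-constant-factor slips that you flag yourself (e.g., with $\OPT/2 < \Lambda \le \OPT$ you get $w_\Lambda(Q^\star) \le \frac{2|U|}{\epsilon} + |U|$, not $\frac{|U|}{\epsilon} + |U|$, and a single heavy element of $Q^\star$ can have $w_\Lambda(e) > \lceil|U|/\epsilon\rceil+1$; enlarging $W'$ by a constant factor fixes both). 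These constant issues are benign.

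The genuine gap is in the running time of the \emph{selection} step. You propose to compute the original-weight value $w(Q_i)=\sum_{e\in Q_i}w(e)$ of all $z=O(\log(W|U|))$ returned solutions and take the minimizer, and you account $O(|U|\log(W|U|))$ time for this. But each such sum involves up to $|U|$ summands of $\Theta(\log W)$ bits, so one evaluation already costs $\Omega(|U|\log(W|U|))$ time, and doing all $z$ of them costs $\Omega(|U|\log^2(W|U|))$, which exceeds the target $O((T(s,\epsilon)+|U|)\cdot\polylog(|U|/\epsilon)\cdot\log(W|U|))$ whenever $\log W \gg \polylog(|U|/\epsilon)$. This is precisely the obstacle the theorem is designed to avoid, and the paper explicitly calls it out. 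The paper's fix is to not compare solutions under $w$: instead it computes the \emph{scaled} values $\sum_{e\in Q_i}w_i(e)$ for every $i$ (cheap, since $w_i$-values have only $O(\log(|U|/\epsilon))$ bits), finds the smallest index $i^*$ with scaled value below the cap $W'$, and only evaluates original weights for the three candidates $i^*,i^*+1,i^*+2$. Making this work requires an additional lemma showing that the true scale index $i'$ (with $2^{i'}\le\OPT<2^{i'+1}$) must lie in $\{i^*,i^*+1,i^*+2\}$ — in particular that for guesses much smaller than $\OPT$, \emph{every} feasible solution's scaled value is pushed above $W'$. Your proposal has no analogue of this and hence no sub-quadratic selection rule. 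The same bit-complexity concern also affects your rescaling step: naively computing $z$ scaled weight functions from $\Theta(\log W)$-bit weights costs $\Theta(|U|\cdot z\cdot\log W)$ time; the paper handles this by observing the scaled weight is a substring of the binary expansion of $w(e)$, so all $z$ scalings of one element are read off in $O(z\cdot\log(|U|/\epsilon))$ time. You would need both of these tricks (or something equivalent) to actually land inside the claimed time bound.
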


Essentially, the algorithm $\aalg'$ promised by \Cref{thm : main-thm-edge-version} works by reducing the input instance $(U,s,W,w)$ to $z\approx \log(W|U|)$ many instances $(U,s,W',w_1),\ldots,(U,s,W',w_z)$ with a maximum weight of $W'=\poly(\frac{|U|}{\epsilon})$, and then calling $\aalg$ to solve each of these instances.
Informally, for each $i \in [z]$, the instance $(U,s,W',w_i)$ is designed to "mimic" the original instance $(U,s,W,w)$ in the case where the optimal solution value $\OPT$ of the original instance falls in the range $[2^i,2^{i+1})$.
Specifically, the instance $(U,s,W',w_i)$ is designed so that, if $\OPT \in [2^i,2^{i+1})$, then any $(1+\epsilon)$-approximate solution to this instance is also a $(1+O(\epsilon))$-approximate solution to the original instance $(U,s,W,w)$.
Then, after using $\aalg$ on each of the instances $(U,s,W',w_1),\ldots,(U,s,W',w_z)$, there is a probability of $\frac{1}{2}$ that at least one of the obtained feasible solutions is also a $(1+O(\epsilon))$-approximate solution to the original instance $(U,s,W,w)$.
%Specifically, the instance $(U,s,W',w_i)$ is constructed such that if $\OPT \in [2^i,2^{i+1}]$, then any $(1+\epsilon)$-approximate solution to $(U,s,W',w_i)$ is also a $(1+O(\epsilon))$-approximate solution to the original instance; furthermore, if $\OPT > 2^{i+2}$ then
Next, we provide a formal proof of \Cref{thm:reducing-dependence-on-W}.

%\begin{claim}
%	Consider an instance $(U,s,W,w)$ for problem $P_{\varphi}$ as defined above, and let $W'=$\rnote{!} and $z=\ceil{\log(W|U|+1)}$.
%	There exists an algorithm that, given such an instance $(U,s,W,w)$, computes in time $O\left(|U| \cdot \polylog(\frac{|U|}{\epsilon}) \cdot \log (W|U|)\right)$ weight functions $w_1,\ldots,w_z:U \to [W']$ such that, for every $i \in [z]$, the instance $(U,s,W',w_i)$ satisfies:
%	\begin{itemize}
%		\item Every feasible solution $F \in \varphi(U,s)$ satisfies $\val_i(F) \leq \frac{\val(F) \cdot W'}{2^{i+2}}$, where $\val_i(F)$ denote the value $\val_i(F) = \sum_{e \in }$
%	\end{itemize}
%\end{claim}

\paragraph{Proof of \Cref{thm:reducing-dependence-on-W}.}
We begin by formally describing the algorithm $\aalg'$.
Given an instance $(U,s,W,w)$ for problem $P_{\varphi}$, the algorithm first computes and writes down the values $z=\ceil{\log(W|U|+1)}$ and $W'$, where $W'$ is the smallest integral power of $2$ that is larger than $\frac{4|U|}{\epsilon}$.
Then, for every $i \in [z]$, the algorithm explicitly computes and writes down the weight function $w_i:U \to [W']$ defined as follows:
\[
 w_i(e) = \begin{cases}
 	\floor{\frac{w(e) \cdot W'}{2^{i+2}}} \qquad &\text{if $w(e) < 2^{i+2}$,}\\
 	W' \qquad &\text{otherwise.}
 \end{cases}
\]
We discuss later how to efficiently compute these weight functions $w_1,\ldots,w_z$.
Now, we call algorithm $\aalg$ on each of the instances $(U,s,W',w_1),\ldots,(U,s,W',w_z)$.
For each $i \in [z]$, let $F_i=(Q_i,s'_i)$ denote the feasible solution that $\aalg$ returned when it was ran on instance $(U,s,W',w_i)$.
Observe that explicitly computing the value $\val(F_i) = \sum_{e \in Q_i}w(e)$ of all $z$ feasible solutions $F_1,\ldots,F_z$ w.r.t the weight function $w$ would require time that is quadratic in $\log(W)$, which we cannot afford. Thus, we take a slightly different approach, as described next.
For each $i \in [z]$, we compute the value $\val_i(F_i) = \sum_{e \in Q_i}w_i(e)$ of $F_i$ w.r.t the weight function $w_i$. Since the values $w_i(e)$ assigned by the weight function $w_i$ are written using just $O(\log(W'|U|))$ bits, these computations together take a total time of only $O(|U| \cdot \log(W'|U|) \cdot z) = O(|U| \cdot \log(\frac{|U|}{\epsilon}) \cdot \log(W|U|))$.
Let $i^*$ be the smallest $i \in [z]$ for which $\val_i(F_i) < W'$.
Then, for every $i \in \{i^*,i^*+1,i^*+2\}$, our algorithm calculates the value $\val(F_i)$ of $F_i$ w.r.t the original weight function $w$. Since we are computing $\val(F_i)$ for only a constant number of feasible solutions $F_i$, these computations can be performed in time $O(|U| \cdot \log(W|U|))$.
The algorithm then outputs the feasible solution that minimizes $\val(F_{i})$ among the three solutions $\{F_{i} \mid i \in \{i^*,i^*+1,i^*+2\}\}$, and outputs this feasible solution.

We now analyze the above algorithm.
For the analysis, let $\OPT$ denote the optimal solution value of the original input instance $(U,s,W,w)$, and let $i'$ be the unique integer for which $2^{i'} \leq \OPT < 2^{i'+1}$ holds.
Let $\event$ denote the good event in which the feasible solution $F_{i'}$ obtained from the call to $\aalg$ on the instance $(U,s,W',w_{i'})$ is a $(1+\epsilon)$-approximate solution to the instance $(U,s,W',w_{i'})$.
By the guarantees of the algorithm $\aalg$, event $\event$ happens with probability at least $\frac{1}{2}$.
We will show that whenever event $\event$ happens, our algorithm outputs a $(1+O(\epsilon))$-approximate solution to the original instance.
Since the output of the algorithm is the best feasible solution among $F_{i^*},F_{i^*+1},F_{i^*+2}$, it suffices to prove the following claim.
\begin{claim}\label{cl:inner-claim:reducing-dependence-on-W}
	Whenever event $\event$ happens, the feasible solution $F_{i'}$ must be a $(1+2\epsilon)$-approximate solution to the original instance $(U,s,W,w)$, and $i' \in \{i^*,i^*+1,i^*+2\}$ must hold.
\end{claim}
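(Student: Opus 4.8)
The plan is to analyze the weight functions $w_i$ defined in the algorithm, using the fact that each $w_i(e) = \floor{w(e) W' / 2^{i+2}}$ whenever $w(e) < 2^{i+2}$ (and $w_i(e) = W'$ otherwise), and connect the optimal value of $(U,s,W',w_{i'})$ to $\OPT$. The core observation is that, for the index $i'$ with $2^{i'} \le \OPT < 2^{i'+1}$, every element $e$ belonging to an optimal solution $F^* = (Q^*, s')$ of the original instance has $w(e) \le \OPT < 2^{i'+1} < 2^{i'+2}$, so the rescaled weights of these elements are in the "non-truncated" regime $w_{i'}(e) = \floor{w(e) W' / 2^{i'+2}}$. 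This means $F^*$ is also a feasible solution to the instance $(U,s,W',w_{i'})$, and its value there satisfies $\val_{i'}(F^*) = \sum_{e \in Q^*} \floor{w(e) W'/2^{i'+2}} \le \sum_{e \in Q^*} w(e) W'/2^{i'+2} = \OPT \cdot W'/2^{i'+2}$. In particular, the optimal value $\OPT_{i'}$ of the instance $(U,s,W',w_{i'})$ is at most $\OPT \cdot W'/2^{i'+2} < W'/2 < W'$, which in turn is strictly less than $W'$.

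**The approximation bound.** Conditioned on event $\event$, the returned solution $F_{i'} = (Q_{i'}, s'_{i'})$ has $\val_{i'}(F_{i'}) \le (1+\epsilon)\OPT_{i'} < W'$ (so in particular it contains no truncated element, meaning $w(e) < 2^{i'+2}$ for all $e \in Q_{i'}$). I would then lower- and upper-bound $\val(F_{i'}) = \sum_{e \in Q_{i'}} w(e)$ in terms of $\val_{i'}(F_{i'})$: from $w_{i'}(e) = \floor{w(e) W'/2^{i'+2}} \ge w(e) W'/2^{i'+2} - 1$ we get $\val_{i'}(F_{i'}) \ge \val(F_{i'}) \cdot W'/2^{i'+2} - |Q_{i'}|$, i.e. $\val(F_{i'}) \le (\val_{i'}(F_{i'}) + |Q_{i'}|) \cdot 2^{i'+2}/W'$. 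Combining with $\val_{i'}(F_{i'}) \le (1+\epsilon)\OPT \cdot W'/2^{i'+2}$ and $|Q_{i'}| \le |U| \le \epsilon W'/4$ (by the choice of $W'$ as the smallest power of two exceeding $4|U|/\epsilon$), and using $\OPT \ge 2^{i'}$ so that $2^{i'+2}/W' \le 4 \cdot 2^{i'}/W' \le 4\OPT/W' \cdot (\text{something})$ — more carefully, $|Q_{i'}| \cdot 2^{i'+2}/W' \le (\epsilon W'/4)(2^{i'+2}/W') = \epsilon 2^{i'} \le \epsilon \OPT$ — yields $\val(F_{i'}) \le (1+\epsilon)\OPT + \epsilon \OPT = (1+2\epsilon)\OPT$. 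This proves the first half of the claim.

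**The index bound.** It remains to show $i' \in \{i^*, i^*+1, i^*+2\}$, where $i^*$ is the smallest $i$ with $\val_i(F_i) < W'$. Since $\val_{i'}(F_{i'}) < W'$ was shown above (under $\event$), we have $i^* \le i'$. For the other direction I would argue that for $i \le i' - 3$, \emph{every} feasible solution of the original instance — and in particular the one eventually returned by $\aalg$ on $(U,s,W',w_i)$, or rather, any solution with value $< W'$ under $w_i$ — must actually have large value under $w$, leading to a contradiction; concretely, if $\val_i(F_i) < W'$ then $F_i$ contains no truncated element, so $w(e) < 2^{i+2} \le 2^{i'-1}$ for each $e \in Q_i$, and then $\val_i(F_i) \ge \val(F_i) W'/2^{i+2} - |Q_i| \ge \OPT \cdot W'/2^{i+2} - |Q_i| \ge 2^{i'} W'/2^{i'-1} - \epsilon W'/4 = 2W' - \epsilon W'/4 > W'$, a contradiction. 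Hence $i^* \ge i' - 2$, giving $i' \in \{i^*, i^*+1, i^*+2\}$.

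**The main obstacle** I anticipate is getting all the constants and the direction of the truncation-threshold inequalities exactly right: one must be careful that "no truncated element" genuinely follows from $\val_i(F_i) < W'$ (it does, since a single truncated element contributes exactly $W'$), and that the choice $W' > 4|U|/\epsilon$ makes the additive $|Q_i|$-type error terms absorb into an $O(\epsilon)\OPT$ slack using $\OPT \ge 2^{i'}$. The rest is bookkeeping with floors, and I would state and use a small sub-lemma relating $\val$ and $\val_i$ on solutions with value below $W'$ to keep the two parts of the argument uniform.
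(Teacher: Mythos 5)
Your proof is correct and takes essentially the same approach as the paper. You derive the bound on the optimal value of $(U,s,W',w_{i'})$ by observing that all elements of a fixed optimal solution $F^*$ of the original instance fall in the non-truncated regime (the paper instead notes the uniform inequality $w_{i'}(e)\le w(e)W'/2^{i'+2}$ for all $e$, including truncated ones), and you phrase the lower bound on $i^*$ as a proof by contradiction rather than the paper's two-case split, but these are cosmetic differences: the floor bounds, the use of $|Q_i|\le|U|<\epsilon W'/4$, and the use of $2^{i'}\le\OPT<2^{i'+1}$ all match the paper's argument.
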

\begin{proof}
	We begin by proving that if event $\event$ happens, then $F_{i'}$ is a $(1+2\epsilon)$-approximate solution to the instance $(U,s,W,w)$.
	In other words, we show next that if event $\event$ happens, then $\val(F_{i'}) \leq (1+2\epsilon)\OPT$.
	To do this, we will first show that if $\event$ happens then $\val_{i'}(F_{i'})$ must be small, and then we will relate $\val(F_{i'})$ to $\val_{i'}(F_{i'})$.
	Indeed, recall the definition of the weight function $w_{i'}$, and observe that by this definition, every element $e \in U$ satisfies $w_{i'}(e) \leq \frac{w(e) \cdot W'}{2^{i'+2}}$.
	Thus, the value of the optimal solution to the instance $(U,s,W',w_{i'})$ must be at most $\frac{\OPT \cdot W'}{2^{i'+2}}$.
	Recall also that if event $\event$ happens, then $F_{i'}$ must be a $(1+\epsilon)$-approximate solution to the instance $(U,s,W',w_{i'})$, meaning that $\val_{i'}(F_{i'}) \leq (1+\epsilon)\frac{\OPT \cdot W'}{2^{i'+2}}$.
	Since $i'$ is defined such that $\OPT < 2^{i'+1}$, and since $\epsilon<1$, it follows in particular that $\val_{i'}(F_{i'}) < W'$, meaning that $Q_{i'}$ contains only elements $e \in U$ with $w(e) < 2^{i'+2}$.
	Thus, by the definition of $w_{i'}$, it follows that $w(e) \leq \frac{2^{i'+2}}{W'}(1 + w_{i'}(e))$ holds for all $e \in Q_{i'}$, so $\val(F_{i'}) \leq \frac{2^{i'+2}}{W'}(|U|+\val_{i'}(F_{i'}))$.
	Since we've seen that $\val_{i'}(F_{i'}) \leq (1+\epsilon)\frac{\OPT \cdot W'}{2^{i'+2}}$, it follows that $\val(F_{i'}) \leq \frac{2^{i'+2} |U|}{W'} + (1+\epsilon)\OPT$.
	Now, since $i'$ is defined such that $2^{i'} \leq \OPT$, it follows that $\val_{i'}(F_{i'}) \leq \frac{4 \OPT \cdot |U|}{W'} + (1+\epsilon)\OPT \leq (1+2\epsilon)\OPT$, as we needed. (In the last inequality, we used the definition of $W'$.)
	
	It remains to show that if event $\event$ happens, then $i' \in \{i^*,i^*+1,i^*+2\}$ must hold.
	Recall that $i^*$ is defined as the smallest integer $i \in [z]$ for which $\val_i(F_i) < W'$.
	Further recall that we've already shown that if event $\event$ happens, then $\val_{i'}(F_{i'}) < W'$ holds.
	Thus, it only remains to show that $\val_{i}(F_i) \geq W'$ holds for every $i < i' -2$.
	Indeed, consider any $i < i'-2$, and consider the solution $F_i=(Q_i,s_i')$.
	If $Q_i$ contains any element $e \in U$ with $w(e) \geq 2^{i+2}$, then $w_i(e) = W'$ and thus $\val_i(F_i) \geq W'$ holds.
	Otherwise, all elements $e \in Q_i$ satisfy $w_i(e) = \floor{\frac{w(e) \cdot W'}{2^{i+2}}}$, meaning that $\val_i(F_i) \geq \frac{\val(F_i) \cdot W'}{2^{i+2}} - |U| \geq \frac{\OPT \cdot W'}{2^{i+2}} - |U|$.
	It remains to show that $\frac{\OPT \cdot W'}{2^{i+2}} - |U| \geq W'$.
	Indeed, since $i < i'-2$, it must be that $i \leq i'-3$, and thus $2^{i+2} \leq 2^{i'-1} \leq \OPT/2$;
	it therefore follows that $\frac{\OPT \cdot W'}{2^{i+2}} - |U| \geq 2W' - |U| \geq W'$.
	This concludes the proof of \Cref{cl:inner-claim:reducing-dependence-on-W}.
\end{proof}

In order to conclude the proof of \Cref{thm:reducing-dependence-on-W}, it remains to show how to efficiently compute the weight functions $w_1,\ldots,w_z$. We show this next.

\paragraph{Efficient computation of the Weight Functions.}
We now describe how the algorithm computes the weight functions  $w_1,\ldots,w_z$ in time $O\left(|U| \cdot \polylog(\frac{|U|}{\epsilon}) \cdot \log (W|U|)\right)$.
To do this, the algorithm first writes the weight $w(e)$ of each element $e \in U$ using the standard binary representation of natural numbers in $\floor{1+\log W}$ bits, which can be done in time $O(|U| \cdot \log W)$.
Furthermore, for each $e \in U$, the algorithm locates the most-significant non-zero bit of this representation in order to determine the largest integer $i$ for which $w(e) \geq 2^i$ holds, and writes down this value, which we denote by $i_e$.
Observe that for every $e \in U$ and every $i \in [z]$, if $i_e \leq i+1$, then all but the first (least significant) $i+2$ bits in the binary representation of $w(e)$ are zero; in this case, the choice of $w_i(e) = \floor{\frac{w(e) \cdot W'}{2^{i+2}}}$ means that the binary representation of $w_i(e)$ is equal to the substring of the binary representation of $w(e)$, beginning at the $(i+3-\log W')$'th least significant bit and ending at the $(i+2)$'th least significant bit.
In the remaining case where $i_e \geq i+2$, it follows that $w(e) \geq 2^{i+2}$ and thus $w_i(e)=W'$, so $w_i(e)$ can be computed by copying the value $W'$ that the algorithm already wrote down beforehand.
Thus, for a given element $e$, we can compute the values $w_1(e),\ldots,w_z(e)$ in total time $O(z \cdot \log(W')) = O\left(\polylog(\frac{|U|}{\epsilon}) \cdot \log (W|U|)\right)$, as needed.

%We now analyze the above algorithm.
%For the analysis, let $\OPT$ denote the optimal solution value of the original input instance $(U,s,W,w)$, and let $i'$ be the unique integer for which $2^{i'} \leq \OPT < 2^{i'+1}$.
%By the guarantees of the algorithm $\aalg$, there is a probability of at least $\frac{1}{2}$ that the feasible solution $F_{i'}$ obtained from the call to $\aalg$ on the instance $(U,s,W',w_{i'})$ is a $(1+\epsilon)$-approximate solution to the instance $(U,s,W',w_{i'})$.
%We will show that whenever this happens, our algorithm outputs a $(1+O(\epsilon))$-approximate solution to the original instance.
%To do this, we must show that whenever $F_{i'}$ is a $(1+\epsilon)$-approximate solution to $(U,s,W',w_{i'})$, there must be some $\{F_{i} \mid i \in \{i^*,i^*+1,i^*+2\}\}$ such that $F_i$ is a $(1+O(\epsilon))$-approximate solution to the original instance $(U,s,W,w)$ -- specifically, we will show that in this case, $i' \in \{i^*,i^*+1,i^*+2\}$ must hold, and $F_{i'}$ must be a $(1+O(\epsilon))$-approximate solution to the original instance.

\section{Missing Proofs from \Cref{sec:prelims}}\label{sec:missing-proofs-in-prelims}

\paragraph{Proof of \Cref{lem:submodularity-of-minimum-cuts:edge-version}.}
Consider a graph $G$ with weights $w(e)$ on the edges $e \in E(G)$, as well as a pair $s,t$ of vertices in $G$, and a minimum $s$-$t$ edge-cut $(S,T)$ in $G$.
Furthermore, let $X \subseteq V(G)$ be any set of vertices.
We need to prove that $w(\partial^+_G(X \cap S)) \leq w(\partial^+_G(X))$.
Indeed, as $(S,T)$ is a minimum $s$-$t$ edge-cut in $G$, we have $w(\partial^+_G(S)) \leq w(\partial^+_G(S \cup X))$.
Thus, the desired inequality follows if we can prove that $w(\partial^+_G(X \cap S)) + w(\partial^+_G(S \cup X)) \leq w(\partial^+_G(X)) + w(\partial^+_G(S))$.
Indeed, it is not hard to verify that every edge that contributes its weight to the left side of this inequality must also contribute its weight to the right side of this inequality, and that every edge which contributes its weight twice to the left side of the inequality must also contribute its weight twice to the right side of the inequality.

\paragraph{Proof of \Cref{lem:submodularity-of-minimum-cuts:vertex-version}}
The proof of this lemma is essentially identical to that of \Cref{lem:submodularity-of-minimum-cuts:edge-version}, except it uses the operator $w(N^+_G(\cdot))$ instead of $w(\partial^+_G(\cdot))$.

\section{Missing Proofs from \Cref{sec:rooted-minimum-edge-cut}}\label{sec:missing-proofs-in-main-edge-version}

\paragraph{Proof of \Cref{cl:reduction-to-computing-A_B:edge-version}.}
Consider a set $A_B \subseteq V(G) \setminus \set{y^*}$ of vertices that satisfies properties \ref{property:A_B-P'1:edge-version}-\ref{property:A_B-P'2:edge-version}, and define $G_B$ as stated in \Cref{cl:reduction-to-computing-A_B:edge-version}. Specifically, $G_B$ is constructed from $G$ by merging all vertices outside of $A_B$ into $y^*$ and then removing any resulting self-loops, as well as any edges outgoing from $y^*$ in the resulting graph.
We need to prove that $G_B$ satisfies Properties \ref{property:G_B-P1:edge-version}-\ref{property:G_B-P4:edge-version}.

We begin with Properties \ref{property:G_B-P2:edge-version} and \ref{property:G_B-P3:edge-version}, which follow easily from the definition of the graph $G_B$: indeed, Property \ref{property:G_B-P2:edge-version} holds because $V(G_B)=A_B \cup \set{y^*}$, and Property \ref{property:G_B-P3:edge-version} is a special case of the next observation.
\begin{observation}\label{obs:cuts-in-G_B-have-same-value-as-in-G:edge-version}
	For every set $X' \subseteq A_B$, we have $w_B(\partial^+_{G_B}(X')) = w(\partial^+_{G}(X'))$.
\end{observation}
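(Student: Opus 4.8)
The plan is to work directly with the edge multiset $\partial^+_{G_B}(X')$ and show that the merging operation defining $G_B$ induces a weight-preserving bijection between $\partial^+_{G_B}(X')$ and $\partial^+_{G}(X')$. I would begin by pinning down the precise description of $G_B$: since $A_B \subseteq V(G)\setminus\{y^*\}$, the vertex set of $G_B$ is $A_B \cup \{y^*\}$ with $y^* \notin A_B$, and the edge multiset of $G_B$ is obtained from $E(G)$ by replacing the head of every edge whose head lies outside $A_B$ by $y^*$, then discarding exactly those edges that have thereby become self-loops (equivalently, those whose tail also lay outside $A_B$), together with any edge whose tail is $y^*$. Each surviving edge of $G_B$ is a "copy" of a unique edge of $G$, and $w_B$ assigns to that copy the weight $w$ of the corresponding original edge; I would state this correspondence explicitly at the outset, since it is the convention that makes the equality hold verbatim.

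Next, fix $X' \subseteq A_B$ and split $\partial^+_{G_B}(X')$ according to whether an edge lands in $A_B \setminus X'$ or in $\{y^*\}$. Edges of the first kind are exactly the edges of $E_G(X', A_B \setminus X')$, preserved unchanged: they are not self-loops because $X'$ and $A_B \setminus X'$ are disjoint, and their tails lie in $A_B$, hence are not $y^*$. Edges of the second kind correspond bijectively and weight-preservingly to the edges of $E_G(X', V(G) \setminus A_B)$: any edge $(u,v) \in E(G)$ with $u \in X'$ and $v \notin A_B$ is redirected to the edge $(u,y^*)$ of $G_B$, which survives because its tail $u \in A_B$ is not $y^*$ and it is not a self-loop; conversely every copy of a $y^*$-headed edge out of $X'$ in $G_B$ arises in precisely this way.

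Finally I would combine the two parts. Because $X' \subseteq A_B$, we have $(A_B \setminus X') \cup (V(G) \setminus A_B) = V(G) \setminus X'$, and this union is disjoint, so summing weights yields
\[
w_B(\partial^+_{G_B}(X')) = w(E_G(X', A_B \setminus X')) + w(E_G(X', V(G) \setminus A_B)) = w(E_G(X', V(G)\setminus X')) = w(\partial^+_G(X')).
\]
The only real care-point — the "main obstacle", such as it is — is the bookkeeping of parallel edges: several distinct edges of $G$ out of $X'$ whose heads lie outside $A_B$ collapse to parallel copies of $(u,y^*)$ in $G_B$, so the statement must be read with multiplicities, and one must be sure that $w_B$ retains one copy of its original weight per original edge rather than coalescing them. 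Once that convention is fixed, the bijection and the weight identity are immediate, and the observation follows (with Property \ref{property:G_B-P3:edge-version} being the special case $X' \subseteq A_B \subseteq V(G_B) \setminus \{y^*\}$, noting also that the out-boundary of $X'$ never uses edges out of $y^*$).
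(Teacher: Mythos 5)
Your proof is correct and follows the same approach as the paper: the paper's (very terse) argument is exactly the forward-and-backward bijection between edges of $G$ out of $X'$ and edges of $G_B$ out of $X'$, which you simply spell out via the disjoint case-split into heads in $A_B \setminus X'$ versus heads collapsed to $y^*$. The only minor imprecision is in your recounting of the construction of $G_B$ (you describe replacing only the heads of edges, so "edges that have thereby become self-loops" would literally only be those with tail $y^*$, not all edges with tail outside $A_B$; the paper's construction merges on both endpoints), but this does not affect your proof of the observation, which correctly relies only on the key point that tails in $X' \subseteq A_B$ are untouched while each original head lands either in $A_B \setminus X'$ or at $y^*$.
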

\begin{proof}
	Indeed, each edge exiting $X'$ in $G$ induces a corresponding edge in $G_B$ that also exits $X'$, and has the same weight;
	furthermore, each edge exiting $X'$ in $G_B$ is induced by such an edge. This concludes the proof of the observation.
\end{proof}

Next, we prove Property \ref{property:G_B-P1:edge-version}, which requires upper-bounding the number of edges and vertices in $G_B$:
it is easy to verify that $|E(G_B)| = \vol^+_G(A_B)$ holds; so, as the set $A_B$ satisfies Property \ref{property:A_B-P'1:edge-version}, it follows that $|E(G_B)| = \Tilde{O}\left(\frac{m}{2^i \cdot \epsilon}\right)$, as needed for Property \ref{property:G_B-P1:edge-version}. To bound the number of vertices in $G_B$, we use the fact that $V(G_B) = A_B \cup \set{y^*}$, as well as the assumption that $\OPT_{E,y^*} > 0$; based on this assumption, every vertex $v \in V(G) \setminus \set{y^*}$ must have at least one outgoing edge in $G$, and so, by the construction of $G_B$, every vertex of $A_B$ must have at least one outgoing edge in $G_B$, meaning that $|E(G_B)| \geq |A_B| = |V(G_B)| - 1$, and thus that $|V(G_B)| \leq |E(G_B)|+1 = \Tilde{O}\left(\frac{m}{2^i \cdot \epsilon}\right)$. It is also easily verified that $G_B$ does not have more edges or vertices than $G$. This concludes the proof that $G_B$ satisfies Property \ref{property:G_B-P1:edge-version}.

It remains to prove that $G_B$ satisfies Property \ref{property:G_B-P4:edge-version}, which we do next.
Suppose that there exists a terminal $x^* \in B \cap X^*$ and the good event $\event$ occurred; we need to prove that in this case, $x^* \in V(G_B)$ holds, and there exists an $x^*$-$y^*$ edge-cut $(X',Y')$ in $G_B$ of value $w_B(E_{G_B}(X',Y')) \leq \left(1 + \frac{i\epsilon}{z}\right)\OPT_{E,y^*}$ that additionally satisfies the condition $\vol^+_G(X') \leq 2\nu$.
Indeed, in the aforementioned case, Property \ref{property:A_B-P'2:edge-version} of the set $A_B$ guarantees that there exists a subset $X' \subseteq A_B$ with $x^* \in X'$ and $\vol^+_G(X') \leq 2\nu$ such that the edge-cut $(X',V(G) \setminus X')$ in $G$ has value $w(E_{G}(X',V(G) \setminus X')) \leq \left(1 + \frac{i\epsilon}{z}\right)\OPT_{E,y^*}$.
Furthermore, by \Cref{obs:cuts-in-G_B-have-same-value-as-in-G:edge-version}, this set $X'$ satisfies $w_B(\partial^+_{G_B}(X')) = w(\partial^+_{G}(X')) = w(E_{G}(X',V(G) \setminus X'))\leq \left(1 + \frac{i\epsilon}{z}\right)\OPT_{E,y^*}$. Thus, if we define $Y'$ as $Y' = V(G_B) \setminus X'$, then the edge-cut $(X',Y')$ in $G_B$ satisfies all the conditions required by Property \ref{property:G_B-P4:edge-version}.
Furthermore, since $x^* \in X'$ and $X' \subseteq A_B \subseteq V(G_B)$, it follows that $x^* \in V(G_B)$, as needed.
This concludes the proof of Property \ref{property:G_B-P4:edge-version}, and thus concludes the proof of \Cref{cl:reduction-to-computing-A_B:edge-version}.

\section{Removing the Assumption $w(L^*) \leq w(R^*)$ in the Proof of \Cref{thm:main-reduction}}\label{sec:removing-assumption-in-reduction:reduction}
In \Cref{sec:main-reduction}, we designed a procedure \algglobalvertexcut that implements \Cref{thm:main-reduction} under the following additional assumption:
\begin{assumption}\label{assump:side-weight-assumption-for-reduction}
	The graph $G$ in the input to \Cref{thm:main-reduction} has some global minimum vertex-cut $(L^*,S^*,R^*)$ with $w(L^*) \leq w(R^*)$.
\end{assumption}
This assumption can be removed using standard techniques: informally, we implement \Cref{thm:main-reduction} by running the procedure \algglobalvertexcut once on the input graph $G$, and once on the reversed graph $\overline{G}$, and selecting the best of the two obtained solutions.

For the sake of completion, we now explain this in detail.
Observe that, by the guarantees of the procedure \algglobalvertexcut, there exists some absolute constant $C$ and some function $T:\mathbb{N} \to \mathbb{N}$ satisfying $T(m)=\Tilde{O}(m)$, so that, when the procedure is given an input satisfying \Cref{assump:side-weight-assumption-for-reduction} and oracle-access as described in \Cref{thm:main-reduction}, it runs in time at most $T(m)\cdot \log W$, and makes at most $C \cdot \frac{m}{n}$ queries to the oracle, such that the total number of edges in these queries is at most $C \cdot m$.
Furthermore, in this case, the output of the procedure is a vertex-cut $(L,S,R)$ in the input graph $G$ so that, with probability at least $\frac{1}{2}$, its weight is at most $(1+\epsilon)\OPT_G$, where $\OPT_G$ is the value of the global minimum vertex-cut in $G$.

Then, to implement \Cref{thm:main-reduction}, we run two instances of this procedure \algglobalvertexcut; one instance is given the input graph $G$, and the other is given the reversed graph $\overline{G}$ -- this is the graph obtained from $G$ by reversing the direction of each edge. Both instances are given the same vertex-weights, parameters, and oracle access.
For each of these instances, we terminate it early if it: runs for longer than $T(m) \cdot \log W$ time; makes more than $C \cdot \frac{m}{n}$ oracle queries; or, makes oracle queries whose total edge-count is larger than $C \cdot m$. (If an instance is about to make a single query with more than $C \cdot m$ edges, then we terminate it before it makes this query.)
If an instance was not terminated early, and if its output is a valid vertex-cut in $G$, then we say that it "terminated naturally". (Observe that we can check in time $O(m)$ whether the output is a valid vertex-cut in $G$.)
Next, we continue the description of the algorithm.

If the instance on $G$ terminated naturally, then let $(L',S',R')$ be the vertex-cut obtained by it; otherwise, let $(L',S',R')$ be an arbitrary vertex-cut in $G$.
Similarly, if the instance on $\overline{G}$ terminated naturally, then let $(L'',S'',R'')$ be the vertex-cut obtained by it; otherwise, let $(L'',S'',R'')$ be an arbitrary vertex-cut in $\overline{G}$.
Observe that $(R'',S'',L'')$ must be a valid vertex-cut in $G$, and its weight, $w(S'')$, is the same as the weight of the vertex-cut $(L'',S'',R'')$ in $\overline{G}$.
The algorithm outputs the smaller-weight vertex-cut among the two cuts $(L',S',R')$ and $(R'',S'',L'')$ in $G$.

For the analysis, it is not hard to verify that the value of the global minimum vertex-cut of the reversed graph $\overline{G}$ is the same as of the original graph $G$, and that at least one of these two instances of the Global Minimum Vertex-Cut problem must satisfy \Cref{assump:side-weight-assumption-for-reduction}; thus, by the guarantees of \algglobalvertexcut, at least one of the calls to \algglobalvertexcut must terminate naturally, and, with probability at least $\frac{1}{2}$, the vertex-cut obtained by this call has weight at most $(1+\epsilon)\OPT_{G}$.
Therefore, with probability at least $\frac{1}{2}$, at least one of the two vertex-cuts $(L',S',R')$ and $(L'',S'',R'')$ has weight at most $(1+\epsilon)\OPT_{G}$, in which case the output of the algorithm is a vertex-cut in $G$ with weight at most $(1+\epsilon)\OPT_{G}$, as required.
\section{Extension to Non-Negative Weights}\label{sec:zero-weights}
In this section, we briefly explain how our main theorems can be extended to the setting in which edges and vertices are allowed to have weight $0$.
We first note that this extension is trivial to achieve for the Minimum Edge-Cut problem, as zero-weight edges do not affect the solution value or the set of feasible solutions, and can thus be ignored.
In order to extend the algorithms for the Minimum \emph{Vertex-Cut} problem to the setting of non-negative vertex-weights, we use a simple and standard reduction from this setting to the positive-weight setting.
The same reduction was also used in, for example, \cite{CMT25}; but we analyze the reduction again here in order to convince the reader that it also works for approximate algorithms, rather than just for exact algorithms as was shown in \cite{CMT25}.
For the sake of brevity, we will analyze only the \emph{Global} Minimum Vertex-Cut problem, though the same reduction also works for the Rooted Minimum Vertex-Cut problem.

Consider an instance $(V,E,w)$ of the Global Minimum Vertex-Cut problem in the non-negative weight setting.
To solve this instance, we transform it into an instance $(V,E,w')$ with only positive weights, by setting the new weight of each vertex $v$ to $w'(v) \leftarrow 4n^2 w(v) + 1$.

Now, it is immediate to verify that any optimal solution of instance $(V,E,w')$ must also be an optimal solution of instance $(V,E,w)$, meaning that solving the instance $(V,E,w)$ exactly can be reduced to solving $(V,E,w')$ exactly.
Demonstrating a similar relation for approximate solutions is only a little more complicated: specifically, we show next that if one wishes to find a $(1+\epsilon)$-approximate solution to the original instance $(V,E,w)$, then it suffices to find a $(1+\epsilon')$-approximate solution for the new instance $(V,E,w')$, where $\epsilon'=\epsilon/2$.
\begin{proof}
	Indeed, consider any optimal solution $(L,S,R)$ of the original instance $(V,E,w)$, and observe that $4n^2 w(S) \leq w'(S) \leq 4n^2 w(S) + n$ must hold.
	Suppose first that the weight of $(L,S,R)$ in the original instance $(V,E,w)$ satisfies $w(S) < \frac{1}{\epsilon}$ -- it is then easy to verify that every solution $(L',S',R')$ of the original instance with weight $w(S') \geq w(S)+1$ must have its weight in the new instance satisfy $w'(S') \geq 4n^2 (w(S) + 1) > (1+\epsilon')\cdot(4n^2 w(S) + n) \geq (1+\epsilon') w'(S)$, meaning that every non-optimal solution of the original instance is also not a $(1+\epsilon')$-approximate solution of the new instance.
	We now consider the remaining case, in which $w(S) \geq \frac{1}{\epsilon}$.
	In this case, $w'(S) \leq (1+\frac{\epsilon}{4n}) \cdot 4n^2 w(S)$ must hold, meaning that $(1 - \frac{\epsilon}{4n}) w'(S) \leq 4n^2 w(S)$.
	It then follows that for every vertex-cut $(L',S',R')$ in $G$ with $w(S') > (1+\epsilon)w(S)$, the inequality
	\[
	 w'(S') > 4n^2 \cdot (1+\epsilon)w(S) \geq (1+\epsilon) \cdot (1 - \frac{\epsilon}{4n}) w'(S) \geq (1+\epsilon')w'(S)
	\]
	holds, meaning that $(L',S',R')$ is not a $(1+\epsilon')$-approximate solution of the new instance.
\end{proof}
\newpage
\bibliographystyle{alpha}
\bibliography{approx-global-mincut}

\end{document}